\pdfoutput=1
\documentclass[11pt,reqno]{amsart}
\usepackage[T1]{fontenc}
\usepackage{lmodern}
\usepackage[letterpaper,textwidth=6.25in,textheight=9in,centering]{geometry}
\usepackage{amssymb,mathtools,array,needspace}
\usepackage{microtype}
\usepackage[hidelinks,hypertexnames=false]{hyperref}
\theoremstyle{plain}
\newtheorem{theorem}{Theorem}[section]
\newtheorem{lemma}[theorem]{Lemma}
\newtheorem{proposition}[theorem]{Proposition}
\newtheorem{corollary}[theorem]{Corollary}
\theoremstyle{definition}
\newtheorem{definition}[theorem]{Definition}
\newtheorem{convention}[theorem]{Convention}
\newtheorem{problem}[theorem]{Problem}
\theoremstyle{remark}
\newtheorem{remark}[theorem]{Remark}
\hypersetup{
  pdftitle={Width-Bounded Equational Derivations for Finite Graph Expressions},
  pdfauthor={Antonios Kalampakas},
  pdfsubject={Complete preprint with supplementary proofs},
  pdfkeywords={finite graphoids, pattern width, bounded equational coherence, derivational space, graph width, resource-bounded rewriting}
}

\newcommand{\bsum}{\mathbin{\Box}}
\DeclareMathOperator{\val}{val}
\DeclareMathOperator{\tw}{tw}
\DeclareMathOperator{\bw}{bw}
\DeclareMathOperator{\pw}{pw}
\DeclareMathOperator{\lw}{lw}
\DeclareMathOperator{\patw}{patw}
\newcommand{\patwstar}{\patw^{*}}
\newcommand{\patwlin}{\patw_{\mathrm{lin}}}
\DeclareMathOperator{\rank}{rank}
\newcommand{\GR}{\mathbf{GR}}
\newcommand{\Gr}{\mathbf{Gr}}
\newcommand{\DISC}{\mathbf{DISC}}
\newcommand{\PERM}{\mathbf{PERM}}

\newcommand{\mg}{\mathrm{mag}}
\newcommand{\Free}{\mathrm{F}}
\newcommand{\E}{\mathcal{E}}
\newcommand{\NN}{\mathbb{N}}
\newcommand{\io}[2]{i_{#1 #2}}

\title[Width-bounded derivations for graph expressions]{Width-Bounded Equational Derivations for Finite Graph Expressions}
\author[A. Kalampakas]{Antonios Kalampakas}
\address{Department of Mathematics, College of Engineering, American University of the Middle East, Egaila 54200, Kuwait}
\email{antonios.kalampakas@aum.edu.kw}
\date{}
\subjclass[2020]{05C83, 68Q42, 68R10, 18M05}
\keywords{finite graphoids, pattern width, bounded equational coherence, derivational space, graph width, resource-bounded rewriting}
\begin{document}
\begin{abstract}
Completeness of an equational presentation guarantees an equality path but
need not control the resources used along it. For finite graph expressions we
measure derivational space by the largest input-output interface of an
intermediate raw term. For every finite doubly ranked edge alphabet $\Sigma$,
we prove that equal closed expressions of pattern width at most $k$ are joined
by a derivation in which every step applies an equation in either direction and
every intermediate width is at most a computable $B_\Sigma(k)$, independently
of graph size. The derivation uses only the structural magmoid laws and the
fifteen finite-graphoid schemes. The construction compiles each expression through protected
cores and finite routing windows to an encoding-canonical representative
$\operatorname{NF}_k$ of width at most $4k+4$. We call this property bounded
equational coherence. We also prove
$\bw(F)\le\patw(F)\le4(\tw(F)+1)$ whenever the underlying simple graph has at
least two edges, separate layered linear from branching expressions on
cliques, and provide machine-checkable witnesses and finite invariants for the
first five nontrivial clique values.

\end{abstract}
\maketitle
\section{Introduction}\label{sec:intro}

An equality proof can require more space than either of its endpoints. This
phenomenon is hidden by ordinary completeness, which asserts the existence of
a derivation but places no bound on its intermediate interfaces. Global normal
forms for finite graph expressions illustrate the problem sharply because they
may expose every edge at once. We ask whether two narrow descriptions of the
same graph can always be connected without an interface whose size depends on
the graph.

The syntax is the free magmoid over a finite doubly ranked edge alphabet
$\Sigma$ and the five discrete generators
$D=\{\io21,\io01,\io12,\io10,\pi\}$. Bozapalidis and Kalampakas proved that
the quotient by fifteen displayed equation schemes is the magmoid of finite
multi-pointed hypergraphs \cite{BK04}. Thus the equations are sound and
complete for equality of graph values. Engelfriet and Vereijken had already
shown that sequential concatenation and parallel sum generate the relevant
graph algebra \cite{EV97}. In the source-algebra tradition, Bauderon and
Courcelle give complete ground equations for a related one-sided calculus
\cite{BC87}. None bounds derivational space.

The closest precedents are finite presentations for bounded-treewidth source
graphs. Cosme Ll\'opez and Pous treated treewidth two \cite{CLP17}, Doczkal and
Pous gave a convergent reformulation \cite{DP18}, and Doumane, Humeau, and Pous
reached treewidth three with at most four sources \cite{DHP24}. Those systems
build a fixed interface bound into bound-specific syntax. We work in the
unrestricted finite-graphoid syntax and ask uniformly in the endpoint bound
$k$ whether its fixed fifteen schemes admit bounded-width equality paths.

The largest input-output rank sum of a raw subexpression is its \emph{pattern
width}. In the singleton fragment of Peterseim and
Lopuha{\"a}-Zwakenberg's hypergraph terms, the structural maps correspond to
$e_1,\pi,\io12,\io10,\io21,\io01$ up to composition conventions, and the
widths coincide \cite[Defs.~23 and~29]{PLZ26}. Their algorithm algebraises a
dot diagram of primal treewidth $h$ at width $O(h)$ in the free strict
hypergraph category \cite[Def.~28 and Thm.~32]{PLZ26}. It selects one
representative, while we bound an elementary-axiom path between any two
supplied narrow terms. Copy and merge make pattern width tree-like because a
vertex can remain on one wire and be re-exposed for later edges.

For fixed $k$ and finite $\Sigma$, every closed width-$k$ expression reaches a
representative selected from its graph value alone through terms of width at
most $B_\Sigma(k)$, independently of graph size. We call this \emph{bounded
equational coherence}. It concerns connected components of the width-filtered
undirected derivation graph, not Mac Lane or polygraphic coherence
\cite{Mim25}.

The proof protects live vertices, extracts and locally normalizes a bounded
edge-partition tree, and compares trees by four-cell switches. The containment
$\partial(X\cap Y)\subseteq\partial X\cup\partial Y$ prevents interface
accumulation. All coherence claims concern closed raw expressions.

For graphs whose underlying simple graph has at least two edges, the parameter
is calibrated by
\[
  \bw(F)\le\patw(F)\le4(\tw(F)+1).
\]
Layered linear cliques have width between $2n-3$ and $2n$, whereas branching
terms have width at most $n+1$. The machine-verified values for $K_3$ through
$K_7$ are $4,5,6,6,6$.

Connection-algebra and wiring-diagram normal forms do not bound derivations
from supplied expressions \cite{BC87,BGM02,CG99,PSV21}. Finite-graphoid
automata and varieties address recognizability and equational classes
\cite{BK04,BK17,BK06,BK08}. Cospan, nominal, and tree-decomposable approaches
relate syntax to graph structure \cite{BBFK13,HMS15,BMS20}. Unlike Courcelle's
injective sources, finite-graphoid interfaces may repeat vertices
\cite{CE12,EV97}.

Monoidal width minimizes over decomposition trees and charges composition
interfaces \cite{DLS23}, while our width belongs to a chosen raw tree and we
bound its peak along an equality path. The closer proof-complexity analogue is
Olshanskii's derivational space function \cite{Ols13}. We replace word length
by typed interface rank and obtain a bound from endpoint width even for
arbitrarily large terms. Small term reachability instead bounds target size
\cite{BG24}. Further rewriting context appears in
\cite{FS19,BGKSZ22,BGKSZ22II,BGKSZ22III,MPZ25,GK26,TBGH25}.

The supplementary sections are included in this preprint. Sections~\ref{sec:calibration-details}
through \ref{sec:cliques} give the expanded graph-width calibration and clique arguments,
Section~\ref{sec:normal-details} gives the complete local constructions for bounded equational
coherence, Section~\ref{supp:sec:gi-encoding} details the graph-isomorphism encoding, and Section~\ref{sec:saturation-details}
proves the finite-saturation and certificate semantics.

\section{Preliminaries: the axiomatization of graphs}\label{sec:prelim}

We briefly recall the setting of \cite{BK04}, to which we refer for all
details.

A \emph{doubly ranked alphabet} is a set $X$ with a function
$\rank\colon X\to\NN\times\NN$. A \emph{raw expression} is a fully
parenthesized term over $X$, the unit symbols $e_n$, and the two binary
operations below. We write $\Free_{m,n}(X)$ for the raw expressions of rank
$(m,n)$. A \emph{magmoid} is a doubly ranked set $M=(M_{m,n})$ with operations
$\circ\colon M_{m,n}\times M_{n,k}\to M_{m,k}$ and
$\bsum\colon M_{m,n}\times M_{m',n'}\to M_{m+m',n+n'}$ satisfying the
following fully parenthesized structural schemes whenever the terms are well
typed:
\begin{align*}
 ((f\circ g)\circ h)&=f\circ(g\circ h),&
 ((f\bsum g)\bsum h)&=f\bsum(g\bsum h),\\
 e_m\circ f&=f,& f\circ e_n&=f,\\
 e_0\bsum f&=f,& f\bsum e_0&=f,\\
 e_m\bsum e_n&=e_{m+n},&
 (f\circ f')\bsum(g\circ g')&=(f\bsum g)\circ(f'\bsum g').
\end{align*}
These raw equation schemes are denoted by $\mathcal B$. Equivalently, a
magmoid is a strict monoidal category with objects $\NN$
\cite{AD78a,AD79,Hot65a,Hot65b}. We write
$\mg(X)$ for the free magmoid, which is the quotient of raw expressions by
$\mathcal B$. Its elements are called \emph{patterns}. Pattern width below is
a property of a raw expression and can differ between raw representatives of
the same pattern.

Let $\GR(\Sigma)$ consist of boundary-preserving isomorphism classes of finite
$\Sigma$-labeled hypergraphs with ordered begin and end sequences. Product
glues the end sequence of the first representative positionally to the begin
sequence of the second. Sum is disjoint union with concatenated interfaces.
These operations are independent of representatives and make $\GR(\Sigma)$ a
magmoid. Equality in $\GR(\Sigma)$ therefore means labeled pointed-hypergraph
isomorphism. When a later construction names vertices or edge occurrences, it
chooses a representative and is transported by isomorphisms. The symbols
$\DISC$ and $\PERM$ denote the discrete and permutation submagmoids. Let
$D=\{\io21,\io01,\io12,\io10,\pi\}$ have ranks
$(2,1),(0,1),(1,2),(1,0),(2,2)$, and assume $\Sigma\cap D=\varnothing$.
Interpret these symbols by the elementary discrete graphs
$I_{2,1},I_{0,1},I_{1,2},I_{1,0},\Pi$, and let
$\val\colon\mg(\Sigma\cup D)\to\GR(\Sigma)$ be the unique magmoid morphism
sending $\sigma\mapsto G(\sigma)$ and $D$ to the graphs above. The map $\val$ is
surjective by \cite{EV97}. We also write $\val$ for its composite with the
quotient map from raw expressions to $\mg(\Sigma\cup D)$.

We fix raw representatives for all permutation notation. Put
$\widehat e_0=e_0$, $\widehat e_1=e_1$, and
$\widehat e_{n+1}=(\widehat e_n\bsum e_1)$ for $n\ge1$. An iterated sum or
composition is left associated, with the empty sum equal to $e_0$ and the
empty composition on $n$ wires equal to $\widehat e_n$. For $1\le j<r$, let
$\tau_{r,j}$ be the fixed left-associated sum whose literal summand list is
$j-1$ copies of $e_1$, then $\pi$, then $r-j-1$ copies of $e_1$.
Order $\tau_{r,1},\ldots,\tau_{r,r-1}$ by their indices.
For $r=m+n$, define $\pi_{m,n}^{\rm raw}$ to be the left-associated
composition of the lexicographically first shortest word in
$\tau_{r,1},\ldots,\tau_{r,r-1}$ whose value moves the last $m$ positions
before the first $n$ positions while preserving both internal orders. Put
$\pi_{m,0}^{\rm raw}=\pi_{0,m}^{\rm raw}=\widehat e_m$. This fixes a fully
parenthesized raw term for every block permutation, of width exactly $2(m+n)$.

For every doubly ranked alphabet $X$, define $\E_X$ in the raw term algebra
over $X\cup D$ by the following schemes:
\begingroup
\allowdisplaybreaks[4]
\begin{align}
&\pi\circ\pi=e_2 \tag{A1}\\
&((e\bsum\pi)\circ(\pi\bsum e))\circ(e\bsum\pi)
 =((\pi\bsum e)\circ(e\bsum\pi))\circ(\pi\bsum e) \tag{A2}\\
&(e\bsum\io21)\circ\io21=(\io21\bsum e)\circ\io21 \tag{A3}\\
&(e\bsum\io01)\circ\io21=e \tag{A4}\\
&\pi\circ\io21=\io21 \tag{A5}\\
&((\pi\bsum e)\circ(e\bsum\pi))\circ(\io21\bsum e)
 =(e\bsum\io21)\circ\pi \tag{A6}\\
&(e\bsum\io01)\circ\pi=\io01\bsum e \tag{A7}\\
&\io12\circ(e\bsum\io12)=\io12\circ(\io12\bsum e) \tag{A8}\\
&\io12\circ(e\bsum\io10)=e \tag{A9}\\
&\io12\circ\pi=\io12 \tag{A10}\\
&((\io12\bsum e)\circ(e\bsum\pi))\circ(\pi\bsum e)
 =\pi\circ(e\bsum\io12) \tag{A11}\\
&\pi\circ(e\bsum\io10)=\io10\bsum e \tag{A12}\\
&(\io12\bsum e)\circ(e\bsum\io21)=\io21\circ\io12 \tag{A13}\\
&\io12\circ\io21=e \tag{A14}\\
&\pi_{p,1}^{\rm raw}\circ(\sigma\bsum e)
 =(e\bsum\sigma)\circ\pi_{q,1}^{\rm raw} \tag{A15}
\end{align}
\endgroup
Here $\sigma\in X_{p,q}$ and $e=e_1$. Every displayed side is the literal raw
term with the shown parentheses. The structural laws together with (A1) and
(A2) identify adjacent-transposition words with the same permutation, so
these representatives induce exactly the pattern equations of
\cite[Sect.~3]{BK04}. Thus $\E_X$ is a set of equations between fixed raw
terms.
Write $p\xleftrightarrow{*}_{\mathcal B\cup\E_X}q$ for the congruence on raw
expressions generated by contextual uses of the structural laws and these
equations. After quotienting by $\mathcal B$, write
$p\xleftrightarrow{*}_{\E_X}q$ for the induced congruence on patterns.
For the fixed edge alphabet $\Sigma$, write $\E=\E_\Sigma$ and set
$\Gr(\Sigma)=\mg(\Sigma\cup D)/\xleftrightarrow{*}_{\E}$, the
magmoid of \emph{graph patterns}. We also call $\Gr(\Sigma)$ the finite
graphoid over $\Sigma$.

\begin{theorem}[{\cite[Cor.~3]{BK04}}]\label{thm:bk}
The evaluation morphism on $\mg(\Sigma\cup D)$ factors through an isomorphism
$\val\colon\Gr(\Sigma)\to\GR(\Sigma)$. Thus, for raw expressions $p,q$,
$\val(p)=\val(q)$ if and only if
$p\xleftrightarrow{*}_{\mathcal B\cup\E}q$.
\end{theorem}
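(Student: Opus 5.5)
This is the completeness theorem of \cite{BK04}, restated for the fixed raw representatives. The plan is to prove soundness and completeness separately and then pass between raw expressions and patterns.

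\emph{Soundness.} Evaluate both sides of each scheme in $\mathcal B$ and (A1)--(A15) in $\GR(\Sigma)$. The structural laws hold because $\GR(\Sigma)$ is a magmoid. For (A1)--(A14), both sides are discrete graphs with at most four boundary positions, so each check reduces to comparing the induced equivalence on begin and end positions together with the set of isolated internal vertices. For example, (A14) gives $e_1$ because $\io21$ and $\io12$ merge to a single vertex. For (A15), both sides are the edge $\sigma$ with one extra vertex passing across it, possibly after a crossing. Since $\val$ is a morphism and equality in $\GR(\Sigma)$ is a congruence, every contextual use of an equation preserves $\val$. Hence $\val$ factors through $\Gr(\Sigma)$.

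\emph{Completeness.} I would build a normal form from graph values. First, show that every pattern is $\E$-equivalent to one of the form
\[
  \alpha\circ(\sigma_1\bsum\cdots\bsum\sigma_r\bsum e_s)\circ\beta,
\]
with $\alpha,\beta$ discrete. The proof is by induction on the term. Products of such forms are handled by normalizing the discrete middle part. This uses (A15) to slide each edge generator past wires, and it uses the structural interchange law.

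The key subclaim is that $\DISC$ is presented by $D$ together with (A1)--(A14). Its elements are determined by a surjection-like correspondence between boundary positions and vertices, plus some number of isolated vertices. I would prove the subclaim as follows:
\begin{enumerate}
\item (A1) and (A2) give the Coxeter presentation of $\PERM$.
\item (A3)--(A7) make $\io21,\io01$ a commutative monoid compatible with $\pi$.
\item (A8)--(A12) make $\io12,\io10$ a cocommutative comonoid.
\item (A13)--(A14) are the special Frobenius laws.
\end{enumerate}
The standard spider theorem then says that every discrete pattern equals a canonical spider composite determined by its partition data. Isolated vertices come from $\io01\circ\io10$.

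Given two normal forms with the same value, fix an isomorphism of representatives. This yields a permutation of the edge occurrences and a reindexing of the vertices. Using (A15) and the permutation laws, rearrange the edge sum. Absorb the vertex bijection into $\alpha$ and $\beta$ using the spider normal form. After this, both expressions are literally equal modulo $\mathcal B$.

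The step I expect to be the main obstacle is the spider-normal-form argument for $\DISC$. In particular, one must derive the general Frobenius and commutativity consequences (for instance $\io21$ absorbing arbitrary permutations) from only the listed literal schemes.

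Finally, for raw expressions I would lift along the quotient. Equality in $\mg$ is exactly $\mathcal B$-convertibility, so pattern-level $\E$-equality gives $p\xleftrightarrow{*}_{\mathcal B\cup\E}q$. The converse follows from soundness.
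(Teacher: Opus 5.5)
\paragraph{The paper's route.} The paper does not reprove this statement. It imports \cite[Cor.~3]{BK04} and only observes that the structural laws with (A1) and (A2) identify all adjacent-transposition words for one permutation. Hence the fixed raw schemes induce exactly the pattern equations of BK04, and your soundness check and final lifting paragraph match that reduction.

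\paragraph{Gap in the comparison of normal forms.} Your from-scratch completeness argument breaks at this step. A three-layer form $\alpha\circ(\sigma_1\bsum\cdots\bsum\sigma_r\bsum e_s)\circ\beta$ is not determined by its value, even after the edge sums are aligned. Each through-wire links an $\alpha$-class to a $\beta$-class, so an identification, or the attachment of a wire, can be recorded on either side. Moreover, $s$ itself is not an invariant. For instance,
\[
 \operatorname{cup}\circ(a\bsum e_1)\circ(\io10\bsum\io10)
 \qquad\text{and}\qquad
 (\io01\bsum\io01)\circ(a\bsum e_1)\circ\operatorname{cap}
\]
both evaluate to one closed edge between two distinct vertices. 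The wire is glued to the tail in the first term and to the head in the second. Yet $\val(\operatorname{cup})\ne\val(\io01\bsum\io01)$, so no edge permutation or vertex relabelling makes them equal modulo $\mathcal B$. Similarly, a loop is realized both with $s=1$ and with $s=2$. The spider theorem normalizes $\alpha$ and $\beta$ separately, but the value determines them only jointly. So ``both expressions are literally equal modulo $\mathcal B$'' fails. The DISC presentation you flag as the main obstacle is the standard presentation of cospans of finite sets by a special commutative Frobenius monoid; the real difficulty lies here.

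\paragraph{Gap in the existence step.} The same missing ingredient affects your existence step. In $(\alpha_1\circ M_1\circ\beta_1)\circ(\alpha_2\circ M_2\circ\beta_2)$ the letters of $M_2$ lie after the discrete pattern $\beta_1\circ\alpha_2$. For $a\circ a$ that middle part is $e_1$, already normal, so ``normalizing'' it achieves nothing. Meanwhile (A15) and interchange only move letters past wires and parallel factors, so they never place the two letters side by side. You must insert a snake, e.g.\ $e_1=(\operatorname{cup}\bsum e_1)\circ(e_1\bsum\operatorname{cap})$, which follows from the unit, counit and Frobenius schemes. Sliding the letters along it gives $a\circ a=(\operatorname{cup}\bsum e_1)\circ(a\bsum e_1\bsum a)\circ(e_1\bsum\operatorname{cap})$.

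\paragraph{Repair.} Pushing this bending to the end repairs both steps:
\begin{enumerate}
\item rewrite each letter $\sigma:p\to q$ by yanking as a one-sided (backfolded) letter;
\item collect all of these in one parallel layer;
\item let a single discrete pattern carry every incidence and boundary identification.
\end{enumerate}
For that form the value determines the discrete part up to the permutation of edge occurrences. Completeness then reduces to your DISC presentation plus the block naturality derived from (A15). This is the backfold normal form of \cite[Thm.~4a]{BK04} that the paper recalls in Proposition~\ref{prop:obstruction}. If you keep the three-layer form instead, you need a separate lemma that transports identifications across the through-wires and normalizes $s$.
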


\begin{convention}\label{conv:enc}
Unless stated otherwise, $\Sigma=\{a:1\to1\}$. Closed patterns denote finite
directed multigraphs, with wire-identification classes as vertices and
$a$-occurrences as arcs. For a simple graph $F$, let $\operatorname{Ori}(F)$
contain every one-orientation-per-edge digraph. Then $\val(t)=F$ means
$\val(t)\in\operatorname{Ori}(F)$, so $\patw(F)$ minimizes over orientations.
Classical widths of a directed value refer to the underlying simple graph
obtained by deleting loops and collapsing each nonempty parallel class. The
clique verifier uses the same convention by seeding both orientations of each
edge. All graphs are finite. Complexity statements use expanded arity, under
which $e_n$ and an explicit length-$n$ port list contribute $\Theta(n)$ input
symbols. This encoding does not affect the equational results.
\end{convention}

The following proposition records the complexity context needed later.

\begin{proposition}\label{prop:gicomplete}
The word problem of\/ $\Gr(\Sigma)$ takes raw expressions $s,t$. It rejects
them when their ranks differ and otherwise asks whether
$s\xleftrightarrow{*}_{\mathcal B\cup\E}t$. Under the expanded-arity encoding
and the default $\Sigma=\{a\}$ of Convention~\ref{conv:enc}, this problem is
polynomial-time many-one equivalent to graph isomorphism.
\end{proposition}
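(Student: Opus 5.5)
Both reductions go through Theorem~\ref{thm:bk}, which says that for raw expressions $s,t$ of equal rank, $s\xleftrightarrow{*}_{\mathcal B\cup\E}t$ holds exactly when $\val(s)=\val(t)$ as boundary-preserving isomorphism classes of pointed multigraphs. So it suffices to give polynomial-time many-one reductions in both directions between ``$\val(s)=\val(t)$'' and graph isomorphism (GI). The rank check is trivial, and we map mismatched ranks to a fixed negative GI instance.

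\emph{Word problem $\le$ GI.} First evaluate $s$ and $t$ to explicit representatives of $\val(s),\val(t)$ in polynomial time under the expanded-arity encoding. We do this by structural recursion, maintaining for each subterm a vertex set, an arc list, and begin/end port lists. Composition is implemented by merging vertices along the glued ports with a union--find structure. Sizes stay linear because each $e_n$ and each generator contributes $\Theta(n)$ input symbols, and $\val$ is a morphism, so this computes $\val$. We then must decide boundary-preserving isomorphism of two $a$-labelled directed multigraphs with ordered, possibly repeating, port sequences. Encode this by a standard gadget construction into plain undirected simple graphs:
\begin{itemize}
\item subdivide each arc $u\to v$ by an arc-vertex with distinguishable tails (for example, a pendant path of length $2$ on the source side), which handles direction and multiplicity;
\item attach to the vertex occupying begin position $i$ a pendant path of a unique length $L+i$, and to end position $j$ a path of length $L+m+j$, with $L$ exceeding every other degree or path length so the gadgets are rigid and recognizable.
\end{itemize}
Repeated vertices in the port lists simply receive several tags. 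Loops become arc-vertices adjacent to one vertex through both tagged sides. Correctness is checked by showing that isomorphisms of the gadget graphs restrict to original vertices and arcs and preserve tags.

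\emph{GI $\le$ word problem.} Given simple graphs $G,H$, with isolated vertices allowed, produce closed raw expressions $s_G,s_H$ of rank $(0,0)$ with $\val(s_G)$ the orientation of $G$ chosen by a fixed vertex order. One explicit construction is as follows.
\begin{enumerate}
\item Create the $n$ vertices with $\io01^{\bsum n}$ (rank $(0,n)$).
\item For each edge $uv$, expose $u$ and $v$ by copying with $\io12$ on the relevant wires.
\item Route the copies with the fixed $\pi^{\rm raw}$ block permutations, each of size $O(n)$.
\item Apply $a\bsum e$ and merge back with $\io21$ plus identities.
\item Finally cap all wires with $\io10^{\bsum n}$.
\end{enumerate}
Each edge contributes $O(n)$ symbols, with permutation words of length $O(n^2)$ at worst, so the total size is polynomial. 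Then $G\cong H$ iff the oriented multigraphs are isomorphic as unlabeled digraphs. This holds because simple undirected graphs embed into digraphs with each edge replaced by both orientations, and both orientations are produced by two $a$ occurrences, which keeps the map injective on isomorphism types. By Theorem~\ref{thm:bk}, this is iff $s_G\xleftrightarrow{*}_{\mathcal B\cup\E}s_H$.

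The main obstacle is bookkeeping rather than ideas. We must verify carefully that evaluation really is polynomial under expanded arity, and that the gadget encoding of ordered, repeating, directed boundaries is faithful. For the latter we would prove a small rigidity lemma: the unique path lengths force the tag structure, so any isomorphism of gadget graphs maps each original vertex and arc to an original vertex and arc and respects port positions.
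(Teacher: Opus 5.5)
\textbf{Gap in the reduction from GI to the word problem.} Your reduction from the word problem to GI matches the paper's: polynomial structural evaluation under expanded arity, followed by standard gadgets for direction, multiplicity, and ordered repeating ports. Your handling of rank mismatch is also fine.

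The gap is in the other direction. As written, you take $\val(s_G)$ to be the orientation of $G$ determined by a fixed vertex order, and your steps place one $a\bsum e$ per edge. That encoding is not isomorphism-invariant, so the implication $G\cong H\Rightarrow s_G\xleftrightarrow{*}_{\mathcal B\cup\E}s_H$ fails. For example, the path with edges $\{1,2\},\{2,3\}$ becomes the directed path $1\to2\to3$. The isomorphic path with edges $\{1,2\},\{1,3\}$ becomes the out-star $1\to2$, $1\to3$. These directed values are not isomorphic, so by Theorem~\ref{thm:bk} the two expressions are not equivalent. Your justification argues only injectivity on isomorphism types. That direction holds for every orientation, since you can forget directions. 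It never addresses invariance, which is exactly the direction that breaks.

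\textbf{How to repair it.} The paper's fix is to emit both arcs $u\to v$ and $v\to u$ for every edge, so the value is the symmetric digraph $D(G)$. This construction is functorial under vertex bijections, so $G\cong H\iff D(G)\cong D(H)$. The routing words do depend on the input numbering, but only the closed value matters for Theorem~\ref{thm:bk}. Your last sentence does mention two $a$ occurrences per edge, but it contradicts both your stated value and your construction steps, so you need to make the proposal consistent.

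The paper's per-arc gadget is
$T_{u,v}=P_{u,v}\circ(E_{\to}\bsum e_{n-2})\circ P_{u,v}^{-1}$, where
$E_{\to}=(\io12\bsum e_1)\circ(e_1\bsum a\bsum e_1)\circ(e_1\bsum\io21)$.
Only the tail is copied, and the fresh head is merged into the head's wire. Every wire then returns to its original position, so your extra copies of $v$ need no cleanup.

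A minor point: a padded word moving a wire across $n$ others has size $O(n^2)$, not $O(n)$. The total size is therefore $O(mn^2+n)$, which is still polynomial.
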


\begin{proof}[Proof sketch]
Structural evaluation is polynomial, and labeled pointed-hypergraph
isomorphism reduces to GI by standard gadgets. Conversely, encode each
undirected edge by two opposite arcs, maintain one wire per vertex, route arc
endpoints by adjacent transpositions, and close the wires. This gives a raw
expression of size $O(|E||V|^2+|V|)$, including isolated vertices.
Theorem~\ref{thm:bk} completes the reduction. Section~\ref{supp:sec:gi-encoding}, gives
the construction and audit.
\end{proof}

\section{Pattern width}\label{sec:width}

\begin{definition}[pattern width]\label{def:width}
For an expression $t\in\Free_{m,n}(X_\Sigma)$, $X_\Sigma=\Sigma\cup D$,
define
\[
\patw(t)\;=\;\max\{\,a+b \;:\; t' \text{ is a subterm occurrence of } t
\text{ of rank } (a,b)\,\},
\]
where subterms include the leaves (letters of $X_\Sigma$ and units $e_k$)
and $t$ itself. For a graph $F$ (under Convention~\ref{conv:enc}) set
\[
\patw(F)\;=\;\min\{\,\patw(t)\;:\;t\in\Free_{0,0}(X_\Sigma),\ \val(t)=F\,\},
\]
which is finite by the surjectivity of $\val$ \cite{EV97}.
\end{definition}

\begin{definition}[layered linear expressions]\label{def:linear}
A closed raw expression is \emph{layered linear} if there is an integer
$L\ge1$, together with integers $c_0,\ldots,c_L$ satisfying $c_0=c_L=0$ and blocks
$B_j\in\Free_{c_{j-1},c_j}(X_\Sigma)$ such that each $B_j$ is the fixed
left-associated $\bsum$-term on a nonempty ordered list of generators and unit
symbols, and
\[
 t=(((B_1\circ B_2)\circ B_3)\circ\cdots)\circ B_L.
\]
For $L=1$, this means $t=B_1$, so $e_0$ is admitted by taking $B_1=e_0$.
Thus layering includes its types and its raw
parenthesization. No reassociation is performed when width is computed or when
the minimum is taken. The only composition subterms not already inside a block
are the prefixes of type $(0,c_j)$. Write $\patwlin(F)$ for the minimum width
over these fixed left-comb expressions with value $F$. Call such an expression
\emph{elementary} if every block is the fixed-bracketing term
$e_r\bsum x\bsum e_s$ for one $x\in X_\Sigma$ and $r,s\ge0$, and put
\[
 \patw^{\mathrm{el}}(F)=\min\{\patw(t):t\text{ is elementary and }\val(t)=F\}.
\]
\end{definition}

Copy and merge make the measure tree-like: directed paths have width at most
$2$, out-stars and directed cycles at most $4$ despite unbounded out-star
cutwidth, and
$\patw(F_1\sqcup F_2)\le\max\{\patw(F_1),\patw(F_2)\}$. Also
$\patw(K_n)=\Theta(n)$ by Corollary~\ref{cor:calib}.

The saturation below uses reduced primitive leaves. Replace $e_r$ for $r\ge1$
by $\widehat e_r$, preserving its largest rank sum $2r$, and remove internal
$e_0$ leaves by the structural unit laws. Thus a target with a vertex or edge
has an optimal expression without $e_0$, while $e_0$ represents the
vertexless target.

\begin{lemma}[internal nodes stay internal]\label{lem:internal}
Let $t$ be closed, let $t'$ be a subterm occurrence, and let
$\theta\colon\val(t')\to\val(t)$ be the canonical map induced by its context.
If a node $v$ of $\val(t')$ occurs in neither boundary sequence, then
$\theta^{-1}(\theta(v))=\{v\}$ and every edge incident with $\theta(v)$ is the
image of an edge already in $\val(t')$.
\end{lemma}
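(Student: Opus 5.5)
Since $t$ is a raw term, the subterm occurrence $t'$ sits in a context $C[-]$, and I will argue by induction on the depth of $C$, i.e.\ on the number of operation nodes on the path from $t'$ up to the root. Define $\theta$ as the composite of the one-step canonical maps $\val(s)\to\val(s\circ u)$, $\val(s)\to\val(u\circ s)$, $\val(s)\to\val(s\bsum u)$ and $\val(s)\to\val(u\bsum s)$ along that path. It therefore suffices to prove a one-step statement. If $v$ is a node of $\val(s)$ lying in neither boundary sequence of $\val(s)$, then the one-step map $\theta_1$ is injective at $v$, meaning $\theta_1^{-1}(\theta_1(v))=\{v\}$. Moreover, $\theta_1(v)$ again lies in neither boundary sequence of the parent value, and every edge incident with $\theta_1(v)$ is the image of an edge of $\val(s)$. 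Composing these one-step facts along the path gives the lemma, because the non-boundary property is preserved at each step and that is what lets the induction continue.

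For the sum steps, $\val(s\bsum u)$ is the disjoint union $\val(s)\sqcup\val(u)$ with concatenated begin and end sequences. In this case $\theta_1$ is an injective embedding, so the fibre over $\theta_1(v)$ is $\{v\}$. Since the new boundary sequences are obtained by concatenation, $v$ stays off both of them. Edges incident with $\theta_1(v)$ lie in one summand, and $v$ is in the $s$-summand, so they are images of edges of $\val(s)$.

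For the product steps, take $\val(s\circ u)$ (the case $u\circ s$ is symmetric). It is the quotient of $\val(s)\sqcup\val(u)$ by the equivalence relation generated by identifying the $i$-th end vertex of $\val(s)$ with the $i$-th begin vertex of $\val(u)$. Every generating pair involves an end vertex of $\val(s)$, and $v$ is not one. The equivalence class of $v$ is therefore the singleton, since a chain of generating identifications starting at $v$ would need a first step that touches $v$. The edge set of the quotient is the disjoint union of the edge sets, with endpoints mapped through the quotient. An edge incident with $\theta_1(v)$ must have an endpoint in the class of $v$, which is $\{v\}$, so it is an edge of $\val(s)$ incident with $v$. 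Finally, the begin sequence of the product is the image of the begin sequence of $\val(s)$ and the end sequence is the image of that of $\val(u)$. Since the class of $v$ is a singleton containing no boundary vertex of $\val(s)$ and no vertex of $\val(u)$, $\theta_1(v)$ lies in neither new sequence.

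The main obstacle I expect is making the product step rigorous when boundary sequences repeat vertices, because then the gluing relation is not a matching but generates nontrivial classes. The singleton-class argument above has to be phrased via the generated equivalence, not just the generating pairs. Beyond that, the argument only needs $\val$ to be computed on a chosen representative and transported by isomorphism, as fixed in Section~\ref{sec:prelim}, so that $\theta$ is well defined up to the boundary-preserving isomorphisms there.
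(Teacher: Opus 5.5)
Your proposal is correct and follows the paper's proof: induction on the depth of $t'$ below the root, using the one-step facts that $\circ$ identifies only end nodes of the left factor with begin nodes of the right factor (so the class of $v$ is a singleton, no outside edge meets it, and its image stays internal), while $\bsum$ identifies nothing. Your explicit argument via the generated equivalence, for boundary sequences with repeated vertices, is a detail the paper leaves implicit.
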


\begin{proof}
Induct on the depth of $t'$ below the root. If the parent is $t'\circ u$, only
end nodes of $\val(t')$ are identified with begin nodes of $\val(u)$. The
class of $v$ is therefore a singleton and no edge of $\val(u)$ meets it. Its
image remains internal at the parent. The cases $u\circ t'$, $t'\bsum u$, and
$u\bsum t'$ are analogous. Apply the induction hypothesis at the parent.
\end{proof}

For a target having a vertex or edge and a bound $W$, saturation works either
with a directed multigraph $\vec F=(V,A)$ or with a simple undirected graph
$H=(V,E)$ whose edge leaves may receive either orientation. Write $E^\sharp$
for $A$ or $E$ respectively. A state
$s=(P,C,\kappa,b,e)$ records the placed occurrences $P\subseteq E^\sharp$,
the finalized vertices $C\subseteq V$, the target labels $\kappa$ of live
boundary pieces, and ordered boundary words $b,e$ of total length at most
$W$. Its realization maps graph letters bijectively to $P$, maps its internal
nodes bijectively onto $C$, and maps its live nodes by $\kappa$. Directed mode
preserves tails and heads. Undirected mode preserves unordered endpoint pairs
and thereby records the orientation chosen for each placed edge.

Transitions require disjoint placed and finalized sets, exclude finalized
labels from the other live boundary, and retain legal states within the cap.
Sum concatenates. Composition also matches middle arity and labels, takes the
positional quotient, and finalizes a label only when its last live class
vanishes after all incident edges are placed. The accepting state is
$(E^\sharp,V,(),(),())$. Definitions~\ref{supp:def:raw-state} through \ref{supp:def:composition-successor},
give these guards, and Section~\ref{supp:sec:symmetry-certificates} gives the symmetry reduction.

\begin{proposition}[finite saturation]\label{prop:finiteclosure}
Under the fixed rank-$(1,1)$ alphabet of Convention~\ref{conv:enc}, saturation
under these transitions terminates for every finite target and width bound
$W$. For a directed target $\vec F$ having a vertex or edge, its accepting
state is reached
if and only if $\patw(\vec F)\le W$. For a simple undirected target $H$ having
a vertex or edge, both orientations of every edge are seeded, and the accepting state is
reached if and only if
\[
 \min_{\vec H\in\operatorname{Ori}(H)}\patw(\vec H)=\patw(H)\le W.
\]
For the vertexless target, $e_0$ is accepted directly.
\end{proposition}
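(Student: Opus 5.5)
Three things need proof: saturation terminates, every accepted derivation comes from an expression of width at most $W$ (soundness), and every expression of width at most $W$ is traced by the saturation (completeness). The undirected case will follow from the directed one.

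\emph{Termination.} Fix the target and $W$. A state $(P,C,\kappa,b,e)$ consists of a subset $P\subseteq E^\sharp$, a subset $C\subseteq V$, boundary words of total length at most $W$, and labels $\kappa$ into $V$. So there are at most
\[
2^{|E^\sharp|}\,2^{|V|}\,(W+1)^2\,|V|^{W}
\]
states, up to the finitely many ways to split $W$ between $b$ and $e$. Saturation only ever adds states to a finite set, so it stops.

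\emph{Soundness.} I would induct on the order in which states are produced, maintaining the following invariant. Every produced state is realized by some raw expression of width at most $W$. Realized means the expression's value maps into the target as the realization clause requires: graph letters go bijectively to $P$, internal nodes go bijectively onto $C$, live nodes follow $\kappa$, and tails and heads are preserved. The base cases are the reduced primitive leaves. For the inductive step, sum is just concatenation of the two realizations. Composition needs the guards: the middle arities and labels must match, $P$ and $C$ must be disjoint, and finalized labels must not reappear on the other live boundary. These guards make the positional quotient an injective realization again. A label is finalized only when its last live class vanishes after all of its incident edges are placed, so the internal-node bijection is preserved. The cap keeps the width at most $W$. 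At the accepting state $(E^\sharp,V,(),(),())$ the realization is a closed expression whose value is bijective on both edges and vertices, so its value is the target and $\patw(\vec F)\le W$.

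\emph{Completeness.} This is the main obstacle. Take an optimal $t$ with $\val(t)=\vec F$ and $\patw(t)\le W$, normalized to reduced primitive leaves as in the paragraph before the proposition. For each subterm occurrence $t'$, define its state from the canonical map $\theta\colon\val(t')\to\val(t)$. Its $P$ is the image of its edges, its $C$ is the image of its internal nodes, and $\kappa$ and the words $b,e$ are read off from $\theta$ on its boundary nodes. I would then show by induction that these states are produced. The guards have to be verified against Lemma~\ref{lem:internal}. That lemma gives that $\theta$ is injective on internal nodes and that internal nodes already carry all their incident edges, which is exactly the finalization condition. It also rules out finalized labels reappearing on a live boundary. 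Injectivity on edges holds because each letter occurrence lies in exactly one branch, so the two sides have disjoint $P$. Merged boundary classes whose label survives elsewhere stay live, which matches the rule ``finalize only when the last live class vanishes''. The delicate point is this label bookkeeping: two distinct boundary nodes may carry the same target label and merge later, so I would check the finalization step carefully.

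\emph{Undirected case and vertexless target.} For an undirected $H$, both orientations of each edge are seeded, and the undirected-mode realization records which orientation each placed edge received. An accepted run therefore fixes one orientation per edge, and the directed argument applies to it, giving the minimum over $\operatorname{Ori}(H)$. Conversely, any orientation's witness expression is traced by the saturation. The vertexless target is accepted by $e_0$ directly.
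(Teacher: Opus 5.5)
Your proposal is correct and follows the paper's proof. Completeness reads legal states off the subterms of a width-$W$ witness via Lemma~\ref{lem:internal}. Applied at the parent subterm, it is the full statement $\theta^{-1}(\theta(v))=\{v\}$, not just injectivity on internal nodes, that yields the finalization test $R_v=\{B\}$. Soundness is induction over a retained construction tree with the endpoint-incidence realization invariant, and termination is finiteness of the state space.

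One small slip: your displayed state count omits the piece-index contents of $b$ and $e$. Distinct pieces may share a label, so the count needs an extra factor of roughly $W^{W}$ for these index words, but finiteness is unaffected.
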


\begin{proof}
A width-$W$ expression induces these states at all subterms by
Lemma~\ref{lem:internal}. Conversely, induction on a retained construction
tree shows that every reached state realizes exactly its recorded data. An
accepting expression therefore has value $\vec F$ in directed mode and lies in
$\operatorname{Ori}(H)$ in undirected mode. There are finitely many choices of
$P,C$ and at most $W$ boundary occurrences up to renaming. Lemma~\ref{supp:lem:endpoint-incidence} and
Proposition~\ref{supp:prop:finiteclosure} in the supplementary sections give the two inductions guard by
guard.
\end{proof}

For a stored set $R$ of canonical representatives, the lower-bound invariant
is its orbit union: this union contains every seed, is transition closed, and
omits the accepting orbit. In undirected mode both oriented seeds for every
edge exclude all orientations. The supplied checker verifies representative-level
conditions implying this invariant independently of the search engine, as
formalized in Definition~\ref{supp:def:certificate} and Proposition~\ref{supp:prop:certificate-soundness}.

\section{Bounded normal forms and bounded coherence}\label{sec:normal}

Throughout this section, $\Sigma$ is an arbitrary fixed finite doubly ranked
edge alphabet and $\E=\E_\Sigma$.

The global form in Theorem~\ref{thm:bk} is not width-sensitive. We instead
compile along an edge-partition tree. Core-frame factorization protects exposed
vertices, finite local schemes reduce routing, four-cell interpolation compares
bounded trees, and fixed serialization selects a representative. The finite
renderer facts used below are typed linear-use templates and value checks
(Definition~\ref{supp:def:renderer} and Proposition~\ref{supp:prop:renderer-validity}), terminating coverage (Section~\ref{supp:sec:coverage-audit} and
Lemma~\ref{supp:lem:renderer-coverage}), endpoint bounds (Proposition~\ref{supp:prop:window-arity}), exact routing (Lemma~\ref{supp:lem:routing-endpoint}),
quadrant endpoints (Lemma~\ref{supp:lem:quadrant-degeneracies}), and literal carrier exposure (Lemma~\ref{supp:lem:carrier-exposure})
in the supplementary sections.

\subsection{Derivation width and boundary control}

\begin{definition}[derivation width]\label{def:derw}
For raw expressions $s,t$ of the same type, a derivation is a sequence
$s=t_0,t_1,\ldots,t_N=t$ of raw expressions of that type. At one step, choose
a fully parenthesized scheme in $\mathcal B\cup\E$, substitute well-typed raw
expressions for all its metavariables, choose one literal occurrence of the
resulting left or right side as a subterm of $t_i$, and replace it by the other
side. Either direction is allowed. Its width is $\max_i\patw(t_i)$. We write
$s\equiv_{\mathcal B\cup\E}^{w}t$ when such a derivation of width at most $w$
exists.
\end{definition}

\begin{proposition}[the global-form obstruction]\label{prop:obstruction}
Let $F$ be a closed graph with $m\ge1$ edge occurrences. Every backfold normal
form of \cite[Thm.~4a]{BK04} for $F$ contains a subexpression of rank sum
\[
 \sum_{i=1}^{m}(p_i+q_i),
\]
where the $i$th edge label has rank $(p_i,q_i)$. Under
Convention~\ref{conv:enc} this is $2m$. Thus routing a derivation through that
normal form gives no bound in terms of the pattern widths of the endpoints on
families with an unbounded number of positive-arity edge occurrences, including
the ordinary graph encoding of Convention~\ref{conv:enc}.
\end{proposition}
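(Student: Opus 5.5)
The plan is to read off the shape of the backfold normal form of \cite[Thm.~4a]{BK04} and locate a single subterm that carries all edge endpoints at once. That normal form factors a closed pattern, up to the structural laws $\mathcal B$, as a composition
\[
 \beta\circ(\sigma_1\bsum\sigma_2\bsum\cdots\bsum\sigma_m)\circ\gamma ,
\]
possibly with permutation and discrete parts absorbed into $\beta$ and $\gamma$. Here the middle factor is the parallel sum of all $m$ edge letters, $\beta$ is a discrete/permutation pattern of type $(0,\sum_i p_i)$, and $\gamma$ is a discrete pattern of type $(\sum_i q_i,0)$. I will cite the theorem in exactly this form. I will also fix which raw representative counts as "a backfold normal form": any fully parenthesized raw term whose $\mathcal B$-class is the displayed pattern, with the edge block written as some parenthesization of the $m$-fold sum.

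Step one. Whatever the parenthesization, the raw term contains a subterm occurrence whose value is the middle block $\bigbsum_i\sigma_i$. Its rank is $(\sum_i p_i,\sum_i q_i)$, so its rank sum is $\sum_i(p_i+q_i)$. If the normal form is stated with the edge block isolated as a composition factor, this subterm exists literally: a parenthesized $\bsum$ of the $m$ letters is a subterm, and its type is additive under $\bsum$.

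Step two. If the normal form interleaves identity wires, for example writing the edge block as $\bigbsum_i(\sigma_i)$ next to units $e_{r}$, then the subterm may have larger rank. That only increases the rank sum, so "contains a subexpression of rank sum $\sum_i(p_i+q_i)$" should be read via the whole edge-layer factor. In that reading the lower bound $\ge\sum(p_i+q_i)$ is what matters. I will state the equality case for the literal block and note the monotone case otherwise.

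Step three. Under Convention~\ref{conv:enc} each $a$ has rank $(1,1)$, which gives $2m$. The consequence is then immediate. Any derivation routed through the normal form has width at least $2m$. Families such as paths have $\patw\le2$ (Section~\ref{sec:width}) but unboundedly many arcs, so no function of the endpoint widths bounds the derivation width.

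The main obstacle is step one, namely making precise which raw parenthesizations qualify as backfold normal forms. \cite{BK04} states the form at the level of patterns, so I must argue that every raw representative respecting the displayed factorization keeps the edge layer as a single subterm. If a representative could split the edge layer by interchange, the claim would concern the canonical raw rendering. I would first try fixing the canonical left-associated rendering, as the paper does for permutations, and then proving the claim for that rendering.
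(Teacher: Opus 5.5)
Your key idea is the paper's: the normal form places all $m$ edge letters in one parallel factor, and ranks add under $\bsum$, so that factor has rank sum $\sum_i(p_i+q_i)$. However, you have the shape of the cited form wrong.

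The closed backfold form is
\[
 \gamma\circ\bigl(\operatorname{bf}(\sigma_1)\bsum\cdots\bsum\operatorname{bf}(\sigma_m)\bigr).
\]
Each letter is bent so that $\operatorname{bf}(\sigma_i)$ has rank $(p_i+q_i,0)$. A single discrete factor $\gamma$ of rank $(0,s)$, with $s=\sum_i(p_i+q_i)$, performs every vertex identification. The right operand of the top-level $\circ$ is therefore literally a subterm of rank $(s,0)$, and that observation is the whole proof.

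Your sandwich $\beta\circ(\sigma_1\bsum\cdots\bsum\sigma_m)\circ\gamma$, with no middle wires, is not even a complete form for closed graphs. Every vertex comes either from $\beta$, on the input side of the letters, or from $\gamma$, on the output side. The gluing never identifies one with the other. So the form cannot express a head of one edge equal to the tail of another, as in a directed two-edge path, nor a loop.

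That incompleteness is what forced you into Step two's interleaved identity wires and the weakening to a lower bound. The backfolded shape has no such wires, and the rank sum is exactly $s$, as the statement asserts.

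Your "main obstacle" also disappears. Any bracketing of the $m$-fold sum is still the right operand of the top-level composition, so no canonical rendering needs to be fixed. The interchange-splitting worry concerns raw terms that are not backfold normal forms at all.

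Step three is fine once the count is corrected. Under Convention~\ref{conv:enc} the layer has rank sum $2m$, while directed paths have width at most $2$.
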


\begin{proof}
The closed backfold form is
\[
 \gamma\circ\bigl(\operatorname{bf}(\sigma_1)\bsum\cdots\bsum
 \operatorname{bf}(\sigma_m)\bigr),
\]
where $\gamma$ has rank $(0,s)$ for $s=\sum_i(p_i+q_i)$ and the $i$th
backfolded letter has rank $(p_i+q_i,0)$. The right factor is a subexpression
of rank $(s,0)$.
\end{proof}

We use edge occurrences as atoms. For a graph value $G$, let $E(G)$ denote
its finite set of labeled edge occurrences and let $P(G)$ be the set of
vertices appearing in its begin or end sequence. If $A\subseteq E(G)$, write
$V_G(A)$ for the vertices incident with an edge in $A$ and set
\begin{equation}\label{eq:atomboundary}
 \partial_G A=
 V_G(A)\cap\bigl(V_G(E(G)\setminus A)\cup P(G)\bigr).
\end{equation}
Thus a vertex belongs to $\partial_G A$ if the $A$-part must retain it for an
edge outside $A$ or for an outer port. Repetitions in the begin and end
sequences are counted once in $P(G)$. Their ordered multiplicities will be
restored by a separate port frame.

We also record the largest arity of an edge occurrence,
\[
 \alpha(G)=\max\bigl(\{p+q:\text{an edge occurrence of $G$ has a
 label of rank $(p,q)$}\}\cup\{0\}\bigr).
\]
This parameter is necessary because a one-edge graph can have empty atom
boundary while its edge letter has large rank. If $G$ is the value of a
width-$k$ expression, then $\alpha(G)\le k$.

A rooted edge-partition tree of $G$ is a rooted plane full binary tree whose leaves
are the edge occurrences. A node $x$ represents the set $A_x$ of leaves below
it. Its boundary width is
\[
  \operatorname{ewd}_G(T)=\max_x|\partial_G A_x|.
\]
Components containing no edge are excluded from the tree. They are discrete
components and will be handled by the discrete normal form in
Lemma~\ref{lem:relative-spider}.

\begin{lemma}[intersection boundary]\label{lem:intersection}
For all $A,B\subseteq E(G)$,
\[
 \partial_G(A\cap B)\subseteq\partial_G A\cup\partial_G B.
\]
\end{lemma}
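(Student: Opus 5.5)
The plan is to prove the inclusion by element chasing directly from the definition \eqref{eq:atomboundary}. Fix $v\in\partial_G(A\cap B)$. By definition, $v$ is incident with some edge $x\in A\cap B$, and moreover either $v\in P(G)$ or $v$ is incident with some edge $y\in E(G)\setminus(A\cap B)$. Since $x\in A$ and $x\in B$, we already have $v\in V_G(A)$ and $v\in V_G(B)$. So it remains to show that $v$ also lies in $V_G(E(G)\setminus A)\cup P(G)$ or in $V_G(E(G)\setminus B)\cup P(G)$.

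We split into two cases according to which disjunct witnesses $v\in\partial_G(A\cap B)$.

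\emph{Case $v\in P(G)$.} Here $v\in V_G(A)\cap P(G)\subseteq\partial_G A$, and we are done.

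\emph{Case $v$ incident with some $y\notin A\cap B$.} Then $y\notin A$ or $y\notin B$. If $y\notin A$, then $v\in V_G(E(G)\setminus A)$, so $v\in\partial_G A$. If $y\notin B$, then symmetrically $v\in\partial_G B$.

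Combining the two cases gives the stated inclusion. The only point needing care is that $V_G(\cdot)$ is monotone and that the port set $P(G)$ does not depend on $A$ or $B$; both are immediate from the definitions. I do not expect any real obstacle: the lemma is the set-theoretic fact that the complement of an intersection is the union of the complements, applied to the sets of edges outside $A$ and outside $B$.
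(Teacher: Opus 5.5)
Your proof is correct and is essentially the paper's argument. You choose an incident edge in $A\cap B$, dispose of the case $v\in P(G)$ directly, and otherwise split according to whether the witnessing outside edge misses $A$ or misses $B$; the paper phrases the last split as $f\notin A$ versus $f\in A$, which is the same thing.
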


\begin{proof}
Let $v\in\partial_G(A\cap B)$ and choose an incident edge
$e\in A\cap B$. If $v\in P(G)$, then $v$ belongs to both boundaries on the
right. Otherwise $v$ has an incident edge $f\notin A\cap B$. If $f\notin A$,
then $v\in\partial_G A$. If $f\in A$, then $f\notin B$ and
$v\in\partial_G B$.
\end{proof}

The next lemma is the width-sensitive replacement for an unrestricted spider
normalization. The holes in the statement are important. They prevent the
normalization from collecting all graph letters into one parallel layer.

\subsection{Protected cores and local normalization}

\begin{lemma}[naturality under substitution]\label{lem:substitution}
Let $H$ be a finite doubly ranked alphabet of formal boxes and let $\Delta$ be
a finite $\mathcal B\cup\E_{\Sigma\cup H}$-derivation between two linear-use
contexts. There is an effectively computable monotone
$s_\Delta:\NN\to\NN$ such that every rank-preserving substitution
$h\mapsto u_h$ gives a $\mathcal B\cup\E_\Sigma$-derivation of width at most
\[
 s_\Delta\!\left(\max\bigl(\{\patw(u_h):h\in H\}\cup\{0\}\bigr)\right).
\]
\end{lemma}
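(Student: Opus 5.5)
The plan is to substitute $u_h$ for every box occurrence along the whole derivation $\Delta$ and to bound the width of each resulting raw expression. Write $\Delta$ as $c=c_0,c_1,\ldots,c_N=c'$, a sequence of raw expressions over $\Sigma\cup H\cup D$ with unit symbols. Each $c_i$ is a context in which letters of $H$ may occur, possibly several times or not at all. Only the endpoints are assumed linear-use, but the intermediate terms are still finite raw terms. For a substitution $\phi\colon h\mapsto u_h$, write $\phi(c_i)$ for the raw expression obtained by replacing every leaf occurrence of $h$ by $u_h$.

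The first step checks that $\phi(c_0),\ldots,\phi(c_N)$ is a legitimate $\mathcal B\cup\E_\Sigma$-derivation. At step $i$ the derivation applies an instance of a scheme in $\mathcal B\cup\E_{\Sigma\cup H}$: metavariables are replaced by well-typed raw terms, and one literal occurrence is rewritten inside $c_i$.
\begin{itemize}
\item If the scheme is structural or is one of (A1)--(A14), it does not depend on the alphabet. Applying $\phi$ to the substituted metavariable values gives an instance of the same scheme at the image occurrence, because $\phi$ preserves ranks and literal parenthesization.
\item If the scheme is (A15) with $\sigma\in\Sigma$, the same reasoning applies.
\item If the scheme is (A15) with $\sigma=h\in H$ of rank $(p,q)$, the image step would read $\pi_{p,1}^{\rm raw}\circ(u_h\bsum e)=(e\bsum u_h)\circ\pi_{q,1}^{\rm raw}$. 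This is not a single scheme instance. It must be replaced by a naturality derivation $N_h$ that commutes the whole term $u_h$ past the wire.
\end{itemize}

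The construction of $N_h$ is the main obstacle, because its width must be controlled by $\patw(u_h)$ alone and must not depend on the size of $u_h$. I would build $N_h$ by structural induction on $u_h$, so that it uses only the structural laws, (A1), (A2), and (A15) for letters of $\Sigma$. In each case one first identifies $\pi^{\rm raw}_{p,1}$ with a suitably factored adjacent-transposition word by (A1), (A2), and the structural laws.
\begin{itemize}
\item For a leaf $\sigma\in\Sigma$, use (A15) directly.
\item For a leaf in $D$ or a unit, use the finite instances of (A6), (A7), (A11), (A12), (A1), and (A2), or unfold $\widehat e_r$.
\item For $u=v\circ w$, interchange $(v\circ w)\bsum e$ with $(v\bsum e)\circ(w\bsum e)$, move the crossing through $v$ and then through $w$, and refold.
\item For $u=v\bsum w$, split the block crossing into a crossing past $v$ followed by a crossing past $w$, using associativity.
\end{itemize}
Every intermediate term is then a subterm of $u_h$ surrounded by one extra wire and block permutations whose arity is bounded by the ranks of the enclosing subterm. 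This gives width at most $c\cdot\patw(u_h)+c$ for an absolute constant $c$, for instance $c=4$. The delicate point is to check that the re-association of $\pi^{\rm raw}$ into the factored form stays within the rank of the enclosing subterm. I would handle this by using the fixed lexicographic representatives together with (A2) braid moves, which never exceed width $2(m+n)$.

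The second step bounds widths. Let $M=\max_i\patw(c_i)$ and let $w=\max_h\patw(u_h)$. Every subterm of $\phi(c_i)$ is one of three kinds:
\begin{itemize}
\item a subterm of some $u_h$, of width at most $w$;
\item $\phi$ applied to a subterm of $c_i$, whose rank is unchanged by $\phi$ and is therefore at most $M$;
\item inside some $N_h$, where the width is at most $M+c(w+1)$, since the surrounding context contributes at most the rank of its own subterms.
\end{itemize}
So set $s_\Delta(w)=\max(M,w)+c(w+1)$. This function is monotone and is computable from the finite data $\Delta$. Monotonicity also covers the case $H=\varnothing$, where the maximum is $0$.
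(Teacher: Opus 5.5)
Your proposal is correct and follows the paper's proof. In both, only the (A15) steps for formal boxes need expansion, and each is replaced by the exchange $\pi_{p,1}^{\rm raw}\circ(u_h\bsum e_1)\leftrightarrow(e_1\bsum u_h)\circ\pi_{q,1}^{\rm raw}$, built by structural induction on $u_h$ with the child derivations used one after the other. The width is then controlled by the fixed context ranks together with the exchange width, which is the paper's $s_\Delta(n)=\max\{n,c_\Delta,x_\Sigma(n)+c_\Delta\}$. Your explicit linear bound $c\,\patw(u_h)+c$ is a mild sharpening of the paper's computable $x_\Sigma$, and it holds because $\Sigma$ is finite and every induction context has rank sum linear in $\patw(u)$.
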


\begin{proof}
Only an (A15) step for a formal box needs expansion. For every raw
$u:p\to q$, structural induction derives
\[
 \pi_{p,1}^{\rm raw}\circ(u\bsum e_1)
 \xleftrightarrow{*}_{\mathcal B\cup\E_\Sigma}
 (e_1\bsum u)\circ\pi_{q,1}^{\rm raw}.
\]
The base cases use (A1), (A2), (A5) through (A7), (A10) through (A12), and
(A15). The two
induction clauses use interchange and sequential child derivations. Since
$\Sigma$ is finite, this gives a computable monotone bound $x_\Sigma(n)$ for
the displayed exchange when $\patw(u)\le n$. If $c_\Delta$ bounds the fixed
context ranks and endpoint widths in $\Delta$, replacing its formal (A15)
steps gives
$s_\Delta(n)=\max\{n,c_\Delta,x_\Sigma(n)+c_\Delta\}$ after monotone closure.
Same-type substitution changes no ancestor rank. The induction templates and
the bound $x_\Sigma(n)\ge4n+4$ are in Section~\ref{supp:sec:naturality-audit} and
Equation~\ref{supp:eq:naturality}.
\end{proof}

We first fix the raw discrete ingredients used by every router.
Put $\mu_0=\io01$, $\mu_1=e_1$, $\delta_0=\io10$, and
$\delta_1=e_1$. For $j\ge1$, define the fixed left-associated terms
\[
 \mu_{j+1}=(\mu_j\bsum e_1)\circ\io21,
 \qquad
 \delta_{j+1}=\io12\circ(\delta_j\bsum e_1).
\]
Thus $\mu_j:j\to1$ merges $j$ wires and $\delta_j:1\to j$ copies one
wire to $j$ wires, including the nullary cases.
Put $z=\io01\circ\io10$, $Z_0=e_0$, and
$Z_{c+1}=z\bsum Z_c$, with right association.

\begin{definition}[cores, frames, and canonical routers]
\label{def:core-frame-router}
A \emph{named core} is a raw expression $C:0\to r$ whose end ports carry
vertex names. It is \emph{reduced} when the names are distinct and
\emph{protected} when a surrounding discrete context uses $C$ once as an
opaque hole. A graph-bearing core of rank zero remains opaque. Only a
port-free component created by the discrete context is represented by
$z$ and contributes to the fixed suffix $Z_c$.

Let $\mathbf x,\mathbf y$ be ordered input and output position lists and let
$\eta$ partition their union. Order the blocks by first occurrence, put
$p_B=|B\cap\mathbf x|$ and $q_B=|B\cap\mathbf y|$, and let
$P_{\rm in},P_{\rm out}$ be the fixed permutation terms for block order and
the prescribed output order. The canonical router is
\begin{equation}\label{eq:canonical-router}
 \operatorname{Can}(\eta,c)=
 \left(P_{\rm in}\circ
 \Bigl(\mathop{\bsum}_{B\in\eta}\mu_{p_B}\Bigr)\circ
 \Bigl(\mathop{\bsum}_{B\in\eta}\delta_{q_B}\Bigr)\circ
 P_{\rm out}\right)\bsum Z_c.
\end{equation}
All sums are left associated, units are expanded, and the empty partition
uses $e_0$. For a protected hole $h:0\to r$ with ordered names
$\boldsymbol\nu$, and prescribed begin and end tuples $\mathbf b,\mathbf e$,
let $\eta(\mathbf b,\boldsymbol\nu\mid\mathbf e)$ identify equally named
positions. Its fixed port frame is
\begin{equation}\label{eq:raw-frame}
 \operatorname{Frame}_{\mathbf b,\mathbf e,\boldsymbol\nu,c}[h]
 = (e_{|\mathbf b|}\bsum h)\circ
 \operatorname{Can}(\eta(\mathbf b,\boldsymbol\nu\mid\mathbf e),c).
\end{equation}
\end{definition}

For later compilation, let $u$ be a subterm of a fixed closed expression with
value $G$, let $A_u$ be its edge occurrences, and let
$\theta_u:\val(u)\to G$ be its context map. On the
boundary positions, with represented vertices $x_i$, put
\[
 i\mathrel{\beta_u}j\Longleftrightarrow x_i=x_j,
 \qquad
 i\mathrel{\gamma_u}j\Longleftrightarrow
 \theta_u(x_i)=\theta_u(x_j).
\]
Here $\beta_u$ is local equality, $\gamma_u$ is equality in $G$, and
$q_u:\beta_u\twoheadrightarrow\gamma_u$ is the quotient. For $L\in\gamma_u$,
put $v_L=\theta_u(x_i)$ for any $i\in L$. Let
\begin{equation}\label{eq:state-rule}
 \lambda_u=\{L\in\gamma_u:
 v_L\in V_G(A_u)\cap
 (V_G(E(G)\setminus A_u)\cup P(G))\}.
\end{equation}
At $u=v\mathbin\star w$, the parent partition is the disjoint union of the
child partitions, followed for $\star=\circ$ by exactly the positional
gluings. Classes absent from the parent boundary are closed. Distinct
$\beta_u$-blocks are not merged merely because they have the same image in
$\gamma_u$. A \emph{parse-aligned framed realization} uses reduced cores and
the fixed routers above at every parse node. It is in \emph{completed-step
form} when every inactive region is protected and all extracted scalars form
a fixed right-associated suffix ending in $Z_c$. All names, partitions, and
activity marks are metasyntactic.
The fully expanded frames and edge-free normalization are in Equations~\ref{supp:eq:name} through \ref{supp:eq:canonical-frame} and Lemma~\ref{supp:lem:edge-free}.

The local moves are rendered with the fixed raw routers of
Definition~\ref{def:core-frame-router}. Their expanded generator-level
formulas, permutation padding, and scalar cases are given in Definition~\ref{supp:def:renderer} and Equations~\ref{supp:eq:binary-router} through \ref{supp:eq:scalar-compose}. We retain here the typed four-cell endpoint used in the
interpolation. Let $I\subseteq\{0,1\}^2$ record the occupied edge cells. For
$(i,j)\in I$, let $h_{ij}:0\to r_{ij}$ be a distinct linear-use hole, where
$r_{ij}$ is allowed to be zero. For $(i,j)\notin I$, put $r_{ij}=0$ and write
$h_{ij}=e_0$ only as a unit placeholder, with no formal hole. Thus an occupied
cell with empty protected interface remains an opaque rank-zero graph-bearing
core. Let $\eta$ be the partition of the protected ports of the occupied holes
by final vertex, and let $\mathcal O$ be the ordered blocks surviving in the
common outer interface. The partitions $\eta_{i*}$ and $\eta_{*j}$ first
assemble row $i$ and column $j$. The partitions $\eta_R$ and $\eta_C$ merge
their two staging banks and output $\mathcal O$. With the fixed plane orders,
set
\begin{align}
 \mathsf{Row}_\eta[h]&=
 \Bigl(((h_{00}\bsum h_{01})\circ\operatorname{Can}(\eta_{0*},0))
 \bsum((h_{10}\bsum h_{11})\circ\operatorname{Can}(\eta_{1*},0))\Bigr)
 \circ\operatorname{Can}(\eta_R,c),\label{eq:row-template}\\
 \mathsf{Col}_\eta[h]&=
 \Bigl(((h_{00}\bsum h_{10})\circ\operatorname{Can}(\eta_{*0},0))
 \bsum((h_{01}\bsum h_{11})\circ\operatorname{Can}(\eta_{*1},0))\Bigr)
 \circ\operatorname{Can}(\eta_C,c),\label{eq:col-template}
\end{align}
where $c\in\{0,1\}$. Every occupied formal hole occurs once in each term,
including every hole of type $0\to0$, and the two terms induce the same
partition, ordered outer interface, and scalar tag.

\Needspace{8\baselineskip}
\begin{definition}[admissible local schemes]\label{def:admissible-schemes}
A $q$-local datum records one of six kinds, namely leaf frame, binary router,
fold, prune, four-cell switch, or two-scalar move. It records at most four
ordered linear-use hole types, the outer type, the relevant partitions and
orders, and at most two exposed auxiliary scalars. Four-cell data also record
occupancy, so an occupied rank-zero core remains a hole. The scalar increment
is $c\in\{0,1,2\}$. A completed suffix of any length is one opaque rank-zero
carrier occurring once on each side and is not counted by $c$.

The deterministic renderer uses $\operatorname{Can}$,
$\operatorname{Frame}$, the fixed adjacent-transposition words, and
Equations~\eqref{eq:row-template} and \eqref{eq:col-template}. Absent cells contribute
$e_0$. Scalar data record the two raw orders and parenthesizations.

Port load counts named positions in hole interfaces, the outer interface, and
internal banks, including repetitions caused by staging. A datum is valid when
its endpoints have the recorded type and load at most $q$, use every formal
hole once with the same typed incidence list, and have the same pointed value
when holes are distinct letters. Let $\mathcal M_q$ be the valid data up to
order-preserving renaming. Definition~\ref{supp:def:renderer}, gives the
expanded formulas.
\end{definition}

Fix a discrete context $Q:p\to q$, linear in reduced named holes
$h_i:0\to r_i$, including opaque graph-bearing holes with $r_i=0$. Suppose
the holes and instantiated context have width at most $w$, and at most $r$
protected and outer positions occur. Put
\[
 B_D(r,w)=2\bigl(r+4\max\{r,w\}\bigr)+4,
 \qquad Q_D(r,w)=12\bigl(B_D(r,w)+1\bigr).
\]
For a completed subcontext $K:p_K\to q_K$, its normalization record is
\[
 \mathsf N(K)=
 \bigl((e_{p_K}\bsum A_K)\circ\operatorname{Can}(\eta_K,0)\bigr)
 \bsum S_K:p_K\longrightarrow q_K.
\]
$A_K$ aggregates the positive-rank holes, $\eta_K$ is the exact port
partition, and $S_K$ is the fixed suffix of rank-zero graph-bearing holes and
$D$-only factors $z$. Every original hole occurs once in $A_K$ or $S_K$.

\begin{lemma}[typed coverage of discrete normalization]
\label{lem:typed-coverage}
Deterministic bottom-up normalization terminates at the unique record
$\mathsf N(Q)$. Every rendered local step belongs to $\mathcal M_{Q_D(r,w)}$,
uses at most four aggregate holes once, and is interleaved with only the
literal-address carrier exposures of Lemma~\ref{supp:lem:carrier-exposure}. Every
graph-bearing rank-zero hole is preserved. A parse or routing window with at
most six slots of load $k$ has rendered load at most $12(k+1)$.
\end{lemma}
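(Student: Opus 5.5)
We prove the lemma by structural induction on the raw discrete context $Q$, normalizing bottom-up so that every completed subcontext $K$ is replaced by its record $\mathsf N(K)$ before its parent is examined. The leaves are handled first. A generator of $D$, a unit $e_n$, and a reduced named hole $h_i$ each have a record obtained by one leaf-frame datum: the hole is wrapped by $\operatorname{Frame}$ with the identity partition on its names, while $D$-generators and units have no holes, so their records are the canonical routers $\operatorname{Can}(\eta,0)$ of their own port partitions. The one exception is a port-free discrete leaf closure, which is rewritten to $z$ and pushed into the suffix. Rank-zero graph-bearing holes are never opened. They are moved into $S_K$ as opaque letters, which gives preservation at once.

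For the inductive step, let $K=K_1\mathbin\star K_2$ with both children already in record form. If $\star=\bsum$, a binary-router datum merges $A_{K_1}\bsum A_{K_2}$ into one aggregate hole and $\operatorname{Can}(\eta_{K_1},0)\bsum\operatorname{Can}(\eta_{K_2},0)$ into $\operatorname{Can}(\eta_K,0)$ after permutation padding. A two-scalar move then concatenates $S_{K_1}$ and $S_{K_2}$ into the right-associated suffix, treating the already completed suffix as one opaque carrier. If $\star=\circ$, the middle interface is glued positionally. Composing the two canonical routers gives a discrete pattern whose value is determined by the partition $\eta_K$, and a fold datum (for blocks merged through the middle) followed by prune data (for blocks with no surviving outer port) rewrites it to $\operatorname{Can}(\eta_K,c)$ with $c\le2$ new scalars $z$. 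These scalars are then absorbed by two-scalar moves.

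Each such step involves at most four aggregate holes: the two $A$'s and the two $S$-carriers. Validity in $\mathcal M_q$ reduces to two checks. The first is the pointed-value check, which is Theorem~\ref{thm:bk} applied to the instantiation by distinct letters. The second is the type check. Between steps, the occurrences of aggregated holes may be buried under literal parenthesizations, and the carrier exposures of Lemma~\ref{supp:lem:carrier-exposure} are interleaved only to bring the four holes to the literal addresses required by the rendered templates.

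Termination holds because each step strictly decreases the number of unnormalized parse nodes. Uniqueness holds because the renderer is deterministic and $\mathsf N(K)$ depends only on the hole order, $\eta_K$, and the suffix order, all of which are fixed by the plane order of $Q$.

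The main obstacle is the load bound. In a window, the named positions are the outer interface, at most four hole interfaces, and the internal staging banks. The interfaces of the hole aggregates are protected positions, so there are at most $r$ of them. Interface sizes coming from the subterm widths are bounded by $w$, and the router stages repeat each position at most four times. This gives a rendered width of at most $2(r+4\max\{r,w\})+4=B_D(r,w)$. I would then charge each of at most six slots of load at most $B_D+1$ and double for the two staging banks, which gives $Q_D=12(B_D+1)$. The same charging applied to a six-slot window of load $k$ gives $12(k+1)$. The delicate point, which I would verify first, is that staging repetitions inside $\operatorname{Can}$ never exceed two per slot. That property comes from the $\mu_{p_B}$ and $\delta_{q_B}$ layers being sequential, and it is exactly what Proposition~\ref{supp:prop:window-arity} should supply.
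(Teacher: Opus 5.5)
Your overall architecture matches the paper's: bottom-up structural induction, rank-zero graph-bearing holes kept opaque in $S_K$, an endpoint fixed by hole order, $\eta_K$ and suffix order, and the six-slot doubling for $12(k+1)$. The composition step, however, has a genuine gap. You claim that one fold datum followed by prune data reaches $\operatorname{Can}(\eta_K,c)$ with $c\le 2$. That bound is false, because the number of new port-free classes created at a single composition node is bounded only by the middle rank, hence by $w$. For example, $(\io01\bsum\io01\bsum\io01)\circ(\io10\bsum\io10\bsum\io10)$ creates three at once, and the paper's $d_n=(\io01^{\bsum n})\circ(\io10^{\bsum n})$, of width $n$, creates $n$. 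Every datum of $\mathcal M_q$ has scalar increment in $\{0,1,2\}$, so your step does not belong to $\mathcal M_{Q_D(r,w)}$. Your leaf clause also misplaces $z$-creation: no generator of $D$ or unit is port-free, so port-free discrete classes arise only at composition nodes, exactly where their number is unbounded. The missing idea is the paper's composition-state machine on $\Gamma_K$. It processes one middle incidence at a time through split, expose and fuse. Fusion across a middle wire, and removal of a redundant incidence, need the Frobenius and special laws (A13) and (A14); the fold identity \eqref{supp:eq:padded-fold} is only merge associativity and does not supply them. The machine then emits one component at a time, externalizes each port-free component as a single $z$ via \eqref{supp:eq:state-nullary-extraction}, and moves that $z$ to the suffix past one carrier or scalar factor per step. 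This schedule also breaks your termination argument, since most steps do not complete a parse node. You need an inner lexicographic measure such as $\Theta=(U,N,\phi,I,S_{\rm out},S_{\rm inv},S_{\rm ass})$, with carrier exposures executed between measured transitions.

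Your derivation of $B_D$ is also incorrect as stated. The factor $4$ is not a fourfold repetition of positions inside routers. It counts the at most four carriers of a two-router window (the active block and its complement on each side). Each carrier has interface load at most $\max\{r,w\}$: a synthetic aggregate of protected cores has load at most $r$, and an unchanged addressed subterm inherits a cut of the width-$w$ term. This gives a base word of length $T\le r+4\max\{r,w\}$. The staging banks are then bounded by $T+1$ through an injective charge of exported classes, each sent to a carrier-port representative, a least base position, or one dummy. Only after that does the charging map $\chi$ give $\ell(d)\le2\sum_i(b_i+1)\le12(B_D+1)$, with each source position occurring once natively and at most once more, plus four auxiliary occurrences. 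Your remark that each slot contributes at most two occurrences is the right ingredient for $12(k+1)$. Finally, membership in $\mathcal M_q$ is checked by direct evaluation of both endpoints over distinct hole letters. Theorem~\ref{thm:bk} plays no role in validity; it is used only later to select derivations.
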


\begin{proof}
Syntax induction uses disjoint union, positional quotient, and first-occurrence
order. It turns only port-free discrete classes into $z$ and keeps graph-bearing
scalars opaque. Direct evaluation verifies the retained data. Carrier load
$\max\{r,w\}$ gives $B_D$, and the six-slot audit gives $Q_D$ and $12(k+1)$.
Exposure only reassociates. Fixed postorder makes
$\Theta=(U,N,\phi,I,S_{\rm out},S_{\rm inv},S_{\rm ass})$ decrease.
Lemmas~\ref{supp:lem:renderer-coverage} and \ref{supp:lem:carrier-exposure}, give the full state proof.
\end{proof}

\begin{lemma}[finite-template invariant]\label{lem:template-invariant}
If every substituted hole, exposed scalar, and opaque carrier of a valid
$q$-local datum has width at most $w$, its instantiated endpoints have width at most
\[
 \max\{w,4q+4\}.
\]
No carrier is inspected or duplicated, and same-type replacement
preserves every ancestor rank.
\end{lemma}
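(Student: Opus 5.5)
I will argue by induction on the structure of an instantiated endpoint. View it as a raw term in which the rendered template skeleton sits above the substituted subterms. Every subterm occurrence is then of one of two kinds: it lies inside a substituted hole, an exposed scalar, or the opaque carrier, or it is a skeleton node, meaning a subterm of the rendered template whose holes may have been filled. Subterms of the first kind are occurrences inside terms of width at most $w$, so their rank sum is at most $w$. For skeleton nodes, the key point is that substitution is rank-preserving: each formal hole $h_i:0\to r_i$ is replaced by some $u_{h_i}$ of the same type. Hence the rank of every skeleton node is the rank it already has in the uninstantiated template. This is the ``same-type replacement preserves every ancestor rank'' clause, proved by a direct induction on the skeleton using only that $\circ$ and $\bsum$ compute ranks from the children's ranks. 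It therefore suffices to bound the rank sum of every node of the uninstantiated rendered template by $4q+4$.

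For the template bound, I would go through the six kinds of Definition~\ref{def:admissible-schemes}. In each case the renderer assembles the term from the building blocks $\operatorname{Can}$, $\operatorname{Frame}$, fixed adjacent-transposition words, $\mu_j,\delta_j,Z_c$, and the row and column templates \eqref{eq:row-template}--\eqref{eq:col-template}. The rank sum of any node in such a term is bounded by the positions present at one horizontal cut. Each such position is a named position in a hole interface, the outer interface, or an internal bank, so the total is controlled by the port load $q$. There are two places where the count can exceed $q$ by a constant factor. First, the permutation terms $P_{\rm in},P_{\rm out}$ have width exactly twice their arity, because of $\pi_{m,n}^{\rm raw}$. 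Second, staging in $\mathsf{Row}$ and $\mathsf{Col}$ can momentarily hold an input bank and an output bank side by side. Together these account for the factor $4$. The additive $4$ comes from the unit-expanded leaves $e_1$ of width $2$, from the merge and copy gadgets $\io21,\io12$ of rank sum $3$, from $z$, and from the at most two exposed scalars and the increment $c\le 2$, each contributing a bounded number of rank-zero summands. I would cite Proposition~\ref{supp:prop:window-arity} for the per-kind endpoint arity audit rather than redo it.

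The remaining claims come from linearity. Validity forces each formal hole to occur exactly once on each side. The completed suffix is one opaque rank-zero carrier, also occurring once. Hence no carrier is duplicated. The renderer never examines the internal structure of a hole, so no carrier is inspected, and widths inside carriers stay at most $w$. Combining the two kinds of subterm gives the bound $\max\{w,4q+4\}$.

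I expect the main obstacle to be the uniform skeleton bound $4q+4$. In particular, one has to check that the left-associated compositions inside $\operatorname{Can}$ never expose a middle interface larger than twice the load plus a constant, which includes the prefix compositions such as $P_{\rm in}\circ\bsum\mu$. The four-cell templates need the same check, since two staging banks coexist there. I would handle this by bounding each factor's type separately by $2q+2$. The rank sum of any composite $f\circ g$ is at most the sum of the outer arities of $f$ and $g$, and it never includes the middle arity twice.
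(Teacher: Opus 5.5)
Your proposal is correct and follows the paper's proof. You split subterm occurrences into those inside the linearly substituted holes, scalars, or carriers, which are bounded by $w$, and template nodes, whose ranks are unchanged by same-type substitution and are bounded by $4q+4$ through the arity audit of Proposition~\ref{supp:prop:window-arity}; linear use then gives the no-inspection and no-duplication clause. Your heuristic account of the additive constant is slightly off, because rank-zero factors such as the scalars, $Z_c$, and the increment contribute nothing to any rank sum, but this does no harm: the bound you actually rely on, input and output arity at most $2q+2$ at every template stage, is exactly the paper's audit.
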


\begin{proof}
Proposition~\ref{supp:prop:window-arity} audits the arities. Recorded routing contributes at most
$4q+4$, linear substitution contributes at most $w$, and the suffix has type
$(0,0)$. Lemma~\ref{lem:substitution} bounds the instantiated derivations.
\end{proof}

Call two discrete contexts $(r,w)$-compatible when they use the same ordered
list of linear opaque holes once, substitute the same reduced named cores,
have at most $r$ protected and outer ports, and all cores and instantiated
endpoints have width at most $w$. They must also induce the same
protected-and-outer partition, ordered outer interface, and number of
$D$-only closed components. Rank-zero graph-bearing holes are included among
the opaque cores.

\begin{lemma}[relative spider and port straightening]
\label{lem:relative-spider}
Every $(r,w)$-compatible pair is connected within width $a_\Sigma(r,w)$ for
a computable function $a_\Sigma$, without changing or duplicating a core.
\end{lemma}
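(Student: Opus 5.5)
The plan is to route both members of a compatible pair $Q_1,Q_2$ to the common normalization record and concatenate the two derivations, one of them reversed. By Lemma~\ref{lem:typed-coverage}, deterministic bottom-up normalization of $Q_i$ terminates at the unique record $\mathsf N(Q_i)$. Each rendered step is an instance of a datum in $\mathcal M_{Q_D(r,w)}$ that uses at most four aggregate holes once, and the steps are interleaved only with literal-address carrier exposures, which merely reassociate. So the first task is to show $\mathsf N(Q_1)=\mathsf N(Q_2)$ as raw terms. The record is built from four pieces of data. The aggregate $A_K$ is determined by the common ordered hole list and the common reduced cores. The partition $\eta_K$ equals the common protected-and-outer partition, with the ordered outer interface fixed by compatibility. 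The suffix $S_K$ consists of the same rank-zero graph-bearing holes in fixed order together with $Z_c$, where $c$ is the common number of $D$-only closed components. Since $\operatorname{Can}$ is a fixed deterministic rendering of these data, the two records coincide literally.

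Next I will bound the width along each normalization derivation. Every step instantiates a valid $q$-local datum with $q=Q_D(r,w)$. The holes, exposed scalars, and opaque carriers substituted into it are subterms of the current expression; by induction on the normalization order these are either original cores of width at most $w$ or previously produced partial records. A partial record has width bounded by $B_D(r,w)$, because its router has load at most $r+\max\{r,w\}$ and $\operatorname{Can}$ on load $n$ has width at most $4n+4$ up to the factor recorded in $B_D$. Lemma~\ref{lem:template-invariant} then bounds the endpoints of each step by $\max\{w',4q+4\}$, where $w'=\max\{w,B_D(r,w)\}$. The formal derivation of each datum in $\mathcal M_q$ is a fixed finite $\mathcal B\cup\E_{\Sigma\cup H}$-derivation between linear-use contexts. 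Since $\mathcal M_q$ is finite up to order-preserving renaming, Lemma~\ref{lem:substitution} yields a computable bound $s_\Delta(w')$ for each of its members, and hence a uniform bound $\max_{\Delta\in\mathcal M_q}s_\Delta(w')$. Carrier exposures change only parenthesization inside the protected suffix, and their width is controlled by the same carrier bound. I set
\[
 a_\Sigma(r,w)=\max\Bigl\{w',\,4Q_D(r,w)+4,\,\max_{\Delta\in\mathcal M_{Q_D(r,w)}}s_\Delta(w')\Bigr\},
\]
which is computable since $\mathcal M_q$ can be enumerated and validated by the value check.

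Cores are never changed or duplicated. Each local scheme uses every formal hole exactly once on both sides (linear use), and Lemma~\ref{lem:template-invariant} states that no carrier is inspected or duplicated. Lemma~\ref{lem:typed-coverage} also preserves graph-bearing rank-zero holes. The substituted cores are therefore carried along as opaque subterms.

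I expect the main obstacle to be the induction that keeps intermediate partial records inside width $w'$ while the normalization absorbs holes one by one. In particular, staging banks can repeat ports before merging, so I must check that the port load of every window stays at most $Q_D(r,w)$. My approach is to use the six-slot audit of Lemma~\ref{lem:typed-coverage}, which gives rendered load $12(k+1)$, and to verify that a partial record never exposes more than $r$ distinct protected or outer blocks. That verification uses the boundary containment of Lemma~\ref{lem:intersection}: the blocks exposed by a sub-context are boundary vertices of its hole set, and these lie among the protected and outer positions of the whole context.
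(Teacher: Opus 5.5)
Your route matches the paper's. You normalize both contexts by Lemma~\ref{lem:typed-coverage} to $\mathsf N$, note that compatibility fixes that record literally, and bound each rendered step by the finite-template invariant plus naturality under substitution, maximizing over the finite family $\mathcal M_{Q_D(r,w)}$. The step that fails as stated is the width you feed into $s_\Delta$.

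You claim every substituted hole, scalar or carrier is an original core or a completed record $\mathsf N(K)$, so has width at most $w'=\max\{w,B_D(r,w)\}$. Your bound for completed records is fine, since their router has load at most $p_K+|\mathbf P_K|+q_K\le r+w$. The case split, however, is incomplete. While one composition node is being processed, the opaque pieces of a window also include fragments of earlier rendered endpoints at that same node. For example, the nonscalar prefix $P$ or neighbouring carrier $g$ in a suffix-entry or carrier-crossing move is the body just produced by an emit window. Likewise, a complementary sector of a partially rewritten router is part of a rendered $\mathsf W_2$ endpoint. Lemma~\ref{lem:template-invariant} controls such endpoints only by $\max\{\text{carrier width},4q+4\}$ with $q=Q_D(r,w)=12(B_D(r,w)+1)$, which is far above $B_D$. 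So the invariant ``carriers have width at most $w'$'' is not preserved from step to step. Then $s_\Delta(w')$ need not bound the next step, because the naturality expansion for a box of width $n$ costs $x_\Sigma(n)$.

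The repair is the paper's carrier bound $L(r,w)=\max\{r,w,4Q_D(r,w)+4\}$. It is self-sustaining: carriers of width at most $L$ give endpoints of width at most $\max\{L,4q+4\}=L$. One then takes $a_\Sigma(r,w)=\max(\{L\}\cup\{s_{\Delta_M}(L):M\in\mathcal M_{Q_D(r,w)}\})$, with monotone closure.

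Your final paragraph addresses a non-issue, and the argument it sketches would not work.
\begin{itemize}
\item Membership of every rendered step in $\mathcal M_{Q_D(r,w)}$ is already the conclusion of Lemma~\ref{lem:typed-coverage}, so no port-load re-verification is needed.
\item Lemma~\ref{lem:intersection} concerns atom boundaries $\partial_G$ of edge sets in an edge-partition tree and plays no role for a discrete context.
\item The blocks exposed by a sub-context need not be protected or outer positions of the whole context. In $(\io01\bsum\io01)\circ(\io10\bsum\io10)$ one has $r=0$, yet the left factor exposes two blocks.
\end{itemize}
A sub-context's interface is controlled by the width hypothesis $w$, exactly as in your own bound $r+w$, not by $r$ alone.
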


\begin{proof}
Put
\[
 L(r,w)=\max\{r,w,4Q_D(r,w)+4\}.
\]
Lemma~\ref{lem:typed-coverage} normalizes every such context by data in
$\mathcal M_{Q_D(r,w)}$ and the nonwidening exposures of Lemma~\ref{supp:lem:carrier-exposure}. The
assumed data fix its endpoint, so both contexts reach it. For each retained
$M$, Theorem~\ref{thm:bk} and proof search select a formal-hole derivation
$\Delta_M$. Lemma~\ref{lem:template-invariant} bounds substitution by
$L(r,w)$ while keeping the scalar suffix opaque. Hence the
monotone closure of
\[
 a_\Sigma(r,w)=\max\Bigl(\{L(r,w)\}\cup
 \{s_{\Delta_M}(L(r,w)):M\in\mathcal M_{Q_D(r,w)}\}\Bigr)
\]
bounds both directions. Lemmas~\ref{supp:lem:renderer-coverage}, \ref{supp:lem:relative-spider}, and \ref{supp:lem:carrier-exposure}
and Proposition~\ref{supp:prop:window-arity} provide the full coverage and audit.
\end{proof}

Put $\operatorname{cup}=\io01\circ\io12:0\to2$ and
$\operatorname{cap}=\io21\circ\io10:2\to0$.

\begin{lemma}[derived bending identities]\label{lem:bending}
For $u:p\to q$ and $C:0\to p+q$, let
$\operatorname{Name}_{p,q}[u]:0\to p+q$ and
$\operatorname{Unname}_{p,q}[C]:p\to q$ be the fixed cup and cap wrappers.
Naming changes the boundary but preserves the underlying unpointed labeled
hypergraph. Yanking gives
\[
 u\xleftrightarrow{*}_{\mathcal B\cup\E}
 \operatorname{Unname}_{p,q}[\operatorname{Name}_{p,q}[u]].
\]
If $p+q\le k$, either wrapper has width at most $\max\{w,4k+4\}$ on an
inserted term of width $w$, and the derivation is computably bounded in $k,w$.
\end{lemma}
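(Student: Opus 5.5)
I will first fix the wrappers concretely, then prove the naming claim, the yanking derivation, and the width bound, in that order.

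\emph{Definitions.} Take $\operatorname{Name}_{p,q}[u]$ to be the closed-input bending
\[
 \bigl(\widehat{\operatorname{cup}}_p\bigr)\circ\bigl(e_p\bsum u\bigr):0\to p+q,
\]
where $\widehat{\operatorname{cup}}_p:0\to 2p$ is a fixed left-associated arrangement of $p$ copies of $\operatorname{cup}$, followed by the fixed block permutation that puts the $p$ left legs before the $p$ right legs. Dually, take
\[
 \operatorname{Unname}_{p,q}[C]=\bigl(C\bsum e_{p}\bigr)\circ\bigl(\text{fixed permutation}\bigr)\circ\bigl(e_q\bsum\widehat{\operatorname{cap}}_p\bigr)
\]
after reordering, so that each of the first $p$ named ports is capped against a new input wire. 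Both are discrete contexts linear in one hole. The value claim is then a direct evaluation. $\val(\operatorname{cup})$ is one vertex with end sequence $(v,v)$, so composing with $e_p\bsum u$ identifies each input vertex of $\val(u)$ with a fresh end position. No vertex or edge is created or merged beyond those identifications. Hence the underlying labeled hypergraph is unchanged and only the pointing moves.

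\emph{Yanking.} The elementary zigzag $(e_1\bsum\operatorname{cup})\circ(\operatorname{cap}\bsum e_1)$, suitably oriented, has the same value as $e_1$. Theorem~\ref{thm:bk} therefore supplies a fixed finite derivation for it, which can also be read off from (A4), (A9), (A13), (A14) together with interchange. The composite $\operatorname{Unname}_{p,q}[\operatorname{Name}_{p,q}[u]]$ and $u$ are both the instantiation of two $(r,w)$-compatible discrete contexts at the single opaque hole $u$. They use the same hole once and induce the same port partition and outer order, and neither has a $D$-only component. So instead of performing $p$ separate yanks by hand, I will invoke Lemma~\ref{lem:relative-spider} with $r\le 2(p+q)\le 2k$. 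This gives a derivation of width at most $a_\Sigma(2k,\max\{w,4k+4\})$, computable in $k,w$, and it never inspects or duplicates $u$. If the hole must be a named core of type $0\to r$, I first apply the naming to $u$ and then argue at formal-hole level by Lemma~\ref{lem:substitution}. There, a single formal derivation $\Delta_{p,q}$ between the two contexts in the box alphabet $H=\{h:p\to q\}$ has bound $s_{\Delta_{p,q}}(w)$, and there are finitely many pairs $(p,q)$ with $p+q\le k$, so taking the maximum is computable.

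\emph{Width.} Every subterm of a wrapper is either inside the inserted term (width $\le w$), or is a context node whose rank involves the $p+q$ outer positions plus at most $p$ staged cup or cap legs. This gives rank sum at most $2(p+q)+2\le 4k+4$, and the fixed permutation terms have width $2(p+q)\le 2k$. Lemma~\ref{lem:template-invariant} packages exactly this as $\max\{w,4q'+4\}$ with load $q'\le k$. I still have to check that the recorded load is $p+q$, not $2p+q$.

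\emph{Main obstacle.} The delicate step is the width bookkeeping of the permutation that interleaves cup legs with the wires of $u$. Its staging bank may count $p$ positions twice, and I would first try ordering the cups so that the block permutation is $\pi^{\rm raw}_{p,q}$-shaped, keeping the load at $p+q$.
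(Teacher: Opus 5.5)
Your wrappers and the value claim by direct evaluation are fine. The step that fails is the appeal to Lemma~\ref{lem:relative-spider}.

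That lemma only compares discrete contexts whose holes are \emph{reduced named cores} $h_i:0\to r_i$, meaning cores with no inputs and distinct end names. Its target $\mathsf N(K)$ is built from an aggregate of such holes. It is not defined for a hole $u:p\to q$ with $p>0$, or for one with repeated boundary vertices. So $u$ and $\operatorname{Unname}_{p,q}[\operatorname{Name}_{p,q}[u]]$ are not an $(r,w)$-compatible pair in its sense. Your patch ``first apply the naming to $u$'' is circular. To view the left endpoint $u$ as a discrete context around the core $\operatorname{Name}_{p,q}[u]$, you already need $u\xleftrightarrow{*}_{\mathcal B\cup\E}\operatorname{Unname}_{p,q}[\operatorname{Name}_{p,q}[u]]$, which is the claim itself.

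What does work is the formal-box argument you sketch afterwards, and that is essentially the paper's proof. Fix one of the finitely many $(p,q)$ with $p+q\le k$. The linear-use contexts $h$ and $\operatorname{Unname}_{p,q}[\operatorname{Name}_{p,q}[h]]$ over $\Sigma\cup\{h:p\to q\}$ have the same value by your direct evaluation. Theorem~\ref{thm:bk} over that alphabet, together with proof search, then yields a derivation $\Delta_{p,q}$. Lemma~\ref{lem:substitution} bounds every instantiation by $s_{\Delta_{p,q}}(w)$. The paper builds the same formal derivation from the one-wire zigzag derivation, tensored wirewise around the box and followed by fixed permutations, rather than calling completeness on the whole pair. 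Either way the maximum over $(p,q)$ is computable. Make this the proof and delete the relative-spider paragraph.

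Your width bookkeeping also needs correcting, although the bound survives. The leg permutation in $\operatorname{Name}_{p,q}$ acts on $2p$ wires, and the one in $\operatorname{Unname}_{p,q}$ acts on $2p+q$ wires. Their rank sums are therefore $4p$ and $4p+2q$, not $2(p+q)$. For the same reason, Lemma~\ref{lem:template-invariant} with load $k$ is not the right citation: the staged load can approach $2k$, and that lemma would then give only $8k+4$.

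Count directly instead. Outside the inserted term, the subterms and their rank sums are:
\begin{itemize}
\item the letters of $D$ inside cups and caps, at most $3$;
\item cups, caps and their partial parallel powers, at most $2p$;
\item the unit leaves $e_p$ and $e_q$, at most $2p$ and $2q$;
\item the permutations and their adjacent-transposition blocks, at most $2(2p+q)$;
\item $e_p\bsum u$ or $e_p\bsum C$, exactly $3p+q$;
\item the cap block padded by $e_q$, exactly $2p+2q$;
\item compositions of rank $(0,2p)$, $(p,2p+q)$, $(0,p+q)$ or $(p,q)$.
\end{itemize}
Whenever cups or caps occur we have $p\ge1$, so every one of these is at most $4(p+q)\le4k$. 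Hence each wrapper has width at most $\max\{w,4k\}$. Your ``main obstacle'' disappears without reordering the cups.
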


\begin{proof}
The yanking diagrams evaluate to $e_1$, so Theorem~\ref{thm:bk} supplies fixed
discrete derivations. Tensor them wirewise, insert $u$, and restore order by
the fixed permutations. Direct evaluation proves the unpointed-hypergraph
claim, and Lemma~\ref{lem:substitution} gives the derivation bound.
\end{proof}

For $u:a\to b$, let $S(u)$ be the set of distinct vertices in the two
boundary sequences of $\val(u)$, ordered by first occurrence.

\begin{corollary}[typed core-frame factorization]
\label{cor:coreframe}
If $\patw(u)\le k$, then $u$ reaches a frame $Q_u[\overline u]$ within
computable width $f_\Sigma(k)$, where the reduced core
$\overline u:0\to|S(u)|$ has width at most $4k+4$. The frame restores the
ordered boundary with repetitions, and every edge occurrence stays in the
core.
\end{corollary}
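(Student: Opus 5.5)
The plan is a structural induction on the raw expression $u$ that builds, bottom-up, a parse-aligned framed realization, followed by one application of Lemma~\ref{lem:relative-spider} at each parse node.

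For a leaf $x\in\Sigma\cup D$ or a unit $e_n$ of rank $(a,b)$ with $a+b\le k$, I would write down a fixed core directly. For an edge letter $\sigma$, the core is $\operatorname{Name}_{a,b}[\sigma]$, followed by merging of repeated boundary positions; for a unit or a $D$-letter, it is the appropriate discrete core $0\to|S(u)|$. Lemma~\ref{lem:bending} gives $u\leftrightarrow \operatorname{Unname}[\operatorname{Name}[u]]$ within width bounded in $k$. The remaining task is to replace the named term by a reduced core, where names are distinct, placed in a frame. Since the unnaming wrapper together with the identification of repeated vertices is a discrete context whose single opaque hole has at most $k$ ports, Lemma~\ref{lem:relative-spider} rewrites it into $\operatorname{Frame}_{\mathbf b,\mathbf e,\boldsymbol\nu,0}[\overline u]$. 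The core width is at most $4k+4$ by Lemma~\ref{lem:template-invariant}, because every window has load at most $k$.

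For the inductive step $u=v\star w$, with $\star\in\{\circ,\bsum\}$, I would first apply the induction hypothesis inside the context to obtain $Q_v[\overline v]\star Q_w[\overline w]$. Replacing a subterm by a same-type subterm preserves all ancestor ranks, so the width stays below $\max\{f_\Sigma(k),k\}$. The resulting expression is a discrete context in the two opaque holes $\overline v,\overline w$. Its protected and outer ports number at most $|S(v)|+|S(w)|+|S(u)|\le 3k$, and its graph value, port partition and outer order are the same as those of the target $(e\bsum(\overline v\bsum\overline w))\circ\operatorname{Can}(\cdot)$. Lemma~\ref{lem:relative-spider} with $r=3k$ therefore connects these two, and no core is changed or duplicated.

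The new core is $\overline u=(\overline v\bsum\overline w)\circ\operatorname{Can}(\eta,0)$. Here $\eta$ identifies equally named ports and closes every name that is not in $S(u)$; this closing is legitimate by the state rule \eqref{eq:state-rule} and Lemma~\ref{lem:internal}. Its width is at most $\max\{4k+4,\ 2\cdot(2k)+4\}$ when we use the bound on $\operatorname{Can}$ from Lemma~\ref{lem:template-invariant}. The frame that remains outside is again a fixed $\operatorname{Frame}$, and it restores the ordered boundary together with its repetitions. We then set $f_\Sigma(k)=a_\Sigma(3k,4k+4+k)$, closed monotonically. This is independent of the depth of $u$, because each step is performed locally at a single node and the surrounding context only contributes ranks of at most $k$.

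The main obstacle is keeping the width of the core bounded by $4k+4$ rather than letting it grow additively at each level. A naive combination $(\overline v\bsum\overline w)\circ\operatorname{Can}$ has a width that depends on the widths of the children. I would resolve this by noting that $\patw$ takes a maximum, not a sum. The children cores have width at most $4k+4$ by induction, and the new node adds only router subterms. The ranks of those router subterms are controlled by the port loads, which are at most $|S(v)|+|S(w)|\le 2k$ on the input side together with the outputs in $S(u)$. The critical check is that the intermediate staging banks inside $\operatorname{Can}$ never exceed the loads $q\le k$ counted in Lemma~\ref{lem:template-invariant}. Where that check fails, I would fall back on bending a child's core through cups before routing.
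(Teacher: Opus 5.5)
Your plan is essentially the paper's proof: structural induction on the raw syntax with new core $\overline u=(\overline v\bsum\overline w)\circ\operatorname{Can}(\eta,0)$, which is a pure permutation $P_\oplus$ at a sum and the positional quotient with closure at a composition. Like the paper, you use Lemma~\ref{lem:relative-spider} to connect the old and new frames at each node, and you get depth-independence because the child derivations run sequentially in same-type contexts. The paper treats leaves more simply, by completeness and proof search over the finite leaf set $\mathcal L_{\Sigma,k}$; in your version $f_\Sigma$ must also dominate the bending bound of Lemma~\ref{lem:bending}, which your formula omits.

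The core bound $\patw(\overline u)\le4k+4$ is a direct rank count: every router stage has input and output rank at most $2k$, so its rank sum is at most $4k$. It is not an application of Lemma~\ref{lem:template-invariant}, whose bound $4q+4$ in the full port load $q$ (here up to $3k$) would be too weak. For the same reason your stated check "banks never exceed $q\le k$" is the wrong criterion, and no cup-bending fallback is needed.
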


\begin{proof}
$\mathcal L_{\Sigma,k}=\bigcup_{p+q\le k}(\Sigma\cup D)_{p,q}
\cup\{e_j:j\ge0,\ 2j\le k\}$.
Naming and merging the boundary-equality classes gives a reduced core for
each $\ell\in\mathcal L_{\Sigma,k}$. Completeness and dovetailed proof search
select a finite leaf derivation $\Delta_\ell$ to its fixed frame. Define
\[
 \beta_\Sigma(k)=\max\bigl(\{\patw(v):v\text{ occurs in some }
 \Delta_\ell,\ \ell\in\mathcal L_{\Sigma,k}\}\cup\{k,1\}\bigr),
\]
with monotone closure, and put
$r_k=4(k+1)$, $g(k)=16k+20$, and
\[
 F_\Sigma(k)=\max\{k,g(k),\beta_\Sigma(k),
 a_\Sigma(r_k,g(k))\}.
\]
Induct on the raw syntax. The base cases use the fixed frame. At a sum, correct
the concatenated child orders to the parent order by $P_\oplus$ and put
\[
 \overline u=(\overline v\bsum\overline w)\circ P_\oplus.
\]
At a composition, let $J_\theta$ perform the positional quotient, close
classes absent from the parent boundary, and order the survivors, and put
\[
 \overline u=(\overline v\bsum\overline w)\circ J_\theta.
\]
In both cases the old and new frames have the same protected partition and
outer tuple, so Lemma~\ref{lem:relative-spider} connects them. Router stages
give $\patw(\overline u)\le4k+4$, their protected load is at most $r_k$, and
their width is at most $g(k)$. Sequential child derivations preserve ancestor
ranks, so the monotone closure of $F_\Sigma$ defines $f_\Sigma$, independently
of depth. The full typed induction and arity audit are in Lemmas~\ref{supp:lem:renderer-coverage} and \ref{supp:lem:frame-straightening} and Proposition~\ref{supp:prop:window-arity}.
\end{proof}

\subsection{Parse-aligned compilation}

\begin{lemma}[parse-cut extraction]\label{lem:extract}
Let $t$ be a closed raw expression with $\patw(t)\le k$ and put
$G=\val(t)$. If $E(G)\ne\emptyset$, then after the non-edge leaves are pruned
from the parse tree and unary nodes are suppressed, the remaining tree $T_t$
is an edge-partition tree satisfying
\[
  \operatorname{ewd}_G(T_t)\le k.
\]
\end{lemma}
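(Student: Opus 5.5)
The plan has two parts: first show that the pruned parse tree really is an edge-partition tree, and then show that every atom boundary $\partial_G A_x$ injects into the boundary ports of the subterm occurrence that $x$ came from.

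For the first part, every leaf of the parse tree of $t$ is a letter of $\Sigma\cup D$ or a unit $e_k$, and each edge occurrence of $G$ comes from exactly one $\Sigma$-leaf. This holds because $\val$ is a magmoid morphism: sum and product take disjoint unions of edge sets, and gluing only identifies vertices. Delete every leaf that carries no edge occurrence, delete every internal node left with no leaf below it, and suppress the unary nodes that remain. What is left is a rooted plane full binary tree whose leaves are exactly $E(G)$. It is nonempty because $E(G)\ne\emptyset$. Each surviving node $x$ corresponds to some subterm occurrence $u_x$ of $t$, chosen as the topmost original node of the suppressed chain, and $A_x=A_{u_x}$, the set of edge occurrences contributed by $u_x$. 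Suppression only merges nodes with the same leaf set, so this correspondence is well defined.

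For the second part, fix $x$ and $u=u_x$ of rank $(a,b)$, and let $\theta_u\colon\val(u)\to G$ be the context map. Every edge of $A_u$ is the image of an edge of $\val(u)$, so every $v\in V_G(A_u)$ has the form $\theta_u(y)$ for some node $y$ of $\val(u)$ incident with such an edge. Now take $v\in\partial_G A_u$. I claim $y$ can be chosen among the boundary nodes of $\val(u)$. Suppose instead that $y$ is internal to $\val(u)$. Lemma~\ref{lem:internal} gives $\theta_u^{-1}(v)=\{y\}$, and it says every edge at $v$ is the image of an edge of $\val(u)$, that is, lies in $A_u$. So $v\notin V_G(E(G)\setminus A_u)$. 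Also $v\notin P(G)$: since $t$ is closed, $P(G)=\emptyset$. (If one prefers to avoid using closedness here, note that an internal node of $\val(u)$ stays internal at every ancestor by the same induction.) Hence $v\notin\partial_G A_u$, a contradiction.

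So every $v\in\partial_G A_u$ equals $\theta_u(x_i)$ for some boundary position $i$ among the $a+b$ positions of $u$. This gives a surjection from a subset of those positions onto $\partial_G A_x$, so $|\partial_G A_x|\le a+b\le\patw(t)\le k$. Taking the maximum over $x$ yields $\operatorname{ewd}_G(T_t)\le k$.

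I expect the main obstacle to be the bookkeeping in the first part. The concern is that pruning and suppression produce a genuine full binary tree whose node sets $A_x$ coincide with the edge sets of actual subterm occurrences. The fix is to choose the representative subterm occurrence for each surviving node carefully, as the topmost node of its suppressed chain. After that, the width bound is just Lemma~\ref{lem:internal} applied once per node.
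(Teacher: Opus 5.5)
Your proposal is correct and follows essentially the same route as the paper: for each surviving node you take the underlying subterm occurrence $u$ and use the internal-nodes lemma to show that every vertex of $\partial_G A_u$ is represented by one of the at most $\patw(t)\le k$ boundary ports of $u$. The only differences are that the paper reproves the internal-node induction inline rather than citing Lemma~\ref{lem:internal}, and that your explicit bookkeeping for the pruned tree (deleting leafless internal nodes, and noting that suppression only merges nodes with equal leaf sets) spells out what the paper compresses into ``pruning and suppressing nodes creates no new cut.''
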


\begin{proof}
Consider a parse-tree edge above a subexpression $u$. If a vertex has an edge
generated in $u$ and another edge generated outside $u$, then some begin or
end position of $u$ represents that vertex. A vertex internal to $\val(u)$
cannot later be identified with an outside vertex or receive a new incident
edge. This follows by induction through the four parent contexts
$u\circ v$, $v\circ u$, $u\bsum v$, and $v\bsum u$. Hence every atom-boundary
vertex at that cut is represented by one of the at most $k$ ports of $u$.
Pruning and suppressing nodes creates no new cut.
\end{proof}

This lemma only extracts a combinatorial tree. It does not authorize deletion
of a discrete parse piece from the raw expression. Lemma~\ref{lem:typed-coverage}
first normalizes every maximal $D$-only piece, retains its positive-rank router,
and extracts each rank-zero component as a scalar. Pruning occurs only after
that normalization. Lemmas~\ref{supp:lem:edge-free} and \ref{supp:lem:renderer-coverage}, prove this
edge-free normalization and its typed converse coverage.

After discrete preprocessing, let $S$ be the degree-three parse skeleton and
let each edge $e$ carry its bank $P_e$. Give each node $x$ its prescribed
boundary order $\rho_x$, and put $T=S$ and $\rho=(\rho_x)_x$. For a final vertex $v$, the router blocks and
the $v$-pieces form a multigraph $\mathcal R_v$ with its natural map
$\varpi_v:\mathcal R_v\to S$. Atom incidences are flags. Write
$m_v(e)=|\varpi_v^{-1}(e)|$ and let $H_v$ be the minimal subtree spanning the
flags, with $H_v=\varnothing$ when there is no flag. Let $A_e$ be the atom set
below $e$.

\begin{lemma}[parse-routing invariant]\label{lem:routing-invariant}
In a parse-aligned realization, every $\mathcal R_v$ is connected, and
$e\in H_v$ exactly when $v\in\partial_G A_e$. Folding and pruning it to one
strand over each edge of $H_v$, together with the completed-child scalar
schedule, yields $R_{\mathrm{sc}}(G,T,\rho)$, including the isolated-vertex
suffix. Reversing that scalar schedule yields $R(G,T,\rho)$.
\end{lemma}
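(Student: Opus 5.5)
We prove the three assertions of Lemma~\ref{lem:routing-invariant} in order: connectivity of $\mathcal R_v$, the identification of $H_v$ with the cuts where $v$ is a boundary vertex, and the description of the folded and reversed schedules. Each is an induction along the parse skeleton $S$, using the state rule \eqref{eq:state-rule} and the parent-partition description that precedes it.

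\emph{Connectivity.} By construction, $\mathcal R_v$ has as vertices the router blocks and atom flags whose class maps to $v$. Its edges are the bank positions joining a block at a node to the blocks of the adjacent nodes. We show by induction up the parse that, at each node $u$, the pieces of $\mathcal R_v$ lying below $u$ form one component for each $\beta_u$-block representing $v$. At a sum node the parent partition is the disjoint union of the child partitions, so components are inherited unchanged. At a composition node exactly the positional gluings are added, and each gluing joins two components through a router block. At the root $G$ is closed, so all occurrences of $v$ have been identified; $v$ is a single vertex of $G$ by Lemma~\ref{lem:internal}. Hence all components have merged, and $\mathcal R_v$ is connected.

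\emph{The subtree $H_v$.} Suppose $e\in H_v$. Then $e$ separates two flags of $v$, say incidences of edges $f\in A_e$ and $f'\notin A_e$. Hence $v\in V_G(A_e)\cap V_G(E(G)\setminus A_e)\subseteq\partial_G A_e$. Port vertices of $P(G)$ are handled by treating the outer frame as an additional flag at the root. Conversely, suppose $v\in\partial_G A_e$. Then $v$ has a flag on each side of $e$, and since $H_v$ is the minimal subtree spanning the flags, it contains $e$. This is precisely the argument of Lemma~\ref{lem:extract}, read off through $\lambda_u$.

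\emph{Folding, pruning, and the schedules.} Over an edge $e\in H_v$ there may be several strands: $m_v(e)>1$ whenever distinct $\beta$-blocks share an image in $\gamma$. Over an edge $e\notin H_v$ the strands are dangling. We fold the parallel strands over each $e\in H_v$ using the fold datum of $\mathcal M_q$, realized by $\mu$ and $\delta$, and prune the dangling strands, whose pieces close to $z$ or to a completed scalar, with the prune datum. Because $\mathcal R_v$ is connected, folding leaves exactly one strand over every edge of $H_v$ and none elsewhere. The result is a graph-determined object: it depends only on $G$, $T$, and $\rho$. The scalars extracted along the way, together with the $Z_c$ suffix for isolated vertices, come out in the fixed completed-child order. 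Together this is $R_{\mathrm{sc}}(G,T,\rho)$.

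The main obstacle is proving that the output does not depend on the order in which folds and prunes are applied. We would first try a local confluence argument. Each fold or prune touches a single edge of $S$ and one vertex name $v$, so moves on distinct pairs $(e,v)$ commute up to the relative spider identifications of Lemma~\ref{lem:relative-spider}. Since the number of strands strictly decreases with each move, termination holds and the result is unique. Finally, $R(G,T,\rho)$ arises from the same data with the scalar suffix read in the opposite order. We obtain it from $R_{\mathrm{sc}}$ by the fixed two-scalar moves, which reverse the right-associated suffix.
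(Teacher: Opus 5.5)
Your connectivity and $H_v$ arguments are essentially the paper's. The paper's connectivity argument is that the components of the block--bank incidence graph are exactly the final vertex classes. The $H_v$ criterion comes from cutting $S$ at $e$, which separates the flags of $v$ exactly as the atom cut does. One fix: index the connectivity induction by vertices of $\val(u)$ lying over $v$, closed ones included, not by $\beta_u$-blocks. Otherwise the root, which has no boundary blocks, is not covered by your invariant.

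\textbf{Gap in the folding step.} Connectivity gives only $m_v(e)\ge1$ for $e\in H_v$, not ``exactly one strand over every edge of $H_v$.'' A fold applies only when two strands over $e$ meet a common router block. Nothing you say rules out two strands over $e$ with distinct end blocks once folds are exhausted. Likewise, strands over edges outside $H_v$ need not be dangling, so pruning cannot yet reach them.

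The missing idea is local injectivity into a tree. Once no fold applies, $\varpi_v$ is locally injective. A cycle in $\mathcal R_v$, or a shortest path between two distinct blocks with the same image, would then map to a nontrivial reduced closed walk in the tree $S$, which is impossible. Hence $\mathcal R_v$ embeds as a subtree containing $H_v$. Only then does pruning unflagged leaves terminate at $H_v$, or at one zero-leg block rendered as $z$ when $H_v=\varnothing$.

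\textbf{Gap in identifying the endpoint.} You assert that the endpoint equals $R_{\mathrm{sc}}(G,T,\rho)$ literally, but do not prove it. Your proposed route, local confluence ``up to the relative spider identifications'' of Lemma~\ref{lem:relative-spider}, would give only derivational equivalence, not literal equality. No confluence is needed:
\begin{itemize}
\item The terminal configuration is order-independent: each bank over $e$ carries one piece per vertex of $\partial_G A_e$, which is the compiler boundary rule.
\item After every move the renderer re-completes both endpoint routers as the fixed $\operatorname{Can}$ terms, using plane child order, fixed block order, and $\rho$.
\item So each node router is literally $J_x$, and each leaf gives $R_x$.
\item Induction from the leaves with the reboxing rule \eqref{supp:eq:scalar-compiler-recursion} produces the records $(B_x,\mathbf s_x)$, hence $R_{\mathrm{sc}}$.
\end{itemize}

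\textbf{Misreading of the final sentence.} $R(G,T,\rho)$ is not $R_{\mathrm{sc}}$ with its suffix read in the opposite order. $R_{\mathrm{sc}}$ removes each completed graph-bearing body $P_y\colon 0\to0$ from its slot in the parent $(B_y\bsum B_z)\circ J_x$, leaving $e_0$ there, and moves it into the scalar word. This flattens nested closed components, whereas $R$ keeps every $R_y$ in place inside the plane recursion. In the paper's illustration, $((A_1\bsum A_2)\bsum A_3)$ becomes $A_1\bsum(A_2\bsum A_3)$: the order is unchanged, so reversing the suffix gives the wrong literal term.

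Reversing the \emph{schedule} means undoing the extraction in reverse postorder:
\begin{itemize}
\item Scalar reassociation and one-scalar carrier moves return each factor to its child address.
\item Reinsertion through a composition node uses the padded interchange--unit chain $((X\bsum s)\circ Y)\leftrightarrow(X\circ Y)\bsum s$, passing through $(X\bsum s)\circ(Y\bsum e_0)$ and $(X\circ Y)\bsum(s\circ e_0)$, together with its mirror for $X\circ(Y\bsum s)$.
\end{itemize}
Two-scalar moves only reorder and reassociate parallel scalar factors, so they cannot push a scalar back inside a composition with a positive-rank router.
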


\begin{proof}
The block-bank incidence components are exactly the final vertex classes, so
each $\mathcal R_v$ is connected. Cutting $S$ at $e$ separates its flags by
the associated atom cut, which gives the boundary criterion above. The
remaining bank therefore has one piece for each vertex of $\partial_G A_e$.
The fixed node routers merge child copies, close non-survivors, and order the
parent bank. Induction from the leaves identifies the nonscalar compiler body
and scalar factors. Lemma~\ref{supp:lem:routing-endpoint} identifies their literal endpoint as
$R_{\mathrm{sc}}(G,T,\rho)$ and gives the reassembly.
\end{proof}

\begin{lemma}[routing folds on a tree]\label{lem:routing-folds}
Local folds and prunes reduce $\mathcal R_v$ to $H_v$, or to one tagged
zero-leg block when $H_v=\varnothing$, without increasing any other
multiplicity. Each move is a two-router window of load at most $12(k+1)$ when
the skeleton cuts and atoms have rank sum at most $k$.
\end{lemma}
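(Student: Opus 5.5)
I will treat $\mathcal R_v$ as a finite connected multigraph mapped to the skeleton $S$ by $\varpi_v$, and reduce it by a terminating sequence of moves of two kinds. A \emph{fold} identifies two pieces lying over the same skeleton edge that meet at a common router block. A \emph{prune} deletes a dangling piece over an edge outside $H_v$, i.e.\ one whose side of the cut carries no flag. Connectivity of $\mathcal R_v$ is supplied by Lemma~\ref{lem:routing-invariant}, which also gives the characterization $e\in H_v$ iff $v\in\partial_G A_e$. So the target is the statement that $\varpi_v$ becomes injective on pieces with image exactly $H_v$ (or a single tagged zero-leg block when there is no flag).

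The steps, in order, are as follows.

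First, the termination measure is $\sum_e m_v(e)$ together with the number of blocks. A fold lowers $m_v(e)$ at one edge by one and changes nothing elsewhere. A prune removes one piece over an edge not in $H_v$ and possibly a leaf block.

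Second, I will show that whenever the reduced form has not been reached, some move applies. Suppose some $m_v(e)\ge2$. Connectivity of $\mathcal R_v$ and acyclicity of $S$ force two pieces over $e$ to be joined by a path in $\mathcal R_v$ whose image is a closed walk in the tree $S$. Such a walk has a backtrack, so two consecutive pieces over the same edge meet at a common block, and a fold applies there. Suppose instead that every $m_v(e)\le1$ but some piece lies over an edge outside $H_v$. Then the image is a subtree strictly larger than the span of the flags, so it has a flagless leaf, and a prune applies. When $H_v=\varnothing$, repeated prunes leave one block, which is tagged as the scalar.

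Third, I will check that a move touches only the routers at the two endpoints of one skeleton edge. It is therefore a two-router window: two parse nodes, each with at most three incident banks, giving at most six slots of load at most $k$. Lemma~\ref{lem:typed-coverage} then bounds the rendered load by $12(k+1)$. Because a move merges or deletes strands over $e$ only, the multiplicities at other edges are unchanged.

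The main obstacle is the second step. I must ensure the backtrack occurs at a \emph{single} router block, not merely at the same skeleton node via two different blocks. I would handle this by first merging all blocks of $v$ at a node, which is itself a one-router fold, before folding strands. I must also verify that the rendered fold is a valid datum of $\mathcal M_{12(k+1)}$ with unchanged pointed value; this follows by direct evaluation, since both sides induce the same partition.
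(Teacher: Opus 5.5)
Your outline is the paper's argument: fold two strands over one skeleton edge when they meet a common router block. Once no fold applies, $\varpi_v$ is locally injective, so the connected $\mathcal R_v$ embeds as a subtree of $S$ containing $H_v$. You then prune flagless leaves, terminate by a multiplicity count, and get the $12(k+1)$ load from the six-slot two-router audit.

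The one step that would fail is the remedy in your last paragraph. Two distinct $v$-blocks in the same router are distinct local classes. The paper stresses that distinct $\beta_u$-blocks are not merged merely because they have the same image in $\gamma_u$. They become equal only through a route in $\mathcal R_v$ that can pass through arbitrarily distant routers. Fusing them inside one router changes the partition, hence the pointed value, of that window. So it is not a valid datum of $\mathcal M_{12(k+1)}$, and no bounded window justifies it. Such blocks can only be fused by the two-router fold itself. That fold is valid exactly because the two strands over $e=\{x,y\}$ already share one block at $x$, so fusing their far ends at $y$ leaves the window's value unchanged.

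The pre-merge is also unnecessary. A path in $\mathcal R_v$ alternates blocks and pieces. So a backtrack in its image in $S$ comes from two consecutive, hence distinct, pieces of the path over the same skeleton edge. These share the path vertex between them, which is a single router block. If the two endpoint blocks you join coincide, the two pieces already meet there directly. Your second step already states this conclusion, so delete the pre-merge and the proof is complete.
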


\begin{proof}
Fold two edges over the same skeleton edge when they meet one router block.
After no fold remains, connectedness and local injectivity embed
$\mathcal R_v$ as a subtree of $S$ containing $H_v$. Prune unflagged leaves
outside $H_v$. Moves for distinct vertices do not interact. Their full raw
rendering, load audit, termination, and endpoint proof are given in Definition~\ref{supp:def:renderer}, Lemmas~\ref{supp:lem:renderer-coverage}, \ref{supp:lem:routing-endpoint}, and \ref{supp:lem:carrier-exposure}, and
Proposition~\ref{supp:prop:window-arity}.
\end{proof}

\begin{lemma}[finite local schemes]\label{lem:finite-local}
$\mathcal M_q$ is finite and effectively enumerable, with computably bounded
endpoints and a selectable derivation for each pair. The parse compiler uses
$\mathcal M_{12(k+1)}$, and an $r$-bounded four-cell switch uses
$\mathcal M_{12(r+1)}$.
\end{lemma}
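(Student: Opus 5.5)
The proof will be a finiteness-by-counting argument, followed by a dovetailed search that uses completeness. The key observation is that a valid $q$-local datum is a purely combinatorial record. It consists of one of six kinds, at most four ordered hole types, an outer type, a bounded list of partitions and orders on named positions, at most two auxiliary scalars, occupancy bits, and a scalar increment $c\in\{0,1,2\}$. Every hole type, the outer interface, and every internal bank contributes to the port load, and the load is at most $q$. Hence each arity is at most $q$, and the total number of named positions is at most $q$. Up to order-preserving renaming, the positions may be taken from $\{1,\dots,q\}$.

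So I would first bound the number of data. There are at most $6\cdot 3\cdot 2^4$ choices of kind, increment and occupancy. There are boundedly many choices of hole and outer types with entries at most $q$. There are boundedly many partitions and linear orders of a set of size at most $q$, counted by Bell numbers and factorials. There are two raw orders and parenthesizations of at most two scalars plus the opaque carrier. This gives a computable bound on $|\mathcal M_q|$.

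Enumerability follows because the renderer of Definition~\ref{def:admissible-schemes} is deterministic. Each candidate record is rendered by $\operatorname{Can}$, $\operatorname{Frame}$, the fixed transposition words and \eqref{eq:row-template}--\eqref{eq:col-template}. I would then check validity by evaluating both endpoints with the holes replaced by distinct fresh letters of the recorded types. This means computing a labeled pointed hypergraph and testing boundary-preserving isomorphism, which is decidable because the graphs are finite. The typed incidence list and the load are checked syntactically. For the endpoint bound I would invoke Proposition~\ref{supp:prop:window-arity} and Lemma~\ref{lem:template-invariant}: every rendered endpoint has width at most $4q+4$ with formal holes, and at most $\max\{w,4q+4\}$ after substitution.

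For a selectable derivation I would argue as follows. The two endpoints of a valid datum have the same value over the alphabet $\Sigma\cup H$, where $H$ is the finite set of formal holes. Theorem~\ref{thm:bk}, applied with $X=\Sigma\cup H$, then guarantees a $\mathcal B\cup\E_{\Sigma\cup H}$-derivation. Enumerating derivations in a fixed order and taking the first one found is a total computable selection. Its width is the computable maximum over the finitely many selected derivations. Lemma~\ref{lem:substitution} converts each selected derivation into the substitution bound $s_\Delta$.

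The remaining claims are bookkeeping. Lemma~\ref{lem:typed-coverage} and Lemma~\ref{lem:routing-folds} show that every parse or routing window has at most six slots of load at most $k$, so it is rendered at load at most $12(k+1)$; hence the parse compiler uses $\mathcal M_{12(k+1)}$. For an $r$-bounded four-cell switch, the four holes, the two staging banks and the outer interface are likewise at most six slots of load at most $r$, so the same audit gives $\mathcal M_{12(r+1)}$.

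I expect the main obstacle to be making sure the invariance under renaming is correct. Validity and load must be invariant under order-preserving renaming, and repeated staging positions must really be counted in the load. Otherwise the count over "positions" would not cover all internal banks, and the finiteness argument would fail.
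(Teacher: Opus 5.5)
Your proposal is correct and takes essentially the paper's route: bounded records give finiteness up to order-preserving renaming, pointed-hypergraph isomorphism over the distinct-hole alphabet decides validity, completeness over $\Sigma\cup H$ with dovetailed proof search selects the derivations, and the six-slot audit of Proposition~\ref{supp:prop:window-arity} gives the $12(k+1)$ and $12(r+1)$ loads. One small slip: for the four-cell switch the six slots are the four cells and the two staging banks, and the outer survivors are charged to the second copies of staging positions, since counting the outer interface as a seventh slot would only give $14(r+1)$.
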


\begin{proof}
Bounded record data give finiteness up to ordered renaming. Pointed-hypergraph
equality decides validity, and Theorem~\ref{thm:bk} with dovetailed proof
search selects the derivations. Lemmas~\ref{lem:typed-coverage},
\ref{lem:routing-invariant}, and \ref{lem:routing-folds} give coverage.
Definition~\ref{supp:def:renderer}, Proposition~\ref{supp:prop:renderer-validity}, Lemma~\ref{supp:lem:renderer-coverage},
Proposition~\ref{supp:prop:window-arity}, and Lemma~\ref{supp:lem:carrier-exposure} give the full enumeration and audits.
\end{proof}

The compiler uses rooted plane trees, the fixed cup-letter-cap leaf wrapper,
numbered block order, fixed adjacent-transposition words, left-associated
staging, plane-ordered child sums, and a right-associated isolated-vertex
suffix. A final permutation realizes the prescribed bank order $\rho$. These
choices determine one raw expression.

We now define the compiler recursively. Let
$\emptyset\ne C\subseteq E(G)$ and let $T_C$ be a rooted plane binary
partition tree on $C$. For every node $x$, write $A_x$ for its leaf set, choose
an order $\rho_x=(v^x_1,\ldots,v^x_{b_x})$ of $\partial_G A_x$, where
$b_x=|\partial_G A_x|$, and let $\rho=(\rho_x)_x$. If $x$ is the leaf carrying an occurrence
$e$ labeled by $\sigma:p\to q$, let
$\mathsf{Name}[e]=\operatorname{Name}_{p,q}[e]:0\to p+q$ be the fixed raw
term from Lemma~\ref{lem:bending}. On the $p+q$ incidence positions followed by the
$b_x$ output positions, let $\eta_x$ identify precisely positions denoting the
same vertex of $G$. Set
\begin{equation}\label{eq:compiler-leaf}
 R_x=\mathsf{Name}[e]\circ\operatorname{Can}(\eta_x,0):0\longrightarrow b_x.
\end{equation}
Thus repeated incidences and loops are merged, vertices not in
$\partial_G\{e\}$ are closed, and the survivors are output in order $\rho_x$.

If $x$ has plane-ordered children $y,z$, order the input bank first by
$\rho_y$ and then by $\rho_z$. Let $\eta_x$ be the partition of this input bank
and an output bank ordered by $\rho_x$ in which two positions are equivalent
exactly when they name the same vertex of $G$. Define
\begin{equation}\label{eq:compiler-node}
 J_x=\operatorname{Can}(\eta_x,0):b_y+b_z\longrightarrow b_x,
 \qquad R_x=(R_y\bsum R_z)\circ J_x.
\end{equation}
At the root this gives the \emph{relative reduced realization}
$R_G(C,T_C,\rho)=R_{\operatorname{root}(T_C)}:0\to|\partial_G C|$.
If $G$ is closed, $C=E(G)$, and $i(G)$ is the number of isolated vertices,
set $T=T_C$ and put
\begin{equation}\label{eq:closed-compiler}
 R(G,T,\rho)=R_G(E(G),T,\rho)\bsum Z_{i(G)},
\end{equation}
with the fixed unit suppression when $i(G)=0$. This definition includes every
choice previously hidden in the phrase ``the compiler expression.''

Let $R_{\mathrm{sc}}(G,T,\rho)$ be the renderer's scalar-suffix representative,
which reboxes completed graph-bearing $0\to0$ children as separate opaque
factors and right-associates inherited lists. Equations~\ref{supp:eq:scalar-compiler-recursion}--\ref{supp:eq:scalar-compiler} give its recursion and Lemma~\ref{supp:lem:routing-endpoint} its endpoint and
bridge to $R$.

\begin{lemma}[compiler invariant]
\label{lem:compiler-invariant}
The value of $R_x$ has exactly the edge occurrences in $A_x$, contains
every vertex incident with them, and has end sequence $\rho_x$. A vertex is an
end vertex exactly when it lies in $\partial_G A_x$. Consequently
$\val(R(G,T,\rho))=\val(R_{\mathrm{sc}}(G,T,\rho))=G$ and
$R_{\mathrm{sc}}(G,T,\rho)\xleftrightarrow{*}_{\mathcal B\cup\E}R(G,T,\rho)$.
Equation~\eqref{eq:compiler-node} is the literal rule for $R$. The conversion
introduces no positive-rank bank.
\end{lemma}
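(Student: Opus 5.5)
We argue by induction on the rooted plane tree $T_C$, taking as hypothesis at node $x$ the three clauses of the statement: $\val(R_x)$ has edge set exactly $A_x$, it contains every vertex of $V_G(A_x)$, and its end sequence is $\rho_x$. Along the way we also track the map $\theta_x\colon\val(R_x)\to G$ into $G$. We show that $\theta_x$ is injective on vertices and edges, with image the subgraph spanned by $A_x$. This injectivity is what makes ``contains every vertex incident with them'' meaningful, and it is what the inductive step actually uses.

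At a leaf carrying $e\colon p\to q$, we start from Lemma~\ref{lem:bending}. It says that $\val(\mathsf{Name}[e])$ is the one-edge hypergraph of $\sigma$, with its $p+q$ incidence positions as the end sequence. By Definition~\ref{def:core-frame-router} the router $\operatorname{Can}(\eta_x,0)$ evaluates to the discrete graph whose vertices are the blocks of $\eta_x$. Composing therefore identifies incidence positions that name the same vertex of $G$, which merges repeated incidences and loops. Blocks with no output position are closed, and the output sequence is $\rho_x$. The resulting vertices correspond bijectively to $V_G(\{e\})$, and the end vertices are exactly those of $\partial_G\{e\}$.

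At an internal node, the induction hypothesis gives $\val(R_y\bsum R_z)$ as a disjoint union with end sequence $\rho_y\rho_z$. Composing with $J_x$ glues positionally, and it identifies two vertices exactly when they name the same vertex of $G$. Hence the quotient is injective into $G$ with edge set $A_y\sqcup A_z=A_x$. Two points need care here. First, a vertex of $V_G(A_y)\cap V_G(A_z)$ appears in both child banks: such a vertex has an edge in one child and an edge in the other, so it lies in $\partial_G A_y$ and in $\partial_G A_z$. Second, vertices outside $\partial_G A_y$ are internal to $R_y$ and need no gluing, which Lemma~\ref{lem:internal} confirms. The output positions are ordered by $\rho_x$. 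A vertex survives as an end vertex exactly when it lies in $\partial_G A_x$, and $\partial_G A_x\subseteq\partial_G A_y\cup\partial_G A_z$ ensures that every such vertex is actually present in the input bank.

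At the root, $A_{\mathrm{root}}=E(G)$. Since $G$ is closed, $P(G)=\varnothing$, so $\partial_G E(G)=\varnothing$ and the end sequence is empty. Every non-isolated vertex is accounted for, and the suffix $Z_{i(G)}$ adds $i(G)$ port-free vertices because $\val(z)$ is a single point. This gives $\val(R(G,T,\rho))=G$. For $R_{\mathrm{sc}}$ we invoke Lemma~\ref{supp:lem:routing-endpoint}. Its recursion only reboxes completed rank-zero children as opaque scalar factors and reassociates them, which preserves the value, so $\val(R_{\mathrm{sc}})=G$. Theorem~\ref{thm:bk} then yields $R_{\mathrm{sc}}\xleftrightarrow{*}_{\mathcal B\cup\E}R$. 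The remaining claim, that the conversion introduces no positive-rank bank, is read off the recursion: scalar reboxing only moves type $(0,0)$ factors into the suffix.

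The main obstacle is the internal-node step. There one must prove that $\eta_x$ neither merges distinct vertices nor leaves two copies of the same vertex of $G$ unmerged. The latter would happen if a vertex shared by both children were missing from one child bank. I would handle this through the characterisation of $\partial_G$ in Equation~\eqref{eq:atomboundary} applied to each child, as above, rather than through the parse-routing machinery.
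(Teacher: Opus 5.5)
Your induction for the value clauses is correct and follows the paper's argument, with more detail. At a leaf, the router merges equal incidences, closes $V_G(\{e\})\setminus\partial_G\{e\}$, and orders the survivors. At an internal node, a vertex of $V_G(A_y)\cap V_G(A_z)$ lies in both $\partial_G A_y$ and $\partial_G A_z$, so both copies are exposed and merged by $J_x$. The containment $\partial_G A_x\subseteq\partial_G A_y\cup\partial_G A_z$ rules out output-only blocks, which would otherwise create a spurious vertex through $\mu_0=\io01$. At the root, $Z_{i(G)}$ restores the isolated vertices. You do not need Lemma~\ref{lem:internal} at the internal node: that non-end vertices of $R_y$ are untouched is just the definition of positional gluing. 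Your argument that reboxing preserves the value of $R_{\mathrm{sc}}$ is also fine.

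The gap is in the last part. You get $R_{\mathrm{sc}}(G,T,\rho)\xleftrightarrow{*}_{\mathcal B\cup\E}R(G,T,\rho)$ from Theorem~\ref{thm:bk}. The bare relation does follow, but you then claim that ``the conversion introduces no positive-rank bank'' by reading the recursion for $R_{\mathrm{sc}}$. That sentence is a claim about the derivation. Completeness gives a derivation with no control over its intermediate terms, and by Proposition~\ref{prop:obstruction} such a path may pass through a global form whose bank rank grows with the number of edge occurrences. What you read off the recursion is only how the two endpoints differ, not how the path you produced behaves.

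This matters for the paper. Lemma~\ref{lem:compile} uses exactly this conversion inside its width-$c_\Sigma(k)$ derivation, and it relies on reassembling the endpoint without enlarging $c_\Sigma(k)$. A completeness bridge there would destroy the size-independent bound.

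What you need is the explicit derivation of Lemma~\ref{supp:lem:routing-endpoint}:
\begin{itemize}
\item Process the reboxed factors in reverse postorder.
\item Use suffix reassociation and one-scalar carrier moves to bring each closed factor $s:0\to0$ next to its recorded child address.
\item Reinsert it at a composition node by the padded chain $X\circ(Y\bsum s)\leftrightarrow(X\bsum e_0)\circ(Y\bsum s)\leftrightarrow(X\circ Y)\bsum(e_0\circ s)\leftrightarrow(X\circ Y)\bsum s$, or by its mirror for $(X\bsum s)\circ Y$.
\end{itemize}
Each step replaces a subterm by one of the same type and only relocates a rank-zero factor. Hence every positive-rank bank keeps its rank, ancestor ranks are preserved, and the path ends literally at the recursion \eqref{eq:compiler-node}. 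This yields the equivalence and the no-new-bank clause together, within the existing width audit.
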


\begin{proof}
The leaf router merges equal incidences, closes nonboundary vertices, and
orders survivors. Inductively, $\eta_x$ merges the duplicated child copies and
retains $\partial_G A_x$, while \eqref{eq:closed-compiler} adds the isolated
vertices. Lemma~\ref{supp:lem:routing-endpoint} returns each opaque factor once to its recorded child
address by scalar moves and padded interchange, yielding
\eqref{eq:compiler-node}. Reversal gives extraction.
\end{proof}

\begin{corollary}[compiler compatibility]\label{cor:compiler-compatibility}
Substituting four cell compilers into \eqref{eq:row-template} and
\eqref{eq:col-template} gives the recursively compiled row and column trees.
Transport along a label- and incidence-preserving graph isomorphism leaves the
unnamed raw compiler expression unchanged, including the common tagged suffix.
\end{corollary}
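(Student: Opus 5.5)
The first claim is literal, so I will unfold both templates at the raw level rather than appeal to any derivation. By \eqref{eq:compiler-node}, a binary node $x$ with children $y,z$ is compiled as $(R_y\bsum R_z)\circ\operatorname{Can}(\eta_x,0)$. Substitute $h_{ij}=R_{G}(C_{ij},T_{ij},\rho_{ij})$ for every occupied cell. The inner factor $(h_{00}\bsum h_{01})\circ\operatorname{Can}(\eta_{0*},0)$ of \eqref{eq:row-template} is then exactly the compiler at a node whose children are the cells $00$ and $01$. This requires that $\eta_{0*}$ be the partition of \eqref{eq:compiler-node} for the atom set $C_{00}\cup C_{01}$, with output order $\rho$ on $\partial_G(C_{00}\cup C_{01})$.

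I will verify this by checking that the staging bank of row $0$ is $\partial_G(C_{00}\cup C_{01})$. A vertex survives row assembly exactly when it is still needed by another cell or by the outer interface, and this is precisely the atom-boundary condition \eqref{eq:atomboundary} relative to the whole union. The identification of equal final vertices agrees because both constructions identify positions naming the same vertex of $G$. The outer router $\operatorname{Can}(\eta_R,c)$ is, in the same way, the root router for children row $0$ and row $1$, and $c$ records the tagged suffix.

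Two degenerate cases need separate treatment, and I expect this bookkeeping to be the main obstacle.
\begin{itemize}
\item An absent cell contributes the placeholder $e_0$. The structural unit laws in $\mathcal B$ do not apply silently here, because literal equality is claimed. So I will either define the compiled row tree of a single occupied cell to carry the corresponding unary-suppressed router, or state that the identity holds up to the fixed unit suppression built into the compiler's convention after \eqref{eq:closed-compiler}. I will use Lemma~\ref{supp:lem:quadrant-degeneracies}, which treats exactly these quadrant endpoints.
\item A rank-zero occupied cell stays an opaque hole, consistent with $R_{\mathrm{sc}}$.
\end{itemize}
The column case is symmetric: swap the roles of $h_{01}$ and $h_{10}$ and use the plane order of the column tree.

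For transport, let $\phi\colon G\to G'$ be a label- and incidence-preserving isomorphism, and transport $T$ and $\rho$ along it. I will induct on $T$ to show that every raw ingredient of $R_x$ depends only on data invariant under $\phi$. At a leaf, $\mathsf{Name}[e]$ depends only on the label of $e$, and $\eta_x$ depends only on the equality pattern of incidence positions and the ordered list $\rho_x$. At an internal node, $\operatorname{Can}(\eta_x,0)$ is determined by the partition of ordered positions, the first-occurrence block order, and the fixed permutation words $\pi^{\rm raw}$. None of this mentions vertex names, since names are metasyntactic by the convention after \eqref{eq:state-rule}.

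Since $\phi$ preserves equality of named vertices and the boundaries $\partial_G A_x$ (boundaries are defined from incidence and $P(G)$), the transported partitions coincide position by position. Finally, $i(G)=i(G')$, so $Z_{i(G)}$ agrees; the tag $c$ likewise depends only on the count of port-free components. Therefore the unnamed raw expressions are identical.
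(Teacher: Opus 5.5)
Your proposal is correct and follows the paper's own argument. You apply \eqref{eq:compiler-node} at the two row (or column) nodes and at their parent, handle absent cells by the unary suppression of Lemma~\ref{supp:lem:quadrant-degeneracies}, and note that an isomorphism preserves every label, incidence position, partition and order from which the raw routers are built. Your check that the row staging bank is $\partial_G(C_{i0}\cup C_{i1})$ spells out what the paper leaves implicit.

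One precision is worth adding. Every compiler router $J_x$ carries tag $0$, and $Z_{i(G)}$ is appended only outside the closed root. So the literal identification forces the template tag $c=0$ and takes the staging orders as the chosen $\rho$ at the row and column nodes. The tag $c$ does not stand for the isolated-vertex suffix.
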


\begin{proof}
Apply \eqref{eq:compiler-node} at the row or column nodes and their parent.
Isomorphisms preserve labels, incidence positions, partitions, orders, and
therefore every selected raw router. The generator-level expansion, endpoint
proof, and two-edge example are in Definition~\ref{supp:def:renderer},
Lemma~\ref{supp:lem:routing-endpoint}, and Section~\ref{supp:sec:two-edge-path}.
\end{proof}

\begin{lemma}[reduced-realization width]\label{lem:realizationwidth}
Let $\emptyset\ne C\subseteq E(G)$ and suppose every node of $T_C$ has ambient
boundary size at most $r$. If every edge occurrence in $C$ has label rank sum
at most $r$, then the fixed compiler parenthesization satisfies
\[
 \patw(R_G(C,T_C,\rho))\le4r+4.
\]
The same bound holds for $R(G,T,\rho)$ and
$R_{\mathrm{sc}}(G,T,\rho)$ when $G$ is closed, $C=E(G)$, and
isolated-vertex scalar factors are added.
\end{lemma}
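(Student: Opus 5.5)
We bound the width by auditing every subterm occurrence of the fixed raw compiler term directly, arguing by induction on $T_C$ with the stronger claim that every subterm of $R_x$ has rank sum at most $4r+4$. Every subterm occurrence of $R_G(C,T_C,\rho)$ falls into one of three classes. It may lie inside a leaf wrapper $\mathsf{Name}[e]$, or inside a router $\operatorname{Can}(\eta_x,0)$. Otherwise it is one of the spine terms $R_y\bsum R_z$, $R_x$, or $\mathsf{Name}[e]\circ\operatorname{Can}(\eta_x,0)$. The spine terms have types $(0,b_y+b_z)$ and $(0,b_x)$ with all $b\le r$, so their rank sums are at most $2r$. For the leaf wrapper, Lemma~\ref{lem:bending} with $p+q\le r$ and inserted width $w=p+q\le r$ gives width at most $4r+4$. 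The remaining and main case is the router.

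For a router $\operatorname{Can}(\eta,0):m\to n$, I would bound the rank sum of every subterm of the expansion \eqref{eq:canonical-router}. The relevant parameters are as follows. At an internal node, $m=b_y+b_z\le 2r$ and $n=b_x\le r$. At a leaf, $m=p+q\le r$ and $n=b_x\le r$. Let $\beta$ be the number of blocks. Each block contains an input position, so $\beta\le m$. The permutation $P_{\rm in}:m\to m$ is a composition of $\tau_{m,j}$ words, and each of its subterms has rank sum at most $2m$. This includes the left-associated sums inside each $\tau$, whose prefixes have types $(i,i)$ with $i\le m$. The merge layer $\bsum_B\mu_{p_B}:m\to\beta$ has left-associated prefixes of rank sum at most $m+\beta\le 2m$. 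Inside a single $\mu_j$, the subterms have rank sum at most $j+1$ or $j+2$. The copy layer $\bsum_B\delta_{q_B}:\beta\to n'$ has rank sum at most $\beta+n'$, where $n'$ is the total output multiplicity. Here $n'\le n$ because the output positions are the bank. The permutation $P_{\rm out}$ has width $2n$. The composite spine prefixes have types $(m,\cdot)$ with the second entry at most $\max(m,\beta,n)$. Hence every router subterm is bounded by $2\max(m,n)+\max(m,n)$. In the worst case this is about $m+\max(m,\beta)\le 4r$ with $m\le 2r$. We then add the constant slack coming from the $\widehat e$ expansions and from the unit $e_1$ paddings inside the $\tau$'s, which yields $4r+4$. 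At this point I would check the constant against Proposition~\ref{supp:prop:window-arity}.

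The obstacle is the $P_{\rm in}$ bound at a binary node. There $m=b_y+b_z$ can reach $2r$, giving $2m=4r$, so the bound is exactly tight. I would verify that no subterm of $\tau_{m,j}$ or of the composition word exceeds $2m$. This holds because each fixed left-associated sum prefix has type $(i,i)$ with $i\le m$, and each composition prefix has type $(m,m)$. The closed case follows from the same audit. The term $R_G\bsum Z_{i(G)}$ has type $(0,0)$. Each subterm of $Z_c$ has rank sum at most $2$, the largest being $\io10:1\to0$ and $\io01:0\to1$. The right-associated suffix prefixes are all $(0,0)$. For $R_{\mathrm{sc}}$, the reboxed scalar factors are subterms of type $(0,0)$ already audited inside $R$, and the scalar-suffix reassociation creates only $(0,0)$ nodes. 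So the same bound $4r+4$ holds for both.
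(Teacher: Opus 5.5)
Your proposal is correct and is essentially the paper's own proof. Both run the same node-by-node audit over $T_C$: routers act on at most $b_y+b_z\le 2r$ wires and so have rank sum at most $4r$, spine terms have type $(0,\le 2r)$, the leaf name is bounded via Lemma~\ref{lem:bending}, and the scalar factors and suffix nodes all have type $(0,0)$. One cleanup is needed. Drop the stray bound $2\max(m,n)+\max(m,n)$ and take instead the maximum of your itemized bounds ($2m$, $m+\beta$, $\beta+n$, $2n$, $m+n$, and $p_B+2$ inside a $\mu_{p_B}$). That maximum is already at most $4r+4$ once you justify $\beta\le m$, which follows from $\partial_G A_x\subseteq\partial_G A_y\cup\partial_G A_z$ at internal nodes and from $\partial_G\{e\}\subseteq V_G(\{e\})$ at leaves. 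So no extra ``constant slack'' from the $\tau$ paddings is required.
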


\begin{proof}
At an internal node, the two child boundaries have at most $2r$ ports in
total. The router only permutes, merges, and deletes these ports. A padded
permutation, merge, or deletion block has input and output ranks at most $2r$,
so its rank sum is at most $4r$. The sum of the child expressions has rank
$(0,b_y+b_z)$ with $b_y+b_z\le2r$. A one-edge name and the cup or cap subexpressions have
rank sum at most $4r+4$. Closed discrete factors have rank $(0,0)$. Induction
over $T_C$ gives the result. Moving and reassociating opaque $0\to0$ factors
creates no positive-rank bank and does not increase the bound.
\end{proof}

Let $t$ be closed with $\patw(t)\le k$, put $G=\val(t)$, assume
$E(G)\ne\varnothing$, and let $T_t$ be the tree from
Lemma~\ref{lem:extract}.

\Needspace{6\baselineskip}
\begin{lemma}[parse-aligned compilation]\label{lem:compile}
For some boundary orders $\rho_t$ and a computable $c_\Sigma$,
\[
 t\equiv_{\mathcal B\cup\E}^{c_\Sigma(k)}R(G,T_t,\rho_t).
\]
\end{lemma}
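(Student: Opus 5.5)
We proceed bottom-up along the raw parse tree of $t$, converting it into the parse-aligned framed realization first and then folding routing to reach $R(G,T_t,\rho_t)$.

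\emph{Step 1: discrete preprocessing.} Using Lemma~\ref{lem:typed-coverage} and Lemma~\ref{lem:relative-spider}, we normalize every maximal $D$-only subterm of $t$. Each positive-rank router is replaced by its record $\mathsf N(K)$, and each port-free component is extracted as a factor $z$. All such contexts have at most $k$ outer ports and width at most $k$, so this costs width $a_\Sigma(k,k)$. The step is applied in disjoint subterm positions, and same-type replacement preserves ancestor ranks. After it, the non-edge leaves can be pruned exactly as in Lemma~\ref{lem:extract}, giving the skeleton $S=T_t$.

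\emph{Step 2: core-frame factorization at every parse node.} We induct up the skeleton. Corollary~\ref{cor:coreframe} turns each subterm $u$ into a frame $Q_u[\overline u]$. Its reduced core has width at most $4k+4$, and the factorization stays within $f_\Sigma(k)$. For a parent $u=v\star w$, the core is $\overline u=(\overline v\bsum\overline w)\circ J$, and the parent and child frames are $(k,g(k))$-compatible, so Lemma~\ref{lem:relative-spider} links them. Since child derivations are performed sequentially inside a fixed context of bounded rank, the width stays at $\max\{f_\Sigma(k),\,4k+4+\text{context}\}$ regardless of depth. The result is a parse-aligned framed realization in completed-step form.

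\emph{Step 3: routing folds.} We apply Lemma~\ref{lem:routing-invariant} and Lemma~\ref{lem:routing-folds}. Each fold or prune is a two-router window of load at most $12(k+1)$, so it lies in $\mathcal M_{12(k+1)}$, which is finite by Lemma~\ref{lem:finite-local}. Every such window has a selected formal-hole derivation $\Delta_M$. By Lemma~\ref{lem:template-invariant} and Lemma~\ref{lem:substitution}, each instantiation costs at most $s_{\Delta_M}(\max\{4k+4,\,4\cdot12(k+1)+4\})$. The folds terminate at $R_{\mathrm{sc}}(G,T_t,\rho_t)$, where $\rho_t$ is the bank order the frames produced. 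Lemma~\ref{lem:compiler-invariant} then converts this to $R$ by scalar moves that introduce no positive-rank bank, and Lemma~\ref{lem:realizationwidth} keeps the endpoint width at most $4k+4$. We take $c_\Sigma(k)$ to be the monotone maximum of all these finitely many computable bounds.

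\emph{Main obstacle.} The hardest point is showing that the width never accumulates with depth. In Step 2 in particular, the intermediate term mixes already-compiled cores below with raw syntax above, and the cores have width up to $4k+4$ rather than $k$. The plan is to argue that every ancestor rank of a rewritten position is either a rank already present in $t$ (at most $k$) or a frame rank bounded by $r_k=4(k+1)$. This uses two facts: same-type replacement preserves ancestor ranks, and the frames are local, so the global maximum is the maximum of the local window bounds.
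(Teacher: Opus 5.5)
Your proposal is correct and matches the paper's own proof essentially step for step. Both normalize the maximal $D$-only subtrees before the purely combinatorial pruning. Both then run the bottom-up core-frame/binary-router pass, which is the source of the $f_\Sigma(k)$ term. Next, both apply the routing folds and prunes from $\mathcal M_{12(k+1)}$ with their selected derivations $\Delta_M$ to land literally at $R_{\mathrm{sc}}(G,T_t,\rho_t)$, reassemble $R(G,T_t,\rho_t)$ by Lemma~\ref{lem:compiler-invariant}, and take $c_\Sigma$ as a monotone maximum.

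The only slip is a parameter in Step~2. The parent and child frames carry up to $2k$ protected core ports plus $k$ outer ports. They are therefore $(4(k+1),g(k))$-compatible, as in Corollary~\ref{cor:coreframe}, rather than $(k,g(k))$-compatible. By monotonicity of $a_\Sigma$ this changes only the argument of $a_\Sigma$, and that value is already included in $f_\Sigma(k)$.
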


\begin{proof}
First normalize every maximal $D$-only parse subtree, including a nullary one,
by Lemma~\ref{lem:typed-coverage}. Only then prune the edge-free leaves and
suppress unary nodes. A bottom-up pass records at each processed node
$(A_u,\beta_u,\gamma_u,\lambda_u,C_u,S_u)$. Its invariant says that type and
value are unchanged, and every graph letter occurs once in $C_u$ or a
graph-bearing $S_u$-factor, with its hidden occurrence list retained. Its
protected ports are the $\beta_u$-classes, their final images are the
$\gamma_u$-classes, and liveness is exactly $\lambda_u$. If $v$ is a
descendant of $u$, then $A_v\subseteq A_u$, while incomparable nodes have
disjoint occurrence sets. Hence the $A$-sets of the current processing
frontier are pairwise disjoint. Distinct $\beta_u$-classes merge only under a
positional gluing. In recursive order, $S_u$ contains each completed
graph-bearing $0\to0$ child as an opaque aggregate, then each $D$-only
port-free component. Inherited factors stay separate, and none becomes $z$.

The finite leaf derivations and the canonical discrete router establish the
invariant at leaves. At a binary node, the local renderer takes the disjoint
child partitions, adds exactly the positional pairs for $\circ$, closes the
classes absent from the parent boundary, and preserves linear use. Thus the
block-bank components remain the final-vertex classes. Lemmas
\ref{lem:routing-invariant} and \ref{lem:routing-folds} leave one piece for
each $\partial_G A_e$ and end literally at $R_{\mathrm{sc}}(G,T_t,\rho_t)$.
Lemma~\ref{lem:compiler-invariant} then reassembles $R(G,T_t,\rho_t)$.

For the uniform bound put
\[
 L_0(k)=\max\Bigl(\{4k+4\}\cup
 \{\patw(L),\patw(R):(L,R)\in\mathcal M_{12(k+1)}\}\Bigr),
\]
and, using the selected formal-hole derivations $\Delta_M$, set
\[
 c_\Sigma(k)=\max\Bigl(\{f_\Sigma(k),L_0(k)\}\cup
 \{s_{\Delta_M}(L_0(k)):M\in\mathcal M_{12(k+1)}\}\Bigr).
\]
The same scalar data reassemble the endpoint without enlarging $c_\Sigma(k)$.
After monotone closure, Corollary~\ref{cor:coreframe} and Lemmas
\ref{lem:finite-local} and \ref{lem:substitution} bound every transition by
$c_\Sigma(k)$. Same-type replacement preserves ancestor ranks. The full state
invariant, induction, termination measure, and literal endpoint verification
are in Lemmas~\ref{supp:lem:renderer-coverage}, \ref{supp:lem:routing-endpoint}, and \ref{supp:lem:carrier-exposure}.
\end{proof}

\subsection{Tree interpolation and canonical normal forms}

It remains to remove the dependence on $T_t$. This is done by a rectangular
interpolation, not by taking all intersections at once. The latter collection
need not be laminar. If $P$ is a plane binary tree, write $\operatorname{Lf}(P)$
for its leaf set and
$\mathcal C(P)=\{\operatorname{Lf}(P_x):x\text{ is a node of }P\}$, where
$P_x$ is the subtree rooted at $x$. If
$\varnothing\ne D\subseteq\operatorname{Lf}(P)$, write $P|D$ for the minimal
subtree spanning $D$, with empty branches deleted and unary nodes suppressed.
It inherits its plane order, and $(P|D)|D'=P|D'$ whenever
$\varnothing\ne D'\subseteq D$. For $D=\varnothing$, the notation denotes an
absent entry rather than a tree. Brackets below delete absent entries and
suppress unary nodes.

For a finite graph $G$, let $T,U$ be rooted plane binary partition trees on
$\varnothing\ne C\subseteq E(G)$, of ambient boundary widths $p,q$.

\begin{lemma}[quadrant interpolation]\label{lem:quadrants}
Four-cell switches transform $T$ into $U$ through clusters belonging to $T$,
to $U$, or to their pairwise intersections. Every intermediate boundary has
size at most $p+q$.
\end{lemma}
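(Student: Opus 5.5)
We will argue by induction on $|C|$, through an intermediate tree built from $T$'s root split and $U$'s root split. If $|C|=1$ both trees are the single leaf and there is nothing to do. Otherwise let $T$ have root children with leaf sets $T_0,T_1$, and let $U$ have root children with leaf sets $U_0,U_1$. Put $C_{ij}=T_i\cap U_j$ and let $I=\{(i,j):C_{ij}\ne\varnothing\}$. The \emph{row tree} is
\[
 \mathcal R=\bigl[\,[U|C_{00},U|C_{01}],\,[U|C_{10},U|C_{11}]\,\bigr],
\]
with absent entries deleted and unary nodes suppressed. The \emph{column tree} $\mathcal K$ is the same bracket with rows and columns exchanged. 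The four cell subtrees are shared between the two, so a single four-cell switch, matching \eqref{eq:row-template} against \eqref{eq:col-template} with holes $h_{ij}$ the cell subtrees, transforms $\mathcal R$ into $\mathcal K$.

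The route from $T$ to $U$ then runs $T\to\mathcal R\to\mathcal K\to U$.

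For the first stage, note that the upper node of row $i$ in $\mathcal R$ has leaf set $T_i$. Below it, the two cells use trees restricted from $U$, not from $T$. Hence $T\to\mathcal R$ reduces to transforming $T|T_i$ into $[U|C_{i0},U|C_{i1}]$ for each $i$ separately. Since $[U|C_{i0},U|C_{i1}]$ is itself the restriction $U'|T_i$ of the tree $U'=[U|U_0,U|U_1]=U$, the induction hypothesis applies to the pair $(T|T_i,\,U|T_i)$ on the smaller set $T_i$, provided $T_i\ne C$. This holds because $T$ is binary and $|C|\ge2$.

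For the third stage, $\mathcal K\to U$, the column nodes have leaf sets $U_j$ with cells $T|C_{0j}$ and $T|C_{1j}$. We apply the induction hypothesis to $(T|U_j,\,U|U_j)$. One technical point is that induction must be run with $T$ and $U$ restricted to subsets of $C$ while boundaries are still measured in $G$. The statement is already phrased in the ambient $\partial_G$, so the induction hypothesis is simply ``for all nonempty $C'\subseteq E(G)$ and trees on $C'$''.

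Cluster membership follows from restriction. Every cluster of $T|D$ has the form $X\cap D$ with $X\in\mathcal C(T)$, and likewise for $U$. In the recursive calls $D$ is $T_i$ or $U_j$, so intermediate clusters have the form $(X\cap T_i)\cap (Y\cap T_i)$-type intersections. In every case such a cluster is an intersection $X\cap Y$ with $X\in\mathcal C(T)\cup\{C\}$ and $Y\in\mathcal C(U)\cup\{C\}$, where $C$ itself is the common root cluster. This gives exactly the clusters claimed: those of $T$, those of $U$, and their pairwise intersections.

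The boundary bound then follows at once from Lemma~\ref{lem:intersection}:
\[
 |\partial_G(X\cap Y)|\le|\partial_G X|+|\partial_G Y|\le p+q,
\]
and the clusters of $T$ or $U$ alone are bounded by $p$ or $q$.

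The main obstacle is making sure the recursion stays within this family of clusters rather than generating iterated intersections. The property to verify is that restricting $T$ and $U$ to $T_i$ or $U_j$ and recursing never produces sets like $X\cap Y\cap Z$ with three independent factors. I would check this via $(P|D)|D'=P|D'$, which shows that every tree occurring anywhere in the recursion is $T|D$ or $U|D$, where $D$ is an intersection of a $T$-cluster with a $U$-cluster. A second concern is degenerate quadrants, where $I$ has fewer than four cells or some $T_i\subseteq U_j$. There I would verify that the bracket convention makes the switch either trivial or a fold, which the four-cell template with $e_0$ placeholders already admits.
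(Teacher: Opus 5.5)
Your first stage (recursing on $(T|T_i,U|T_i)$, using $U|T_i=[U|C_{i0},U|C_{i1}]$) and the single row-to-column switch are exactly the paper's. The third stage, however, does not connect. As you say yourself, the switch $\mathcal R\to\mathcal K$ leaves the four cell subtrees unchanged. So the $j$th column of $\mathcal K$ is the mixed tree $[U|C_{0j},U|C_{1j}]$, not $T|U_j=[T|C_{0j},T|C_{1j}]$, and your recursive call on $(T|U_j,U|U_j)$ starts from a tree that is not present. For the same reason, the invariant you plan to verify via $(P|D)|D'=P|D'$ is false, namely that every tree arising is $T|D$ or $U|D$. The trees $\mathcal R$, $\mathcal K$ and their columns take their upper splits from one tree and their cells from the other. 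Note also that numeric widths cannot be inherited by the recursive calls, since already $U|T_i$ may have boundary width $p+q$ rather than $q$. Only cluster membership can be carried through the induction, with Lemma~\ref{lem:intersection} applied once at the end.

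The missing idea is the one the paper uses. Put $\mathcal P=\mathcal C(T)\cup\mathcal C(U)\cup\{X\cap Y\ne\varnothing: X\in\mathcal C(T),\ Y\in\mathcal C(U)\}$. Then $\mathcal P$ is closed under nonempty intersection because each of $\mathcal C(T)$ and $\mathcal C(U)$ is laminar. Thus $X\cap X'\in\{X,X',\varnothing\}$, and any iterated intersection collapses to a pairwise one. This is what disposes of your feared $X\cap Y\cap Z$. It also lets the paper recurse directly on the mixed pair $([U|C_{0j},U|C_{1j}],\,U|U_j)$, under the invariant that all clusters of both arguments lie in $\mathcal P$. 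Your own hypothesis, stated for arbitrary trees on arbitrary $C'$, applies to that pair once you check $\mathcal C([U|C_{0j},U|C_{1j}])\subseteq\mathcal P$. Alternatively, you could keep your third stage by first turning each cell $U|C_{ij}$ into $T|C_{ij}$ recursively on the proper subset $C_{ij}$, but membership still needs laminarity.

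Finally, the degenerate quadrants are not merely ``trivial or a fold''. Every row and column is occupied, so the possible cases are the two diagonal two-cell cases, the three-cell rotations, and the four-cell case. The off-diagonal pair $\{01,10\}$ reverses plane order, which needs a genuine switch: a symmetry word, or for rank-zero cells the scalar move of Lemma~\ref{supp:lem:quadrant-degeneracies}.
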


\begin{proof}
Put
\[
 \mathcal P=\mathcal C(T)\cup\mathcal C(U)\cup
 \{X\cap Y:X\in\mathcal C(T),\ Y\in\mathcal C(U),\
             X\cap Y\ne\varnothing\}.
\]
Laminarity of the two original cluster families makes $\mathcal P$ closed
under nonempty intersection. Define $\mathcal I(P,Q)$ by strong induction on
the common leaf set $D$, under the invariant that their clusters lie in
$\mathcal P$. If $|D|=1$, do nothing. Otherwise write
\[
 P=\langle P_0,P_1\rangle,\quad A_i=\operatorname{Lf}(P_i),\qquad
 Q=\langle Q_0,Q_1\rangle,\quad B_j=\operatorname{Lf}(Q_j),
 \qquad Q_{ij}=A_i\cap B_j.
\]
Recursively transform $P_i$ into $Q|A_i$, omitting empty cells, and replace
the resulting row assembly by
\[
\begin{aligned}
 \mathsf{Row}&=\langle\langle Q|Q_{00},Q|Q_{01}\rangle,
          \langle Q|Q_{10},Q|Q_{11}\rangle\rangle,\\
 \mathsf{Col}&=\langle\langle Q|Q_{00},Q|Q_{10}\rangle,
          \langle Q|Q_{01},Q|Q_{11}\rangle\rangle.
\end{aligned}
\]
Then recursively transform each displayed column into $Q_j$, yielding $Q$.
Every call uses a proper $A_i$ or $B_j$, so it terminates. Intersection
closure preserves the $\mathcal P$ invariant, and Lemma~\ref{lem:intersection}
bounds each mixed cluster by $p+q$.

Every row and column contains an occupied cell, so only two-, three-, and
four-cell switches occur. Corollary~\ref{cor:compiler-compatibility}
identifies their literal compiler endpoints. Empty cells alone are suppressed,
while occupied rank-zero cells remain opaque. Lemma~\ref{supp:lem:quadrant-degeneracies}, verifies all degeneracies and inherited plane orders.
\end{proof}

Call a four-cell assembly $r$-bounded when every occupied reduced cell
realization has defining-tree boundaries at most $r$, and when every row,
column, and common outer staging bank and every occurring edge-label rank sum
are at most $r$.

\begin{lemma}[bounded four-cell switch]\label{lem:cell-switch}
For every fixed finite alphabet $\Sigma$ there is a computable $d_\Sigma$ such
that the row and column versions of every $r$-bounded four-cell assembly are
connected within width $d_\Sigma(r)$.
\end{lemma}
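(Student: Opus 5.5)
We treat the row and column versions as instances of the two templates \eqref{eq:row-template} and \eqref{eq:col-template}, with the four occupied cell realizations as linear-use formal holes. The switch then reduces to a finite family of formal-hole derivations whose width is controlled by Lemma~\ref{lem:substitution}.

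First, the assembly is abstracted into a four-cell datum of Definition~\ref{def:admissible-schemes}. Its record consists of:
\begin{itemize}
\item the occupancy set $I\subseteq\{0,1\}^2$;
\item the hole types $r_{ij}$, each at most $r$ by $r$-boundedness;
\item the partitions $\eta_{i*},\eta_{*j},\eta_R,\eta_C$, their plane orders, and the scalar tag $c\in\{0,1\}$.
\end{itemize}
Each occupied cell, including a rank-zero one, is replaced by a distinct opaque hole $h_{ij}:0\to r_{ij}$. The remark after \eqref{eq:col-template} shows that both templates use every occupied hole once and induce the same partition, ordered outer interface, and scalar tag. With the holes read as distinct letters, both sides therefore have the same pointed value, so the datum is valid. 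Every bank in the datum has size at most $r$, and by the six-slot audit of Lemma~\ref{lem:typed-coverage} its rendered load is at most $12(r+1)$. Hence the datum lies in $\mathcal M_{12(r+1)}$. This matches the usage recorded in Lemma~\ref{lem:finite-local}.

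Second, Lemma~\ref{lem:finite-local} tells us that $\mathcal M_{12(r+1)}$ is finite and effectively enumerable, and that each member has a selectable derivation $\Delta_M$. These derivations come from Theorem~\ref{thm:bk} applied over the alphabet $\Sigma\cup H$ with formal boxes, together with dovetailed proof search. Since the endpoints of $\Delta_M$ are literally the row and column templates, we substitute the actual cell realizations $u_{ij}$ for the holes. By Lemma~\ref{lem:substitution}, this yields a $\mathcal B\cup\E_\Sigma$-derivation of width at most $s_{\Delta_M}(w)$, where $w=\max\patw(u_{ij})$.

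The main point needing care is bounding $w$ by a function of $r$ alone, independently of graph size. Here the cells are reduced compiler realizations whose defining trees have boundaries at most $r$ and whose edge labels have rank sum at most $r$. Lemma~\ref{lem:realizationwidth} therefore gives $w\le4r+4$.

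Finally, we take
\[
 d_\Sigma(r)=\max\Bigl(\{4r+4,\;4\cdot12(r+1)+4\}\cup\{s_{\Delta_M}(4r+4):M\in\mathcal M_{12(r+1)}\}\Bigr),
\]
after monotone closure. This bound covers the endpoints by Lemma~\ref{lem:template-invariant} and every intermediate step by Lemma~\ref{lem:substitution}. The bound is computable because the index set is finite and enumerable, and each $s_{\Delta_M}$ is computable.

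Two degenerate situations need separate checking. The first is two- and three-cell switches, where absent cells become $e_0$ placeholders and unary nodes are suppressed. For these we check that the rendered templates remain members of $\mathcal M_{12(r+1)}$ under the degeneracy conventions of Lemma~\ref{supp:lem:quadrant-degeneracies}. The second is an exposed scalar suffix: it is treated as one opaque rank-zero carrier, which is also covered by Lemma~\ref{lem:template-invariant}.
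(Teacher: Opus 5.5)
Your proposal is correct and follows essentially the paper's route. You reduce the switch to a finite effective family of valid formal-hole data of load at most $12(r+1)$, select a derivation $\Delta_M$ for each by Theorem~\ref{thm:bk} and proof search, and instantiate via Lemma~\ref{lem:substitution} with cell width at most $4r+4$ from Lemma~\ref{lem:realizationwidth}, so $d_\Sigma(r)$ is the monotone closure of $\max_M s_{\Delta_M}(4r+4)$ and $4r+4$; the only differences are cosmetic (the paper indexes over a dedicated set $\mathcal S_r$ that explicitly adds order-correction pairs for the temporary vertex order, which your maximum over all of $\mathcal M_{12(r+1)}$ covers, and your extra endpoint term $4\cdot12(r+1)+4$ is harmless but redundant).
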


\begin{proof}
The row and column contexts have the same occupied cell holes, the same port
partition, and ordered outer interface. Absent cells and their unary routers
are omitted, while an occupied rank-zero cell remains a formal hole. Prescribe
orders through a temporary vertex order. The resulting cell, staging,
restoration, order-correction, and scalar data form an effective finite set
$\mathcal S_r$ of load at most $12(r+1)$.

For $M\in\mathcal S_r$, treat its holes as distinct letters. Direct evaluation
of the row and column endpoints, or of an order-correction pair, gives the same
pointed hypergraph. Theorem~\ref{thm:bk} and dovetailed proof search give a
derivation $\Delta_M$. Define, with monotone closure,
\[
 d_\Sigma(r)=\max\Bigl(\{s_{\Delta_M}(4r+4):M\in\mathcal S_r\}
 \cup\{4r+4\}\Bigr).
\]
Lemma~\ref{lem:substitution} gives this bound after inserting the linear-use
cores. Lemma~\ref{supp:lem:renderer-coverage}, Proposition~\ref{supp:prop:window-arity}, and
Lemmas~\ref{supp:lem:quadrant-degeneracies} and \ref{supp:lem:carrier-exposure} give renderer coverage, the width audit, all
degeneracies, and literal carrier exposure.
\end{proof}

For a closed graph $G$, let $T,U$ be edge-partition trees of boundary widths
$p,q$, with arbitrary internal orders $\rho,\rho'$.

\begin{corollary}[bounded tree interpolation]\label{cor:tree-interpolation}
Then
\[
 R(G,T,\rho)\equiv_{\mathcal B\cup\E}^{
 d_\Sigma(\max\{p+q,\alpha(G)\})}R(G,U,\rho')
\]
for arbitrary internal orders $\rho,\rho'$.
\end{corollary}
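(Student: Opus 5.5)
The derivation will follow the quadrant interpolation of Lemma~\ref{lem:quadrants}, applied to $C=E(G)$, and every four-cell switch will be realized by Lemma~\ref{lem:cell-switch} inside a fixed outer context. First I will fix the orders. Every cluster $X\in\mathcal P$ is assigned one temporary order of $\partial_G X$, and on $\mathcal C(T)$ and $\mathcal C(U)$ I take the prescribed orders $\rho$ and $\rho'$. Two compilers that differ only in internal orders, $R(G,T,\rho)$ and $R(G,T,\rho'')$, differ only in the permutation factors of the routers $J_x$ and the leaf routers. Each such change is an order-correction datum. Lemma~\ref{lem:cell-switch} puts these data in $\mathcal S_r$ explicitly, so every order change is itself a bounded local move, applied node by node. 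Because the steps are same-type replacements, no ancestor rank changes.

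Next I will track widths along the recursion $\mathcal I(T,U)$. Every compiled tree in the sequence has all of its clusters in $\mathcal P$. Lemma~\ref{lem:intersection} then bounds every defining-tree boundary by $p+q$, and this covers the row banks $\partial_G A_i$, the column banks $\partial_G B_j$, the cell boundaries $\partial_G(A_i\cap B_j)$, and the common outer bank. Edge-label rank sums are at most $\alpha(G)$. With $r=\max\{p+q,\alpha(G)\}$, every assembly met in the recursion is therefore $r$-bounded. Lemma~\ref{lem:realizationwidth} bounds every intermediate compiled term by $4r+4\le d_\Sigma(r)$.

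At a single switch, the current term is $K[\mathsf{Row}_\eta[h]]$. Here $K$ is the compiled context above the node, and the holes $h_{ij}$ are the cell compilers $R_G(Q_{ij},U|Q_{ij},\cdot)$. The identification rests on Corollary~\ref{cor:compiler-compatibility}, which says that substituting cell compilers into \eqref{eq:row-template} and \eqref{eq:col-template} gives exactly the compiled row and column trees. Lemma~\ref{lem:cell-switch} connects $\mathsf{Row}_\eta[h]$ to $\mathsf{Col}_\eta[h]$ within width $d_\Sigma(r)$. Performing that derivation inside $K$ preserves ancestor ranks, because the replaced subterm keeps its type.

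For the width bound, the peak width of a contextual step is the maximum of the context width and the width of the local derivation, and both are at most $d_\Sigma(r)$ under monotone closure. The recursive subcalls act on subterms, and the same reasoning applies to them. Concatenating all the steps then gives the stated derivation.

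The main obstacle is the literal bookkeeping between consecutive steps. The endpoint of a recursive transformation of $P_i$ into $Q|A_i$ must coincide, as a raw term, with the row assembly that the switch expects. This includes the scalar suffix for occupied rank-zero cells and the suppression of empty cells and unary nodes. I would handle this with the quadrant degeneracy and carrier-exposure lemmas: when a cell is empty, use the two- or three-cell variant of the template, and move scalar carriers by the nonwidening reassociations. If a residual mismatch is only a change of order or parenthesization, it is absorbed as one more order-correction datum of load at most $12(r+1)$.
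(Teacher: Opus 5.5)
Your proposal is correct and is essentially the paper's own proof. It uses the quadrant interpolation of Lemma~\ref{lem:quadrants} with bank-order corrections, literal compiler endpoints from Corollary~\ref{cor:compiler-compatibility}, the $p+q$ staging bound from Lemma~\ref{lem:intersection} together with $\alpha(G)$ for edge ranks, and each switch performed by Lemma~\ref{lem:cell-switch} as a same-type replacement preserving ancestor ranks. One small point to tidy: every singleton leaf cluster lies in both $\mathcal C(T)$ and $\mathcal C(U)$, so taking $\rho$ on $\mathcal C(T)$ and $\rho'$ on $\mathcal C(U)$ can conflict; instead run the interpolation in one temporary order and apply your order-correction moves at the start and end, or around each switch as the paper does.
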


\begin{proof}
Use Lemma~\ref{lem:quadrants}, correcting bank orders before and after each
switch. Corollary~\ref{cor:compiler-compatibility} gives the literal compiler
endpoints. Every staging boundary is at most $p+q$, edge ranks are at most
$\alpha(G)$, and same-type steps preserve ancestor ranks.
\end{proof}

Fix total orders for the finite alphabet and raw serialization. Define
$\operatorname{NF}_k$ on closed classes represented by width-$k$ expressions,
which necessarily satisfy $\alpha(G)\le k$. For an edgeless $n$-vertex class,
put $\operatorname{NF}_k(G)=Z_n$. Otherwise, number the vertices and edge
occurrences in every possible way, enumerate every rooted plane edge-partition
tree of ambient boundary width at most $k$ and every bank order, form
$R(G,T,\rho)$, and erase the auxiliary numbers.
Lemma~\ref{lem:extract} makes this candidate set
finite and nonempty. Its least raw serialization is $\operatorname{NF}_k(G)$.
Isomorphisms permute the presentations, so this literal expression depends
only on the isomorphism class, $k$, and the fixed syntax. It preserves labels,
directions, multiplicities, loops, and isolated vertices.

\begin{theorem}[bounded normal form]\label{thm:boundednf}
For every finite doubly ranked edge alphabet $\Sigma$, there is a computable
nondecreasing $B_\Sigma:\NN\to\NN$ such that every closed raw expression $t$
with $\patw(t)\le k$ satisfies
\[
 \val(\operatorname{NF}_k(\val(t)))=\val(t),
 \qquad
 \patw(\operatorname{NF}_k(\val(t)))\le4k+4,
\]
and
\[
  t\equiv_{\mathcal B\cup\E}^{B_\Sigma(k)}
  \operatorname{NF}_k(\val(t)).
\]
\end{theorem}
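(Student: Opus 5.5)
The proof will assemble three earlier results along one path: parse-aligned compilation (Lemma~\ref{lem:compile}), bounded tree interpolation (Corollary~\ref{cor:tree-interpolation}), and the reduced-realization width bound (Lemma~\ref{lem:realizationwidth}). Put $G=\val(t)$. First I dispose of the edgeless case $E(G)=\varnothing$. Here $t$ is a closed $D$-only expression and $\operatorname{NF}_k(G)=Z_n$ with $n$ the number of vertices. Lemma~\ref{lem:typed-coverage}, applied with no holes, normalizes $t$ to its record, namely the suffix $Z_n$. Lemma~\ref{lem:relative-spider} with $r=0$ and $w=k$ gives width at most $a_\Sigma(0,k)$. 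Since $\patw(z)=1$, also $\patw(Z_n)\le 4k+4$.

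Now assume $E(G)\ne\varnothing$. The value equation is immediate from Lemma~\ref{lem:compiler-invariant}, because every candidate $R(G,T,\rho)$ has value $G$. For the width bound, the candidate set is nonempty by Lemma~\ref{lem:extract}. Every candidate tree has ambient boundary width at most $k$, and $\alpha(G)\le k$. Lemma~\ref{lem:realizationwidth} with $r=k$ then gives $\patw(R(G,T,\rho))\le 4k+4$ for every candidate, in particular for the least serialization. Erasing the auxiliary numbering does not change the raw expression's shape or ranks.

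For the derivation, Lemma~\ref{lem:compile} connects $t$ to $R(G,T_t,\rho_t)$ within width $c_\Sigma(k)$. By Lemma~\ref{lem:extract}, the tree $T_t$ has boundary width at most $k$. Let $(T^*,\rho^*)$ be the presentation selected by $\operatorname{NF}_k$, so $\operatorname{NF}_k(G)$ equals $R(G',T^*,\rho^*)$ for some numbered copy $G'$ of $G$. Corollary~\ref{cor:compiler-compatibility} says the unnamed raw compiler expression is invariant under transport along isomorphisms. So I may transport $T^*$ and $\rho^*$ to the numbering of $G$ used by the compilation, and the resulting raw term is literally $\operatorname{NF}_k(G)$. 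Both trees then have boundary widths at most $k$. Corollary~\ref{cor:tree-interpolation} connects $R(G,T_t,\rho_t)$ to this term within width $d_\Sigma(\max\{2k,\alpha(G)\})\le d_\Sigma(2k)$, taking $d_\Sigma$ monotone after closure. Concatenating the two derivations gives
\[
B_\Sigma(k)=\max\{c_\Sigma(k),\,d_\Sigma(2k),\,a_\Sigma(0,k),\,4k+4\},
\]
which is computable and nondecreasing after monotone closure.

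The main obstacle is making sure the two paths meet at the same literal raw term, and not merely at equal values. The compilation endpoint is $R$ and not $R_{\mathrm{sc}}$; Lemma~\ref{lem:compiler-invariant} supplies this reassembly. The isolated-vertex suffix $Z_{i(G)}$ must be identical on both sides, and the unit suppression when $i(G)=0$ must agree. Finally, transporting the NF presentation must be well defined when $G$ has nontrivial automorphisms. I would handle the last point by noting that any automorphism-related choice of transport yields the same unnamed term, again by Corollary~\ref{cor:compiler-compatibility}.
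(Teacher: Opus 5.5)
Your proposal is correct and follows the paper's proof step for step. You handle the edgeless case by Lemma~\ref{lem:relative-spider}, then use Lemma~\ref{lem:compile} to reach $R(G,T_t,\rho_t)$, transport the selected $\operatorname{NF}_k$ presentation by Corollary~\ref{cor:compiler-compatibility}, interpolate by Corollary~\ref{cor:tree-interpolation} within $d_\Sigma(2k)$, and bound the target width by $4k+4$ via Lemma~\ref{lem:realizationwidth}, with the same $B_\Sigma$. The only difference is cosmetic: the paper uses $a_\Sigma(0,\max\{k,1\})$ in the edgeless case, whereas your $a_\Sigma(0,k)$ also suffices, because any vertex forces $k\ge1$ and otherwise $Z_0=e_0$ has width $0$.
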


\begin{proof}
Put $G=\val(t)$. If $G$ has no edge,
Lemma~\ref{lem:relative-spider} connects $t$ to the fixed scalar tensor within
$a_\Sigma(0,\max\{k,1\})$.

Otherwise $\alpha(G)\le k$. Lemmas~\ref{lem:extract} and \ref{lem:compile}
connect $t$ to a realization of an edge tree $T_t$ of width at most $k$.
Choose the numbered presentation and tree used by
$\operatorname{NF}_k(G)$. Transport their vertex names, edge occurrences,
and bank orders along an isomorphism to a representative of $G$.
Corollary~\ref{cor:compiler-compatibility} shows that erasing those names leaves
exactly the selected raw expression and gives a tree $U$ of width at most $k$
on the actual occurrences. Corollary~\ref{cor:tree-interpolation} connects the
two realizations within $d_\Sigma(2k)$, and
Lemma~\ref{lem:realizationwidth} bounds the target width by $4k+4$.

Take $B_\Sigma(k)$ as the monotone closure of
\[
 \max\{c_\Sigma(k),d_\Sigma(2k),4k+4,
 a_\Sigma(0,\max\{k,1\})\}.
\]
The candidate set is determined by the isomorphism class of $G$, the integer
$k$, and the fixed encoding conventions. Its least serialization is therefore
independent of the input expression and the chosen decomposition.
\end{proof}

\begin{corollary}[bounded equational coherence]\label{cor:boundedcoherence}
\leavevmode\newline
If $s,t$ are closed raw expressions and $\patw(s),\patw(t)\le k$, then
\[
 \val(s)=\val(t)
 \quad\Longleftrightarrow\quad
 s\equiv_{\mathcal B\cup\E}^{B_\Sigma(k)}t.
\]
\end{corollary}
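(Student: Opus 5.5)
The corollary follows from Theorem~\ref{thm:boundednf} together with the soundness half of Theorem~\ref{thm:bk}. I will prove the two directions separately.

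For the direction from right to left, suppose $s\equiv_{\mathcal B\cup\E}^{B_\Sigma(k)}t$. Then in particular $s\xleftrightarrow{*}_{\mathcal B\cup\E}t$, since a width-bounded derivation is an ordinary derivation. Every step applies a scheme that holds in $\GR(\Sigma)$: the structural laws hold because $\GR(\Sigma)$ is a magmoid, and each instance of (A1)--(A15) holds by Theorem~\ref{thm:bk}. Each step replaces a subterm occurrence by one of the same type, and $\val$ is a magmoid morphism, so each step preserves value. Hence $\val(s)=\val(t)$.

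For the direction from left to right, put $G=\val(s)=\val(t)$. Both $s$ and $t$ are closed and have width at most $k$, so Theorem~\ref{thm:boundednf} applies to each of them with the same $k$. It gives derivations
\[
 s\equiv^{B_\Sigma(k)}_{\mathcal B\cup\E}\operatorname{NF}_k(G),
 \qquad
 t\equiv^{B_\Sigma(k)}_{\mathcal B\cup\E}\operatorname{NF}_k(G).
 \]
The key point is that $\operatorname{NF}_k(G)$ is one literal raw expression. It depends only on the isomorphism class $G$, the integer $k$, and the fixed syntax, and not on the input expression. This is exactly the independence asserted at the end of the proof of Theorem~\ref{thm:boundednf}. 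So both derivations end at the same raw term.

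Definition~\ref{def:derw} allows every step to be used in either direction. I reverse the second derivation and concatenate it with the first. This gives a sequence of raw expressions from $s$ to $t$, all of the common type $(0,0)$. The width of the combined sequence is the maximum of the two widths, so it is at most $B_\Sigma(k)$.

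The only point needing care is that both applications of Theorem~\ref{thm:boundednf} must use the same index $k$, so that they produce the same representative $\operatorname{NF}_k(G)$. Using the smaller individual widths of $s$ and $t$ could give different normal forms. Since the theorem only requires $\patw\le k$, taking the common $k$ for both is legitimate. I expect no genuine obstacle beyond this.
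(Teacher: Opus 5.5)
Your proposal is correct and follows the paper's proof: soundness (the easy half of Theorem~\ref{thm:bk}) gives the reverse implication. For the forward implication, Theorem~\ref{thm:boundednf} with the common index $k$ derives both expressions to the single literal term $\operatorname{NF}_k(\val(s))=\operatorname{NF}_k(\val(t))$, and the derivation from $t$ is then reversed and concatenated, exactly as you describe.
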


\begin{proof}
Soundness gives the reverse implication. For the forward implication, derive
both expressions to the same normal form and reverse the derivation from $t$.
\end{proof}

\section{Calibration with graph width}\label{sec:calib}

For a graph $F$ whose underlying simple graph has at least two edges, a branch
decomposition is a tree whose internal nodes have degree $3$ and whose leaves
are those simple edges. Removing a tree edge partitions the simple edge set.
Its middle set consists of the vertices incident with graph edges on both
sides. The minimum possible maximum middle-set size is the branchwidth $\bw(F)$.
Robertson and Seymour proved
\cite{RS91}
\begin{equation}\label{eq:rs}
 \bw(F)\le\tw(F)+1
 \le\max\left\{\tfrac32\bw(F),2\right\}.
\end{equation}

\begin{lemma}[branchwidth of cliques]\label{lem:bwkn}
For every $n\ge3$,
\[
 \bw(K_n)=\left\lceil\frac{2n}{3}\right\rceil.
\]
\end{lemma}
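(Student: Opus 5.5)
The plan is to prove the two inequalities separately. The upper bound comes from an explicit ternary-style decomposition, and the lower bound from a balanced-separator argument on the edge set of $K_n$. This is the classical result of Robertson and Seymour \cite{RS91}, and we will follow their argument closely.

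\emph{Upper bound.} Partition the vertex set into three parts $V_1,V_2,V_3$ of sizes as equal as possible, each at most $\lceil n/3\rceil$. Assign every edge of $K_n$ to one of the three ``pair classes'': edges with both ends in $V_i\cup V_j$. Distribute edges inside a single $V_i$ to any class containing $V_i$. Build a branch decomposition from a central node of degree three, with one subtree per class $E_{ij}$.

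The middle set of the central cut separating $E_{ij}$ from the rest is contained in $V_i\cup V_j$. Its size is at most $\lceil 2n/3\rceil$. Inside each class, arrange the edges in any caterpillar. For a subset $A\subseteq E_{ij}$, we have $V(A)\subseteq V_i\cup V_j$, so $|\operatorname{mid}(A)|\le |V_i\cup V_j|$, which gives the bound at every cut.

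There is one point to check: $|V_i|+|V_j|\le\lceil 2n/3\rceil$ for the two smallest-or-any parts. With sizes $\lfloor n/3\rfloor$ or $\lceil n/3\rceil$, the sum of any two is at most $\lceil 2n/3\rceil$. This holds because at most $n\bmod 3$ parts are large, and a direct check of the three residues confirms it.

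\emph{Lower bound.} Take any branch decomposition and use the standard edge-separator lemma. In a tree with maximum degree three, and with weights on the leaves (here the $\binom n2$ edges), some tree edge splits the leaves into sides $A,B$ with each side carrying at least a third of the edges: $|A|,|B|\ge \frac13\binom n2$. This follows by walking from a leaf toward the heavier side.

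Let $X$ be the set of vertices incident only to $A$-edges, $Y$ those incident only to $B$-edges, and $M$ the middle set. Since $K_n$ is complete, $X$ and $Y$ cannot both be nonempty, because an $X$–$Y$ edge would lie in neither side. Say $Y=\varnothing$. Then every edge of $B$ has both ends in $M$, so $\binom{|M|}{2}\ge|B|\ge\frac13\binom n2$. This already gives $|M|\gtrsim n/\sqrt3$, which is too weak.

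So we refine. Every vertex outside $M$ is incident only with edges of one side. If both $X$ and $Y$ are empty, then $M=V$ and we are done. Otherwise, say $X\neq\varnothing$ and $Y=\varnothing$, with $|X|=x$ and $|M|=n-x$. Then $B$ lies within $M$, and every edge meeting $X$ is in $A$.

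The main obstacle is choosing the right weighting to get the sharp constant $2/3$. The intended fix is to apply the separator lemma with weights placed on vertices rather than edges. To do this, assign each vertex to a leaf edge incident to it, and find a tree edge where each side ``owns'' at most $2n/3$ vertices, i.e.\ at least $n/3$ private-or-shared vertices on each side.

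Then consider vertices owned by the $B$ side that are not in $M$. Such a vertex is incident only to $B$-edges, so it lies in $Y=\varnothing$. Hence all vertices owned by $B$, at least $\lceil n/3\rceil$ of them, lie in $M$. Symmetrically, we need $X$ small. Every vertex not in $M$ lies in $X$, and edges between $X$ and the $B$-owned vertices must be in $A$. This forces those $B$-owned vertices into $M$, which is consistent with the above. Counting gives $|M|\ge n-|X|$, and $|X|\le$ the number of $A$-owned vertices, which is at most $\lfloor n/3\rfloor$ after choosing the split so that the $A$ side owns at most a third.

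I expect to rerun the separator lemma with vertex weights and choose the side carefully to make these counts match $\lceil 2n/3\rceil$. The small cases $n=3,4$ will be checked by hand.
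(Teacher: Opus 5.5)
Your upper bound is correct and is the same construction as the paper's: three balanced vertex classes, three pair-supported edge classes grafted at a central cubic node, and every middle set inside some $V_i\cup V_j$ of size at most $\lceil 2n/3\rceil$.

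\textbf{The lower bound has a genuine gap, at exactly the step you defer.} Your dichotomy is right. At a tree edge with sides $A,B$, completeness forces $X=\varnothing$ or $Y=\varnothing$, so the middle set has size $n-\max\{|X|,|Y|\}$. You therefore need an edge whose nonempty private set has at most $\lfloor n/3\rfloor$ vertices.

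A separator built from an a priori vertex ownership cannot give this. It produces an edge where each side owns between $n/3$ and $2n/3$ vertices. But which side carries the private vertices is dictated by the decomposition once the edge is fixed. Your phrase ``choosing the split so that the $A$ side owns at most a third'' conflicts with the WLOG $Y=\varnothing$, which you can only make after the edge is chosen. If the private side is the heavy one, your count yields only $|\mathrm{mid}|\ge\#(B\text{-owned})\ge\lceil n/3\rceil$, or at best about $n/2$ with a centroid choice. Ownership weights forget how the three branches at a node interact, and that interaction is where the constant $2/3$ comes from.

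\textbf{The missing idea is to locate the cut using the private sets themselves.}
\begin{enumerate}
\item If some middle set is all of $V$, you are done. Otherwise orient each tree edge toward the unique side having private vertices, and follow the orientation to a sink.
\item A leaf is not a sink. Every vertex has degree $n-1\ge2$, so a one-edge side has no private vertex.
\item At an internal sink $x$, let the incident tree edges be $f_1,f_2,f_3$ with branches $T_1,T_2,T_3$. Let $P_i\neq\varnothing$ be the set of vertices meeting no edge of $T_i$. Then $|\mathrm{mid}(f_i)|=n-|P_i|$.
\item The $P_i$ are pairwise disjoint. A vertex in $P_i\cap P_j$ would have all its edges in $T_k$. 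Its edge to any vertex of $P_k$ could then lie neither in $T_k$ nor outside it.
\item Hence $\sum_i|\mathrm{mid}(f_i)|\ge 3n-n=2n$, so some $|\mathrm{mid}(f_i)|\ge\lceil 2n/3\rceil$.
\end{enumerate}

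The paper phrases the same step via the Helly property. The subtrees $T_v$ spanned by the edges at each vertex pairwise intersect, so they share a node. If they share a tree edge, the width is $n$. Otherwise each $T_v$ uses at least two of the three tree edges at a common internal node, and pigeonholing these $2n$ incidences over three edges gives the bound.
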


\begin{proof}[Proof sketch]
The lower bound is the standard balanced-branch argument. For the upper bound,
partition the vertices into three nearly equal classes and graft three edge
decompositions supported on their pairwise unions. Every middle set then has
size at most $\lceil2n/3\rceil$. Section~\ref{sec:calibration-details}, treats all
degeneracies.
\end{proof}

\begin{proposition}[parse-tree lower bound]\label{prop:lower}
If the underlying simple value of a closed expression $t$ has at least two
edges, then
\[
 \bw(\val(t))\le\patw(t).
\]
Consequently $\bw(F)\le\patw(F)$ for every graph $F$ whose underlying simple
graph has at least two edges.
\end{proposition}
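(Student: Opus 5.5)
We derive a branch decomposition of the underlying simple graph $F_0$ of $G=\val(t)$ from the parse tree of $t$, using Lemma~\ref{lem:extract}, and then check that every middle set has size at most $\patw(t)=k$. Since $t$ is closed, $P(G)=\varnothing$, so $\partial_G A=V_G(A)\cap V_G(E(G)\setminus A)$ is exactly the middle set of the bipartition $(A,E(G)\setminus A)$ of edge occurrences. Lemma~\ref{lem:extract} gives a rooted edge-partition tree $T_t$ whose leaves are the edge occurrences of $G$ and whose every cluster $A_x$ satisfies $|\partial_G A_x|\le k$. The hypothesis that $F_0$ has at least two edges guarantees $E(G)\neq\varnothing$, so the lemma applies.

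Next we pass from occurrences to simple edges. First delete every leaf that is a loop occurrence, suppressing the resulting unary nodes. Removing a loop $e$ at $v$ can only shrink middle sets. A vertex stays in a middle set only if it has non-loop edges on both sides, and those edges are still present, so it was already in the old middle set. Then, for each nonempty parallel class of non-loop arcs, keep one representative occurrence, delete the others, and suppress unary nodes. Again every new cut is the restriction of an old cut. A vertex in the new middle set has incident kept edges on both sides, and these were present before, so the new middle set is contained in the old one. The resulting rooted tree has as leaves exactly the simple edges of $F_0$, and all its clusters have middle sets of size at most $k$.

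Finally we unroot. With at least two leaves, the root has degree two; replacing it by a single edge joining its two children produces a tree whose internal nodes all have degree three. Each tree edge of the result still separates the leaves into some cluster $A_x$ and its complement, and since the middle set is symmetric in the two sides, its size is $|\partial A_x|\le k$. Hence $\bw(F_0)\le\patw(t)$, and $\bw$ of the value is by Convention~\ref{conv:enc} that of $F_0$. Minimizing over all $t$ with $\val(t)=F$ (any orientation) gives $\bw(F)\le\patw(F)$.

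The main obstacle is the monotonicity step under leaf deletion. It needs the observation that a middle-set vertex of the restricted partition witnesses incident edges on both sides in the original tree as well. We also need to check that unary-node suppression creates no new cut, which is mirrored by the last line of the proof of Lemma~\ref{lem:extract}.
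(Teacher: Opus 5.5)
Your argument is correct and is essentially the paper's proof: restrict the parse tree to one representative occurrence per underlying simple edge, suppress degree-two nodes, bound each middle set by the ports of the corresponding subterm, and observe that discarding loop and duplicate leaves cannot create new middle vertices. The only differences are packaging. You obtain the port bound from Lemma~\ref{lem:extract}, whose proof is exactly the Lemma~\ref{lem:internal} argument the paper cites directly, and you prune in two stages. Just read ``parallel class'' after forgetting directions, so that antiparallel arcs are also collapsed, as your final leaf count requires.
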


\begin{proof}
Retain one representative $a$-occurrence for every edge of the underlying
simple graph, take the minimal subtree spanning their parse-tree leaves, and
suppress degree-two vertices. The
remaining internal vertices have degree three, so the resulting tree is a
branch decomposition. Every one of its cuts has the same retained-edge
bipartition as a cut induced by some subterm $u$. A graph vertex with
an incident edge on each side must occur in the begin or end interface of
$\val(u)$ by Lemma~\ref{lem:internal}. Its middle set therefore has size at
most the rank sum of $u$, which is at most $\patw(t)$.
\end{proof}

Under Convention~\ref{conv:enc}, let $F$ be a directed multigraph represented
by one $a$-occurrence per directed edge, and let a tree decomposition of its
underlying simple graph have width $k$.

\begin{proposition}[tree-decomposition compiler]\label{prop:upper}
Given the supplied width-$k$ tree decomposition, one can construct in polynomial
time a closed expression for $F$ of width at most $4(k+1)$.
\end{proposition}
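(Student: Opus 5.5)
We convert the supplied tree decomposition into a rooted plane edge-partition tree whose ambient boundary width is at most $k+1$, and then apply the compiler $R(G,T,\rho)$ of \eqref{eq:closed-compiler} together with the width bound of Lemma~\ref{lem:realizationwidth}. First, in polynomial time, make the decomposition nice and rooted: make every node at most binary (duplicating bags where necessary), and remove redundant nodes so that the tree has $O(|V|)$ nodes. Every bag still has size at most $k+1$. For each arc of $F$, whether it is a loop or one member of a parallel class, choose a node whose bag contains both endpoints; such a node exists because the endpoints are adjacent in the underlying simple graph, or form a single vertex for a loop. Attach a fresh leaf carrying that arc below the chosen node.

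Next, derive the binary edge-partition tree $T$. Delete subtrees that carry no arc, suppress unary nodes, and fix an arbitrary plane order; where a node has more than two children, rebalance it into a binary chain. Every node $x$ of $T$ corresponds to a node $b$ of the decomposition, and $A_x$ is a union of arcs hanging at or below $b$, possibly together with some arcs at $b$ itself when $x$ arises from a chain. The key claim is
\[
\partial_G A_x\subseteq \text{bag}(b).
\]
To prove it, take a vertex incident both with an arc in $A_x$ and with an arc outside $A_x$. The two arcs are attached at nodes whose bags contain that vertex, one inside the subtree of $b$ and the other outside it or at $b$. Connectivity of the vertex's occurrence set then forces the vertex into $\text{bag}(b)$. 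The closed graph has $P(G)=\varnothing$, so no port condition needs separate treatment. This gives boundary width at most $k+1$, and each label has rank $(1,1)$, so $\alpha(G)=2\le k+1$ as long as $k\ge1$.

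Finally, choose the orders $\rho_x$ arbitrarily and form $R(G,T,\rho)$ literally, adding $Z_{i(G)}$ for the isolated vertices. By Lemma~\ref{lem:compiler-invariant} its value is $F$, and Lemma~\ref{lem:realizationwidth} with $r=k+1$ gives width at most $4(k+1)+4$. That exceeds the claimed bound by $4$, so this is the main obstacle. The remedy is to refine the leaf and node routers directly. A leaf $\operatorname{Name}[e]\circ\operatorname{Can}$ has rank sum at most $2+2+\dots$, and an internal merge block $\operatorname{Can}(\eta_x,0):b_y+b_z\to b_x$ can be arranged to avoid the full $2r$ input. Attach arcs one at a time to a single carried bank: a vertex already present on the parent's wire is merged immediately through $\mu_2$, which needs only one extra wire, rather than staging two full child banks in parallel. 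Concretely, I would redo the bound in Lemma~\ref{lem:realizationwidth}'s proof using introduce/forget/join nodes. Introduce and forget nodes add or close one wire, leaving widths near $2(k+1)+O(1)$. A join node composes $R_y\bsum R_z$ of rank $(0,2(k+1))$ with a merge of rank $(2(k+1),k+1)$, giving a rank sum of $3(k+1)$. An arc leaf attached to a bank of size $k+1$ uses the copy/merge padding $e_{k-1}\bsum(\text{cup-}a\text{-cap})$, which stays below $4(k+1)$. Checking that every padded adjacent-transposition word in $P_{\rm in},P_{\rm out}$ stays within $4(k+1)$ is routine. Polynomial time follows because the trees have size $O(|V|+|A|)$ and each router has polynomial size.
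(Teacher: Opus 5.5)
The obstacle you flag is not real: your first route already proves the statement. The gap is in the ``remedy'' you replace it with. Your bag-containment argument for $\partial_G A_x$ is correct. In the chain case the outside arc may lie in a sibling child subtree of $b$ rather than outside the subtree of $b$, but the connecting path still passes through $b$.

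What goes wrong is quoting Lemma~\ref{lem:realizationwidth} as a black box. Its extra $4$, and its hypothesis that label rank sums are at most $r$ (which forced you to assume $k\ge1$), only serve general leaves. Audit the rank-$(1,1)$ case directly instead:
\begin{itemize}
\item The leaf $R_x=\operatorname{Name}_{1,1}[a]\circ\operatorname{Can}(\eta_x,0)$, with $\operatorname{Name}_{1,1}[a]=(\operatorname{cup}\circ\chi_1)\circ(e_1\bsum a)$ and $b_x\le2$, has every subterm of rank sum at most $4$.
\item At an internal node, $R_y\bsum R_z$ has rank $(0,b_y+b_z)$. Since $\partial_G A_x\subseteq\partial_G A_y\cup\partial_G A_z$, the router $J_x$ has no output-only block. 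So every subterm of $J_x$ has input and output rank at most $b_y+b_z\le2(k+1)$.
\end{itemize}
Hence $\patw(R(G,T,\rho))\le\max\{4,4(k+1)\}=4(k+1)$, including the loop-only case $k=0$. Correctness of the value comes from Lemma~\ref{lem:compiler-invariant}, and polynomial time from the router sizes, as you say.

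The construction you actually settle on is not yet a proof. $R(G,T,\rho)$ has no introduce or forget nodes, so ``redoing the bound using introduce/forget/join nodes'' means a different compiler, which you never define or prove correct. Several of its ingredients are off:
\begin{itemize}
\item A join cannot ``avoid the full $2r$ input'', because both child banks are full there.
\item The cup--letter wrapper is a named core $0\to2$, not a gadget acting on two live bank wires. The gadget you need is copy--$a$--merge, $(\io12\bsum e_1)\circ(e_1\bsum a\bsum e_1)\circ(e_1\bsum\io21)$, applied after a permutation makes the endpoints adjacent.
\item Your join count of $3(k+1)$ omits the perfect shuffle on $2(k+1)$ wires, which is exactly where $4(k+1)$ is attained.
\item You also omit the running-intersection argument that the join merges identify only bag vertices.
\end{itemize}
Carried out properly, that bag-wire construction is precisely the paper's proof.

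The two routes give different things. The compiler route reuses Section~\ref{sec:normal} and bounds the width by four times the boundary width of whatever rooted edge-partition tree is supplied. The paper's direct construction is self-contained and, when there are no joins, yields the layered linear terms behind Corollary~\ref{cor:pwupper}.
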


\begin{proof}[Construction sketch]
Use a nice rooted decomposition and retain one ordered wire for each bag
vertex. Assign every directed edge occurrence to a bag containing its
endpoints, assigning a loop at $v$ to a bag containing $v$, and introduce the
occurrences separately. The introduce-vertex, introduce-edge, forget-vertex,
and join nodes use fixed copy-merge routers. At a join, the running-intersection
property implies that the vertex sets occurring in the two child subtrees
intersect exactly in the join bag, so the compiled pieces share no unintended
vertex outside that bag. A join is therefore the only step using $2(k+1)$
wires, which gives width at most $4(k+1)$. Section~\ref{sec:calibration-details},
gives the expanded invariant and audit.
\end{proof}

\begin{corollary}[calibration with treewidth]\label{cor:calib}
For every graph $F$ whose underlying simple graph has at least two edges,
\[
 \bw(F)\le\patw(F)\le4(\tw(F)+1)
\]
and
\[
 \tw(F)+1\le\max\left\{\tfrac32\patw(F),2\right\}.
\]
If $\tw(F)\ge2$, then
\[
 \tfrac23(\tw(F)+1)\le\bw(F)\le\patw(F)\le4(\tw(F)+1).
\]
\end{corollary}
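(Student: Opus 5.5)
The plan is to read Corollary~\ref{cor:calib} off Proposition~\ref{prop:lower}, Proposition~\ref{prop:upper}, and the Robertson--Seymour inequalities \eqref{eq:rs}. No new construction is needed; only the conventions on orientations and the underlying simple graph have to be tracked.

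\emph{First chain.} The lower bound $\bw(F)\le\patw(F)$ is exactly the second statement of Proposition~\ref{prop:lower}. It applies because the minimum in Definition~\ref{def:width} ranges over closed expressions $t$ with $\val(t)\in\operatorname{Ori}(F)$. Every such value has the same underlying simple graph as $F$, so it has at least two edges and its branchwidth is $\bw(F)$.

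For the upper bound, fix an optimal tree decomposition of the underlying simple graph of $F$, of width $\tw(F)$. Choose one orientation $\vec F\in\operatorname{Ori}(F)$ and represent it by one $a$-occurrence per directed edge. Proposition~\ref{prop:upper} then yields a closed expression of width at most $4(\tw(F)+1)$ with value $\vec F$. Hence $\patw(F)\le4(\tw(F)+1)$. The polynomial-time aspect is irrelevant here; only existence is used.

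\emph{Second inequality.} Start from the right-hand inequality of \eqref{eq:rs}, $\tw(F)+1\le\max\{\tfrac32\bw(F),2\}$. Since $x\mapsto\max\{\tfrac32x,2\}$ is monotone, the bound $\bw(F)\le\patw(F)$ gives $\tw(F)+1\le\max\{\tfrac32\patw(F),2\}$.

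\emph{Third chain.} Assume $\tw(F)\ge2$. Then $\tw(F)+1\ge3>2$, so the maximum in \eqref{eq:rs} must be attained by $\tfrac32\bw(F)$. This gives $\tw(F)+1\le\tfrac32\bw(F)$, that is, $\tfrac23(\tw(F)+1)\le\bw(F)$. Concatenating with the first chain yields the displayed three-term inequality.

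The only point needing care is checking that the hypotheses of \eqref{eq:rs} and of Proposition~\ref{prop:lower} hold for the underlying simple graph. Both require at least two edges, and this is assumed. Loops and parallel arcs affect neither $\bw$ nor $\tw$, by Convention~\ref{conv:enc}. I expect no real obstacle beyond this bookkeeping.
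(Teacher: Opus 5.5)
Your proposal is correct and follows essentially the paper's proof. Both inequalities of the first chain are read off Propositions~\ref{prop:lower} and \ref{prop:upper}, the second inequality follows from \eqref{eq:rs} by monotonicity, and $\tw(F)\ge2$ rules out the exceptional term $2$ in \eqref{eq:rs}; your orientation and underlying-simple-graph bookkeeping matches Convention~\ref{conv:enc}, which the paper leaves implicit.
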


\Needspace{9\baselineskip}
For a finite simple graph $F$, choose any orientation and one $a$-occurrence
per edge.

\begin{corollary}[path-decomposition compiler]\label{cor:pwupper}
\[
 \patw(F)\le\patwlin(F)\le2(\pw(F)+1)+2.
\]
\end{corollary}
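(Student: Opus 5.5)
The first inequality needs no construction. A layered linear expression is in particular a closed raw expression with value $F$. So the minimum defining $\patw(F)$ in Definition~\ref{def:width} is taken over a superset of the expressions used for $\patwlin(F)$, which gives $\patw(F)\le\patwlin(F)$. For the second inequality, fix a path decomposition $X_1,\dots,X_L$ of the underlying simple graph of width $p=\pw(F)$, so $|X_i|\le p+1$. The plan is to build an elementary layered linear expression for one orientation of $F$ whose block types $(c_{j-1},c_j)$ all satisfy $c_{j-1}+c_j\le 2(p+1)+2$. Since Convention~\ref{conv:enc} lets $\patwlin(F)$ minimize over orientations, the orientation produced by the construction does not matter.

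The construction keeps one wire per vertex of the current bag, in a fixed order, and sweeps the bags left to right. First I replace the decomposition by a nice one, in which consecutive bags differ by introducing or forgetting a single vertex. I then assign each edge to one bag containing both of its endpoints. Each step is rendered by elementary blocks $e_r\bsum x\bsum e_s$, with all bracketings fixed left-associated:
\begin{itemize}
\item Introduce a vertex: use $x=\io01$, with type $(b,b+1)$ and $b\le p$. The new wire is appended at the right end.
\item Forget a vertex: first move its wire to the right end by blocks $e_r\bsum\pi\bsum e_s$ of type $(b,b)$, then cap it with $x=\io10$, of type $(b,b-1)$.
\item Place an edge $uv$ while the bag has $b\le p+1$ wires:
\begin{enumerate}
\item copy the $u$-wire with $\io12$, of type $(b,b+1)$;
\item move the copy beside the $v$-wire by adjacent transpositions $\pi$, each of type $(b+1,b+1)$;
\item apply $a$ to the copy, type $(b+1,b+1)$;
\item merge the result into the $v$-wire with $\io21$, type $(b+1,b)$.
\end{enumerate}
\item Place a loop at $u$: copy the $u$-wire, apply $a$, and merge back, with no transpositions.
\end{itemize}
The sweep starts at $c_0=0$ and forgets every remaining vertex at the end, so $c_L=0$. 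If the graph is empty, the single block $e_0$ is used. The largest block type is $(b+1,b+1)$ with $b\le p+1$, so every block has rank sum at most $2(p+2)=2(p+1)+2$.

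It then remains to check the width. Every composition prefix has type $(0,c_j)$ with $c_j\le p+2$. Every partial left-associated sum inside a block has ranks bounded by the ranks of the whole block. Hence $\patw(t)\le 2(p+1)+2$.

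The step I expect to be the main obstacle is the value check: that the resulting raw term really denotes an orientation of $F$, with no spurious identifications. I would argue this by induction along the sweep. The invariant is that the value of the prefix ending at a bag has as end sequence exactly the current bag vertices, each exactly once and in the recorded order. It also contains precisely the edges assigned so far, together with the vertices already forgotten as internal nodes.

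The copy/merge pair attaches $a$ between the $u$- and $v$-classes, either as $u\to v$ or as $v\to u$ depending on the composition conventions. Either direction is acceptable, because of the orientation freedom noted above. Forgotten vertices stay internal and receive no later edges by Lemma~\ref{lem:internal}. This is consistent with the path-decomposition property: every edge at a forgotten vertex was assigned to an earlier bag. Isolated vertices are introduced and then forgotten, which leaves one isolated node each.
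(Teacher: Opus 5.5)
Your proposal is correct and follows essentially the same route as the paper. Both use a nice path decomposition with same-bag edge assignment, then splice elementary $\io01$, $\io10$, copy/$a$/merge and adjacent-transposition blocks into one global left comb; prefixes have rank at most $\pw(F)+2$, and the peak $2(\pw(F)+1)+2$ comes from blocks on $b+1$ wires. The differences are cosmetic and do not change the bound: the paper moves $u,v$ to the front before copying and undoes that permutation afterwards, so its transposition blocks stay within $2(\pw(F)+1)$ and only the $a$-block attains the peak, whereas your transpositions route the copy on $b+1$ wires and attain the same peak.
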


\begin{proof}
A path decomposition has no join shuffle. Compose its bag gadgets in path
order with the fixed left-comb parenthesization. Section~\ref{sec:calibration-details}, gives the exact rank audit.
\end{proof}

For graphs with at least two underlying simple edges, Section~\ref{sec:calibration-details}, proves $\max\{\bw(F),2\}\le\patwstar(F)\le\patw(F)$ for support width,
which counts distinct final boundary vertices, and gives sharpness examples.

\section{Linear and unrestricted constructions}\label{sec:applications}

Let $F$ be a finite simple undirected graph represented by one arbitrarily
oriented $a$-occurrence per edge. Its linear width $\lw(F)$ minimizes, over
orders of these occurrences, the largest number of vertices incident with
edges on both sides of a proper cut \cite{Tho96,Thi00,BT04}. Put $\lw(F)=0$
when no proper cut exists.

\begin{theorem}[linear-fragment bounds]\label{thm:linear-bounds}
For every such graph $F$ with an edge,
\[
 2\lw(F)-1\le\patw^{\mathrm{el}}(F)\le2\lw(F)+4,
 \qquad \patwlin(F)\ge\tfrac13\lw(F).
\]
Moreover $\patwlin/\patw$ is unbounded on complete binary trees.
\end{theorem}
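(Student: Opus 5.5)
The plan is to prove the four claims separately. Each one uses only the interface lemma (Lemma~\ref{lem:internal}) and the calibration results of Section~\ref{sec:calib}. Throughout, fix an orientation of $F$ and one $a$-occurrence per edge. By Convention~\ref{conv:enc} the minimum over orientations is harmless. The key observation for every linear-fragment bound is that, in a left comb $(((B_1\circ B_2)\circ\cdots)\circ B_L)$, the prefix $B_1\circ\cdots\circ B_j$ is a subterm of type $(0,c_j)$. Hence, by Lemma~\ref{lem:internal}, every vertex that has an edge inside the prefix and an edge outside it is represented among the $c_j$ end ports of that prefix.

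\emph{Lower bound for elementary expressions.} In an elementary expression every block contains exactly one generator. The $a$-blocks therefore induce a linear order of the edge occurrences. Choose the proper cut of this order with the largest middle set $M$, so $|M|\ge\lw(F)$. Let $j$ be the index of the first $a$-block after the cut. The prefix ending at $B_{j-1}$ contains exactly the edges before the cut, so the observation above gives $c_{j-1}\ge|M|$. The block $B_j=e_r\bsum a\bsum e_s$ has type $(c_{j-1},c_{j-1})$, so its rank sum is $2c_{j-1}\ge2|M|$. The stated bound $2\lw(F)-1$ leaves one unit of slack. I would use it to absorb the degenerate case where $M$ contains an endpoint of the next edge whose only wire is being created or consumed inside that layer, and I would check this carefully against Definition~\ref{def:linear}.

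\emph{Upper bound for elementary expressions.} Take an edge order realizing $\lw(F)$. Maintain one wire per live vertex, meaning a vertex incident with edges on both sides of the current cut; there are at most $\lw(F)$ of these. To process the next edge $uv$, proceed in this order:
\begin{itemize}
\item create any fresh endpoint with $\io01$;
\item copy each endpoint that stays live with $\io12$;
\item bring the two copies together by adjacent transpositions $\pi$;
\item apply $a$ in a block $e_r\bsum a\bsum e_s$;
\item merge or delete with $\io21$ and $\io10$.
\end{itemize}
At the peak there are at most $\lw(F)+2$ wires, since the two endpoints carry an extra copy at most. Each single-generator layer then has rank sum at most $2(\lw(F)+2)=2\lw(F)+4$. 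Every prefix has only its end rank, so it adds nothing larger.

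\emph{Lower bound for layered linear expressions.} Order the edges block by block, with an arbitrary order inside each block. A proper cut either falls between blocks or falls inside some block $B_j$. Between blocks, the middle set is bounded by some $c_j$. Inside $B_j$, the block is a parallel sum of generators, so two letters of $B_j$ can share a vertex only through wires of $c_{j-1}$ or $c_j$. Tracing ports with Lemma~\ref{lem:internal} should therefore bound the middle set by $c_{j-1}+c_j$ plus at most the rank of one generator. Each of these three quantities is at most $\patw(t)$, which gives $\lw(F)\le3\patw(t)$, i.e.\ $\patwlin(F)\ge\tfrac13\lw(F)$. I expect this port tracing through parallel $\io21,\io12,\pi$ generators inside one block to be the main obstacle, because copies made in one block can feed several letters of the same block.

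\emph{Complete binary trees.} Corollary~\ref{cor:calib} gives $\patw\le4(\tw+1)=8$ for trees. The pathwidth of the complete binary tree of height $h$ is $\lceil h/2\rceil$, and linear width is at least pathwidth by the cited literature \cite{Tho96,BT04}. The previous bound then gives $\patwlin\ge\lceil h/2\rceil/3\to\infty$, while $\patw$ stays bounded. Hence the ratio $\patwlin/\patw$ is unbounded.
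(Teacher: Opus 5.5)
Your proposal is correct in outline. For both lower bounds it takes a different and sharper route than the paper.

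\emph{Elementary lower bound.} The paper takes $j_0$ with $c_{j_0}$ maximal and gets $c_{j_0}\ge\lw(F)$ from Lemma~\ref{lem:active}. It then loses one unit because a one-generator block changes the interface by at most one, so $c_{j_0-1}+c_{j_0}\ge2c_{j_0}-1$. You evaluate instead at the $a$-block immediately after the worst cut. That block has square type $(c_{j-1},c_{j-1})$, so you actually obtain $\patw^{\mathrm{el}}(F)\ge2\lw(F)$. There is no degenerate case for the spare unit to absorb, so you can drop the hedge. The block $B_j$ contains only $a$ and units, and $c_{j-1}$ is read off the prefix $t_{\le j-1}$, which contains exactly the edges before the cut.

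\emph{Layered lower bound.} The paper bounds a refined cut inside layer $j$ by two contributions. One is the vertices crossing cut $j-1$. The other is the endpoints of the at most $c_{j-1}$ letters of $B_j$. Together this gives $3c_{j-1}$. The obstacle you anticipate is not there. Since $B_j$ is a $\bsum$-term, no output of one of its factors is an input of another. A copy made by $\io{1}{2}$ in $B_j$ can only feed $B_{j+1}$. Hence every letter of $B_j$ has its tail at a cut-$(j-1)$ entry and its head at a cut-$j$ entry. A middle vertex with an edge outside layer $j$ is represented at cut $j-1$ or cut $j$, by Lemma~\ref{lem:internal} applied to $t_{\le j-1}$ or $t_{\le j}$. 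So the middle set has at most $c_{j-1}+c_j\le\patw(t)$ elements, with no extra generator term. Completed this way, your route gives $\patwlin(F)\ge\lw(F)$, which is stronger than the factor $3$.

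Three loose ends need fixing.

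\emph{Wire count in the upper bound.} The phrase ``each endpoint carries at most one extra copy'' does not justify $\lw(F)+2$ for your procedure. A fresh endpoint that stays live receives both an $\io{0}{1}$ wire and an $\io{1}{2}$ copy. The count is still correct, but only because such an endpoint belongs to the next edge-order boundary. Since that boundary also has size at most $\lw(F)$, the current one must be correspondingly smaller. It is cleaner to do as the paper does: create and copy only the tail, and let the fresh head of $a$ either become the new $v$-wire or be merged into the live one.

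\emph{Isolated vertices.} Your construction must also realize the isolated vertices of $F$, since the value has to lie in $\operatorname{Ori}(F)$. For instance, prepend the blocks $\io{0}{1}$ and $\io{1}{0}$ once per isolated vertex. These have rank sum $1$ and change no later interface.

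\emph{Pathwidth comparison.} The inequality $\lw\ge\pw$ is false in general, since $\lw(K_2)=0<1=\pw(K_2)$. It does hold for connected graphs with at least two edges, hence for $T_d$. In any case, the paper's Lemma~\ref{lem:lwpw}, giving $\lw\ge\pw-1$, already suffices for unboundedness. The paper also proves $\pw(T_d)\ge\lceil d/2\rceil$ directly in Lemma~\ref{lem:pwtree} rather than citing the exact value.
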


\begin{proof}[Proof sketch]
An elementary chain orders the edges. Two adjacent block ranks at each proper
cut give $2\lw(F)-1$. Conversely, retain one wire per boundary vertex and at
most two temporary wires to obtain $2\lw(F)+4$. Refining any layered term to
an edge order enlarges its boundary by at most a factor of three. Complete
binary trees have pattern width at most $8$ but unbounded pathwidth and hence
unbounded layered width. Section~\ref{sec:linear}, gives the details.
\end{proof}

\begin{theorem}[clique bounds and machine-verified values]\label{thm:clique-values}
For complete graphs, with pattern width minimized over
$\operatorname{Ori}(K_n)$ as in Convention~\ref{conv:enc},
\[
\begin{gathered}
 2n-3\le\patwlin(K_n)\le2n\quad(n\ge1),
 \qquad \patw(K_n)\le n+1\quad(n\ge3),\\
 (\patw(K_3),\ldots,\patw(K_7))=(4,5,6,6,6).
\end{gathered}
\]
The supplied lane-spiral terms for $5\le n\le24$ have directed values in
$\operatorname{Ori}(K_n)$, and they have width $n+1$.
\end{theorem}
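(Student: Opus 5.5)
The statement has four parts, and I would treat them in order: the layered linear bounds, the branching upper bound $n+1$, the lane-spiral family, and the machine-verified values.

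\emph{Layered linear bounds.} For the lower bound $2n-3\le\patwlin(K_n)$, I would use the refinement argument behind Theorem~\ref{thm:linear-bounds}, but sharpen it for cliques rather than losing the factor three. In a layered term $t=(((B_1\circ B_2)\circ\cdots)\circ B_L)$, the prefix of type $(0,c_j)$ carries the edges placed in $B_1,\dots,B_j$. By Lemma~\ref{lem:internal}, every vertex incident with an edge on both sides of the cut appears among its $c_j$ ports.

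Now choose the layer $j$ at which the set of vertices that are fully finished first exceeds zero. Because every pair of vertices is adjacent, at most one vertex can be finished while another remains untouched. From this the cut separates the edge set so that at least $n-1$ vertices are boundary vertices of the prefix $B_1\circ\cdots\circ B_j$.

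The block $B_{j+1}$ then has input rank $c_j\ge n-1$. Its output rank must still carry all but the vertices finished inside that block. Since a block is a parallel sum of generators, each generator finishing at most one vertex-class contribution, I expect $c_j+c_{j+1}\ge 2n-3$ at that step. This rank accounting is the step I expect to be the main obstacle: a single block can merge and close several wires simultaneously. The fallback is to bound the width of the block itself, which is at least $c_j+c_{j+1}$.

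For the upper bound $\patwlin(K_n)\le2n$, I would apply Corollary~\ref{cor:pwupper} with $\pw(K_n)=n-1$, giving $2n+2$. To reach $2n$ I would tighten the construction: keep $n$ wires, one per vertex, and insert each arc $a$ between two adjacent wires using copy, $a$, and merge inside a block of rank at most $(n,n)$. Adjacent wires are brought together by $\pi$ blocks, which keep rank $(n,n)$. The opening copy blocks and closing merge blocks stay within rank sum $2n$. Small $n$ are handled directly.

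\emph{Branching upper bound.} For $\patw(K_n)\le n+1$ with $n\ge3$, I would build an explicit recursive term, the lane spiral. Keep $n-1$ lanes for the vertices not yet finished. Use one extra wire to copy the current pivot vertex, attach all its arcs by tree-shaped copy and merge gadgets whose subterms have rank at most $(n,1)$ or $(n-1,2)$, then close the pivot and recurse on $K_{n-1}$. The point to check is that branching lets the pivot's fan of arcs sit in a sum whose rank sum never exceeds $n+1$. Tree-like copy and merge, as in the out-star width-$4$ example, is exactly what permits this. The base case $K_3$ has width $4=n+1$.

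\emph{Lane-spiral family.} The claim for $5\le n\le24$ concerns supplied terms. I would verify it by direct evaluation of $\val$ and by computing widths from Definition~\ref{def:width}.

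\emph{Machine-verified values.} The upper bounds come from explicit witnesses, for instance the lane spirals and small terms of widths $4,5,6,6,6$. The lower bounds come from Proposition~\ref{prop:finiteclosure}: run saturation with $W$ equal to one less than the claimed value and both orientations seeded. The resulting orbit-union certificate is then checked via Proposition~\ref{supp:prop:certificate-soundness}, which shows the accepting state is unreachable.
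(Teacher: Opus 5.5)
Your proposal has a genuine gap in the linear lower bound, at the step you flag yourself. Let $T$ be the first stage at which some vertex is complete, and let $k$ be the number of vertices completed at layer $T$. The bound $c_{T-1}\ge n-1$ holds for the cut just \emph{before} the completing block. The prefix at stage $T$ has at most $n-k$ boundary vertices. So your accounting for $B_T$ (its output ``must still carry all but the vertices finished inside that block'') gives only $c_T\ge n-k$. Nothing in your argument bounds $k$: one block can contain many generators and complete many vertices at once, so ``each generator finishes at most one vertex'' controls nothing. Your fallback, that the block width is at least $c_{T-1}+c_T$, is circular, because that sum is exactly what must be bounded.

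The paper does not use a one-block estimate. Assuming $\patw(t)\le 2n-4$, it argues as follows:
\begin{itemize}
\item it uses both blocks around the spike to get $c_{T-2},c_T\le n-3$;
\item hence $k\ge3$ vertices are completed at stage $T$, and $g\ge3$ vertices are not yet started at stage $T-2$;
\item every edge between these two sets, and every edge at a vertex of their overlap, must be placed in layers $T-1,T$;
\item this bilinear demand $D$ is compared with the linear capacities $D\le 2n-4$ (input count) and $D\le 2k+2g-7$ (edge-capacity lemma), followed by a case analysis on the overlap $o\le3$.
\end{itemize}
To keep a single-block argument, you would need an explicit charging showing that vertices completed at layer $T$ still occupy cut $T$. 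For instance, by Lemma~\ref{lem:last} such a vertex is the tail of all its layer-$T$ edges, and their fresh heads are outputs on cut $T$. As written, $c_{T-1}+c_T\ge 2n-3$ is only asserted.

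Both upper-bound constructions also fail as written. A layered block is a parallel sum of generators, so copy, $a$, and merge cannot share one block. With all $n$ vertex wires live, the copy block has rank $(n,n+1)$ and the next $a$-block has rank $(n+1,n+1)$. That gives $2n+2$, no better than Corollary~\ref{cor:pwupper}. The paper reaches $2n$ with a schedule:
\begin{itemize}
\item orient $i\to j$ for $i<j$;
\item place arcs lexicographically by head, then by tail;
\item vertices are then born late and retired early, so at most $n-1$ wires are live before every copy.
\end{itemize}

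For $\patw(K_n)\le n+1$, your scheme ``attach the pivot's star, then recurse on $K_{n-1}$'' makes the recursive subterm receive all $n-1$ remaining lanes. The star part is then a term of type roughly $n\to n-1$, whose padded pieces have rank sum about $2n$. Tree-shaped copying shortens the fan but does not reduce the number of lanes crossing that interface. Nor can the recursive subterm be split within $n+1$:
\begin{itemize}
\item a composition split of a subterm with $n-1$ inputs leaves at most two middle wires, far too few for the lanes that still need edges;
\item a sum split separates lanes that still need edges between them.
\end{itemize}
The missing idea is the lane spiral:
\begin{itemize}
\item vertices are created along a spine;
\item chords are split into descending ones, handled on the inbound pass (at most $h+2$ wires), and ascending ones, handled on the outbound pass (at most $m-h-2$ up-lanes);
\item the recursion is $R_j=A_j\circ(R_{j+1}\bsum e_1)\circ B_j$, with the spare parked on $e_1$;
\item the widest subterm is the parking node, of rank sum $2+d_{j+1}+u_j\le m+2=n+1$;
\item $n=3,4$ come from the small witnesses.
\end{itemize}

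Finally, since the lane spirals have width exactly $n+1$, among the small values they witness only $\patw(K_5)\le6$. The bounds $\patw(K_6)\le6$ and $\patw(K_7)\le6$ need the separately stored optimized witnesses. Your certificate-based plan for the lower bounds at one below the claimed values matches the paper.
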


\begin{proof}[Proof sketch]
The linear lower bound follows from the first stage at which a clique vertex
is complete, while the standard live-wire construction gives the upper bound
$2n$. The lane-spiral recursion gives unrestricted width at most $n+1$.
Reproducibility package version 1.1.0 checks the upper witnesses for $K_3$
through $K_7$ and verifies
canonical representative sets whose orbit unions contain every seed, are
transition closed, and omit the accepting orbit at the next smaller width.
Sections~\ref{sec:linear} and \ref{sec:cliques}, gives the proofs and replay instructions.
\end{proof}

In particular $\patwlin(K_n)-\patw(K_n)\ge n-4$.

\section{Conclusion}\label{sec:conclusion}

The bounded-normal-form theorem turns ordinary completeness into a local
resource statement. Equal narrow graph expressions can be connected inside the
fixed finite-graphoid presentation without exposing an interface whose size
depends on the graph. The protected-core construction and intersection
interpolation identify the mechanism behind this locality. The calibration and
clique results show that the resulting parameter reflects familiar structural
width while retaining information about the expression tree.

The result is qualitative in three important respects. The computable
function $B_\Sigma$ is obtained by finite proof search rather than a useful
closed formula. No bound on derivation length is asserted, and the
lexicographic normal-form selection is not claimed to be efficient. Extending
bounded coherence to open values, determining the gap between support width
and pattern width, and finding the asymptotic value of $\patw(K_n)$ remain
open.

\clearpage
\section*{Supplementary proofs}
\label{sec:supplementary-proofs}
The following sections contain the full expanded proofs, constructions,
calibrations, and certificate semantics used in the main text.
\setcounter{section}{0}
\setcounter{subsection}{0}
\renewcommand{\thesection}{S\arabic{section}}
\numberwithin{equation}{section}
\numberwithin{table}{section}
\numberwithin{figure}{section}
\subsection*{Notation and conventions}

Let $\Sigma$ be a finite doubly ranked alphabet.

Let
$D=\{\io21,\io01,\io12,\io10,\pi\}$, with respective ranks
$(2,1)$, $(0,1)$, $(1,2)$, $(1,0)$, and $(2,2)$, and put
$X_\Sigma=\Sigma\cup D$. The set $\Free_{p,q}(X_\Sigma)$ consists of raw
rank-$(p,q)$ expressions built from the letters in $X_\Sigma$, the units
$e_j:j\to j$, composition $\circ$, and parallel sum $\bsum$. The evaluation
map $\val$ interprets a raw expression as its finite graph with ordered begin
and end interfaces. Every $\pi_{m,n}$ below abbreviates the fixed raw
representative $\pi_{m,n}^{\rm raw}$ defined in the main text. All iterated
operations and named permutation words use the main text's fixed raw
bracketing. For a raw expression $t$,
\[
 \patw(t)=\max\{a+b:t'\text{ is a rank-}(a,b)\text{ subterm occurrence of }t\}.
\]
The structural magmoid laws, including interchange, are denoted by
$\mathcal B$. The finite-graphoid equations (A1) through (A15) over $\Sigma$ are
denoted by $\mathcal E=\mathcal E_\Sigma$. The graph-width, clique, encoding,
and saturation sections use the default $\Sigma=\{a\}$ with
$\rank(a)=(1,1)$. Section~\ref{sec:normal-details} allows any finite doubly ranked $\Sigma$. Closed
values are finite directed multigraphs and may have loops. For a simple
undirected graph $F$, let $\operatorname{Ori}(F)$ contain all directed graphs
obtained by orienting each edge exactly once. The notation $\val(t)=F$
abbreviates $\val(t)\in\operatorname{Ori}(F)$, and $\patw(F)$ is minimized
over that set rather than over one fixed orientation. For a directed
multigraph, the underlying simple graph is obtained by forgetting directions,
deleting loops, and replacing each nonempty parallel class by one edge while
retaining all vertices. Classical graph-width parameters are applied to that
underlying simple graph.

\section{Expanded calibration audit with branchwidth and treewidth}
\label{sec:calibration-details}

Throughout this section, branchwidth, treewidth, pathwidth, and linear width of
a directed multigraph $F$ mean the corresponding parameters of its underlying
simple graph. This graph has the same vertex set as $F$. It is obtained by
forgetting edge directions, deleting loops, and replacing each nonempty
parallel class by one edge. Thus isolated vertices are retained. All branch,
tree, and path decompositions below are decompositions of this underlying
simple graph.

Recall that a \emph{branch decomposition} of a graph $F$ whose underlying
simple graph has edge set $E$ with $|E|\ge 2$ is an unrooted tree $T$ whose
internal nodes have degree $3$
and whose leaves are in bijection with $E$. Each tree edge $f$ induces a
bipartition $(E_f,E\setminus E_f)$ of $E$. Its middle set
$\mathrm{mid}(f)$ consists of the vertices incident with edges on both sides.
The width of $T$ is $\max_f|\mathrm{mid}(f)|$. The quantity $\bw(F)$ is the minimum
over all branch decompositions. Robertson and Seymour \cite{RS91} proved
\begin{equation}\label{supp:eq:rs}
\bw(F)\ \le\ \tw(F)+1\ \le\ \max\bigl(\tfrac{3}{2}\bw(F),\,2\bigr).
\end{equation}

\begin{lemma}[branchwidth of cliques]\label{supp:lem:bwkn}
For every $n\ge3$,
\[
  \bw(K_n)=\Bigl\lceil\frac{2n}{3}\Bigr\rceil .
\]
\end{lemma}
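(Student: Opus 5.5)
The plan is to prove the two inequalities separately: an explicit decomposition for the upper bound, and a balanced-cut counting argument for the lower bound. Write $n=3m+s$ with $s\in\{0,1,2\}$.

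\emph{Upper bound.} Partition $V(K_n)$ into three classes $V_1,V_2,V_3$ whose sizes differ by at most one. Assign every edge to a pair $\{i,j\}$ such that both of its endpoints lie in $V_i\cup V_j$. For an edge inside one class, choose a fixed pair containing that class. Build a branch decomposition as follows.
\begin{itemize}
\item Take a central node with three incident tree edges.
\item Attach to the tree edge labelled $\{i,j\}$ an arbitrary rooted binary tree whose leaves are the edges assigned to $\{i,j\}$.
\end{itemize}
Consider a cut inside the branch for $\{1,2\}$. Its smaller side consists of edges supported on $V_1\cup V_2$, so its middle set lies in $V_1\cup V_2$. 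This set has size at most the sum of the two largest class sizes, which is $\lceil 2n/3\rceil$. The three central cuts also have middle sets contained in a union of two classes.

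Degenerate cases need separate checks:
\begin{itemize}
\item a pair with fewer than two assigned edges, where the pendant subtree collapses to a leaf or is empty;
\item the small cases $n=3,4$, where a class may be a singleton.
\end{itemize}
For $n=3$ a direct caterpillar on three edges gives width $2=\lceil 2\rceil$. Since $|E|\ge3$ here, the tree is always a valid ternary tree after suppressing degree-two nodes, and suppression does not enlarge any middle set.

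\emph{Lower bound.} Use the standard balanced-edge lemma. Every branch decomposition of a graph with $|E|\ge2$ has a tree edge $f$ with $|E|/3\le|E_f|\le 2|E|/3$. To find it, start at any tree edge and walk toward the heavier side; the first edge at which the heavier side has at most $2|E|/3$ edges works, because internal nodes have degree three. Fix such an $f$. Let $A$ be the set of vertices incident only with edges in $E_f$, let $B$ be the set of vertices incident only with edges outside $E_f$, and let $M=\mathrm{mid}(f)$. In $K_n$ every pair $a\in A$, $b\in B$ spans an edge, and that edge can lie on neither side. Hence $A=\emptyset$ or $B=\emptyset$. Say $B=\emptyset$, so every vertex meets $E_f$. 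Then
\[
|E\setminus E_f|\le\binom{|M|}{2},
\]
because those edges have both endpoints in $M$. Combining this with $|E\setminus E_f|\ge\binom n2/3$ gives $|M|(|M|-1)\ge n(n-1)/3$, which already yields roughly $|M|\ge n/\sqrt3$.

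This is too weak. The sharp bound needs the refinement that uses all vertices: vertices outside $M$ are entirely on one side, and we count edges on each side against $|M|$ and $n-|M|$. I expect this to be the main obstacle.
\begin{itemize}
\item First attempt: take the side $X$ containing all vertices of $V\setminus M$. Its complement side has all edges inside $M$, and $X$ contains every edge meeting $V\setminus M$. Hence
\[
|E_X|\ge\binom n2-\binom{|M|}{2}.
\]
Balance gives $|E_X|\le \tfrac23\binom n2$, so $\binom{|M|}{2}\ge\tfrac13\binom n2$. This is the same weak estimate.
\item The correct argument is the known one of Robertson--Seymour for cliques. It uses the decomposition's ternary structure more finely, through a node where all three branches carry fewer than the edges of $\tfrac{2n}{3}$-cliques, or through tangles. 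I would invoke the tangle duality: the family of separations orienting toward the side containing at least $\lceil 2n/3\rceil$ vertices forms a tangle of order $\lceil 2n/3\rceil$, because three small sides cannot cover all vertices. By the Robertson--Seymour duality theorem, a tangle of order $\theta$ forces $\bw\ge\theta$.
\end{itemize}

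Verifying the tangle axiom is the key step. Three separations each with small side of at most $n/3$ "private" vertices cannot cover $K_n$'s edges, and the edge-versus-vertex bookkeeping there is exactly where rounding to $\lceil 2n/3\rceil$ must be checked.
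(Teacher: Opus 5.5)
\textbf{Upper bound.} Your upper bound is correct and matches the paper's construction: three near-equal classes, one pendant branch per pair of classes, and every middle set confined to a union of two classes.

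\textbf{The gap: the lower bound is never proved.} You rightly discard the balanced-edge estimate, since it only gives $|\mathrm{mid}(f)|\gtrsim n/\sqrt3$. The replacement you offer is, however, left unverified. You invoke tangle duality and then explicitly defer the tangle axiom as ``the key step.''

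The one-line justification you do give, that ``three small sides cannot cover all vertices,'' is false. Three sets of size at most $\lceil n/3\rceil<\lceil 2n/3\rceil$ partition $V(K_n)$, and each is a legitimate small side of order $<\lceil 2n/3\rceil$. What three small sides cannot cover is all \emph{edges}. Your heuristic about small sides with ``at most $n/3$ private vertices'' does not explain why.

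\textbf{The missing idea: a double count.} Let $S_1,S_2,S_3$ be the vertex sets of three small sides whose union contains every edge. If a vertex $v$ lay in at most one $S_i$, then every edge $vu$ would force $u\in S_i$, so $S_i=V$. Hence every vertex lies in at least two of the $S_i$. This gives $\sum_i|S_i|\ge 2n$, so some $|S_i|\ge\lceil 2n/3\rceil$, a contradiction.

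You also need that the orientation is well defined. In $K_n$, a separation of order $<n$ has one side whose vertices all lie in the separator, because an edge between a vertex private to $A$ and one private to $B$ could lie on neither side.

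With both points your tangle route does go through. The direction you need, that a tangle of order $\theta$ gives $\bw\ge\theta$, is the easy one: orient each decomposition edge toward its big side and look at a sink node of degree three. It then amounts to the same count the paper performs.

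\textbf{How the paper does it directly.} For each vertex $v$, let $T_v$ be the minimal subtree spanning the leaves of edges at $v$. The leaf $uv$ lies in $T_u\cap T_v$, so the $T_v$ pairwise intersect and, by the Helly property for subtrees of a tree, share a point.
\begin{itemize}
\item If the common part contains a tree edge, its middle set is all of $V$, so the width is already $n$.
\item Otherwise the common part is an internal node $x$. For $n\ge3$ it is not a leaf. Each $T_v$ must use at least two of the three tree edges at $x$, since otherwise $x\notin T_v$. So these three edges carry at least $2n$ incidences, and one of them, $f$, lies in at least $\lceil 2n/3\rceil$ of the $T_v$.
\end{itemize}
The vertices whose $T_v$ contains $f$ are exactly those in $\mathrm{mid}(f)$, which gives the bound. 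Your proposal needs this ``every vertex is counted at least twice at a degree-three node'' step in some form, and it is exactly what is absent.
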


\begin{proof}
For the lower bound, fix a branch decomposition $T$ of $K_n$. For a vertex
$v$, let $T_v$ be the minimal subtree of $T$ containing the leaves
corresponding to all edges incident with $v$. For every two vertices $u,v$, the
leaf labeled by the edge $uv$ belongs to both $T_u$ and $T_v$. Hence the
subtrees $T_v$ are pairwise intersecting. By the Helly property for subtrees of
a tree, they have a common intersection. If this common intersection contains a
tree edge $f$, then every vertex lies in $\mathrm{mid}(f)$, so the width is at
least $n$.

Otherwise the common intersection contains no tree edge. It also cannot be a
leaf, since $n\ge3$ and no edge leaf belongs to the incident-edge subtrees of
all vertices. Choose an internal node $x$ in the common intersection. It has three
incident tree edges. Since $n\ge3$, each $T_v$ contains at least two incident
branches at $x$: if all leaves of $T_v$ lay in only one branch, then $x$ would
not belong to the minimal subtree $T_v$. Thus the three incident tree edges at
$x$ are contained, with multiplicity, in at least $2n$ of the subtrees $T_v$.
One of them is contained in at least $\lceil 2n/3\rceil$ of the $T_v$. For that
tree edge $f$, the vertices whose subtrees contain $f$ are exactly the vertices
incident with edges on both sides of the cut of $f$, so
$|\mathrm{mid}(f)|\ge\lceil 2n/3\rceil$.

For the upper bound, partition $V(K_n)$ into three parts $A,B,C$ as evenly as
possible. Then each pairwise union has size at most $\lceil2n/3\rceil$. Split
$E(K_n)$ into three classes: edges inside $A$ together with edges between $A$
and $B$, edges inside $B$ together with edges between $B$ and $C$, and edges
inside $C$ together with edges between $C$ and $A$. Each class is supported on,
respectively, $A\cup B$, $B\cup C$, or $C\cup A$. Hence each uses at most
$\lceil2n/3\rceil$ vertices. All three classes are nonempty because the three
parts are nonempty. For a one-edge class, take a new vertex labeled by that
edge. It will be a leaf after attachment. For each class with at least two
edges, take an arbitrary branch decomposition, subdivide one chosen tree edge,
and use the new degree-$2$ vertex as its attachment point. Add one new cubic
node. Join it directly to the labeled vertex for a one-edge class and to the
attachment point for every larger class. The labeled vertex then has degree
$1$, while every attachment point has degree $3$. Hence the resulting tree is
a branch decomposition of $K_n$.

Every cut within a grafted piece has its middle set contained in the supporting
vertex set of that class. This also holds for either edge created by a
subdivision and for the edge joining the piece to the new cubic node. The three
central middle sets are therefore contained in
$A\cup B$, $B\cup C$, and $C\cup A$. Therefore the width is at most
$\lceil2n/3\rceil$.
\end{proof}

\begin{lemma}[internal nodes stay internal]\label{supp:lem:internal}
Let $t\in\Free_{0,0}(X_\Sigma)$, let $t'$ be a subterm occurrence of rank
$(a,b)$, and let $\theta\colon\val(t')\to\val(t)$ be the canonical map on
nodes induced by evaluating the surrounding context. Call a node $v$ of
$\val(t')$ \emph{internal} if it occurs in neither the begin nor the end
sequence of $\val(t')$. If $v$ is internal then
$\theta^{-1}(\theta(v))=\{v\}$, and every edge of $\val(t)$ incident with
$\theta(v)$ is the $\theta$-image of an edge of $\val(t')$.
\end{lemma}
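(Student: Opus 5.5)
We argue by induction on the depth of the occurrence $t'$ below the root of $t$, exactly as in the main-text Lemma~\ref{lem:internal}, now carrying out the case analysis at the level of concrete representatives. If $t'=t$, then $\theta$ is the identity, the preimage claim is trivial, and every edge of $\val(t)$ is trivially an image. Otherwise $t'$ has a parent occurrence $s$ of one of the four forms $t'\circ u$, $u\circ t'$, $t'\bsum u$, $u\bsum t'$. We factor $\theta=\theta_s\circ\iota$, where $\iota\colon\val(t')\to\val(s)$ is the canonical map for the one-step context and $\theta_s\colon\val(s)\to\val(t)$ is the map for the remaining context. It then suffices to prove a one-step statement and apply the induction hypothesis to $s$.

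\emph{One-step statement.} If $v$ is internal in $\val(t')$, then $\iota(v)$ is internal in $\val(s)$, $\iota^{-1}(\iota(v))=\{v\}$, and every edge of $\val(s)$ incident with $\iota(v)$ is the $\iota$-image of an edge of $\val(t')$. We verify this from the definitions of product and sum in $\GR(\Sigma)$.

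\emph{Sum case.} Here $\val(s)$ is the disjoint union of $\val(t')$ and $\val(u)$, with concatenated begin and end sequences. Thus $\iota$ is injective and no edge of $\val(u)$ touches $\iota(v)$. Moreover, $\iota(v)$ does not occur in the concatenated sequences, since $v$ occurs in neither sequence of $t'$ and the $u$-part consists of other nodes.

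\emph{Composition case, $t'\circ u$.} The vertex set of $\val(s)$ is the quotient of the disjoint union by the equivalence generated by identifying the $i$th end node of $\val(t')$ with the $i$th begin node of $\val(u)$. Since $v$ is not an end node, no generating pair involves $v$. Its class is therefore the singleton $\{v\}$, which gives injectivity at $v$. Edges of $\val(u)$ are attached only to classes of $\val(u)$ nodes, and no such class contains $v$, so every edge at $\iota(v)$ comes from $\val(t')$. The begin sequence of $\val(s)$ is the image of that of $t'$, and the end sequence is the image of that of $u$. Hence $\iota(v)$ occurs in neither, so it is internal. The case $u\circ t'$ is symmetric, using that $v$ is not a begin node.

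\emph{Composing the steps.} Given the one-step statement, the induction hypothesis applied to the internal node $\iota(v)$ of $\val(s)$ gives $\theta_s^{-1}(\theta_s\iota(v))=\{\iota(v)\}$. Combined with $\iota^{-1}(\iota(v))=\{v\}$, this yields $\theta^{-1}(\theta(v))=\{v\}$. Every edge at $\theta(v)$ is a $\theta_s$-image of an edge at $\iota(v)$, which in turn is an $\iota$-image of an edge of $\val(t')$.

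The only delicate point is the composition quotient. One must observe that the identification is generated solely by the positional pairs of end and begin nodes, so that a node outside those lists is never identified, even transitively. Edges of $\val(t)$ are edges of the disjoint union, so the incidence statement is to be read up to that canonical inclusion. Closedness of $t$ plays no role.
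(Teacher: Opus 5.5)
Your proposal is correct and follows essentially the same route as the paper's proof. Both argue by induction on depth below the root, check the four parent contexts (a composition identifies only positional end/begin pairs, a sum identifies nothing), and apply the induction hypothesis at the parent; your factorization $\theta=\theta_s\circ\iota$ simply spells out the paper's final sentence.
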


\begin{proof}
We use induction on the depth of $t'$ below the root of the parse tree. The root
case is trivial. Let $s$ be the parent subterm. If $s=t'\circ u$, then
$V(\val(s))$ is the quotient of $V(\val(t'))\sqcup V(\val(u))$ identifying
the $i$th end node of $\val(t')$ with the $i$th begin node of $\val(u)$.
Since $v$ is not in the end sequence, its class is a singleton and no edge of
$\val(u)$ is incident with it. Since
$\mathrm{begin}(\val(s))=\mathrm{begin}(\val(t'))$ and
$\mathrm{end}(\val(s))=\mathrm{end}(\val(u))$, the image of $v$ is again
internal in $\val(s)$. The cases $s=u\circ t'$, $s=t'\bsum u$ and
$s=u\bsum t'$ are analogous ($\bsum$ performs no identification). The
induction hypothesis applied to $s$ concludes.
\end{proof}

\Needspace{6\baselineskip}
\begin{proposition}[lower calibration]\label{supp:prop:lower}
For every expression $t\in\Free_{0,0}(X_\Sigma)$ whose underlying simple value
has at least two edges, we have $\bw(\val(t))\le\patw(t)$. Hence
$\bw(F)\le\patw(F)$ for every graph $F$ whose underlying simple graph has at
least two edges.
\end{proposition}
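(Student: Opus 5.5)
We prove the inequality by turning the parse tree of $t$ into a branch decomposition of the underlying simple graph $F_0$ of $\val(t)$, and by showing that every middle set of this decomposition is carried by the interface of a single subterm.

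First we build the tree. For each edge $\{x,y\}$ of $F_0$, fix one $a$-occurrence in $t$ whose endpoints are $x,y$ in some order. Such an occurrence exists because loops are deleted and each nonempty parallel class is collapsed to one edge. Each occurrence is a leaf of the parse tree of $t$. Take the minimal subtree of the parse tree spanning these chosen leaves, and regard it as unrooted. Then suppress every vertex of degree two, including the root if it now has degree two. The parse tree is binary, so every remaining internal vertex has degree three. The leaves are exactly the chosen occurrences, which are in bijection with $E(F_0)$. Since $|E(F_0)|\ge2$, the result is a genuine branch decomposition $\mathcal T$ of $F_0$.

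Next we identify each cut with a subterm. Every edge $f$ of $\mathcal T$ comes from a path of parse-tree edges. Choose a parse-tree edge on that path, lying above a subterm occurrence $u$. The bipartition of retained edges induced by $f$ is $(E_u,E(F_0)\setminus E_u)$, where $E_u$ is the set of retained edges whose chosen occurrence lies inside $u$. Suppressing degree-two vertices creates no new cuts, so every cut of $\mathcal T$ has this form.

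Now we bound the middle set. Let $v\in\mathrm{mid}(f)$. Then $v$ is incident both with a chosen occurrence inside $u$ and with a chosen occurrence outside $u$. Suppose the corresponding node of $\val(u)$, namely the preimage under $\theta$ through the endpoint of the inside occurrence, were internal to $\val(u)$. Lemma~\ref{lem:internal} would then say that every edge of $\val(t)$ at $\theta$ of that node is the image of an edge of $\val(u)$. The outside occurrence is not such an image, which is a contradiction. Hence some node in the begin or end sequence of $\val(u)$ maps to $v$.

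Distinct vertices of $F_0$ need distinct boundary positions, so
\[
|\mathrm{mid}(f)|\le a+b\le\patw(t),
\]
where $(a,b)$ is the rank of $u$. Taking the maximum over $f$ and minimizing over branch decompositions gives $\bw(\val(t))\le\patw(t)$. Finally, for a graph $F$, take $t$ attaining $\patw(F)$, with value in $\operatorname{Ori}(F)$. The underlying simple graph of $\val(t)$ is $F$, so $\bw(F)\le\patw(F)$.

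The main obstacle is the middle-set step. It must handle the fact that loops, parallel occurrences, and unretained occurrences remain in $t$, and that a vertex may be represented by several nodes of $\val(u)$. The argument above addresses this by using only the retained occurrences together with the singleton-fibre clause of Lemma~\ref{lem:internal}.
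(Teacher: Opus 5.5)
Your proof is correct and follows the paper's argument. You retain one occurrence per underlying simple edge, take the minimal spanning subtree of the parse tree with degree-two vertices suppressed, identify each cut with a subterm cut $(u,\mathrm{parent}(u))$, and use Lemma~\ref{lem:internal} to place every middle vertex on a distinct boundary node of $\val(u)$. One small slip: the step you actually use is the edge clause of that lemma (every edge at $\theta(v)$ is the image of an edge of $\val(u)$), not the singleton-fibre clause named in your closing remark.
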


\begin{proof}
The parse tree of $t$ is binary. Its leaves are the occurrences of letters
of $X_\Sigma$ and of units. Branchwidth is computed on the underlying simple
graph of $\val(t)$, as in the main text. Hence, before forming the
branch decomposition, discard every loop leaf and retain one representative
$a$-occurrence for each unordered edge of the underlying simple graph. If the
value is already simple, this changes nothing. In the unrooted parse tree,
take the minimal subtree spanning the retained $a$-leaves and suppress every
degree-$2$ vertex. If exactly two leaves are retained, the result is their
joining edge. Otherwise every remaining internal vertex has degree $3$. Its
leaves are in bijection with the edge set of the underlying simple graph of
$\val(t)$, so the resulting tree $T'$ is a branch decomposition. The leaf cut
induced by each edge of $T'$ agrees with the cut induced by some parse-tree edge
$(u,\mathrm{parent}(u))$. If a vertex of the underlying simple graph is incident with a
retained edge generated inside the subterm $t_u$ and with a retained edge
generated outside, then by Lemma~\ref{supp:lem:internal} it is the image of a
non-internal node of $\val(t_u)$, of which there are at most
$a_u+b_u\le\patw(t)$. Discarding loop and duplicate leaves cannot create a new
middle vertex, so every middle set has size at most $\patw(t)$.
\end{proof}

\begin{proposition}[upper calibration]\label{supp:prop:upper}\label{supp:prop:tree-compiler}
Let $F$ be a finite directed multigraph, possibly with loops, represented over
the default rank-$(1,1)$ graph alphabet $\Sigma=\{a\}$ with one
$a$-occurrence for every directed edge occurrence. From a width-$k$ tree
decomposition of its underlying simple graph, one can construct in polynomial
time an expression for $F$ of width at most $4(k+1)$. Thus
\[
\patw(F)\ \le\ 4\,(\tw(F)+1).
\]
For the linear-fragment conclusion, suppose that $F$ is an oriented
representation of a finite simple undirected graph, with exactly one
$a$-occurrence in one arbitrarily chosen orientation for each edge. From a
width-$k$ path decomposition, the same construction yields a linear expression
of width at most $4(k+1)$. Hence
$\patwlin(F)\le 4(\pw(F)+1)$ under this convention.
Corollary~\ref{supp:cor:pwupper} sharpens the constant in that case.
\end{proposition}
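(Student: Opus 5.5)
We will build the expression bottom-up along a nice rooted tree decomposition $(\mathcal T,\{X_x\})$ of the underlying simple graph of $F$ of width $k$, so every bag has at most $k+1$ vertices. In polynomial time we first convert the given decomposition into a nice one: introduce-vertex, forget-vertex and join nodes, with empty root bag and empty leaf bags. Every directed edge occurrence of $F$, including parallel copies and loops, is assigned to one bag containing its endpoints. We then add one introduce-edge node for each occurrence, placed directly above a node whose bag is that bag. Isolated vertices are introduced and immediately forgotten at some node.

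The invariant, at each node $x$ with subtree vertex set $V_x$ and assigned occurrence set $A_x$, is a raw expression $R_x:0\to|X_x|$. Its value is the subgraph on $V_x$ with arc set $A_x$, and its end sequence lists $X_x$ in a fixed order, each vertex once. The node rules are as follows.
\begin{itemize}
\item A leaf gives $e_0$.
\item Introduce-vertex gives $(R_y\bsum\io01)\circ P$, with $P$ a fixed permutation word on at most $k+1$ wires restoring the order.
\item Forget-vertex gives $R_y\circ(P\circ(e_{|X_x|}\bsum\io10))$.
\item Introduce-edge for an arc $u\to v$ copies the $u$ and $v$ wires with $\io12$, routes one copy of each to be adjacent, and inserts $a$ in the form $\io21$-closing pattern. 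Concretely, $\operatorname{cap}\circ$-style insertion $(e_1\bsum a)$ is composed with a merge via $\io21$ back onto the $v$ wire after $a$, and the tail copy is merged back. A loop uses a single copied wire.
\item Join gives $(R_y\bsum R_z)\circ\operatorname{Can}(\eta,0)$, where $\eta$ merges the two copies of each bag vertex.
\end{itemize}
Correctness of the join uses the running-intersection property: $V_y\cap V_z\subseteq X_x$. Since every vertex outside $X_y$ is internal to $\val(R_y)$ (cf. Lemma~\ref{supp:lem:internal}), no unintended identification occurs. Each occurrence is placed exactly once, so the value is $F$ at the root, and $R_{\rm root}:0\to0$.

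For the width audit, each gadget is a composite of blocks acting on at most $2(k+1)$ wires plus $O(1)$ temporary ones. The main cost is the join: the sum $R_y\bsum R_z$ has rank $(0,\le2(k+1))$, and the router $\operatorname{Can}(\eta,0)$ has input at most $2(k+1)$ and output at most $k+1$. Its internal permutation stage $P_{\rm in}$ has rank sum at most $4(k+1)$, and the merge stage has ranks $(2(k+1),k+1)$. Edge gadgets need care: with bag size $k+1$, copying two wires gives $k+3$ wires, so rank sums are about $2(k+3)\le4(k+1)$ once $k\ge1$. For $k=0$, graphs whose components are single vertices with loops, a direct check gives small width. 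This audit is the main obstacle. The permutation words must be fixed left-associated compositions whose every prefix has rank $(w,w)$, and every parenthesization must be chosen so that no subterm exposes both a full doubled bank and extra temporary wires. I would order the join router as: merge first (reducing to $\le k+1$), then permute; and in the edge gadget first permute the two relevant wires to the end, so the copy and insertion happen on a block $e_{k-1}\bsum(\cdot)$. The node count is polynomial, and each gadget has size $O(k^2)$, giving the polynomial-time claim.

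For the path-decomposition clause, there are no join nodes. Composing the gadgets in path order with the left-comb parenthesization yields a layered linear expression. Each block is a left-associated sum of generators and units on at most $k+3$ wires, so the $4(k+1)$ bound persists, giving $\patwlin(F)\le4(\pw(F)+1)$.
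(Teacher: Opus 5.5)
Your proposal is correct and follows essentially the paper's route: a nice decomposition with inserted edge-introduction nodes, one ordered wire per bag vertex, copy--$a$--merge edge gadgets, shuffle-and-merge joins justified by $V_y\cap V_z=X_x$, the doubled-bank permutation at a join as the sole source of the constant $4$ (it must precede the merges, as your audit of $P_{\rm in}$ already assumes), and one global left comb when there are no joins. Two small adjustments bring it in line with the paper. First, copy only the tail $u$, feed that copy to $a$, and merge the fresh head directly into the $v$ wire; this keeps every edge block at rank sum at most $2b+2$ and makes your separate $k=0$ check unnecessary. Second, write introduce-vertex as $R_y\circ(e_{b-1}\bsum\io01)\circ P$ and splice every router into its adjacent-transposition blocks, since $(R_y\bsum\io01)\circ P$ is not literally a left comb of $\bsum$-blocks as the layered-linear definition requires.
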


\begin{proof}
Start with the supplied width-$k$ tree decomposition of the underlying simple
graph of $F$ and put it, in polynomial time, in nice rooted form
$(T,\{B_x\})$ without increasing its maximum bag size \cite{Klo94}. Its root
and leaves have empty bags, and all bags have size at most $k+1$. Assign each
directed edge occurrence of $F$ to a chosen bag containing its endpoints. A
loop at $u$ is assigned to a chosen bag containing $u$. For every original
nice node $x$, insert immediately above $x$, on the edge from $x$ toward the
root, a chain of same-bag nodes indexed by the occurrences assigned to $B_x$.
Each inserted node has a unique child with the same bag and introduces exactly
one assigned occurrence. The empty root receives no occurrence. This
subdivision preserves the maximum bag size and preserves the path property
when the supplied decomposition is a path. Opposite directed edges, parallel
edges, and loops are created separately, even though they do not appear as
separate edges in the underlying simple graph.

Fix a linear order on $V(F)$ and enumerate each bag in this order. By induction
on $T$ we build, for each node $x$ with $b=|B_x|$, an expression $p_x$ of rank
$(0,b)$. Its value contains precisely the vertices occurring in the subtree at
$x$ and the directed edge occurrences assigned within that subtree. Its end
sequence enumerates $B_x$ in the fixed order. Every permutation router
$\gamma:r\to r$ is specified by a word
$G_1,\ldots,G_s$, where each $G_i=\tau_{r,j_i}$ is the fixed
left-associated $\bsum$-block for one adjacent transposition. The inverse word
is $G_s,\ldots,G_1$. In every displayed chain below, $\gamma$ and
$\gamma^{-1}$ are shorthand for their respective ordered lists of elementary
blocks. We first splice all these lists into the chain and then take one left
comb over the entire resulting block list. In particular, a composite router
is never inserted as a separately parenthesized composition subterm. The empty
router contributes no block. The value of each router lies in the appropriate
permutation component $\PERM_r$.

\emph{Leaf} ($B_x=\emptyset$): $p_x=e_0$.

\emph{Introduce} ($B_x=B_y\cup\{v\}$):
$p_x=p_y\circ(e_{b-1}\bsum\io01)\circ\gamma_{\mathrm{ins}}$, where
$\gamma_{\mathrm{ins}}:b\to b$ is the fixed raw term whose value inserts the
fresh wire at the sorted position of $v$.

\emph{Edge introduction} ($B_x=B_y$). Let $y$ be the unique child of $x$ and
define the inherited pre-edge expression by
$p_x^{\mathrm{pre}}:=p_y$. Suppose first that an edge occurrence $u\to v$ with
$u\ne v$ is assigned to $x$. Choose the fixed raw term
$\gamma_{\mathrm{fr}}:b\to b$ whose value brings $u,v$, in that order, to the
first and second positions. Since $b\ge2$, put
\[
p_x=p_x^{\mathrm{pre}}\circ\gamma_{\mathrm{fr}}\circ(\io12\bsum e_{b-1})
\circ(e_1\bsum a\bsum e_{b-1})
\circ(e_1\bsum\io21\bsum e_{b-2})
\circ\gamma_{\mathrm{fr}}^{-1}.
\]
The copy retains one reference to $u$. The letter creates $u\to z$, where $z$
is its fresh output node, and $\io21$ identifies $z$ with $v$.

If the assigned edge is a loop at $u$, choose the fixed raw term
$\gamma_{\mathrm{fr}}:b\to b$ whose value brings $u$ to the first position and put
\[
p_x=p_x^{\mathrm{pre}}\circ\gamma_{\mathrm{fr}}\circ(\io12\bsum e_{b-1})
\circ(e_1\bsum a\bsum e_{b-1})
\circ(\io21\bsum e_{b-1})
\circ\gamma_{\mathrm{fr}}^{-1}.
\]
Here the final merge identifies the fresh head with the retained copy of $u$.
This formula is valid for $b=1$, with $e_0$ as the empty padding. The three
active blocks have rank sums $2b+1$, $2b+2$, and $2b+1$, respectively, so the
stated width audit is unchanged.

\emph{Forget} ($B_x=B_y\setminus\{v\}$):
$p_x=p_y\circ\gamma_{\mathrm{fr}}\circ(\io10\bsum e_{b})$, where
$\gamma_{\mathrm{fr}}:b+1\to b+1$ is the fixed raw term whose value moves $v$ to the first position while
preserving the relative order of all remaining wires. Thus the output order
is again the fixed order on $B_x$.

\emph{Join} ($B_x=B_y=B_z$):
$p_x=(p_y\bsum p_z)\circ\gamma_{\mathrm{sh}}
\circ(\io21\bsum\cdots\bsum\io21)$, where
$\gamma_{\mathrm{sh}}:2b\to2b$ is the fixed raw term whose value is the perfect shuffle placing
the two references to each bag vertex in adjacent positions, and the $b$
merges $\io21$ identify them. (This is also the step that reconciles
multiply-introduced vertices.)

To verify the value invariant at a join, let $V_y$ and $V_z$ be the sets of
vertices occurring in the decomposition subtrees rooted at the two children.
The running-intersection property gives $V_y\cap V_z=B_x$. Indeed, every
vertex in both child subtrees occurs in bags on both sides of $x$, so the
connectedness of its bag set forces it to occur in $B_x$. Conversely,
$B_x=B_y=B_z$, so every bag vertex occurs in both child subtrees. Hence the
displayed merges identify exactly the two copies of each bag vertex. No nonbag
vertex from one child is identified with a vertex from the other child.

The introduce block has rank $(b-1,b)$, and the forget block has rank
$(b+1,b)$. The three active blocks of either edge gadget have ranks
$(b,b+1)$, $(b+1,b+1)$, and $(b+1,b)$. At a join, the shuffle has rank
$(2b,2b)$ and the parallel merge block has rank $(2b,b)$. These types also
show inductively that every $p_x$ has rank $(0,|B_x|)$. The value invariant
follows from the stated actions on the ordered bag wires. At the empty-bag
root, $\val(p_{\mathrm{root}})=F$.

For the width audit, apply the preceding splicing convention at every node.
The ordered elementary blocks at that node are appended one at a time to its
initial child expression or parallel pair of child expressions. The whole
resulting chain is one left comb. Subterms already inside a child expression
satisfy the bound by induction. Every new subterm is one of the following. A
prefix has rank $(0,b')$ with
$b'\le2(k+1)$. An edge-copy prefix may reach $k+2$, and a join prefix may
reach $2(k+1)$. An elementary block has rank sum at most
$2b+2\le2k+4$. Each adjacent-transposition block of a router acts on at most
$2(k+1)$ wires, so it and all of its $\bsum$-subterms have rank sum at most
$4(k+1)$. A join merge block has rank
$(2b,b)$ and rank sum $3b\le3(k+1)$. Hence the width is at most
$4(k+1)$.

If $T$ is a path, there are no joins. Concatenating the elementary block lists
from successive nodes produces one global ordered block list. Taking its one
global left-comb parenthesization gives a layered linear expression in the
sense of the main text's layered-linear definition.
\end{proof}

\begin{corollary}[calibration with treewidth]\label{supp:cor:calib}
For every graph $F$ whose underlying simple graph has at least two edges,
\[
\bw(F)\ \le\ \patw(F)\ \le\ 4\,(\tw(F)+1)
\]
and
\[
\tw(F)+1\ \le\ \max\left\{\tfrac32\,\patw(F),\,2\right\}.
\]
Consequently,
\[
\tfrac23(\tw(F)+1)-\tfrac43\ \le\ \bw(F)\ \le\ \patw(F)\ \le\ 4(\tw(F)+1).
\]
If $\tw(F)\ge2$, then the sharper multiplicative lower bound
\[
\tfrac23(\tw(F)+1)\ \le\ \bw(F)\ \le\ \patw(F)\ \le\ 4(\tw(F)+1)
\]
holds.
\end{corollary}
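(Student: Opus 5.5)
The plan is to chain the two calibration propositions already proved and then feed the result into the Robertson--Seymour inequality \eqref{supp:eq:rs}. Throughout, $F$ has at least two underlying simple edges, so its branchwidth is defined.

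\textbf{Step 1: the two-sided bound.} Proposition~\ref{supp:prop:lower} gives $\bw(F)\le\patw(F)$. Applying Proposition~\ref{supp:prop:upper} to an optimal tree decomposition, of width $\tw(F)$, gives $\patw(F)\le4(\tw(F)+1)$. Since $\patw(F)$ is minimized over $\operatorname{Ori}(F)$, I will apply the upper bound to any one orientation. All orientations have the same underlying simple graph, so the same treewidth applies to each of them.

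\textbf{Step 2: the inequality $\tw(F)+1\le\max\{\tfrac32\patw(F),2\}$.} I will take the right half of \eqref{supp:eq:rs}, namely $\tw(F)+1\le\max\{\tfrac32\bw(F),2\}$, and use monotonicity of $x\mapsto\max\{\tfrac32x,2\}$ together with $\bw(F)\le\patw(F)$.

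\textbf{Step 3: the additive lower bound.} I split into two cases.
\begin{itemize}
\item If $\tfrac32\bw(F)\ge2$, then \eqref{supp:eq:rs} gives $\tw(F)+1\le\tfrac32\bw(F)$, so $\bw(F)\ge\tfrac23(\tw(F)+1)$. This implies the additive form.
\item Otherwise $\tw(F)+1\le2$, so $\tfrac23(\tw(F)+1)-\tfrac43\le0\le\bw(F)$.
\end{itemize}

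\textbf{Step 4: the sharp case $\tw(F)\ge2$.} Here $\tw(F)+1\ge3>2$. If the maximum in \eqref{supp:eq:rs} were $2$, we would get $\tw(F)+1\le2$, a contradiction. Hence the maximum is $\tfrac32\bw(F)$, and $\tfrac23(\tw(F)+1)\le\bw(F)$. Combining this with Step 1 gives the stated chain.

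No step is a real obstacle. The only points to check are the convention issues: branchwidth, treewidth and pattern width must all refer to the same underlying simple graph, and the orientation minimum must not break the upper bound. Both are handled by the conventions stated at the start of the supplement.
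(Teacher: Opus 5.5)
Your proposal is correct and follows the paper's proof essentially step for step. The only difference is cosmetic and lies in the additive bound: you split into cases according to which term attains the maximum in the Robertson--Seymour inequality, whereas the paper uses $\max\{\tfrac32\bw(F),2\}\le\tfrac32\bw(F)+2$ directly.
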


\begin{proof}
The inequalities $\bw(F)\le\patw(F)$ and
$\patw(F)\le4(\tw(F)+1)$ are Propositions~\ref{supp:prop:lower}
and~\ref{supp:prop:upper}. Since $\bw(F)\le\patw(F)$, the
Robertson and Seymour comparison \eqref{supp:eq:rs} implies
\[
\tw(F)+1\le\max\left\{\tfrac32\patw(F),2\right\}.
\]
Also
\[
\tw(F)+1\le\max\left\{\tfrac32\bw(F),2\right\}
          \le \tfrac32\bw(F)+2,
\]
which gives
$\tfrac23(\tw(F)+1)-\tfrac43\le\bw(F)$. Finally, if
$\tw(F)\ge2$, then $\tw(F)+1>2$, so the exceptional term in
\eqref{supp:eq:rs} cannot be responsible for the upper bound and
$\tw(F)+1\le\tfrac32\bw(F)$.
\end{proof}

\begin{corollary}[normal forms from tree decompositions]
Let $F$ be a finite graph with $\tw(F)\le t$, and let $\Sigma$ be a fixed
finite alphabet of rank-$(1,1)$ edge labels. Fix an orientation and a
$\Sigma$-labeling $\vec F$, and set $k=4(t+1)$.
Then the compiler of Proposition~\ref{supp:prop:upper} produces a closed expression
$p_{\vec F}$ with exact value $\vec F$ and $\patw(p_{\vec F})\le k$. Its
selected normal form satisfies
\[
 \patw(\operatorname{NF}_k(\vec F))\le 4k+4=16(t+1)+4,
\]
and
\[
 p_{\vec F}\equiv_{\mathcal B\cup\E}^{B_\Sigma(k)}
 \operatorname{NF}_k(\vec F).
\]
In particular, compiler outputs obtained from different width-$t$ tree
decompositions, using the same fixed orientation and labeling, are connected
through the same normal form within $B_\Sigma(k)$.
\end{corollary}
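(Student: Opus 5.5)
The corollary follows by composing Proposition~\ref{supp:prop:upper} with Theorem~\ref{thm:boundednf}, so the plan is to check the hypotheses of the theorem for the compiler output.

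\textbf{Step 1: the compiler works for a general rank-$(1,1)$ alphabet.} Proposition~\ref{supp:prop:upper} is stated for $\Sigma=\{a\}$. For a finite alphabet of rank-$(1,1)$ labels, I would rerun the same construction and, in each edge-introduction gadget, replace $a$ by the label of the occurrence being introduced. All ranks are unchanged, so the audit of Proposition~\ref{supp:prop:upper} goes through verbatim. This requires:
\begin{itemize}
\item a width-$t$ tree decomposition of the underlying simple graph of $\vec F$, which exists since $\tw(F)\le t$;
\item one assigned occurrence per directed labeled edge of $\vec F$.
\end{itemize}
The output is a closed expression $p_{\vec F}$ with $\patw(p_{\vec F})\le4(t+1)=k$ and $\val(p_{\vec F})=\vec F$ exactly. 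Here "exactly" means the labeled directed value, not merely membership in $\operatorname{Ori}(F)$, because the orientation and labeling are fixed at assignment time. The value invariant of that proof already tracks each assigned occurrence with its direction.

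\textbf{Step 2: apply the bounded normal form.} Since $p_{\vec F}$ is closed with $\patw(p_{\vec F})\le k$, Theorem~\ref{thm:boundednf} applies with this $k$. It gives
\[
\patw(\operatorname{NF}_k(\vec F))\le4k+4=16(t+1)+4
\qquad\text{and}\qquad
p_{\vec F}\equiv^{B_\Sigma(k)}_{\mathcal B\cup\E}\operatorname{NF}_k(\vec F).
\]
Strictly, the theorem is stated for $\operatorname{NF}_k(\val(p_{\vec F}))$. Since $\val(p_{\vec F})=\vec F$ as a labeled pointed-graph isomorphism class, this is the same expression $\operatorname{NF}_k(\vec F)$.

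\textbf{Step 3: independence of the decomposition.} Take two outputs $p,p'$ built from different width-$t$ decompositions but with the same fixed orientation and labeling. Both have value $\vec F$ and width at most $k$. Applying Step 2 to each and reversing one derivation connects them through the common $\operatorname{NF}_k(\vec F)$, with width at most $B_\Sigma(k)$. This is exactly Corollary~\ref{cor:boundedcoherence}.

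\textbf{Main obstacle.} The only delicate point is Step 1, namely checking that the hypotheses transfer. The bounded-normal-form section allows any finite $\Sigma$, so that side is fine. What needs checking is that Proposition~\ref{supp:prop:upper}, although stated for $\{a\}$, uses $a$ only as an opaque rank-$(1,1)$ letter, so substituting other labels changes neither the value argument nor any rank. I would also note the degenerate edgeless case. If $\vec F$ has no edge, the compiler produces only introduce and forget gadgets, and Theorem~\ref{thm:boundednf} still applies through its edgeless branch, where $\operatorname{NF}_k(\vec F)=Z_n$ has width $0\le4k+4$.
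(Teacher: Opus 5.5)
Your proposal is correct and follows the paper's own proof. The paper likewise reruns the compiler of Proposition~\ref{supp:prop:upper} with the assigned rank-$(1,1)$ label in place of $a$ in each edge gadget, applies Theorem~\ref{thm:boundednf} over $\Sigma$, and connects different compiler outputs through the common $\operatorname{NF}_k(\vec F)$. One trivial slip, which leaves the bound $4k+4$ unaffected: in the edgeless case $Z_n$ has width $1$ rather than $0$ when $n\ge1$, because its leaves $\io01$ and $\io10$ have rank sum $1$.
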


\begin{proof}
The proof of Proposition~\ref{supp:prop:upper} applies verbatim with the letter
$a$ in each edge gadget replaced by the assigned rank-$(1,1)$ label. Apply
the bounded-normal-form theorem of the main text over $\Sigma$ to the
resulting compiler output.
\end{proof}

\Needspace{8\baselineskip}
\begin{corollary}[path decompositions: constant $2$]\label{supp:cor:pwupper}
For every finite simple undirected graph $F$, represented by choosing one
orientation and exactly one $a$-occurrence for each edge,
\[
\patw(F)\ \le\ \patwlin(F)\ \le\ 2\,(\pw(F)+1)+2 .
\]
\end{corollary}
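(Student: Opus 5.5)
The first inequality is immediate from the definitions: every layered linear expression is a closed raw expression, so the minimum defining $\patw(F)$ ranges over a superset of the expressions counted in $\patwlin(F)$. For the second inequality, I plan to reuse the path-decomposition case of the compiler in Proposition~\ref{supp:prop:upper}, but with a sharper width audit that exploits the absence of join nodes. Let $k=\pw(F)$ and fix a width-$k$ path decomposition. Put it in nice form with empty end bags and all bags of size at most $k+1$, and insert the same-bag edge-introduction nodes. Then $T$ is a path and every bag has $b\le k+1$ wires. If $F$ has no vertices, $e_0$ has width $0$, so I assume otherwise.

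I will build one global list of blocks. Each introduce, forget, edge-gadget and adjacent-transposition router contributes its elementary blocks, spliced as in the proof of Proposition~\ref{supp:prop:upper}, and the whole list is taken as one left comb. This is a layered linear expression with $c_0=c_L=0$. Its value is $F$ by the value invariant proved there, since the path property leaves no joins and every edge of the simple graph is assigned exactly once. Here $F$ is simple, so there are no loops, and every non-loop edge gadget lives in a bag with $b\ge2$. Degenerate paddings $e_0$ occur only in introduce blocks of type $(0,1)$; I will drop them using the unit laws in the raw construction, so that each block is still a nonempty fixed left-associated sum of generators and units.

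The width audit goes block by block.
\begin{itemize}
\item A router block $\tau_{b,j}$ has rank $(b,b)$, hence rank sum at most $2(k+1)$.
\item An introduce block has rank $(b-1,b)$.
\item A forget block has rank $(b+1,b)$ with $b+1\le k+1$, so its rank sum is at most $2k+1$.
\item The three edge blocks have ranks $(b,b+1)$, $(b+1,b+1)$ and $(b+1,b)$, so their rank sum is at most $2b+2\le 2(k+1)+2$.
\end{itemize}
Every subterm inside a fixed left-associated block is a prefix sum or a unit $e_r$. Its input and output ranks are bounded by those of the whole block, so its rank sum is bounded as well. The only composition subterms outside blocks are the comb prefixes of type $(0,c_j)$. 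Here $c_j\le k+2$, attained only in the middle of an edge gadget, so these prefixes are also within $2(k+1)+2$. Altogether $\patw\le2(\pw(F)+1)+2$, and this expression is layered linear.

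The main obstacle is the bookkeeping that the result is literally of layered linear shape. Composite routers must never appear as separately parenthesized subterms; the splicing convention already ensures this. Every block must have the exact fixed bracketing, including the removal of $e_0$ at the boundary cases. Finally, the bound must not inherit the $4(k+1)$ from the tree compiler, which came only from join shuffles on $2(k+1)$ wires, and these are absent here.
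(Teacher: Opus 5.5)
Your proposal is correct and follows the paper's proof essentially verbatim. It uses a nice path decomposition with same-bag edge nodes, expands every router into adjacent-transposition blocks, splices them into one global left comb, bounds prefixes by $k+2$, takes the block $e_1\bsum a\bsum e_{b-1}$ as the maximum $2(k+1)+2$, and notes that the $2b$-wire join shuffle is absent. One slip is harmless: $e_0$ paddings also occur in the third edge block when $b=2$ and in a forget to an empty bag, not only in introduce blocks, but unit symbols (including $e_0$) are allowed in layered blocks, so no removal step is needed.
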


\begin{proof}
Take a width-$k$ path decomposition with $k=\pw(F)$ and apply the nice-form
conversion and same-bag edge-node placement from the proof of
Proposition~\ref{supp:prop:upper}. Neither operation increases the maximum bag
size or introduces a join. Expand every router into its ordered list of
adjacent-transposition blocks, splice all such lists together with the
displayed elementary gadget blocks, and take one global left comb. Prefixes
then have rank
$(0,b')$ with $b'\le k+2$. The transient value $k+2$ occurs inside the
edge gadget. The displayed elementary blocks have rank sum at most
$2b+2\le 2(k+1)+2$. The largest possible contribution is the block
$e_1\bsum a\bsum e_{b-1}$ of the edge gadget. Every constituent block of
$\gamma_{\mathrm{ins}}$ and $\gamma_{\mathrm{fr}}$ acts on at most $k+1$
wires, and all its $\bsum$-subterms have rank sum at most $2(k+1)$. The join
shuffle
$\gamma_{\mathrm{sh}}$ on $2b$ wires, the sole source of the constant
$4$ in the general audit, is absent. The stipulated global block list and
global left comb show that the resulting expression is layered linear in the
sense of the main text's layered-linear definition.
\end{proof}

\begin{remark}[sources of slack]\label{rem:slack}
Sources of constant-factor slack, in decreasing order of importance:
(i) the rank sum $a+b$ counts both sides of a subterm boundary, while a
middle set counts each vertex once, (ii) interface wires may carry
\emph{duplicate} references to the same node (created by $\io12$ and not
yet merged), inflating the syntactic count above the semantic boundary,
(iii) the factor $3/2$ in \eqref{supp:eq:rs} between branchwidth and treewidth
is intrinsic to comparing edge-based with bag-based decompositions, (iv)
the constant $4$ in Proposition~\ref{supp:prop:upper} comes solely from the
join shuffle: Corollary~\ref{supp:cor:pwupper} removes it in the absence of
joins. This is the only point in the proof where the factor $4$ enters.
\end{remark}

\begin{remark}[scope of the computational archive]\label{rem:numerics}
The bounds in Proposition~\ref{supp:prop:upper} and
Corollary~\ref{supp:cor:pwupper} are established by the preceding symbolic width
audits. The supplied computational archive checks the clique witnesses and the
lane-spiral family. It does not claim an independent implementation of the
tree-decomposition or path-decomposition compilers.
\end{remark}

\Needspace{6\baselineskip}
\begin{definition}[support width]\label{def:swid}
Let $t$ be a closed expression and let $s$ be a subterm occurrence of $t$.
Write
\[
\theta_s:\val(s)\longrightarrow\val(t)
\]
for the canonical map induced by evaluating the surrounding context. The
\emph{support boundary} of $s$ in $t$ is
\[
\partial_t(s)=
\theta_s\bigl(\mathrm{begin}(\val(s))\cup\mathrm{end}(\val(s))\bigr)
\subseteq V(\val(t)),
\]
where repeated ports mapping to the same final vertex are counted once. Define
\[
\patwstar(t)=\max_s |\partial_t(s)|,
\]
where $s$ ranges over all subterm occurrences of $t$. For a graph $F$, set
\[
\patwstar(F)=\min\{\patwstar(t):t\in\Free_{0,0}(X_\Sigma),\ \val(t)=F\}.
\]
\end{definition}

\begin{lemma}[support width dominates branchwidth]\label{lem:swidlower}
If the underlying simple value of a closed expression $t$ has at least two
edges, then
\[
\max\{\bw(\val(t)),2\}\le\patwstar(t)\le\patw(t).
\]
Consequently, for every graph $F$ whose underlying simple graph has at least
two edges,
\[
\max\{\bw(F),2\}\le\patwstar(F)\le\patw(F).
\]
\end{lemma}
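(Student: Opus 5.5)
The statement has three inequalities, and I would prove them in order of difficulty: the upper bound on $\patwstar$, then the constant $2$, then branchwidth.

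\emph{Upper bound.} The inequality $\patwstar(t)\le\patw(t)$ is immediate from the definitions. For each subterm occurrence $s$ of rank $(a,b)$, the support boundary $\partial_t(s)$ is the image under $\theta_s$ of the $a+b$ begin and end positions. Repeated ports are counted once, so $|\partial_t(s)|\le a+b\le\patw(t)$. Taking the maximum over $s$ gives the claim.

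\emph{Lower bound $\bw(\val(t))\le\patwstar(t)$.} I would repeat the argument of Proposition~\ref{supp:prop:lower}, replacing the count of ports by the count of distinct final vertices.
\begin{itemize}
\item Retain one $a$-occurrence per underlying simple edge and take the minimal subtree of the parse tree spanning these leaves.
\item Suppress degree-two vertices to obtain a branch decomposition $T'$ of the underlying simple graph.
\item Each cut of $T'$ agrees with the cut of some parse edge $(u,\mathrm{parent}(u))$.
\item Let $x$ be a middle vertex of that cut, with an incident retained edge inside $t_u$ and another outside. By Lemma~\ref{supp:lem:internal}, $x=\theta_u(y)$ for some non-internal node $y$ of $\val(t_u)$. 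Hence $x\in\partial_t(t_u)$.
\end{itemize}
So every middle set injects into a support boundary, and its size is at most $\patwstar(t)$. The only point needing care is that Lemma~\ref{supp:lem:internal} concerns nodes of $\val(t_u)$. A middle vertex has an incident edge generated inside $t_u$, so it is the image of some node $y$. That node cannot be internal, because internal nodes acquire no outside edges.

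\emph{The constant $2$.} This is the step I expect to be the main obstacle, since branchwidth can be $1$ (for example, for a star or a two-edge path). I would find a subterm whose support boundary contains two distinct vertices. The underlying simple graph has at least two edges, so take the leaf $\ell$ of some retained $a$-occurrence, representing an edge $uv$ with $u\ne v$.
\begin{itemize}
\item \emph{Leaf case.} The letter $a$ has rank $(1,1)$. Its begin node maps to $u$ and its end node to $v$ (or the reverse), and $u\ne v$ in $\val(t)$ because the edge is not a loop. Hence $|\partial_t(\ell)|=2$, and $\patwstar(t)\ge2$.
\item \emph{Well-definedness.} I must check that the begin and end nodes of a leaf are never identified in the final graph. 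This holds because the edge is a non-loop edge of the underlying simple graph, and the identification of its endpoints is recorded exactly by $\theta_\ell$.
\end{itemize}
This route would actually give $\patwstar(t)\ge2$ whenever a non-loop edge exists. Here I only need it under the two-edge hypothesis.

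\emph{Graph-level statement.} Finally, I would minimize over closed $t$ with $\val(t)=F$, or over $\val(t)\in\operatorname{Ori}(F)$ in the undirected convention. Branchwidth depends only on the underlying simple graph, which is common to all orientations. So the pointwise inequalities pass to the minima and give $\max\{\bw(F),2\}\le\patwstar(F)\le\patw(F)$.
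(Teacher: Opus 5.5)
Your proposal is correct and follows essentially the same route as the paper. It bounds $\patwstar(t)\le\patw(t)$ by counting ports, and it reuses the parse-tree branch decomposition of Proposition~\ref{supp:prop:lower}, with Lemma~\ref{supp:lem:internal} placing every middle vertex in $\partial_t(s)$. It obtains the constant $2$ from a leaf generating a non-loop edge, and the passage to graphs by minimizing over closed expressions (or orientations) is the routine step the paper leaves implicit.
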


\begin{proof}
Use the branch-decomposition construction from the proof of
Proposition~\ref{supp:prop:lower}. Every edge of the resulting branch decomposition
comes from a parse-tree edge separating a subterm occurrence $s$ from its
context. If a final vertex lies in the corresponding middle set, then it is
incident with an edge generated inside $s$ and with an edge generated outside
$s$. By Lemma~\ref{supp:lem:internal}, such a vertex must be the image of a begin
or end node of $\val(s)$. Hence the middle set is contained in
$\partial_t(s)$. Therefore every middle set has size at most $\patwstar(t)$, and
$\bw(\val(t))\le\patwstar(t)$. The inequality $\patwstar(t)\le\patw(t)$ follows because
$\partial_t(s)$ counts distinct final vertices represented by at most $a+b$
boundary ports when $s$ has rank $(a,b)$.
Finally, choose a leaf that generates a nonloop edge of the underlying simple
value. Its two endpoint ports map to two distinct final vertices, so its
support boundary has size $2$. Hence $\patwstar(t)\ge2$.
\end{proof}

\Needspace{8\baselineskip}
\begin{problem}[sharp support calibration]\label{prob:sharp}
Determine the graph parameter $\patwstar(F)$ exactly. Lemma~\ref{lem:swidlower}
shows that $\max\{\bw(F),2\}$ is a universal lower bound for graphs whose
underlying simple graph has at least two edges, and the constant $2$ cannot be
omitted. Indeed, for the
two-edge path $P_3$ and the disjoint union $2K_2$,
\[
 \patwstar(P_3)=\patwstar(2K_2)=2,
 \qquad \bw(P_3)=1,
 \qquad \bw(2K_2)=0.
\]
For the upper bounds, orient the path and use
$(\io01\circ a)\circ(a\circ\io10)$. For $2K_2$, take the sum of two copies of
$((\io01\circ a)\circ\io10)$. Direct inspection gives support width $2$, while
the lower bounds follow from the lemma. Is
$\patwstar(F)=\max\{\bw(F),2\}$ for every graph whose underlying simple graph
has at least two edges, or what is the optimal gap?
\end{problem}

\section{The linear fragment: exact calibration and the clique lower bound}
\label{sec:linear}

The normal-form theorem applies to arbitrary parse trees. The present section
shows why restricting those trees to one composition chain changes the width
substantially. We prove that
\emph{every} linear expression for $K_n$, with arbitrary
$\bsum$-layers as blocks in the sense of the layered-linear definition in the
main text, has width
at least $2n-3$ (Theorem~\ref{thm:clique}), that for elementary chains
the pattern width of every finite simple undirected graph is twice its
\emph{linear-width} up to
an additive constant (Theorem~\ref{thm:lwcalib}), and that the linear
and general fragments are infinitely separated
(Theorem~\ref{thm:sep}). These results are structural validation of the
interface measure used in the main text's bounded-normal-form theorem.

Throughout this section, $F$ denotes a finite simple undirected graph. It is
represented syntactically by choosing one orientation for each edge and using
exactly one $a$-occurrence for that edge. Comparisons with $F$ forget only
these chosen orientations. Thus the linear-fragment statements below do not
include loops or parallel edge occurrences. In particular, this convention
applies to $K_n$. A linear expression is written
\begin{equation}\label{eq:chainlin}
t \;=\; (((B_1\circ B_2)\circ B_3)\circ\cdots)\circ B_L ,
\end{equation}
with every block $B_j$ a fixed left-associated nonempty $\bsum$-term over
generators and units. Let $c_j$
denote the rank interface between $B_j$ and $B_{j+1}$. Thus $B_j$
has rank $(c_{j-1},c_j)$, $c_0=c_L=0$, and, since each block is a
subterm occurrence,
\begin{equation}\label{eq:blocklin}
\patw(t)\;\ge\;\max_{1\le j\le L}\,(c_{j-1}+c_j).
\end{equation}
Call $t$ \emph{elementary} if every block contains exactly one
generator, i.e.\ is the corresponding fixed-bracketing term of the form
$e_r\bsum x\bsum e_s$. Formally,
\[
 \patw^{\mathrm{el}}(F)=\min\{\patw(t):t\in\Free_{0,0}(X_\Sigma),\
 t\text{ is elementary},\ \val(t)\in\operatorname{Ori}(F)\}.
\]
Then $\patwlin(F)\le\patw^{\mathrm{el}}(F)$.

\subsection{Stages, references, and pieces}
Fix a finite simple undirected target $F$ and a linear expression
\eqref{eq:chainlin} with $\val(t)\cong F$. Define the formal empty prefix
$t_{\le0}=e_0$, put $t_{\le1}=B_1$, and recursively set
$t_{\le j}=t_{\le j-1}\circ B_j$ for $2\le j\le L$. Each $t_{\le j}$ with
$1\le j\le L$ is a literal subterm occurrence of rank $(0,c_j)$. The term
\emph{stage} $j$ refers to its value, with stage $0$ empty.
The term \emph{cut} $j$ refers to the end sequence of that value, a list of $c_j$ node
references (\emph{refs}). For $1\le j\le L$, let
$\theta_j\colon\val(t_{\le j})\to\val(t)=F$ be the canonical node map
of Lemma~\ref{supp:lem:internal}. Every node of every positive stage maps to a vertex
of $F$. A \emph{piece} of a vertex $u$ at a positive stage $j$ is a node with
$\theta_j$-image $u$, and
$r_j(u)$ is the number of cut-$j$ entries referencing pieces of $u$, so
$c_j=\sum_u r_j(u)$. Set $r_0(u)=0$. The $a$-occurrences inside $B_j$ are the edges
\emph{placed at layer} $j$. Since blocks are $\bsum$-terms, every
factor of $B_j$ draws its inputs directly from cut $j-1$ and writes its
outputs directly to cut $j$: distinct factors consume disjoint sets of
cut-$(j-1)$ entries, and every entry is consumed by exactly one factor
(units included). Say $u$ is \emph{started} at stage $j$ if it has a
piece there, and \emph{complete} if it has exactly one piece and every
edge of $F$ at $u$ is the $\theta_j$-image of an edge of
$\val(t_{\le j})$. The property of being started is monotone in $j$, and the
set of incident edges already placed is monotone.

\begin{lemma}[activity]\label{lem:active}
If $u$ is started and not complete at stage $j$, then $r_j(u)\ge 1$.
\end{lemma}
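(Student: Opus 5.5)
We argue by contradiction using Lemma~\ref{supp:lem:internal}, applied to the prefix $t_{\le j}$. This prefix is a literal subterm occurrence of rank $(0,c_j)$ whenever $j\ge1$. A started vertex has $j\ge1$, since stage $0$ is empty.

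Suppose $u$ is started at stage $j$ and $r_j(u)=0$. Then no cut-$j$ entry references a piece of $u$. The begin sequence of $\val(t_{\le j})$ is empty because the rank is $(0,c_j)$. Hence every piece $v$ of $u$ at stage $j$ occurs in neither boundary sequence, so it is internal in the sense of Lemma~\ref{supp:lem:internal}. Since $u$ is started, at least one such piece $v$ exists.

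The lemma now gives two facts.
\begin{itemize}
\item[(i)] $\theta_j^{-1}(\theta_j(v))=\{v\}$. Since $\theta_j(v)=u$, the vertex $u$ has exactly one piece at stage $j$.
\item[(ii)] Every edge of $\val(t)=F$ incident with $u$ is the $\theta_j$-image of an edge of $\val(t_{\le j})$.
\end{itemize}
These are precisely the two clauses in the definition of $u$ being complete at stage $j$. This contradicts the hypothesis that $u$ is not complete, so $r_j(u)\ge1$.

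The only point needing care is bookkeeping, not a real obstacle. We must check that the canonical map $\theta_j$ used to define pieces is the same context map as in Lemma~\ref{supp:lem:internal}, which holds by definition. We must also check that "every edge of $F$ at $u$" refers to the $a$-occurrences of $\val(t)$, which matches the lemma's conclusion about edges of $\val(t)$ because $\val(t)\cong F$ with one occurrence per edge.
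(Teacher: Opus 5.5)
Your proposal is correct and follows the paper's proof. Both assume $r_j(u)=0$, use the empty begin sequence of $t_{\le j}$ to conclude that every piece of $u$ is internal, and then apply the internal-nodes lemma to obtain a unique piece and all incident edges. You derive the two clauses of completeness directly, where the paper splits into the cases ``two or more pieces'' and ``one piece with a missing edge''; this is only a difference in presentation.
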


\begin{proof}
Suppose $r_j(u)=0$. Then every piece of $u$ at stage $j$ is internal in
$\val(t_{\le j})$ (the begin sequence is empty). If $u$ has two or more
pieces they are identified by the remaining context, impossible for
internal nodes by Lemma~\ref{supp:lem:internal}. If $u$ has one piece $P$
but a missing edge, that edge of $\val(t)$ is incident with
$\theta_j(P)=u$ yet is not the image of an edge of $\val(t_{\le j})$,
contradicting the same lemma. Hence $u$ is complete.
\end{proof}

\begin{lemma}[last layer]\label{lem:last}
Suppose that the final degree of $u$ is at least $2$, and that $u$ is complete
at stage $j$ but not at stage $j-1$. Then every edge at $u$ placed at layer
$j$ has $u$ as its tail, and these edges consume pairwise distinct
cut-$(j-1)$ entries referencing pieces of $u$. In particular,
$r_{j-1}(u)\ge\beta$, where $\beta$ is their number.
\end{lemma}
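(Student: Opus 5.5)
The plan is to analyze how stage $j$ is formed from stage $j-1$. We have $t_{\le j}=t_{\le j-1}\circ B_j$, and $B_j$ is a $\bsum$-term, so $\val(B_j)$ is a disjoint union of the values of its factors. Hence the node set of $\val(t_{\le j})$ is the quotient of the stage-$(j-1)$ nodes and the nodes of $\val(B_j)$. The only identifications glue the $i$th cut-$(j-1)$ entry to the $i$th begin node of $\val(B_j)$, and $\theta_{j-1}$ factors through the canonical map to stage $j$. The key observation is this: a node of $\val(B_j)$ that is not a begin node of $\val(B_j)$ is glued to nothing at stage $j$. It therefore remains a separate stage-$j$ node, distinct from every image of a stage-$(j-1)$ node and from every other node of $\val(B_j)$.

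\emph{Step 1: every layer-$j$ edge at $u$ has $u$ as its tail.} Suppose some $a$-factor $\epsilon$ of $B_j$ has head mapping to $u$. Because $a$ has rank $(1,1)$, the output node $h$ of $\epsilon$ is not a begin node of $\val(B_j)$. Because $F$ is simple, the tail of $\epsilon$ is not $u$, so $h$ is the only node of $\epsilon$ mapping to $u$. Since $\deg u\ge2$ and $u$ is complete at stage $j$, there is another edge $f$ at $u$ that is the image of an edge of stage $j$. I would split into two cases.
\begin{itemize}
\item If $f$ was placed at a layer $<j$, then $f$ has an endpoint piece of $u$ already at stage $j-1$. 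Its image at stage $j$ differs from $h$ by the key observation.
\item If $f$ is placed at layer $j$, it belongs to a different factor of $B_j$. Its $u$-endpoint is a node of that factor, distinct from $h$ in the disjoint union, and again no gluing identifies it with $h$.
\end{itemize}
In both cases $u$ has at least two pieces at stage $j$. This contradicts completeness at stage $j$, which requires exactly one piece.

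\emph{Step 2: counting the consumed entries.} Each layer-$j$ edge at $u$ is now a rank-$(1,1)$ factor whose unique input is its tail. That input is glued to one cut-$(j-1)$ entry, and the referenced stage-$(j-1)$ node has $\theta_{j-1}$-image $u$, so the entry references a piece of $u$. Distinct factors of a $\bsum$-block consume disjoint sets of cut-$(j-1)$ entries, so the $\beta$ edges use $\beta$ pairwise distinct entries. This gives $r_{j-1}(u)\ge\beta$. The case $j=1$ is consistent: $c_0=0$ forces $\beta=0$.

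I expect the main obstacle to be making the key observation rigorous, namely that the fresh head node cannot be merged with another piece at stage $j$. I would justify it directly from the positional-quotient description of composition and the disjoint-union description of $\bsum$, in the same way as in the proof of Lemma~\ref{supp:lem:internal}, rather than by invoking that lemma for the later context. Later merges do not matter, because completeness is a statement about stage $j$ itself.
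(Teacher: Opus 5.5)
Your proof is correct and follows essentially the paper's argument: a fresh head created by an $a$-factor at layer $j$ is glued to nothing at stage $j$, so degree at least $2$ supplies a second piece of $u$ and contradicts completeness, after which the tails consume pairwise distinct cut-$(j-1)$ entries. The one organizational difference is that the paper first shows $u$ already has a piece at stage $j-1$. Your case split on the layer of a second incident edge $f$ absorbs that step and makes the paper's ``cannot be merged'' claim explicit through the positional quotient.
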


\begin{proof}
Since $u$ has final degree at least $2$, it already has a piece at stage
$j-1$: if it did not, every layer-$j$ edge at $u$ would have to create a fresh
head piece of $u$, and those fresh pieces cannot be merged until a later layer. An
$a$-occurrence creates a fresh node as the head of its edge. If an edge at $u$
placed at layer $j$ had head $u$, this fresh node would be a piece of $u$
distinct from the piece already present at stage $j-1$. The earliest layer at
which two pieces can be merged by $\io21$ is the one after both are referenced
on a cut, so $u$ would still have two pieces at stage $j$, contradicting
completeness. Hence $u$ is the tail. Each such $a$ consumes one
cut-$(j-1)$ entry referencing a piece of $u$, and distinct factors consume
distinct entries.
\end{proof}

\Needspace{6\baselineskip}
\begin{lemma}[edge capacity]\label{lem:capacity}
For every layer $j$ let
\[
P_j=\#\{u: r_{j-1}(u)\ge 1 \text{ and } u \text{ is not complete at stage } j\}.
\]
Then the number of edges placed at layer $j$ is at most $c_{j-1}-P_j$.
\end{lemma}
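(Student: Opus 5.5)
A charging argument on the cut-$(j-1)$ entries should do it. Each $a$-occurrence placed at layer $j$ is a factor of $B_j$ of rank $(1,1)$, so it consumes exactly one cut-$(j-1)$ entry. Distinct factors consume disjoint entries. Hence the number $m_j$ of edges placed at layer $j$ equals the number of cut-$(j-1)$ entries consumed by $a$-factors. I will exhibit $P_j$ further cut-$(j-1)$ entries, one for each vertex counted by $P_j$, that are consumed by factors other than $a$-occurrences. Then $m_j+P_j\le c_{j-1}$, which is the claim.

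Fix a vertex $u$ counted by $P_j$, so $r_{j-1}(u)\ge1$ and $u$ is not complete at stage $j$. By Lemma~\ref{lem:active}, $u$ is started at stage $j$ (it has a piece at stage $j-1$, hence also at stage $j$) and is not complete there, so $r_j(u)\ge1$. Some cut-$j$ entry therefore references a piece of $u$. That entry is an output of a unique factor $f$ of $B_j$, and I trace it back through $f$ to an input entry on cut $j-1$ that is charged to $u$ and not to an edge.

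\begin{itemize}
\item If $f=e_1$, the entry is passed through, so its input references the same piece of $u$. It is a cut-$(j-1)$ entry referencing $u$ and consumed by a unit.
\item If $f\in\{\io12,\pi,\io21\}$, the outputs reference nodes identified with inputs, so again the factor consumes a cut-$(j-1)$ entry referencing a piece of $u$, and that factor is not an $a$.
\item If $f=\io01$, the output is a fresh node.
\item If $f=a$, the output is the fresh head piece of $u$, and no cut-$(j-1)$ entry references $u$ through $f$.
\end{itemize}

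The last two cases are the main obstacle. There the output on cut $j$ does not trace to an earlier entry for $u$. The hypothesis $r_{j-1}(u)\ge1$ must then be used directly. Some cut-$(j-1)$ entry $\epsilon$ references a piece $P$ of $u$, and $\epsilon$ is consumed by some factor $g$. If $g$ is not an $a$, charge $\epsilon$ to $u$. If every such entry is consumed by an $a$-factor, then $P$ is a tail of a layer-$j$ edge and is not re-exposed on cut $j$. Meanwhile the cut-$j$ reference to $u$ is a fresh node (a head or a $\io01$ output) distinct from $P$. So $u$ still has two distinct pieces at stage $j$, namely $P$ and the fresh node. These cannot merge before a later layer, as argued in Lemma~\ref{lem:last}. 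I must then show this situation is still compatible with an uncharged entry. Here I expect to need to refine the choice: instead pick the output entry of $u$ on cut $j$ to be one that descends from a piece already present at stage $j-1$.

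I would first try to show that such an output exists. If no cut-$j$ entry descends from an old piece, the old pieces become internal at stage $j$. By Lemma~\ref{supp:lem:internal} they can then never merge with the fresh piece. That contradicts $\theta$ mapping both to $u$ in $F$, since the fresh and old pieces must eventually be identified. With an old-piece output fixed, the traced input is consumed by a non-$a$ factor: an $a$ outputs only its fresh head, never a continuation of its tail. This gives the required charged entry.

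Finally, the charged entries for distinct $u$ are distinct, because each references a piece of its own vertex $u$ and pieces of different vertices are different nodes. Summing, at least $P_j$ cut-$(j-1)$ entries are consumed by non-$a$ factors. Together with the $m_j$ entries consumed by $a$-factors, this gives $m_j\le c_{j-1}-P_j$.
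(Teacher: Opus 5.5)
Your proof is correct and follows essentially the same route as the paper: you charge each counted $u$ to a cut-$(j-1)$ entry consumed by a non-$a$ factor. That entry exists because Lemma~\ref{lem:active} gives $r_j(u)\ge1$, and the internal-node lemma forces some cut-$j$ entry of class $u$ to come from $e$, $\pi$, $\io12$, or $\io21$ rather than being a fresh $a$- or $\io01$-output. You can drop the exploratory paragraph about two pieces that ``cannot merge before a later layer,'' and you should treat unit factors $e_r$ with $r\ge2$ wirewise, exactly like $e_1$.
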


\begin{proof}
Each edge placed at layer $j$ is an $a$-factor with exactly one input
(its tail). Distinct factors have disjoint inputs, so it suffices to
exhibit, for each $u$ counted by $P_j$, a cut-$(j-1)$ entry consumed by
a non-$a$ factor of $B_j$. Fix such a $u$. By Lemma~\ref{lem:active}
applied at stage $j$ ($u$ is started at $j-1$, hence at $j$, and
incomplete at $j$) we have $r_j(u)\ge 1$. A cut-$j$ entry of class $u$
is produced by its factor as follows: a unit or a $\pi$ outputs the
node of its input entry, $\io12$ outputs its input node twice, $\io21$
outputs the merge of its two input nodes, while $a$ and $\io01$ output
\emph{fresh} nodes. If every cut-$j$ entry of class $u$ were fresh,
no piece of $u$ existing at stage $j-1$ would be referenced at cut $j$.
Each such piece (there is at least one, as $r_{j-1}(u)\ge1$) would be
internal at stage $j$. The argument of Lemma~\ref{lem:active}
contradicts the incompleteness of $u$ at stage $j$. Hence some cut-$j$
entry of class $u$ is produced by a factor
$\varphi\in\{e,\pi,\io12,\io21\}$, and any class-$u$ input entry of
$\varphi$ is a cut-$(j-1)$ entry consumed by a non-$a$ factor. For
distinct $u$ these entries are distinct.
\end{proof}

\subsection{The clique lower bound}

The counting method below yields an additive constant of $3$: the configuration
$(k,g,o)=(2,n-1,1)$ in the notation below satisfies every inequality the method
extracts at budget $2n-3$. Closing the remaining gap to the upper bound
requires charging more than the two-layer window. The explicit
construction below has width $2n$, so the remaining gap is at most three.

Throughout Lemmas~\ref{lem:clique-spike} through \ref{lem:clique-final}, fix
$n\ge3$ and a layered linear expression $t$ with $\val(t)\cong K_n$ and
$\patw(t)\le2n-4$.  Let $T$ be the first stage at which some vertex is complete,
and let $w$ be one such vertex.  Since placing an edge requires an input
reference and $c_0=0$, we have $T\ge2$.  Define
\[
K=\{u:u\text{ is complete at stage }T\},\qquad k=|K|,
\]
\[
S=\{u:u\text{ is started at stage }T-2\},\qquad
G=V(K_n)\setminus S,
\]
and write $g=|G|$, $O=K\cap G$, and $o=|O|$.  Finally set
\[
D=(k-o)(g-o)+o(n-1)-\binom{o}{2}.
\]
The quantity $D$ is the number of edges that must be accommodated in the
critical two-layer window after separating the pairs between $K\setminus O$ and
$G\setminus O$ from the edges incident with the overlap $O$.

\begin{lemma}[first-completion spike]\label{lem:clique-spike}
Under the standing assumptions,
\[
  c_{T-1}\ge n-1,
  \qquad
  c_{T-2}\le n-3,
  \qquad
  c_T\le n-3.
\]
\end{lemma}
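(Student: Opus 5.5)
The plan is to get the lower bound $c_{T-1}\ge n-1$ by counting cut-$(T-1)$ references, and then to read off the two upper bounds from the block-width inequality~\eqref{eq:blocklin}.

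\emph{Lower bound.} Since $n\ge3$, the vertex $w$ has final degree $n-1\ge2$. By minimality of $T$, $w$ is complete at stage $T$ but not at stage $T-1$. Let $\beta$ be the number of edges at $w$ placed at layer $T$. Lemma~\ref{lem:last} then says that these $\beta$ edges consume pairwise distinct cut-$(T-1)$ entries referencing pieces of $w$.

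Now split the other vertices $v\ne w$ according to where the edge $wv$ is placed.
\begin{itemize}
\item If $wv$ is placed at layer $T$, it is one of the $\beta$ edges just counted.
\item Otherwise $wv$ is placed at some layer $\le T-1$, since $w$ is complete at stage $T$ and so every edge at $w$ has been placed by then. In this case $v$ is started at stage $T-1$. Also $v$ is not complete at stage $T-1$, because no vertex is complete before stage $T$. Lemma~\ref{lem:active} therefore gives $r_{T-1}(v)\ge1$.
\end{itemize}

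The entries referencing pieces of distinct vertices are distinct, so these counts add:
\[
 c_{T-1}=\sum_u r_{T-1}(u)\ \ge\ r_{T-1}(w)+\sum_{v\ne w,\ wv\text{ placed before }T} r_{T-1}(v)\ \ge\ \beta+\bigl((n-1)-\beta\bigr)=n-1 .
\]
I expect this bookkeeping to be the only delicate step. The point to check is that every neighbour $v$ of $w$ is counted exactly once, either through one of the $\beta$ entries of $w$ or through its own entry $r_{T-1}(v)\ge1$.

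\emph{Upper bounds.} We have $T\ge2$, as noted before the lemma. We also have $T\le L$, since completion at stage $T$ refers to the stage $t_{\le T}$. Hence both blocks $B_{T-1}$, of rank $(c_{T-2},c_{T-1})$, and $B_T$, of rank $(c_{T-1},c_T)$, are subterm occurrences of $t$. Inequality~\eqref{eq:blocklin} together with $\patw(t)\le 2n-4$ gives
\[
 c_{T-2}+c_{T-1}\le2n-4,\qquad c_{T-1}+c_T\le2n-4 .
\]
Combining each of these with $c_{T-1}\ge n-1$ yields $c_{T-2}\le n-3$ and $c_T\le n-3$.

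The case $T=2$ needs no separate treatment. There $c_{T-2}=c_0=0$, and the bound $c_0\le n-3$ holds trivially for $n\ge3$, consistently with the argument above.
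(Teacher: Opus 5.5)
Your proof is correct and follows the paper's argument. Lemma~\ref{lem:last} gives $r_{T-1}(w)\ge\beta$, Lemma~\ref{lem:active} gives one reference for each of the remaining $n-1-\beta$ neighbours, and the block bound~\eqref{eq:blocklin} applied to $B_{T-1}$ and $B_T$ yields $c_{T-2},c_T\le n-3$. The paper additionally records $r_{T-1}(w)\ge\max(\beta,1)$ when $\beta<n-1$, but, as your count shows, that refinement is not needed for the bound $n-1$.
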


\begin{proof}
Let $\beta$ be the number of edges at $w$ placed at layer $T$.  By
Lemma~\ref{lem:last}, these edges have $w$ as tail and consume pairwise
distinct cut-$(T-1)$ entries referencing pieces of $w$, so
$r_{T-1}(w)\ge\beta$. The other $n-1-\beta$ edges at $w$ are placed at layers
at most $T-1$. Hence each corresponding neighbor is started at stage $T-1$ and,
since no vertex is complete there, has a reference by Lemma~\ref{lem:active}.
If $\beta<n-1$, then $w$ itself is started and incomplete at stage $T-1$, so
$r_{T-1}(w)\ge\max(\beta,1)$.  Summing the references gives
\[
  c_{T-1}\ge(n-1-\beta)+\max(\beta,1)\ge n-1.
\]
The block bound \eqref{eq:blocklin} applied to $B_{T-1}$ and $B_T$ gives
$c_{T-2}+c_{T-1}\le2n-4$ and $c_{T-1}+c_T\le2n-4$. Hence
$c_{T-2},c_T\le n-3$.
\end{proof}

\begin{lemma}[late introduction and completion]\label{lem:clique-burst-batch}
Under the standing assumptions,
\[
  k\ge3,
  \qquad
  g\ge3.
\]
\end{lemma}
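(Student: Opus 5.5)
Both inequalities come from counting references on the two cuts flanking the spike of Lemma~\ref{lem:clique-spike}. The only tool needed besides that lemma is Lemma~\ref{lem:active}: a vertex that is started and not complete at a stage has at least one reference on that stage's cut. Distinct vertices have disjoint sets of references, because each cut entry references a piece of exactly one vertex. Hence the number of started-but-incomplete vertices at stage $j$ is at most $\sum_u r_j(u)=c_j$. I would therefore run the same count at the two cuts $c_{T-2}$ and $c_T$, which Lemma~\ref{lem:clique-spike} bounds by $n-3$.

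For $g\ge3$, I would look at stage $T-2$. By the choice of $T$ as the first stage at which any vertex is complete, no vertex is complete at stage $T-2$. (If $T=2$, stage $0$ is empty and $S=\varnothing$, so the bound is immediate.) So every vertex of $S$ is started and incomplete at stage $T-2$. By Lemma~\ref{lem:active} each contributes at least one reference to cut $T-2$, giving
\[
|S|\le\sum_u r_{T-2}(u)=c_{T-2}\le n-3 .
\]
Thus $g=n-|S|\ge3$.

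For $k\ge3$, I would look at stage $T$. Since $w$ is complete at stage $T$, every edge of $K_n$ at $w$ is the image of an edge of $\val(t_{\le T})$. So every other vertex, being adjacent to $w$ in $K_n$, has a piece at stage $T$, namely an endpoint of that edge. Hence all $n$ vertices are started at stage $T$. Each vertex not in $K$ is then started and incomplete at stage $T$. By Lemma~\ref{lem:active} it has a reference on cut $T$, so
\[
n-k\le c_T\le n-3 ,
\]
and therefore $k\ge3$.

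I do not expect a real obstacle. The only points needing care are, first, the disjointness of references across vertices, which is immediate since $\theta_j$ sends each cut entry to a single vertex. Second, the observation that "started" at stage $T$ follows from an incident edge having been placed: the endpoint node of a placed $a$-occurrence is a piece of that vertex at every later positive stage.
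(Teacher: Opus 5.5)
Your proposal is correct and follows the paper's proof essentially verbatim. Both arguments count references with Lemma~\ref{lem:active}: on cut $T$ this gives $n-k\le c_T\le n-3$, and on cut $T-2$ it gives $|S|\le c_{T-2}\le n-3$, with the bounds on $c_T$ and $c_{T-2}$ supplied by Lemma~\ref{lem:clique-spike}. Your separate treatment of $T=2$ and your disjointness remark only make explicit what the paper leaves implicit.
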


\begin{proof}
Every vertex is started at stage $T$ because its edge to the completing vertex
$w$ is placed by layer $T$. The vertices not in $K$ are still incomplete at
stage $T$ and therefore have references by Lemma~\ref{lem:active}. Hence
$c_T\ge n-k$. Lemma~\ref{lem:clique-spike} gives $c_T\le n-3$, so $k\ge3$.
Similarly, no vertex is complete at stage $T-2$, so every started vertex at
stage $T-2$ has a reference. Therefore $c_{T-2}\ge |S|$. Together with
$c_{T-2}\le n-3$, this gives $|S|\le n-3$, i.e. $g=n-|S|\ge3$.
\end{proof}

\begin{lemma}[two-layer window]\label{lem:clique-window}
Every edge counted by $D$ is placed in layer $T-1$ or layer $T$.  Moreover
\[
  D\le 2n-4,
  \qquad
  D\le 2k+2g-7.
\]
\end{lemma}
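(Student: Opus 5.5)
The plan is to prove the window claim by tracking when vertices are started, and then to obtain both numerical bounds from the edge-capacity bound of Lemma~\ref{lem:capacity} applied to layers $T-1$ and $T$.

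\emph{Window.} An $a$-occurrence placed at layer $j$ consumes a cut-$(j-1)$ entry for its tail and creates a head piece. Hence both endpoints of an edge placed at layer $j$ are started at stage $j$, and its tail is already started at stage $j-1$. A vertex of $G$ is not started at stage $T-2$, so no edge incident with $G$ is placed at a layer $\le T-2$. A vertex of $K$ is complete at stage $T$, so all its edges are placed by layer $T$. The counted edges fall into two disjoint groups, and both lie in the window:
\begin{itemize}
\item the pairs between $K\setminus O=K\cap S$ and $G\setminus O$, each of which is incident with both $K$ and $G$;
\item the edges incident with $O=K\cap G$, of which there are $o(n-1)-\binom o2$.
\end{itemize}
These groups are disjoint because $K\setminus O$, $G\setminus O$ and $O$ are pairwise disjoint. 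So all $D$ counted edges are distinct and lie in layers $T-1,T$. This proves the first claim and gives
\[
 D\le \#\{\text{edges at }T-1\}+\#\{\text{edges at }T\}.
\]

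\emph{First bound.} Each $a$-factor consumes its own cut entry, so the number of edges at layer $j$ is at most $c_{j-1}$. The block bound \eqref{eq:blocklin} for $B_{T-1}$ then gives $D\le c_{T-2}+c_{T-1}\le 2n-4$.

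\emph{Second bound.} Here I would use Lemma~\ref{lem:capacity} with the $P$ terms.
\begin{itemize}
\item \emph{Layer $T-1$.} Every $u\in S$ is started at stage $T-2$ and is not complete there, since $T$ is the first completion stage. So $r_{T-2}(u)\ge1$ by Lemma~\ref{lem:active}. Also $u$ is not complete at stage $T-1$. Hence $P_{T-1}\ge |S|=n-g$.
\item \emph{Layer $T$.} Every $u\in S\setminus K$ is started and incomplete at stage $T-1$, so $r_{T-1}(u)\ge1$ by Lemma~\ref{lem:active}. It is not complete at stage $T$ because $u\notin K$. Hence $P_T\ge|S\setminus K|=(n-g)-(k-o)$.
\end{itemize}
Combining these with $c_{T-2}+c_{T-1}\le 2n-4$ gives
\[
 D\le 2n-4-(n-g)-(n-g-k+o)=2g+k-o-4 .
\]
This is at most $2k+2g-7$ exactly when $k\ge 3-o$. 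That inequality holds because $k\ge3$ by Lemma~\ref{lem:clique-burst-batch}.

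The main care point is the identification $K\setminus O=K\cap S$ and the counting of $P_T$. I must check that $S\setminus K$ really consists of referenced, still-incomplete vertices at the correct cuts. This follows from the minimality of $T$ together with Lemma~\ref{lem:active}.
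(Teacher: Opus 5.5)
Your proof is correct. The window argument and the first bound $D\le c_{T-2}+c_{T-1}\le 2n-4$ are the paper's. The second inequality is also proved, but by a different split of the width budget.

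\emph{Paper's route.} It bounds the two layers separately through Lemma~\ref{lem:capacity}.
\begin{itemize}
\item At layer $T-1$ it uses $c_{T-2}\le n-3$ from Lemma~\ref{lem:clique-spike}, so at most $g-3$ edges are placed there.
\item At layer $T$ it uses $c_T\ge n-k$ and the block bound for $B_T$ to get $c_{T-1}\le n+k-4$. It pairs this with the cruder count $P_T\ge n-g-k$ (all vertices started at stage $T-1$, minus all of $K$), so at most $2k+g-4$ edges are placed there.
\end{itemize}
Adding the two layer bounds gives $2k+2g-7$ directly, without appealing to $k\ge3$.

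\emph{Your route.} You charge both layers to the single block bound $c_{T-2}+c_{T-1}\le 2n-4$ for $B_{T-1}$. You compensate with the sharper count $P_T\ge|S\setminus K|=n-g-k+o$, which subtracts only $K\cap S=K\setminus O$ rather than all of $K$. This yields the stronger intermediate inequality $D\le 2g+k-o-4$. It beats $2k+2g-7$ by $k+o-3\ge0$, at the cost of invoking $k\ge3$ from Lemma~\ref{lem:clique-burst-batch} to recover the stated form.

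\emph{Comparison.} Both arguments rest on the same underlying facts, since $k\ge3$ was itself derived from $c_T\ge n-k$ and $c_T\le n-3$. Yours uses one block inequality instead of two and gives a bound sharper than Lemma~\ref{lem:clique-final} needs. The paper's gives separate per-layer capacities that do not depend on the overlap $o$.
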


\begin{proof}
Let $u\in K$ and $v\in G$, $u\ne v$.  The edge $uv$ is placed no later than
layer $T$, because $u$ is complete at stage $T$.  If it were placed before
layer $T-1$, then $v$ would be started by stage $T-2$, contrary to $v\in G$.
Thus every edge between $K$ and $G$ is placed in the two-layer window
$T-1,T$.  For $u\in O$, all $n-1$ edges at $u$ are likewise in this window,
because $u$ is not started before stage $T-1$ and is complete at stage $T$.
The expression for $D$ counts these two disjoint edge families.

For the first capacity bound, every placed edge is an $a$-factor with exactly
one input, and distinct factors in a layer consume disjoint input entries.  The
number of edges in the two layers is therefore at most
$c_{T-2}+c_{T-1}$. By \eqref{eq:blocklin}, this is at most $\patw(t)\le2n-4$.
Hence $D\le2n-4$.

For the second capacity bound, use Lemma~\ref{lem:capacity}.  At layer $T-1$,
every vertex in $S$ has a reference at cut $T-2$ and is still incomplete at
stage $T-1$, so $P_{T-1}\ge |S|=n-g$.  Since
$c_{T-2}\le n-3$, layer $T-1$ places at most $g-3$ edges.  At layer $T$, at
most $g$ vertices were not started at stage $T-1$, and at most $k$ of the
started vertices are complete at stage $T$. Hence $P_T\ge n-g-k$. Also
$c_T\ge n-k$ by the argument in Lemma~\ref{lem:clique-burst-batch}, so
$c_{T-1}\le\patw(t)-c_T\le(2n-4)-(n-k)=n+k-4$.  Layer $T$ therefore places at
most $(n+k-4)-(n-g-k)=2k+g-4$ edges.  Summing the two layer bounds gives
$D\le(g-3)+(2k+g-4)=2k+2g-7$.
\end{proof}

\Needspace{6\baselineskip}
\begin{lemma}[overlap bound]\label{lem:clique-overlap}
Under the standing assumptions, $o\le3$.
\end{lemma}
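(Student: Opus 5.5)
The plan is to argue by contradiction from the first capacity bound $D\le 2n-4$ of Lemma~\ref{lem:clique-window}, using only the definition of $D$ and the trivial containments $O\subseteq K$, $O\subseteq G$, $O\subseteq V(K_n)$. The bound $D\le 2k+2g-7$ and Lemma~\ref{lem:clique-burst-batch} should not be needed here.

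First, since $O=K\cap G$, we have $o\le k$ and $o\le g$. Hence the first summand $(k-o)(g-o)$ of $D$ is nonnegative, and
\[
 D\ \ge\ o(n-1)-\binom{o}{2}\ =\ \tfrac{o}{2}\,(2n-1-o)=:f(o).
\]
Also $o\le n$, since $O$ is a set of vertices of $K_n$. So it suffices to show that $f(o)>2n-4$ for every integer $o$ with $4\le o\le n$. This would contradict $D\le2n-4$, and therefore force $o\le3$. If $n<4$ there is nothing to prove, because $o\le n\le3$.

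The function $f$ is a concave quadratic in $o$, so on the interval $[4,n]$ it attains its minimum at an endpoint. At $o=4$ we get $f(4)=2(2n-5)=4n-10$, and $4n-10>2n-4$ exactly when $n>3$, which holds because $n\ge o\ge4$. At $o=n$ we get $f(n)=\binom{n}{2}$, and $\binom n2>2n-4$ is equivalent to $n^2-5n+8>0$. This is true for all real $n$, since the discriminant $25-32$ is negative. Thus $D\ge f(o)>2n-4$ whenever $o\ge4$, which is the desired contradiction.

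I expect no real obstacle here. The only points needing care are the nonnegativity of $(k-o)(g-o)$, which relies on $O$ lying in both $K$ and $G$, and remembering the upper endpoint $o\le n$ so that the concavity argument applies on a bounded interval.
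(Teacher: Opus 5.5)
Your proof is correct, but it takes a different route from the paper's. The paper uses a separate port count. Each of the $n-1$ edges at a vertex $u\in O$ lies in layers $T-1,T$ and occupies a distinct $u$-side port: the input consuming a reference to $u$ when $u$ is the tail, or the fresh output piece when $u$ is the head. The blocks $B_{T-1}$ and $B_T$ together have at most $2\patw(t)\le 4n-8$ input and output ports, so $o(n-1)\le 4n-8$, which forces $o\le 3$.

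You instead reuse the first capacity bound $D\le 2n-4$ of Lemma~\ref{lem:clique-window}. That bound is proved independently of the overlap lemma, so there is no circularity. After you drop the nonnegative term $(k-o)(g-o)$, what remains is pure arithmetic.

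The paper's count is self-contained, since it needs neither $D$ nor the one-input-per-edge charging. However, it charges against twice the input budget, so it is weaker. Your route charges each edge only to its single input and is therefore sharper. In fact $f(2)=2n-3$ and $f(3)=3n-6$ both exceed $2n-4$ when $n\ge 3$, so your inequality already forces $o\le 1$. That is exactly the input-capacity computation the paper carries out for the cases $o=2$ and $o=3$ in Lemma~\ref{lem:clique-final}, and your version would make those cases redundant.
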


\begin{proof}
A vertex $u\in O$ is started no earlier than stage $T-1$ and complete at stage
$T$, so all $n-1$ edges incident with $u$ are placed in the two-layer window.
Each such edge contributes a distinct $u$-side port: either an input consuming a
reference to $u$, or an output writing a fresh piece of $u$.  The total number
of ports in layers $T-1$ and $T$ is
\[
(c_{T-2}+c_{T-1})+(c_{T-1}+c_T)\le2\patw(t)\le4n-8.
\]
Thus $o(n-1)\le4n-8$, and $o\le3$ for $n\ge3$.
\end{proof}

\begin{lemma}[final squeeze]\label{lem:clique-final}
The standing assumptions are inconsistent.
\end{lemma}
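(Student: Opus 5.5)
The plan is a purely arithmetic case analysis. I will combine the inequalities already extracted under the standing assumptions with the definition of $D$: $k\ge3$ and $g\ge3$ from Lemma~\ref{lem:clique-burst-batch}, $D\le 2n-4$ and $D\le 2k+2g-7$ from Lemma~\ref{lem:clique-window}, and $o\le3$ from Lemma~\ref{lem:clique-overlap}. I also need the trivial facts $o\le\min\{k,g\}$, which give $k-o\ge0$ and $g-o\ge0$, together with $n\ge3$. Once these are collected, no further structural argument about the expression should be required. The four values $o\in\{0,1,2,3\}$ are treated separately, and in each case one of the two upper bounds on $D$ should fail.

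For $o=0$ we have $D=kg$. The second window bound then reads $kg\le2k+2g-7$, that is, $(k-2)(g-2)\le-3$. This is impossible because $k,g\ge3$ make the left side at least $1$.

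For $o=1$ we have $D=(k-1)(g-1)+(n-1)$. Substituting into $D\le2k+2g-7$ and rearranging gives $(k-3)(g-3)+n-2\le0$. Since $k,g\ge3$ and $n\ge3$, the left side is positive, a contradiction.

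For $o=2$ we have $D=(k-2)(g-2)+2(n-1)-1\ge 2n-3$. This already exceeds the first bound $D\le2n-4$.

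For $o=3$ we have $D=(k-3)(g-3)+3(n-1)-3\ge3n-6$. The bound $D\le2n-4$ then forces $n\le2$, contradicting $n\ge3$. Since every case is contradictory, the standing assumptions are inconsistent.

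The only place needing care is the sign of the product terms. In the cases $o=2,3$ I will use $k\ge o$ and $g\ge o$, which hold because $O=K\cap G$. In the cases $o=0,1$ I will use $k,g\ge3$ rather than only $k,g\ge o$. I expect the $o=1$ case to be the main obstacle, since it is the near-extremal one: the remark before the lemmas notes that $(k,g,o)=(2,n-1,1)$ survives at budget $2n-3$. The contradiction there must therefore come from $k\ge3$, which holds only under the stronger assumption $\patw(t)\le 2n-4$, and not from the first capacity bound. I would check that case first and confirm that the rearrangement $kg-3k-3g+n+7\le0$ is exactly $(k-3)(g-3)+n-2\le0$. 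Finally, I would note the consequence: no layered linear expression for $K_n$ has width at most $2n-4$, which is the lower bound $\patwlin(K_n)\ge2n-3$ for $n\ge3$ claimed in Theorem~\ref{thm:clique-values}.
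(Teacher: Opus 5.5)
Your proof is correct and matches the paper's argument: the same four-way split on $o\in\{0,1,2,3\}$, with $D\le 2k+2g-7$ refuting $o=0,1$ and $D\le 2n-4$ refuting $o=2,3$. Your $(k-3)(g-3)+n-2\le 0$ is the paper's $(x-2)(y-2)\le 2-n$ with $x=k-1$, $y=g-1$. The only difference is cosmetic: in the last two cases you get nonnegativity of $(k-o)(g-o)$ from $O=K\cap G$, whereas the paper reuses $k,g\ge 3$, and either suffices.
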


\begin{proof}
Lemmas~\ref{lem:clique-burst-batch} and \ref{lem:clique-overlap} give
$k,g\ge3$ and $o\in\{0,1,2,3\}$. If $o=0$, then
Lemma~\ref{lem:clique-window} gives
$kg\le2k+2g-7$, i.e. $(k-2)(g-2)\le-3$, impossible.  If $o=1$, the same
capacity gives
$(k-1)(g-1)+(n-1)\le2k+2g-7$. Writing $x=k-1\ge2$ and $y=g-1\ge2$, this is
$(x-2)(y-2)\le2-n<0$, impossible.  If $o=2$, the input capacity gives
$(k-2)(g-2)+2n-3\le2n-4$, i.e. $(k-2)(g-2)\le-1$, impossible.  If $o=3$, the
input capacity gives
$(k-3)(g-3)+3n-6\le2n-4$, i.e. $(k-3)(g-3)\le2-n<0$, impossible.  All cases
contradict $k,g\ge3$.
\end{proof}

\begin{theorem}[clique lower bound, linear fragment]\label{thm:clique}
For every $n\ge 1$,
\[
2n-3\;\le\;\patwlin(K_n)\;\le\;2n .
\]
\end{theorem}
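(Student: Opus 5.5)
The lower and upper bounds are proved separately. Small cases are handled directly.

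\emph{Lower bound.} For $n\ge3$, suppose a layered linear $t$ has $\val(t)\in\operatorname{Ori}(K_n)$ and $\patw(t)\le2n-4$. These are exactly the standing assumptions of Lemmas~\ref{lem:clique-spike}--\ref{lem:clique-final}. Lemma~\ref{lem:clique-final} says they are inconsistent, so $\patwlin(K_n)\ge2n-3$.

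For $n\le2$ the bound $2n-3\le1$ is trivial once we know $\patwlin(K_n)\ge1$ for $n\ge1$. This holds because any nonempty target needs a block containing a generator of positive rank. One has to check that the only width-$0$ expression is $e_0$, whose value is vertexless. The lemmas already cover arbitrary $\bsum$-layer blocks, so nothing more is needed beyond that bookkeeping.

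\emph{Upper bound.} I would use the standard live-wire construction with elementary blocks, which are also layered linear.
\begin{itemize}
\item Keep one wire per vertex already introduced, and introduce vertices one at a time with $\io01$, at most $n$ wires in total.
\item When vertex $v_i$ is introduced, add the edges $v_jv_i$ for all $j<i$ using the edge gadget of Proposition~\ref{supp:prop:upper}: copy $v_j$ by $\io12$, apply $a$, merge into $v_i$ by $\io21$.
\item Route with adjacent transpositions, each spliced as its own block.
\item Once every vertex is complete, close the wires with $\io10$.
\end{itemize}

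For the audit, note that with $b$ live wires the gadget's blocks have rank sums at most $2b+2$. Prefix cuts are at most $b+1$, and transposition blocks have rank sum at most $2b$. Taking $b\le n-1$ when the extra copy is created gives at most $2n$.

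To get exactly $2n$ rather than $2n+2$, I would arrange the routing so that $v_i$ is placed adjacent to $v_j$ before copying. Alternatively, I would apply Corollary~\ref{supp:cor:pwupper} with $\pw(K_n)=n-1$, which gives $2n+2$; this is not quite enough. So the direct construction is needed.

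A sharper gadget comes from introducing $v_i$ last. Before $v_i$ exists there are $b=i-1\le n-1$ wires. The cost is $2b+2\le2n$ for the $a$-block when $b\le n-1$. The last vertex needs no fresh wire of its own during edge insertion if its edges are added as outputs of $a$ merged successively. The prefix then stays at most $n$, and every block has rank sum at most $2(n-1)+2=2n$.

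\emph{Main obstacle.} The main obstacle is this constant-level audit for the upper bound, showing every block and prefix stays within $2n$. The lower bound is entirely supplied by the preceding lemmas.
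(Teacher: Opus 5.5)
Your lower bound is the paper's argument. For $n\ge3$ the hypotheses $\val(t)\in\operatorname{Ori}(K_n)$ and $\patw(t)\le2n-4$ are exactly the standing assumptions of Lemmas~\ref{lem:clique-spike}--\ref{lem:clique-final}, and the cases $n\le2$ are trivial. Your final upper-bound construction is also the paper's in substance: edges in lexicographic order by head, $v_1$ created by $\io01$, and every later vertex born as the fresh head of its first incoming edge. However, the width audit as you state it does not close, and this is exactly the step you flagged as the obstacle.

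The problem is the final batch. Your gadget always copies the tail, and it closes wires only once every vertex is complete.
\begin{itemize}
\item Entering the batch of $v_n$ there are $n-1$ live wires.
\item After $v_1\to v_n$, the new wire of $v_n$ raises this to $n$.
\item The copy for $v_2\to v_n$ is then a padded $\io12$ of rank $(n,n+1)$, and the following $a$-block has rank sum $2n+2$.
\end{itemize}
Your estimate ``$b=i-1$, cost $2b+2$'' hides this. Once $v_i$ is born, the live count inside batch $i$ is $i$, so an $a$-block after a copy costs $2i+2$. That is at most $2n$ only when $i\le n-1$. Your remark that the last vertex needs no fresh wire of its own does not remove the extra copy.

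The missing rule, which the paper states explicitly, is this: copy a tail only if it still has an unplaced incident edge; otherwise apply $a$ directly to its wire, which retires it. With this rule the audit works.
\begin{itemize}
\item In batches $i\le n-1$, every tail still owes its edge to $v_n$. Copies therefore happen with at most $i\le n-1$ live wires, so at most $n$ wires are ever present.
\item Hence $a$-blocks and padded transpositions have rank sum at most $2n$, and copies and merges at most $2n-1$.
\item In the final batch every edge completes its tail, so no copy occurs and the live count stays at most $n-1$.
\end{itemize}
With this rule added, your argument coincides with the paper's.
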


\begin{proof}
For $n=1$, the linear term $\io01\circ\io10$ represents $K_1$ and has
width $1$, so both displayed bounds hold.
For the upper bound when $n\ge2$, orient the clique as $i\to j$ for
$0\le i<j\le n-1$ and place the edges in lexicographic order by the head $j$,
then by the tail $i$.
Begin with one live wire for vertex $0$.  Just before placing an edge $i\to j$,
keep exactly the already born vertices that are incident with at least one
unplaced edge.  There are never more than $n-1$ such live vertices: when the
last vertex is born, vertex $0$ has just completed and can be retired, and in
the final batch each subsequent edge completes and retires its tail.

To place $i\to j$, use adjacent transpositions to bring the required wires next
to one another.  If the tail $i$ must remain live, first apply a padded copy
block $\io12$ to it, otherwise use its original wire.  Apply a padded $a$-block
to the chosen tail wire. If $j$ was already live, merge the fresh head into
the live wire of $j$ by a padded $\io21$-block. If $j$ is being born, keep the
fresh head as its live wire.  Finally apply padded $\io10$-blocks to any
vertices whose last incident edge has just been placed.  Each block is a pure
$\bsum$-term over generators and units, so the resulting expression is layered
linear.  During the construction the live count is at most $n$, and it is at
most $n-1$ before every copy.  Hence a padded permutation has rank sum at most
$2n$, a copy has rank sum at most $2n-1$, an $a$-block has rank sum at most
$2n$, and a merge or deletion has rank sum at most $2n-1$. Thus
$\patwlin(K_n)\le2n$.

For the lower bound, the case $n=2$ is immediate. If $n\ge3$ and a
layered linear expression $t$ for $K_n$ had $\patw(t)\le2n-4$, then the
standing hypotheses of Lemmas~\ref{lem:clique-spike} through
\ref{lem:clique-final}
would apply, contradicting Lemma~\ref{lem:clique-final}.  Therefore every such
expression has width at least $2n-3$.
\end{proof}

\begin{remark}[mechanism]\label{rem:mechanism}
The proof exhibits the doubling in the layered setting, where the width
profile may jump arbitrarily within a block. The first completion forces a
spike $c_{T-1}\ge n-1$. A budget of $2n-c$ then forces a sharp decrease on
both sides of the spike. This requires at least $c-1$ vertices to be introduced
at the last possible stage and at least $c-1$ completions in the following
stage (the proof uses $c=4$). The connecting edges must all lie in a two-layer
window whose total throughput is linear, in $k+g$ by the capacity lemma and in
$n$ by the input count, while the demand $kg$ is bilinear. Any overlap between
the newly introduced and newly completed sets adds a near-complete star to the
same window.
\end{remark}

\subsection{Elementary chains and linear-width}
Recall the \emph{linear-width} $\lw(F)$ of a finite simple undirected graph
$F$: the minimum,
over orderings $f_1<\cdots<f_m$ of $E(F)$, of the maximum number of
vertices incident both with $\{f_1,\dots,f_i\}$ and with the remaining
edges, where $1\le i<m$. The maximum is zero when there is no proper cut.
The parameter was introduced by Thomas \cite{Tho96}. See
\cite{Thi00,BT04} for its systematic study.

\begin{theorem}[linear fragment calibration]\label{thm:lwcalib}
For every finite simple undirected graph $F$ with at least one edge,
\[
2\,\lw(F)-1\;\le\;\patw^{\mathrm{el}}(F)\;\le\;2\,\lw(F)+4,
\qquad
\patwlin(F)\;\ge\;\tfrac13\,\lw(F).
\]
Lemma~\ref{lem:lwpw} gives
$\pw(F)-1\le\lw(F)\le\pw(F)+1$. Consequently,
$\patw^{\mathrm{el}}(F)=2\,\pw(F)+\Theta(1)$.
\end{theorem}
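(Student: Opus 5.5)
The theorem has three inequalities plus the pathwidth consequence. I will prove them in that order, reusing the stage/cut machinery of the preceding subsection (Lemmas~\ref{lem:active} and~\ref{lem:capacity}). Throughout, the width bound \eqref{eq:blocklin} is the main tool.

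\emph{Lower bound $2\lw(F)-1\le\patw^{\mathrm{el}}(F)$.} Take an elementary chain $t$ with value in $\operatorname{Ori}(F)$. Each $a$-occurrence sits in its own block, so the chain induces an order $f_1<\dots<f_m$ of $E(F)$. Fix a proper cut $i$, let $j$ be the layer placing $f_i$, and let $M$ be the set of vertices incident with edges on both sides. Each $u\in M$ is started and incomplete at stage $j$, so Lemma~\ref{lem:active} gives $r_j(u)\ge1$ and hence $c_j\ge|M|$. The same applies at stage $j-1$, except possibly for the tail and head of $f_i$ itself, which can be missing from cut $j-1$ only in limited ways. Since the block $B_j$ is elementary, $c_{j-1}$ and $c_j$ differ by at most one generator's rank difference. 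I therefore expect $c_{j-1}+c_j\ge2|M|-1$, and \eqref{eq:blocklin} then gives the bound. The delicate point is the endpoint accounting at layer $j$. I would resolve it by taking the cut to lie at the block boundary adjacent to $f_i$, whichever side yields the larger pair.

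\emph{Upper bound $\patw^{\mathrm{el}}(F)\le2\lw(F)+4$.} Take an optimal edge order and build the chain exactly as in the $K_n$ upper construction of Theorem~\ref{thm:clique}. Keep one live wire per vertex that has been placed and still has unplaced edges; before placing $f_i$, the live set is the middle set at cut $i-1$, plus newly born endpoints. To place an edge, use elementary transpositions to bring the endpoints to the front, copy the tail if it stays live, place $a$, then merge into the head's wire or keep the fresh head as a new wire, and finally delete completed vertices by $\io10$. At most $\lw(F)+2$ wires are present at any moment (the middle set plus at most two temporaries), so every elementary block has rank sum at most $2(\lw(F)+2)$. Left-comb prefixes are smaller. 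Isolated vertices and $\lw=0$ components are handled as sums of closed pieces in sequence.

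\emph{Bound $\patwlin(F)\ge\frac13\lw(F)$.} Given a layered linear $t$ of width $w$, refine it to an edge order by ordering the edges within each block arbitrarily. For a cut falling inside block $B_j$, every middle vertex is either referenced on cut $j-1$ or cut $j$, or is an endpoint of an edge inside $B_j$ on both sides of the split. I will bound the middle set by $c_{j-1}+c_j$ plus the ports of $B_j$ itself, which is at most $3w$. This is the main obstacle. The crude port count gives the constant, but I must verify via Lemma~\ref{lem:internal} that a vertex touched by edges on both sides within one block must still appear on a cut or a block port. Finally, $\patw^{\mathrm{el}}=2\pw+\Theta(1)$ follows by substituting Lemma~\ref{lem:lwpw} into the two-sided bound.
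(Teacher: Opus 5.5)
Your proposal is correct and follows the paper's proof essentially step for step. The lower bound uses Lemma~\ref{lem:active} at an $a$-block together with the at-most-one interface change of an elementary block, as in Lemma~\ref{lem:el}. The upper bound is the same live-wire elementary compiler with at most $\lw(F)+2$ wires, as in Lemma~\ref{lem:elupper}. The $\tfrac13$ bound comes from refining layers into an edge order, as in Lemma~\ref{lem:refine}, and the pathwidth consequence is direct substitution of Lemma~\ref{lem:lwpw}.

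The points you flag as delicate close at once. The block placing $f_i$ is $e_r\bsum a\bsum e_s$, so in fact $c_{j-1}=c_j\ge|M|$; no endpoint accounting at cut $j-1$ is needed, and you even get $2|M|$. In the refinement, a vertex of $B_j$ is referenced at cut $j-1$ through the tail input of its $a$-factor or at cut $j$ through the fresh head output, so nothing beyond Lemma~\ref{lem:active} is required. Finally, isolated vertices should be inserted as consecutive single-generator blocks $\io01$, $\io10$, not as parallel sums.
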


The proof occupies the next four lemmas.

\begin{lemma}[lower calibration, elementary]\label{lem:el}
For a finite simple undirected graph $F$, every elementary linear expression
$t$ with $\val(t)\cong F$,
$|E(F)|\ge1$, satisfies $\patw(t)\ge 2\lw(F)-1$.
\end{lemma}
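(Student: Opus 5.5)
The plan is to read an edge ordering off the chain and bound, at each proper cut of that ordering, the linear-width middle set by one half of a sum of two adjacent block ranks.

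\emph{Edge ordering.} Since $t$ is elementary, every block $B_j=e_r\bsum x\bsum e_s$ contains at most one $a$-occurrence. The $a$-occurrences therefore appear in distinct layers $j_1<\cdots<j_m$, giving an ordering $f_1<\cdots<f_m$ of $E(F)$ (one occurrence per edge, by the convention of this section). If $m=1$ then $\lw(F)=0$, and $\patw(t)\ge 2\ge -1$ trivially. So assume $m\ge2$, fix a proper cut $1\le i<m$, and let $M_i$ be the set of vertices incident with both $\{f_1,\dots,f_i\}$ and the remaining edges. It then suffices to find some $j$ with $c_{j-1}+c_j\ge 2|M_i|-1$, and to conclude by \eqref{eq:blocklin}.

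\emph{References on cuts.} For each $j$ with $j_i\le j<j_{i+1}$, the stage $t_{\le j}$ contains exactly the edges $f_1,\dots,f_i$. Take $u\in M_i$. Some edge at $u$ has been placed, so $u$ is started; some edge at $u$ is unplaced, so $u$ is incomplete. Lemma~\ref{lem:active} then gives $r_j(u)\ge1$. Hence $c_j\ge|M_i|$ for every $j$ in the range $j_i\le j\le j_{i+1}-1$.

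\emph{Choice of $j$.} If the range contains two consecutive indices, i.e.\ $j_{i+1}-1>j_i$, take $j=j_i+1$. Then $c_{j-1}+c_j\ge 2|M_i|$.

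Otherwise $j_{i+1}=j_i+1$, and the only stage with the right edge set is $j_i$. Here I use the adjacent stage. The block $B_{j_i}$ is $e_r\bsum a\bsum e_s$, so cut $j_i-1$ agrees with cut $j_i$ except at one position: the $a$-input entry (the tail of $f_i$) is replaced by a fresh head node. Every vertex of $M_i$ referenced at cut $j_i$ through a unit wire is also referenced at cut $j_i-1$. At most one vertex, the head of $f_i$ if it is referenced only by the fresh output, can lose its reference. So $c_{j_i-1}\ge|M_i|-1$, and $c_{j_i-1}+c_{j_i}\ge 2|M_i|-1$. Symmetrically, one could use cut $j_i+1$.

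\emph{Conclusion and main obstacle.} Choosing $i$ to maximize $|M_i|$ for this particular ordering gives $|M_i|\ge\lw(F)$. Hence $\patw(t)\ge 2\lw(F)-1$. I expect the main obstacle to be the last case: checking carefully that passing a single elementary $a$-block removes at most one reference to a vertex of $M_i$. This holds because unit factors pass their input node through unchanged, and only the single $a$-factor changes an entry.
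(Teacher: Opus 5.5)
Your proof is correct. It differs from the paper's mainly in which block it charges.

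The paper takes a global maximum $c_{j_0}=\max_j c_j$ and uses Lemma~\ref{lem:active} to get $c_{j_0}\ge\lw(F)$. It then uses only the fact that a block with a single generator changes the interface size by at most one, so $c_{j_0-1}+c_{j_0}\ge2c_{j_0}-1$. That argument is uniform, but it must lose one: the maximal cut may be flanked by copy or merge blocks.

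You instead fix the cut $i$ of the induced ordering with $|M_i|\ge\lw(F)$ and work only at stages whose placed edge set is exactly $\{f_1,\dots,f_i\}$. This localization is stronger than you use, and it makes your case split unnecessary. Since $a$ has rank $(1,1)$, the block $B_{j_i}=e_r\bsum a\bsum e_s$ that places $f_i$ has rank $(c_{j_i},c_{j_i})$. Your own remark that cut $j_i-1$ agrees with cut $j_i$ except at one position already says $c_{j_i-1}=c_{j_i}$, not merely $c_{j_i-1}\ge|M_i|-1$. So the block's rank sum is $2c_{j_i}\ge2|M_i|$ directly.

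Your route, pushed one step, therefore gives the sharper bound $\patw(t)\ge2\lw(F)$. It also never needs to reason about pieces or references at stage $j_i-1$. (That stage is positive in any case: $c_0=0$ forces $B_1$ to be $\io01$, so no $a$-block is first.)

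The paper's version buys a one-line argument that ignores which generator sits next to the extremal cut. Yours buys the exact factor two, by charging the cut to the rank-preserving $a$-block itself.
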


\begin{proof}
Each block contains one generator, so the edge placements occur in
distinct blocks and induce an edge ordering of $F$. If $u$ is incident
with edges on both sides of the cut after the block placing $f_i$, then
$u$ is started and incomplete at that stage, so it has a boundary reference by
Lemma~\ref{lem:active}. Hence
$c_j\ge|\partial_i|$ for the corresponding boundaries, and
$\max_j c_j\ge\lw(F)$. Choose $j_0$ with
$c_{j_0}=\max_j c_j$. Since $c_0=c_L=0$ and $F$ has an edge, we may take
$1\le j_0<L$. A block with a single generator changes the interface size by
at most one. Therefore the block $B_{j_0}$ has
$c_{j_0-1}+c_{j_0}\ge 2c_{j_0}-1\ge 2\lw(F)-1$. The claim follows from
\eqref{eq:blocklin}.
\end{proof}

\Needspace{6\baselineskip}
\begin{lemma}[upper calibration, elementary]\label{lem:elupper}
$\patw^{\mathrm{el}}(F)\le 2\,\lw(F)+4$ for every finite simple undirected
graph $F$ with at least one edge.
\end{lemma}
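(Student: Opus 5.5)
Fix an edge ordering $f_1<\cdots<f_m$ of $E(F)$ that attains $\lw(F)=\ell$, and orient every edge arbitrarily. We build an elementary chain that places the edges one at a time in this order. The construction follows the live-wire scheme from the proof of Theorem~\ref{thm:clique}. Before $f_i$ is placed, the cut carries exactly one wire for each \emph{live} vertex. A vertex is live if it is incident with some $f_{i'}$, $i'<i$, and also with some $f_{i''}$, $i''\ge i$. This is precisely the boundary $\partial_{i-1}$ of the cut after $f_1,\dots,f_{i-1}$, so between placements the interface has size $|\partial_{i-1}|\le\ell$ (it is $0$ at $i=1$). Every block will have the form $e_r\bsum x\bsum e_s$ for one generator $x$, so the expression is elementary by construction.

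To place $f_i=u\to v$, we run the following gadget.
\begin{enumerate}
\item If $u$ is not yet born, open a wire for it with a padded $\io01$ block.
\item Move the wire of $u$ to an end position by adjacent transpositions $\tau_{r,j}$. Each transposition is one elementary block $e\bsum\cdots\bsum\pi\bsum\cdots\bsum e$.
\item If $u$ has unplaced edges after $f_i$, copy it with a padded $\io12$.
\item Apply a padded $a$ to the tail copy.
\item If $v$ is already live, bring its wire next to the fresh head by transpositions and merge the two with a padded $\io21$. Otherwise the fresh head becomes the new wire of $v$.
\item Close with padded $\io10$ blocks the wires of every endpoint of $f_i$ that has no unplaced edge left.
\end{enumerate}
The value check is the same as in the clique construction. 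Each vertex keeps a single wire, all merges happen inside its own class, and a vertex is closed only after its last edge. So the value is the oriented $F$, and we never need Lemma~\ref{lem:internal} for correctness.

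For the width audit, let $b=|\partial_{i-1}|\le\ell$. Inside the gadget there are at most two extra temporary wires: a newly born $u$ or $v$, the extra copy from $\io12$, and the fresh head before it is merged. The point to verify is that these never all coexist beyond $b+2$. Any vertex present mid-gadget is either in $\partial_{i-1}$, or is an endpoint of $f_i$, or is a duplicate reference. Take the worst case, where $u$ and $v$ are both live and $u$ is copied. The wires are then the $b$ live ones, one copy, and one head, which gives $b+2$. If $u$ is newborn, its birth wire replaces the copy, and the head occupies the other slot, so again at most $b+2$. Every cut thus has size at most $\ell+2$.

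The main obstacle is the adjacent-cut count, since a block's width is the sum of its two interfaces. An elementary block changes the interface by at most one, so every block has rank sum at most $2(\ell+2)+1=2\ell+5$, and we need $2\ell+4$. To save the one unit, we reorder each gadget so that the interface reaches $\ell+2$ only at a single cut. Both neighbouring blocks of that cut must then have their other interface at $\ell+1$, so every block has rank sum at most $(\ell+2)+(\ell+1)\le2\ell+4$ after all. Concretely, we perform the $\io10$ closures of retiring vertices as early as possible. The endpoint $u$ that retires at $f_i$ is not copied and is consumed by $a$ directly. The head merge is scheduled immediately after $a$, so the peak $b+2$ occurs in the single cut between $a$ and $\io21$. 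The transposition blocks $\tau$ preserve the interface size, and we apply them only when at most $b+1$ wires are present; their rank sum is then at most $2(\ell+1)\le2\ell+4$. The remaining work is the case analysis over whether $u$ and $v$ are newborn, live, or retiring.
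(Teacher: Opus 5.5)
Your construction is the paper's own (one live wire per boundary vertex, at most two temporary wires, so every cut has size at most $\ell+2$). Two points need fixing.

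\textbf{The width audit.} The ``main obstacle'' is an arithmetic slip. If every cut has size at most $\ell+2$, then every elementary block already has rank sum $c_{j-1}+c_j\le2\ell+4$. Blocks that change the interface ($\io01,\io12,\io21,\io10$) even satisfy $c_{j-1}+c_j\le2(\ell+2)-1=2\ell+3$. The value $2\ell+5$ would require a cut of size $\ell+3$. The paper finishes exactly this way.

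Delete the reordering paragraph, because its claims are false. Take the path $1-2-3-4$ with edge order $\{1,2\},\{2,3\},\{3,4\}$, so $\ell=1$, and orient $\{2,3\}$ as $3\to2$. Before that gadget $b=\ell=1$. The tail $3$ is newborn and has a later edge, and the head $2$ retires.
\begin{itemize}
\item The wires are $2,3,3$ before the $a$-block and $2,3,h$ after it.
\item So both cuts around the $a$-block have size $\ell+2$, and that block has rank sum exactly $2\ell+4$.
\item The step-5 transposition bringing $2$ next to $h$ also runs at interface $\ell+2$.
\end{itemize}
The peak is therefore not confined to the single cut between $a$ and $\io21$. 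With this orientation no reordering avoids it, since the $2$-wire, the continuing $3$-wire, and the tail or head wire must coexist on both sides of $a$.

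Your ``worst case'' sentence is also miscounted. The copy is consumed by $a$, so for a live tail the interface is only $b+1$. The true worst case is a newborn tail that is copied. The bound $b+2$ you state is nevertheless correct.

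\textbf{The value check.} This is a genuine gap. The lemma allows $F$ to have isolated vertices, since linear width ignores them and only one edge is assumed. Your chain creates only vertices incident with edges, so for such $F$ its value does not lie in $\operatorname{Ori}(F)$. The paper repairs this as follows:
\begin{itemize}
\item For every isolated vertex, prepend the two elementary blocks $\io01:0\to1$ and $\io10:1\to0$.
\item Their composite is the scalar $z$, and composing closed rank-zero terms is disjoint union.
\item Each block has rank sum $1$ and changes no later interface.
\end{itemize}
Hence the bound $2\ell+4$ survives.
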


\begin{proof}
Fix an ordering of $E(F)$ with all boundaries $\le\lw=:w$, choose one
orientation for every edge, and compile the resulting edge occurrences by
elementary blocks. Between gadgets maintain one live wire per boundary vertex
of the processed prefix, in a fixed order. For the next oriented representative
$f=u\to v$, create $u$ by $\io01$ if it is new. If $u$ has an edge after $f$,
copy its wire by $\io12$, and apply $a$ to the spare $u$-wire. If $v$ already
exists, use the fixed stable adjacent-transposition word to move the fresh head
immediately next to the live $v$-wire, then merge them by $\io21$. Delete the
wires of vertices leaving the boundary and use fixed adjacent transpositions
to restore the chosen order of the surviving boundary wires.

Every edge-emission or routing block acts on at most $w+2$ wires, so its rank
sum is at most $2w+4$. Creation and copy blocks have rank sum at most $2w+3$.
A merge or deletion may also have $w+2$ input wires, but its rank sum is then
$2(w+2)-1=2w+3$. Post-deletion restoration acts on at most $w$ wires.
Therefore the whole elementary chain has width at most $2w+4$.

It remains to retain isolated vertices. For every isolated vertex prepend the
two consecutive elementary blocks $\io01:0\to1$ and $\io10:1\to0$. Their
composite is the scalar $z=\io01\circ\io10$, and composition of closed
rank-zero terms is disjoint union. These blocks have rank sum one and do not
change any later interface. Prepending one such pair for every isolated
vertex therefore produces exactly $F$ and preserves the bound $2w+4$.
\end{proof}

\begin{lemma}[linear-width of cliques]\label{lem:lwkn}
$\lw(K_2)=0$, while $\lw(K_n)=n-1$ for every $n\ge3$.
\end{lemma}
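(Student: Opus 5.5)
For $K_2$ the statement is immediate: $E(K_2)$ has a single edge, so there is no proper cut, and by the convention stated before Theorem~\ref{thm:lwcalib} the maximum is $0$. We therefore fix $n\ge3$ and establish the upper bound $\lw(K_n)\le n-1$ and the lower bound $\lw(K_n)\ge n-1$ separately.

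\emph{Upper bound.} Any ordering of the edges of $K_n$ has width at most $n-1$, provided no proper cut has all $n$ vertices in its middle set. So it suffices to exhibit one ordering in which some vertex is never in the middle. Take the ordering that lists first all $n-1$ edges at vertex $0$ and then the remaining edges in arbitrary order. For a cut after $i<n-1$ edges, the prefix is a star at $0$ with $i$ leaves, so at most $i+1\le n-1$ vertices are incident with the prefix. For a cut after $i\ge n-1$ edges, vertex $0$ has all its edges in the prefix, so it is not incident with the suffix. In both cases at most $n-1$ vertices lie in the middle set.

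\emph{Lower bound.} Fix an ordering $f_1<\cdots<f_m$ with $m=\binom n2\ge3$. Let $j$ be the first index at which some vertex $w$ has all $n-1$ of its incident edges among $f_1,\dots,f_j$; then $j\ge n-1$, and consider the cut after $j-1$ edges, which is proper because $j-1\ge n-2\ge1$. Every vertex $u\ne w$ is already incident with the prefix: its edge $uw$ lies in $f_1,\dots,f_j$, and if $uw=f_j$ itself then $u$ is an endpoint of $f_j$. The step needing care is to show each such $u$ is also incident with the suffix, i.e.\ with some edge among $f_j,\dots,f_m$. If $u$ is an endpoint of $f_j$, this holds trivially. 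Otherwise $f_j$ is incident with $w$ but not $u$, so $uw$ lies in the prefix. Also $u$ was not complete at step $j-1$ by the minimality of $j$, and $u$ cannot be complete at step $j$ without being an endpoint of $f_j$, so $u$ has an edge among $f_{j+1},\dots,f_m$. Hence $u$ is incident with both sides for every $u\ne w$, which already gives at least $n-1$ middle vertices. Indeed $w$ itself is also in the middle ($f_j$ is at $w$ and $w$ has degree $n-1\ge2$, hence an earlier edge), giving $n$, but $n-1$ suffices.

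I expect the only delicate point to be the bookkeeping at the first-completion cut, where the completing edge $f_j$ can supply the suffix-incidence for its endpoints. The case split on whether $u$ is an endpoint of $f_j$ handles this.
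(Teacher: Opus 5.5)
Your upper bound is correct and takes a simpler route than the paper. The paper uses the lexicographic order (by larger endpoint, then smaller) and tracks the boundary batch by batch. Your star-first order needs only one observation per cut: either the prefix meets at most $i+1\le n-1$ vertices, or vertex $0$ is already finished. The phrase ``some vertex is never in the middle'' is loose, since vertex $0$ is in the middle for the early cuts. The cut-by-cut check that follows is what actually proves the bound, and it is fine.

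The lower bound contains a false step. Write $f_j=wv_0$. For $u=v_0$, you justify incidence with the prefix by ``if $uw=f_j$ itself then $u$ is an endpoint of $f_j$.'' But $f_j$ lies in the suffix $f_j,\dots,f_m$, not in the prefix $f_1,\dots,f_{j-1}$. So this gives no prefix incidence, and $v_0$ need not have any earlier edge.

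Concretely, take $n=3$ with the order $ab,ac,bc$. Then $j=2$, $w=a$, $v_0=c$, and the cut after one edge has middle set $\{a,b\}$. The vertex $c\ne w$ is not in it, so ``every $u\ne w$ is in the middle'' fails. Your closing claim that the middle set has size $n$ is false as well.

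The repair is the observation you set aside as unnecessary. The $n-2$ vertices outside $\{w,v_0\}$ are in the middle by your case argument. The vertex $w$ is also in the middle, because it has $n-2\ge1$ edges in the prefix and $f_j$ in the suffix. Together these give exactly $n-1$. With that correction, your lower bound is the paper's first-completion argument, which counts $V\setminus\{w,v_0\}$ together with $w$ and deliberately leaves out $v_0$.
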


\begin{proof}
\emph{Lower} ($n\ge3$). Let $i^\ast$ be minimal such that some vertex
$w$ has all its edges among $f_1,\dots,f_{i^\ast}$, and let $v_0$ be
the other endpoint of $f_{i^\ast}$. At cut $i^\ast-1$ every
$v\notin\{w,v_0\}$ has its edge $vw$ on the left and, being
unfinished, an edge on the right. The vertex $w$ has $n-2\ge1$ edges on the left
and $f_{i^\ast}$ on the right. Hence the boundary has size $\ge n-1$.
For $n=2$ there is one edge and every edge-order boundary is empty.
\emph{Upper} ($n\ge3$). Regard the vertices as $\{0,\dots,n-1\}$ and order
all edges $uv$ ($u<v$) lexicographically by $v$, then by $u$. During a
nonfinal batch into $v<n-1$, before its first edge the boundary is contained
in $\{0,\dots,v-1\}$, while after its first edge it is exactly
$\{0,\dots,v\}$, of size $v+1\le n-1$. At the start of the final batch into
$n-1$, the boundary is $\{0,\dots,n-2\}$. After the first $k$ edges of that
batch, where $1\le k\le n-2$, it is $\{k,\dots,n-1\}$, again of size at most
$n-1$. Thus every proper edge-order boundary has size at most $n-1$.
\end{proof}

\begin{lemma}[linear-width versus pathwidth]\label{lem:lwpw}
$\pw(F)-1\le\lw(F)\le\pw(F)+1$ for every finite simple undirected graph $F$
with $E(F)\neq\emptyset$.
\end{lemma}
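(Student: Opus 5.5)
We prove the two inequalities separately. In each direction we convert a decomposition of one kind into the other and check the interval property of bags directly. Recall that $\pw(F)+1$ is the minimum over path decompositions $X_1,\ldots,X_r$ of $\max_j|X_j|$. We use the convention that every vertex of $F$ lies in some bag and that the bags containing a fixed vertex form a contiguous interval.

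\emph{Upper bound $\lw(F)\le\pw(F)+1$.} Fix a path decomposition $X_1,\ldots,X_r$ of width $p=\pw(F)$.

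1. Assign each edge $f=uv$ to the least index $\iota(f)$ with $u,v\in X_{\iota(f)}$.

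2. Order $E(F)$ by nondecreasing $\iota$, breaking ties arbitrarily, and consider a proper cut after $f_i$. Let $j=\iota(f_i)$. Every edge on the left has $\iota\le j$, and every edge on the right has $\iota\ge j$.

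3. Let $v$ be incident with a left edge $f$ and a right edge $f'$. Then $v\in X_{\iota(f)}$ and $v\in X_{\iota(f')}$ with $\iota(f)\le j\le\iota(f')$. By contiguity of the bags containing $v$, we get $v\in X_j$.

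4. Hence every boundary is contained in a single bag, so it has size at most $p+1$. This gives $\lw(F)\le\pw(F)+1$. When $F$ has one edge there is no proper cut, $\lw(F)=0$, and the bound is trivial.

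\emph{Lower bound $\pw(F)\le\lw(F)+1$.} Fix an edge order $f_1<\cdots<f_m$ whose proper boundaries $\partial_i$ ($1\le i<m$) all have size at most $w=\lw(F)$. Put $\partial_0=\varnothing$.

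1. Define the bags $Y_i=\partial_{i-1}\cup V(f_i)$ for $1\le i\le m$. Then $|Y_i|\le w+2$, and each edge $f_i$ lies in $Y_i$.

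2. For a non-isolated vertex $v$, let $a$ and $b$ be the first and last indices of edges incident with $v$. We claim that $v\in Y_i$ exactly for $a\le i\le b$.
 \begin{itemize}
 \item For $a<i\le b$ we have $a\le i-1<b$, so $v$ is incident with $f_a$ on the left of cut $i-1$ and with $f_b$ on its right. Hence $v\in\partial_{i-1}\subseteq Y_i$.
 \item At $i=a$, $v$ is an endpoint of $f_a$, so $v\in Y_a$.
 \item For $i<a$ or $i>b$, $v$ is neither an endpoint of $f_i$ nor in $\partial_{i-1}$, because it has no incident edge on one side of that cut.
 \end{itemize}
 Thus the bags containing $v$ form an interval.

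3. Append one singleton bag for each isolated vertex at the end of the path. This preserves every other property.

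4. The result is a path decomposition of width at most $w+1$, so $\pw(F)\le\lw(F)+1$. This also covers $m=1$, where the single bag has two vertices and $\lw(F)=0$.

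The only delicate step is the contiguity check in step 2 of the lower bound. There one must verify that $v\notin\partial_{i-1}$ outside $[a,b]$, and treat the endpoint indices $a$ and $b$ separately. The rest is bookkeeping of bag sizes. The stated consequence $\patw^{\mathrm{el}}(F)=2\pw(F)+\Theta(1)$ then follows by combining this lemma with Lemmas~\ref{lem:el} and~\ref{lem:elupper}.
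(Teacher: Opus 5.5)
Your proof is correct and follows the paper's argument. For $\pw\le\lw+1$ you use the same bags $\partial_{i-1}\cup V(f_i)$ together with singleton bags for isolated vertices. For $\lw\le\pw+1$ you order the edges by the first bag covering them and use the interval property to place each boundary inside a single bag; your explicit check that $v\in Y_i$ exactly when $a\le i\le b$ spells out what the paper asserts in one line.
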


\begin{proof}
($\pw\le\lw+1$.) From an optimal ordering set
$X_i=\partial_{i-1}\cup f_i$ (endpoints). Each edge $f_i\subseteq
X_i$, and each vertex appears in the consecutive bags between its first
and last edge. Also $|X_i|\le\lw+2$. Append one singleton bag for each
isolated vertex. These bags cover all remaining vertices and do not change the
width or any interval belonging to a nonisolated vertex.
($\lw\le\pw+1$.) From an optimal path decomposition
$(X_1,\dots,X_r)$ order the edges by the first bag containing them.
A vertex $u$ crossing an edge-order cut within the batch assigned to bag $i$ has an edge
first covered by a bag $\le i$ and one first covered by a bag
$\ge i$. Thus $u\in X_j\cap X_{j'}$ with $j\le i\le j'$, whence
$u\in X_i$ by the interval property. The boundaries are $\le|X_i|\le
\pw+1$.
\end{proof}

\begin{lemma}[layered refinement]\label{lem:refine}
For a finite simple undirected graph $F$, every linear expression $t$ with
$\val(t)\cong F$ satisfies
$\lw(F)\le 3\,\patw(t)$.
\end{lemma}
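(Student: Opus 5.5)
We refine the layered linear expression $t=(((B_1\circ B_2)\circ\cdots)\circ B_L)$ into an edge ordering and bound every proper edge-order boundary by $3\patw(t)$. Order the edges of $F$ by the layer $j$ that places them, breaking ties inside a block arbitrarily (say in the left-to-right order of the $a$-factors of $B_j$). Consider a proper cut of this order. It falls either exactly between layers, after all edges of layers $\le j$, or strictly inside a layer $j$, after some but not all of the $a$-factors of $B_j$. Let $\partial$ be the set of vertices incident with placed edges on both sides of the cut. The goal is $|\partial|\le c_{j-1}+2c_j$ or a similar combination, each term at most $\patw(t)$ by \eqref{eq:blocklin}.

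\emph{Cut between layers.} Suppose the cut lies after layer $j$ and $u\in\partial$. Then $u$ has an edge placed at a layer $\le j$, so $u$ is started at stage $j$. It also has an edge placed later, so $u$ is not complete at stage $j$. Lemma~\ref{lem:active} then gives $r_j(u)\ge1$. Summing over $u\in\partial$ yields $|\partial|\le c_j\le\patw(t)$.

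\emph{Cut inside layer $j$.} Take $u\in\partial$ and split into cases.
\begin{itemize}
\item If $u$ is started and incomplete at stage $j$, then $r_j(u)\ge1$, so $u$ is charged to cut $j$.
\item If $u$ is started and incomplete at stage $j-1$, then $r_{j-1}(u)\ge1$ by Lemma~\ref{lem:active}, so $u$ is charged to cut $j-1$.
\item Otherwise, $u$ is either not started at stage $j-1$, or complete at stage $j$ (it cannot be complete at stage $j-1$, since it still has an edge placed at layer $\ge j$).
\end{itemize}

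In the subcase where $u$ is not started at stage $j-1$ and not complete at $j$, the first bullet already applies. The remaining vertices are those complete at stage $j$ with at least one incident edge placed in layer $j$, lying on both sides of the intra-layer cut. Hence $u$ has at least two incident edges in layer $j$.

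For such a $u$, if $u$ was started at stage $j-1$ it is incomplete there, so it is charged to $c_{j-1}$. If $u$ was not started at stage $j-1$, every layer-$j$ edge at $u$ creates $u$ only through a fresh node: an $a$-head or an $\io01$ output. There are then at least two such fresh pieces inside $B_j$, which cannot be merged within the same block because blocks are $\bsum$-terms. So $u$ has at least two pieces at stage $j$ and is incomplete, contradicting the case assumption (this uses the argument of Lemma~\ref{lem:last}).

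Every $u\in\partial$ is thus charged either to a cut-$j$ reference or to a cut-$(j-1)$ reference. A reference can be charged by only one vertex, since the $\theta$-images differ. This gives $|\partial|\le c_{j-1}+c_j\le\patw(t)$, which is well within the factor $3$.

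\textbf{Main obstacle.} The delicate step is the intra-layer case for vertices complete at stage $j$ but unstarted at $j-1$. Two points need care. First, completeness in the paper's sense forbids multiple pieces. Second, pieces produced by distinct $\bsum$-factors of one block are genuinely distinct nodes at stage $j$: factors in a block do not identify their outputs, and a single $\io12$ factor cannot produce an edge. If that argument has a gap (for instance, an $a$ whose tail is fed by a copy within the same factor chain, which cannot happen in a flat $\bsum$-block), the fallback is a cruder count. Each layer-$j$ edge at such a $u$ consumes an input entry or produces an output entry of $B_j$. The at most $c_{j-1}+c_j$ ports of $B_j$ then bound these vertices, and together with the two charged classes this gives the stated $3\patw(t)$.
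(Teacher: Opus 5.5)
Your proof is correct, and it proves more than the lemma states: it gives $\lw(F)\le\patw(t)$ instead of $3\patw(t)$.

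\textbf{Comparison with the paper.} The paper also refines the layer order to an edge order, but it bounds an intra-layer cut in layer $j$ more crudely. Its boundary is contained in two sets:
\begin{itemize}
\item the vertices crossing the layer boundary $j-1$, at most $c_{j-1}$ of them by Lemma~\ref{lem:active};
\item all endpoints of the edges placed at layer $j$, at most $2c_{j-1}$ of them, since each $a$-factor consumes one cut-$(j-1)$ entry.
\end{itemize}
This gives $3c_{j-1}$.

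You instead charge each boundary vertex injectively to a live reference at cut $j-1$ or at cut $j$. The only vertices that Lemma~\ref{lem:active} does not immediately place at one of these cuts would be complete at stage $j$ and have no piece at stage $j-1$, so both of their cut-crossing edges lie in layer $j$. Your key observation is that such a vertex must be the head of those edges, since a tail would consume a cut-$(j-1)$ reference. It therefore has two distinct, unmerged fresh pieces at stage $j$, so it cannot be complete, and no such vertex exists.

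This yields $|\partial|\le c_{j-1}+c_j\le\patw(t)$ by \eqref{eq:blocklin}. Consequently the bound $\patwlin(F)\ge\frac13\lw(F)$ in Theorem~\ref{thm:linear-bounds} would improve to $\patwlin(F)\ge\lw(F)$. The paper's count is shorter and needs no piece-level reasoning about fresh heads. Its constant $3$ is harmless for the qualitative separation on trees where the lemma is used.

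\textbf{Small cleanups.}
\begin{itemize}
\item Negating your first two bullets forces $u$ to be complete at stage $j$ \emph{and} unstarted at stage $j-1$; your third bullet says ``either \ldots or'', which is not what the negation gives.
\item The claim that $u$ has two layer-$j$ edges holds only in that unstarted subcase, not for all remaining $u$.
\item The mention of $\io01$ outputs is irrelevant, because an edge at an unstarted $u$ in layer $j$ can only contribute an $a$-head piece.
\item The fallback paragraph is unnecessary, since the main argument has no gap.
\end{itemize}
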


\begin{proof}
Refine the layer order to an edge order, breaking layers arbitrarily.
A refined edge-order cut within layer $j$ has boundary contained in the vertices
crossing the preceding layer boundary $j-1$ (at most $c_{j-1}$, by
Lemma~\ref{lem:active}) together with the endpoints of the edges
placed at layer $j$ (at most $2c_{j-1}$, each $a$ having one input),
so every boundary is at most $3c_{j-1}\le3\patw(t)$.
\end{proof}

\begin{proof}[Proof of Theorem~\ref{thm:lwcalib}]
Combine Lemmas~\ref{lem:el}, \ref{lem:elupper}, \ref{lem:lwpw} and
\ref{lem:refine}.
\end{proof}

\begin{remark}\label{rem:swidlin}
The precise doubling statement proved by Theorem~\ref{thm:lwcalib} is
\[
 \patw^{\mathrm{el}}(F)=2\,\lw(F)+O(1),
\]
with additive error bounded by the displayed constants in that theorem. In
the lower-bound proof, every edge-order boundary is represented in the support
of the corresponding prefix. The upper construction keeps one live reference
per boundary vertex between gadgets. Thus the factor two is the two-sided rank
cost of elementary chains. It does not assert
$\patw(F)=2\patwstar(F)+O(1)$ for unrestricted expressions.
\end{remark}

\subsection{Separation of the fragments}

\begin{theorem}[separation]\label{thm:sep}
For the complete binary tree $T_d$ of height $d\ge1$,
$\patw(T_d)\le 8$, while
$\patwlin(T_d)\ge\frac13(\lfloor d/2\rfloor-1)\to\infty$. Hence
$\patwlin/\patw$ is unbounded on trees.
\end{theorem}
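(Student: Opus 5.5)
The statement has two halves, and I would treat them separately.

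\emph{Upper bound.} The aim is to show $\patw(T_d)\le 8$ through treewidth. The complete binary tree $T_d$ is a tree, so $\tw(T_d)\le1$. For $d\ge2$ its underlying simple graph has at least two edges. Proposition~\ref{supp:prop:upper} then gives $\patw(T_d)\le4(\tw(T_d)+1)=8$; that proposition has no two-edge hypothesis, only a supplied width-$1$ tree decomposition. I would build that decomposition by taking one bag per edge, $\{p,c\}$, and attaching the bag of each child edge to the bag of its parent edge. This is a valid tree decomposition of width $1$, since every vertex's bags form a star-shaped subtree. For $d=1$ the graph $T_1$ is a path on three vertices, which has width at most $2$ by the path remark after Definition~\ref{def:linear}. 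This case is also covered by the same proposition.

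\emph{Lower bound.} The plan is to chain pathwidth, linear width, and layered width. First, Lemma~\ref{lem:refine} gives $\lw(T_d)\le3\,\patw(t)$ for every linear expression $t$ with value in $\operatorname{Ori}(T_d)$, hence $\patwlin(T_d)\ge\frac13\lw(T_d)$. Second, Lemma~\ref{lem:lwpw} gives $\lw(T_d)\ge\pw(T_d)-1$. So it suffices to show $\pw(T_d)\ge\lfloor d/2\rfloor$.

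This is the classical fact that complete binary trees of height $d$ have pathwidth about $d/2$. I would prove it by the standard induction, in the form $\pw(T_{2m})\ge m$. The base case is $m=0$, where the claim is trivial; $m=1$ is also easy, since $T_2$ has an edge and so $\pw\ge1$.

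For the step, note that $T_{2m+2}$ contains three disjoint copies of $T_{2m}$: take the subtrees rooted at three grandchildren of the root, each pairwise connected through the top two levels. Then apply the Ellis--Sudborough--Turner / Scheffler lemma: if a tree has a vertex $v$ with three branches each of pathwidth at least $m$, then the tree has pathwidth at least $m+1$. I would prove this lemma directly. In a path decomposition of width $m$, each branch occupies an interval of bags. Among three branch intervals, one lies strictly between the other two in the sense of their leftmost or rightmost extents. The connecting paths from the outer branches through $v$ force a vertex of $v$'s side into every bag of the middle interval. Restricting the decomposition to the middle branch and deleting that vertex then yields width at most $m-1$ for it, a contradiction.

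For odd $d$, monotonicity under subgraphs gives $\pw(T_d)\ge\pw(T_{d-1})$. Combining everything, $\patwlin(T_d)\ge\frac13(\lfloor d/2\rfloor-1)$, and this tends to infinity while $\patw(T_d)\le8$, so the ratio is unbounded.

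The main obstacle I expect is the pathwidth lower bound, specifically stating and proving the three-branch lemma cleanly, including the height indexing convention. If a citation is acceptable, I would instead quote the known value $\pw(T_d)=\lceil d/2\rceil$ and keep only a sketch of the three-branch argument.
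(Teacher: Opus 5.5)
Your proposal is correct and follows the paper's proof essentially step for step. The upper bound is Proposition~\ref{supp:prop:upper} with $\tw(T_d)=1$, and the lower bound chains Lemmas~\ref{lem:refine} and~\ref{lem:lwpw} with the three-disjoint-subtrees interval argument for pathwidth. The paper proves $\pw(T_d)\ge\pw(T_{d-2})+1$ and obtains $\lceil d/2\rceil$, whereas your even-height induction with monotonicity gives exactly the $\lfloor d/2\rfloor$ that the final estimate uses.

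Two small repairs are needed, though neither affects the argument. In your explicit width-$1$ decomposition, attach one root-edge bag to the other, since otherwise the root's bags are disconnected. Also, the root of $T_{2m+2}$ has only two branches, so take $v$ to be the common parent of two of your three chosen grandchildren. Alternatively, as the paper does, state the lemma for three disjoint connected subgraphs, each pair joined by a path avoiding the third.
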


\begin{lemma}\label{lem:pwtree}
$\pw(T_d)\ge\lceil d/2\rceil$.
\end{lemma}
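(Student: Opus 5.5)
We prove $\pw(T_d)\ge\lceil d/2\rceil$ by induction on $d$, using the classical ``three branches'' argument for pathwidth of trees. Here $T_d$ is the complete binary tree of height $d$: a root $r$ with two children, each the root of a copy of $T_{d-1}$.

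\emph{Base cases.} For $d=1$ and $d=2$ the claim is $\pw(T_d)\ge1$. This holds because $T_d$ has an edge, so some bag contains both endpoints.

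\emph{Induction strategy.} For $d\ge3$ it suffices to show $\pw(T_d)\ge\pw(T_{d-2})+1$. Indeed, this gives $\pw(T_d)\ge\lceil (d-2)/2\rceil+1=\lceil d/2\rceil$.

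\emph{Locating three disjoint copies of $T_{d-2}$.} The root has four grandchildren $g_1,\dots,g_4$, each the root of a copy $S_i$ of $T_{d-2}$. I want three vertex-disjoint copies of $T_{d-2}$ that hang off a common connected subtree. Take the root $r$, its children $c_1,c_2$, and the subtrees $S_1,S_2$ (under $c_1$) and $S_3$ (under $c_2$). Let $H=\{r,c_1,c_2\}$. Then $S_1,S_2,S_3$ are pairwise disjoint, each is attached to $H$ by one edge, and $H$ is connected.

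\emph{The three-branch argument.} Fix an optimal path decomposition $(X_1,\dots,X_m)$ of $T_d$ of width $p$, and suppose $p\le\pw(T_{d-2})$. Restricting the decomposition to each $S_i$ gives a path decomposition of $S_i$. The bags meeting $S_i$ form an interval $I_i$ of indices, since $S_i$ is connected. Within $I_i$ there is an index $j_i$ at which the restriction to $S_i$ has a bag of size $\pw(T_{d-2})+1$; otherwise the restriction would have width below $\pw(T_{d-2})$.

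Order the three indices so that $j_1<j_2<j_3$ after relabelling. The connected set $H\cup S_1\cup S_3$ has bags forming an interval containing both $j_1$ and $j_3$, hence containing $j_2$. This requires care: the vertex of $H\cup S_1\cup S_3$ lying in $X_{j_2}$ might itself lie in $S_2$. That is impossible, because $S_2$ is disjoint from $H\cup S_1\cup S_3$.

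So $X_{j_2}$ contains a vertex outside $S_2$ together with $\pw(T_{d-2})+1$ vertices of $S_2$. Therefore $|X_{j_2}|\ge\pw(T_{d-2})+2$, and so $p\ge\pw(T_{d-2})+1$, as required.

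\emph{Main obstacle.} The main point to get right is that $H\cup S_1\cup S_3$ is connected. This holds because $S_1$ attaches to $c_1$, $S_3$ attaches to $c_2$, and $c_1,c_2$ are joined through $r$. With that in place, the interval property forces $X_{j_2}$ to meet this set. The rest is bookkeeping, and $\lceil\cdot\rceil$ handles both parities uniformly because each induction step adds $2$ to $d$ and $1$ to the bound.
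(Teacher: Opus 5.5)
Your proof is correct and is essentially the paper's argument. It uses the same ingredients: three pairwise disjoint grandchild copies of $T_{d-2}$ hanging off a connected hub, the interval property of connected subgraphs in a path decomposition, and the observation that the ``middle'' copy must share a bag with the connected union of the other two, which gives $\pw(T_d)\ge\pw(T_{d-2})+1$. The only difference is cosmetic: you order one maximum-size witness bag per copy, whereas the paper shows that one whole interval $I(H_c)$ lies in the hull of the other two and restricts every bag to $H_c$. Ties $j_a=j_b$ among your witness indices are harmless, since the same reasoning works with $j_1\le j_2\le j_3$.
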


\begin{proof}
We show $\pw(T_d)\ge\pw(T_{d-2})+1$ for $d\ge2$. Together with
$\pw(T_1)=1$ and the trivial bound $\pw(T_0)\ge0$, this proves the claim.
The tree $T_d$
contains three pairwise disjoint copies $H_1,H_2,H_3$ of $T_{d-2}$
(grandchild subtrees), and $T_d\setminus H_c$ is connected for each $c$.
Fix an optimal path decomposition. For a connected subgraph $C$, the bags meeting $C$ form
an interval $I(C)$. Among the three intervals $I(H_c)$ one is
contained in the hull of the union of the other two. If some interval
realizes both the leftmost left endpoint and the rightmost right
endpoint, either other interval is contained in it. Otherwise the
interval realizing neither extreme is contained in the hull of the two
realizing them. Say $I(H_2)\subseteq\mathrm{hull}(I(H_1)\cup I(H_3))$
and let $C=H_1\cup H_3\cup P$, where $P$ is a connecting path in
$T_d\setminus H_2$. Then $C$ is connected and
$I(C)\supseteq I(H_2)$. Hence every bag meeting $H_2$ contains a vertex
of $C$, disjoint from $H_2$. Restricting all bags to $V(H_2)$ yields a
path decomposition of $H_2$ of width $\le\pw(T_d)-1$.
\end{proof}

\begin{proof}[Proof of Theorem~\ref{thm:sep}]
$\patw(T_d)\le4(\tw(T_d)+1)=8$ by Proposition~\ref{supp:prop:upper}. For the
lower bound, Lemmas~\ref{lem:refine}, \ref{lem:lwpw} and
\ref{lem:pwtree} give
$\patwlin(T_d)\ge\frac13\lw(T_d)\ge\frac13(\pw(T_d)-1)\ge
\frac13(\lfloor d/2\rfloor-1)$.
\end{proof}

The proof uses only the qualitative fact that the optimal linear width is
unbounded, via pathwidth and linear-width, while the tree-shaped construction
stays at width $8$.

\section{Unrestricted clique bounds and certified values}\label{sec:cliques}

Proposition~\ref{supp:prop:finiteclosure} gives a finite decision procedure at each
fixed width. The next theorem uses finite inductive-invariant state sets as lower-bound
certificates and explicit raw expressions as upper-bound certificates. These
computations are validation of the syntax and of the closure construction.
\begin{theorem}[machine-verified small clique values]\label{thm:smallcliques}
The unrestricted pattern width of the five cliques $K_3$ through $K_7$ is
\[
\begin{aligned}
 \patw(K_3)&=4, &
 \patw(K_4)&=5, &
 \patw(K_5)&=6,\\
 \patw(K_6)&=6, &
 \patw(K_7)&=6.
\end{aligned}
\]
All five equalities are machine-verified in reproducibility package version
1.1.0. The lower certificates use
\path{pattern-width-independent-certificates-v2} format and
\path{code/check_certificate.py}. The upper witnesses use
\path{pattern-width-upper-witnesses-v1} format and
\path{code/verify_certificates.py}. The version 1.1.0 manifests pin both
checker files by SHA-256.
\end{theorem}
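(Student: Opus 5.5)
Each of the five equalities splits into an upper bound and a lower bound. I would handle them with the two certificate mechanisms already set up, Proposition~\ref{prop:finiteclosure} and the orbit-union invariant described right after it.

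\textbf{Upper bounds.} For each $n\in\{3,\dots,7\}$ I would exhibit one explicit closed raw expression $t_n$ and verify two facts directly. First, $\val(t_n)\in\operatorname{Ori}(K_n)$: structural evaluation of $t_n$ is a finite, polynomial computation. Second, $\patw(t_n)$ equals the claimed value $4,5,6,6,6$: one scans all subterm occurrences and takes the maximum rank sum. For $n\ge5$ the lane-spiral family of Theorem~\ref{thm:clique-values} already gives width $n+1$, which suffices only for $K_5$. So for $K_6$ and $K_7$ I need genuinely better witnesses of width $6$. I would find these by running the saturation of Proposition~\ref{prop:finiteclosure} at $W=6$ and reading a witness off the construction tree that reaches the accepting state. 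For $K_3$ and $K_4$, small hand-built copy--merge terms of width $4$ and $5$ should suffice.

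\textbf{Lower bounds.} For each $n$ with claimed value $w_n$, I would show $\patw(K_n)>w_n-1$ by saturation at $W=w_n-1$ in undirected mode, seeding both orientations of every edge. By Proposition~\ref{prop:finiteclosure}, $\patw(K_n)\le W$ holds exactly when the accepting state is reached, so it suffices to show that the reachable set avoids the accepting orbit. I would not trust the search engine for this. Instead I would produce a stored set $R$ of canonical representatives up to the $S_n$-symmetry (together with renaming of boundary labels) and check three properties of the orbit union of $R$: it contains every seed, it is closed under the sum and composition transitions with their guards, and it omits the accepting orbit. Any set with these properties contains the reachable set, so together these give the lower bound. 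This is the soundness statement of Proposition~\ref{supp:prop:certificate-soundness}, which I may assume. What remains is to run an independent checker that verifies these representative-level conditions. Closure has to be checked for pairs of representatives, up to orbit, with canonicalization of every successor.

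\textbf{Main obstacle.} The difficult step is the lower certificate for $K_6$ and $K_7$ at $W=5$. The state space there, consisting of subsets $P\subseteq E$, finalized sets $C$, and labeled boundary words of length at most $5$, is large. Composition closure requires pairwise checks, so naive closure is expensive. I would first reduce the space by the vertex symmetry and by restricting to reduced primitive leaves, as described before Lemma~\ref{lem:internal}. Then I would store only canonical forms and verify closure by canonicalizing every successor, which keeps the check within reach. The plateau $6,6,6$ also needs a sanity check: it is consistent with $\bw(K_n)=\lceil 2n/3\rceil\le5$ (Lemma~\ref{lem:bwkn}), so no contradiction with Corollary~\ref{cor:calib} arises. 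Finally, the last step is to record the SHA-256 pins so that the cited checkers are exactly the ones replayed.
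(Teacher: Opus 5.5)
Your plan matches the paper's proof. The upper bounds come from explicit raw witnesses checked for value and width; using the lane spiral for $K_5$ is legitimate, since it has width $n+1=6$. The lower bounds come from finite orbit-union invariants at $W=w_n-1$ in underlying-simple mode, checked for seed inclusion, orbit-wise closure and absence of the accepting orbit, and then fed into Proposition~\ref{supp:prop:certificate-soundness}. The paper's quotient group also includes begin/end duality, $\Gamma_r=S_n\times C_2\times S_r$, which is why its checker tests both $x\circ t$ and $t\circ x$; omitting duality as you do is still sound and only makes the certificates larger.
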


\begin{proof}
The upper bounds are explicit raw expressions. A width-\(4\) expression for
\(K_3\) is
\[
 \io01\circ\Bigl(\bigl((\io12\circ(a\bsum(a\circ a)))\circ\io21\bigr)
 \circ\io10\Bigr).
\]
The width-\(5\) and width-\(6\) expressions for \(K_4\) and \(K_5\) are
listed in \texttt{SMALL\_CLIQUE\_WITNESSES.tex}. The width-\(6\) witnesses for
\(K_6\) and \(K_7\) are stored in
\texttt{code/k6\_exact\_engine\_witness.txt} and
\texttt{code/k7\_exact\_engine\_witness.txt}. The companion evaluators check
their rank, pattern width, vertex count, and edge set. In particular, the
\(K_7\) term has rank \((0,0)\), width \(6\), seven vertices, and all \(21\)
unordered edges exactly once.

For each lower bound, the supplied file contains a finite set $R$ of canonical
states such that its orbit union contains every seed, is closed under every
transition within the rank cap, and omits the accepting orbit. All five files
are in \path{certificates/independent/}.
\begin{table}[ht]
\caption{Independent inductive-invariant certificates for clique lower bounds.}
\label{tab:certificates}
\centering
\small
\begin{tabular}{@{}ccl@{}}
\hline
claim & invariant states & certificate file\\
\hline
\(\patw(K_3)>3\) & 43 & \texttt{patw\_K3\_W3.json}\\
\(\patw(K_4)>4\) & 371 & \texttt{patw\_K4\_W4.json}\\
\(\patw(K_5)>5\) & 6117 & \texttt{patw\_K5\_W5.json}\\
\(\patw(K_6)>5\) & 5457 & \texttt{patw\_K6\_W5.json}\\
\(\patw(K_7)>5\) & 5435 & \texttt{patw\_K7\_W5.json}\\
\hline
\end{tabular}
\end{table}

\noindent The standalone checker \path{code/check_certificate.py} does not
import the search engine.
It independently checks that the exported states are canonical and verifies
well-formedness and the width cap, distinctness of their symmetry orbits,
inclusion of the canonical representative of every generator state, closure
modulo symmetry under sum and composition, and absence of the accepting
orbit. Seed inclusion and closure
imply that the union of the orbits represented by $R$ contains the full
reachable state set. Proposition~\ref{supp:prop:finiteclosure} therefore proves the
stated strict lower bounds.

The quotient used for the clique certificates is sound. Vertex permutations
preserve the target and the transition system. Begin/end duality exchanges
\(\io01\) with \(\io10\) and \(\io12\) with \(\io21\), fixes the remaining
discrete structure, reverses composition order, and reverses only the
orientation forgotten by the main text's underlying-simple convention.
As a positive control for acceptance detection, the supplied $K_4,W=5$
control file already represents the accepting orbit. The independent checker
detects this pre-existing orbit and rejects the file as a lower-bound
certificate with \texttt{FAIL (A)}. As an integrity check, deleting a required
orbit representative breaks closure.
The matching
upper and lower bounds prove all five equalities.
\end{proof}

\begin{theorem}[clique upper bound]\label{thm:cliqueupper}
$\patw(K_n)\le n+1$ for all $n\ge 3$.
\end{theorem}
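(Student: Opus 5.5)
The plan is an explicit recursive ``lane-spiral'' construction followed by a rank audit, in the same spirit as Proposition~\ref{supp:prop:upper} but exploiting copy and merge more aggressively. The cases $3\le n\le7$ are covered by the witnesses of Theorem~\ref{thm:smallcliques}, which have width at most $n+1$; in particular $\patw(K_3)=4$. So it suffices to give a uniform construction for $n\ge5$ and check it against those witnesses where they overlap.

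The construction keeps the interface tree-like instead of carrying all $n$ vertices on one bank. Number the vertices $0,\ldots,n-1$ and orient $i\to j$ for $i<j$. I would define auxiliary raw terms $\Lambda_j$ whose input ports carry references to vertices already born and whose output ports carry references to vertices still needing edges, with the invariant
\[
 \text{in}(\Lambda_j)+\text{out}(\Lambda_j)\le n+1 .
\]
A new vertex is never given a wire that runs through the whole remaining expression. Instead, $\delta_j$ copies each live wire once; the copies feed a local comb $\mu$ that attaches the fresh head of an $a$-letter; and the edge into the new vertex is created on the wire of its partner.

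The resulting term is a spiral. The rank profile alternates between ``many inputs, few outputs'' and ``few inputs, many outputs'', so the composition subterms $\Lambda_j\circ\Lambda_{j+1}$ have rank sum bounded by the shared middle plus the two outer sides. The bending identities of Lemma~\ref{lem:bending}, in the form cup $=\io01\circ\io12$ and cap $=\io21\circ\io10$, are used to turn a port between the begin and end side when the profile would otherwise overflow. The closed term for $K_n$ is the composite of these lane pieces with the initial $\io01$ and final $\io10$ closures, parenthesized by a fixed plane tree. The sum nodes of that tree are what reset the rank: they combine pieces whose ports refer to disjoint or already merged vertex sets. In this respect the construction mirrors the three-class branchwidth split of Lemma~\ref{supp:lem:bwkn}.

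The verification has three steps, carried out in order.
\begin{enumerate}
\item Prove by induction on the recursion that $\val$ of the final term lies in $\operatorname{Ori}(K_n)$. This means every pair $\{i,j\}$ receives exactly one $a$-occurrence, no vertex is split, and no spurious identification occurs. Lemma~\ref{lem:internal} gives this once each $\mu$ is checked to merge only references to the same vertex.
\item Audit every subterm occurrence: leaves, padded transpositions, copy/merge blocks, sums, and all composition prefixes. The claim to establish is that each has rank sum at most $n+1$.
\item Treat separately the small boundary cases of the recursion that the generic step does not cover, checking them against the computed witnesses for $5\le n\le24$.
\end{enumerate}

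I expect the main obstacle to be the composition prefixes in the audit of step~2, not the blocks. Inside one lane, a padded copy temporarily holds one duplicate reference, and a fresh $a$-head holds another before its merge. Both must fit in the single unit of slack above $n$. This is exactly where the layered construction of Theorem~\ref{thm:clique} pays $2n$, since there the whole bank is both input and output of one block. My first attempt at this step would be to schedule each lane so that a vertex is retired by $\io10$ immediately after its last incident edge, before the next copy is made. I would then show that the prefix rank $(a,b)$ satisfies $a+b\le(\text{number of live vertices})+1$. The quantity to control is the number of simultaneously live vertices counted over both sides of the cut, with the $+1$ accounting for the one outstanding duplicate.
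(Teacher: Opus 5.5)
There is a genuine gap. No term is actually specified, and the mechanisms you do name would not reach $n+1$.

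\textbf{What fails in the plan as written.} The invariant $\mathrm{in}(\Lambda_j)+\mathrm{out}(\Lambda_j)\le n+1$ is just the theorem restated for subterms. As you describe $\Lambda_j$, it is essentially the live-wire schedule from the proof of Theorem~\ref{thm:clique}: born vertices on the input side, vertices still needing edges on the output side, copying by $\delta$, merging heads by $\mu$, retiring by $\io10$ after the last edge. In that schedule a born vertex that still needs an edge lies on both sides. Prompt retirement is already part of it, and it still costs $2n$.

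The inequality you plan to prove in step~2, $a+b\le(\text{live vertices})+1$, fails for any padded block $e_p\bsum g\bsum e_q$ carrying more than about half of the live references. Every wire passing through such a block is counted once on each side. The inequality can only hold if each block carries about $n/2$ references while the rest are held elsewhere in the parse tree, and nothing in your plan arranges that.

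Two heuristics also point the wrong way. A sum node cannot reset rank: its rank sum is the sum of its children's, and only composition hides an interface. Bending by cup and cap moves a port to the other side of a subterm without lowering its rank sum. Finally, checking leftover cases against the archived terms for $5\le n\le24$ says nothing about larger $n$. Those terms are the paper's lane spirals, not instances of your $\Lambda_j$.

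\textbf{The missing idea.} You need a nesting that parks references outside the deeper part of the term and splits the remaining live references into two disjoint halves. The paper creates $v_0,\ldots,v_m$, with $m=n-1$, along a spine. For $j\ge1$, each $v_j$ is the fresh head of an $a$ applied to the active wire of $v_{j-1}$. It then defines $R_j=A_j\circ(R_{j+1}\bsum e_1)\circ B_j$. The lone $e_1$ parks a spare copy of $v_j$. Consequently $R_{j+1}:d_{j+1}\to u_j$ receives only the descending riders and the active wire, and returns only up-lanes of deeper vertices, whose vertices are disjoint from those of the inputs.

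A chord $\{i,j\}$ is declared descending exactly when $i\le h$ and $i+j\le 2h+2$, with $h=\lfloor(m-3)/2\rfloor$. This gives $d_l\le h+2$ and $u_l\le m-h-2$. Inbound padded blocks therefore have rank sum at most $2h+5$ and outbound ones at most $2(m-h-1)$, both at most $m+2$. The parking sum node has rank sum $2+d_{j+1}+u_j\le m+2=n+1$, with equality at $j\in\{h,h+1\}$; this node is exactly where the two halves meet. The gadgets $g_2$, $g_\times$, and $g_3$, whose proper subterms have rank sum at most $6$, are where $n\ge5$ is used. The cases $n=3,4$ come from the certified witnesses.
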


\begin{proof}
For $n\in\{3,4\}$ use Theorem~\ref{thm:smallcliques}. For $n\ge 5$ we
exhibit a uniform family of expressions called the \emph{lane spiral}. It
packages edge placement into bounded composite blocks and distributes the
remaining live references between descending and ascending lanes.

\emph{Gadgets.} The construction uses three kinds of composite block. Under
the gadget-bundling convention, padding does not enter their interiors. The
first is the edge gadget
$g_2=(\io12\bsum e)\circ(e\bsum a\bsum e)\circ(e\bsum\io21)$ of rank
$(2,2)$. It places an edge between the classes of two adjacent live
wires and preserves both. The second is the closing edge gadget
$g_\times=(a\bsum e)\circ\io21$ of rank $(2,1)$, which places the edge
and consumes its first wire by merging the head into the second. The third
kind consists of the copy gadgets $\io12$ of rank $(1,2)$ and
$g_3=\io12\circ(e\bsum\io12)$ of rank $(1,3)$.
All their proper subterms
have rank sum at most $6\le n+1$.

\emph{Plan.} Put $m=n-1$ and $h=\lfloor(m-3)/2\rfloor\ge 0$. The
vertices $v_0,\dots,v_m$ are created along a spine: $v_0$ by $\io01$,
and $v_j$ ($j\ge1$) by an $a$ applied to the live wire of $v_{j-1}$,
which simultaneously places the path edge $\{v_{j-1},v_j\}$. Call the
remaining pairs $\{i,j\}$, $j\ge i+2$, \emph{chords}, and schedule each
chord exactly once:
\[
\{i,j\}\ \text{is \emph{descending} if } i\le h \text{ and } i+j\le
2h+2,\qquad\text{\emph{ascending} otherwise.}
\]
Thus $v_i$ with $i\le h$ owns the descending targets
$j\in[i+2,\,2h+2-i]$ (a nonempty, nested family of intervals), and the
ascending sources of $v_j$ are
$\sigma(j)=\{i\le j-2: i>h \text{ or } i+j>2h+2\}$. This set is nonempty exactly for
$j\ge h+3$, with minimum $S(j)=\max(0,\,\min(h,2h+2-j)+1)$.

To make the recursion below a literal raw term, every recursive call carries
its current ordered wire layout. A two-input block $g:r\to s$ has an ordered
pair $(x,y)$ of selected wires. Its input router is the fixed word that moves
$x$, one adjacent transposition at a time, until it is immediately left of
$y$, while preserving the relative order of every other wire. The padded copy
of $g$ is applied there, and its declared output word replaces $(x,y)$. The
resulting word is the successor layout. No implicit inverse router or sorting
step is applied. A one-input block similarly replaces its selected wire at its
current position. This padding is a raw term over $\pi,e_1$, and the generators
in $g$.

Let $\mathbf q_l$ be the rider subword entering level $l\ge1$. Its underlying
set is
\[
 \{v_i:0\le i\le h,\ i<l\le2h+2-i\},
\]
and the active wire $v_{l-1}$ follows $\mathbf q_l$. The order of
$\mathbf q_l$ is inherited recursively and need not be increasing. During
$A_j$ the active wire remains rightmost. A rider consumed by $g_\times$ is
deleted, while a rider preserved by $g_2$ remains immediately before the
active wire. Consequently the final copy acts on the rightmost wire. Its
declared outputs already form the successor suffix, so its input and
post-output permutation words are empty.

Put $d_0=0$, $u_{-1}=0$, and, for the indicated ranges, set
\begin{align}
d_l&=1+\max\bigl(0,1+\min(h,l-1,2h+2-l)\bigr)
       &&(1\le l\le m),\label{eq:lane-d}\\
u_l&=m-\max(l,h+2,2h+2-l)
       &&(0\le l\le m-1).\label{eq:lane-u}
\end{align}
Here $d_l$ is the number of wires entering level $l$, and $u_l$ is the
number of returning up-lanes just before the outbound work at level $l$.

\noindent\emph{Typed lane recursion.}
For $0\le j<m$ there are explicitly padded raw terms
\[
 A_j:d_j\longrightarrow d_{j+1}+1,
 \qquad
 B_j:u_j+1\longrightarrow u_{j-1},
\]
and there is a raw term $A_m:d_m\to u_{m-1}$. Defining
\[
 R_m=A_m,
 \qquad
 R_j=A_j\circ(R_{j+1}\bsum e_1)\circ B_j
 \quad(0\le j<m)
\]
gives $R_j:d_j\to u_{j-1}$. In particular, $R_0:0\to0$.

Indeed, the term $A_j$ is the following left-associated list of padded blocks.
First
create $v_0$ by $\io01$ when $j=0$, or create $v_j$ from the active wire of
$v_{j-1}$ by $a$ when $j>0$. Next process the indices
\[
 D_j=\{i:0\le i\le h,\ i+2\le j\le2h+2-i\}.
\]
The index satisfying $j=2h+2-i$, if present, is processed first with
$g_\times$ and its rider is retired. Process every remaining index of $D_j$
in increasing order with $g_2$. If $j\le h$, finish with $g_3$, whose outputs
are the new rider, the active wire, and the spare. If $h<j<m$, finish with
$\io12$, whose outputs are the active wire and the spare. At $j=m$ make no
copy. If $\sigma(m)$ is nonempty, the last active wire is relabeled as the
$v_m$ up-lane. Otherwise it is closed by $\io10$. This completely defines
$A_j$ as a raw term.

For $0\le j<m$, an equivalent explicit description of the rider order is as
follows. Put $C_j=D_j\cap\{2h+2-j\}$. Starting with $\mathbf q_j$, remove the
riders indexed by $D_j$, append the riders indexed by $D_j\setminus C_j$ in
increasing order, and append $v_j$ exactly when $j\le h$. The resulting word
is $\mathbf q_{j+1}$. This recurrence starts with $\mathbf q_0$ empty.

The underlying set of $\mathbf q_l$ is exactly the stated rider set. Adding
the rightmost active wire gives \eqref{eq:lane-d}. The preceding recurrence
deletes precisely the retiring rider, retains every continuing rider, and adds
$v_j$ precisely when it has a descending target. For $0\le j<m$, after the
final copy, removal of the parked spare therefore leaves the required
descending layout at level $j+1$. At the base level the remaining wires are
exactly the $u_{m-1}$
returning lanes. This proves the types of the $A_j$ and the rightmost-active
layout invariant.

For $j<m$, define $B_j$ by processing in increasing order every
$j'\ge j+2$ for which $j\in\sigma(j')$. Apply a padded $g_2$ to the spare of
$v_j$ and the $v_{j'}$ up-lane. If $j=S(j')$, immediately close that up-lane
by a padded $\io10$. After all such chords, relabel the spare as the $v_j$
up-lane when $\sigma(j)$ is nonempty, and close it by $\io10$ otherwise.
The up-lanes present are precisely the $v_{j'}$ with
$S(j')\le j<j'$. Their number is \eqref{eq:lane-u}. Hence
$B_j:u_j+1\to u_{j-1}$. The middle term has type
$R_{j+1}\bsum e_1:d_{j+1}+1\to u_j+1$, so induction gives the displayed
type of $R_j$.

For the illustrative case $K_7$ one has $m=6$ and $h=1$. The path edges are
placed on the spine in the order
$01,12,23,34,45,56$. The descending chords are
$02,03,04,13$, and the ascending chords are
$05,06,14,15,16,24,25,26,35,36,46$.
This uniform lane-spiral realization has width $8$. It is distinct from the
optimized width-$6$ witness certified in Theorem~\ref{thm:smallcliques}.
The central part of the schedule is visible in the following width audit:
\begin{table}[ht]
\caption{Width audit for the central levels of the $K_7$ lane spiral.}
\label{tab:k7-width-audit}
\centering
\small
\begin{tabular}{@{}ccccc@{}}
\hline
 $j$ & $d_{j+1}$ & $u_j$ &
 $2+d_{j+1}+u_j$ & role\\
\hline
0&2&2&6&entry\\
1&3&3&8&central level\\
2&3&3&8&central level\\
3&2&3&7&exit from center\\
4&1&2&5&tail\\
5&1&1&4&tail\\
\hline
\end{tabular}
\end{table}
The general formulas below are the same calculation with $m$ and $h$ left
symbolic. The maximum is attained at the central levels.

\emph{The expression.} Use the raw terms in the typed recursion above. The
single $e_1$ factor in
$R_{j+1}\bsum e_1$ carries the spare wire of $v_j$ while the remaining
wires traverse the deeper region. The preceding type calculation proves
directly that the recursively defined expression $R_0$ is closed.

\emph{Value.} No two vertex classes are ever merged: every $\io21$
occurs inside $g_2$ or $g_\times$, merging a freshly created edge head
into an existing wire. Each path edge is placed once at a birth, each
chord once by its scheduled gadget ($g_2$ or $g_\times$ for descending
chords during $A_j$, $g_2$ for ascending chords during $B_i$), and every
wire is eventually consumed ($g_\times$, the final merges, or $\io10$)
only after its duties are complete. Hence $\val(R_0)$ has vertex set
$\{v_0,\dots,v_m\}$ and contains each of the $\binom{n}{2}$ edges exactly once:
$\val(R_0)\cong K_n$.

\emph{Width.} The counts $d_l$ and $u_l$ in
\eqref{eq:lane-d} and \eqref{eq:lane-u} are unimodal. Their maxima are
$h+2$ and $m-h-2$, attained on $l\in\{h+1,h+2\}$ and
$l\in\{h,h+1\}$ respectively. Every subterm of
$R_0$ now falls into one of five classes. (a) Proper subterms of the
gadgets: rank sum $\le 6\le n+1$. (b) Padded blocks
$e_p\bsum g\bsum e_q$ and the intermediate nodes of their
$e$-wrappers: a block applied at layout width $W$ has rank sum
$2W+\delta$ with $\delta=-1,0,+1,+2$ for
$g\in\{g_\times,\io10\},\{a,\pi,g_2\},\io12,g_3$ respectively. On the
nonexpanding blocks, every stable input router acts on at most $W$ wires, and
no post-output router is inserted. On the expanding final copies, both routing
words are empty because the active wire is rightmost and the declared outputs
already form the successor suffix. Thus all permutation subterms satisfy the
same bounds.
On the
inbound side $W\le d_l\le h+2$. The triple copy $g_3$ occurs only at
levels $l\le h$ where $W=l+1\le h+1$. The double copy $\io12$ at level
$l\ge h+1$ has $W\le h+2$ with a rider retirement first whenever
$h+2\le l\le 2h+2$. Thus the copy rank sums are at most
$\max(2h+4,\,2h+5)=2h+5$. The remaining inbound blocks are at most
$2(h+2)=2h+4$. On the outbound side $W\le u_l+1\le m-h-1$, giving rank
sums at most $2(m-h-1)$.

The two required arithmetic bounds hold for both parities. If $m=2r$, then
$h=r-2$ and
\[
2h+5=m+1,\qquad 2(m-h-1)=m+2.
\]
If $m=2r+1$, then $h=r-1$ and
\[
2h+5=m+2,\qquad 2(m-h-1)=m+1.
\]
Thus every padded block has rank sum at most $m+2=n+1$.

(c) The parking sum node at level $j$ has rank sum
$2+d_{j+1}+u_j$. To audit it without a case assertion, put
\[
 r_j=\min(h,j,2h+1-j),
 \qquad M_j=\max(j,h+2,2h+2-j).
\]
If $r_j<0$, then $d_{j+1}=1$ and $M_j\ge2$. If $r_j\ge0$, then
$M_j\ge h+2\ge r_j+2$. In either case
\[
d_{j+1}+u_j
=1+\max(0,1+r_j)+m-M_j\le m.
\]
Equality holds exactly for $j\in\{h,h+1\}$. Hence the parking node has
rank sum at most $m+2$.

(d) A composition prefix in $A_j$ has rank sum at most the inbound block
maximum just computed. A prefix in $B_j$ has rank sum at most
$2(u_j+1)\le2(m-h-1)$. At a level boundary,
$d_j+u_j+1\le(h+2)+(m-h-2)+1=m+1$.
(e) The same calculation as in (c), with the indices shifted by one, gives
$d_j+u_{j-1}\le m$ for every region term $R_j$, with the conventions
$d_0=u_{-1}=0$. Hence $\patw(R_0)\le n+1$, and the parking nodes at
$j=h,h+1$ attain this value.

The construction is implemented in
\texttt{code/general\_construction\_v2.py}. The resulting lane-spiral
expressions for $5\le n\le24$ each have width exactly $n+1$ and directed
value in $\operatorname{Ori}(K_n)$. Term strings for $n\le16$ are archived in
\texttt{code/lane\_spiral\_witnesses.txt}.
\end{proof}

\Needspace{6\baselineskip}
\begin{corollary}[support-width doubling fails]\label{cor:nodoubling}
An unrestricted analogue of the elementary-chain doubling estimate is false.
Lemmas~\ref{lem:swidlower} and \ref{supp:lem:bwkn} give
$\patwstar(K_n)\ge\bw(K_n)=\lceil 2n/3\rceil$, whereas
Theorem~\ref{thm:cliqueupper} gives $\patw(K_n)\le n+1$. Hence
\[
2\,\patwstar(K_n)-\patw(K_n)
\ge 2\lceil 2n/3\rceil-(n+1)\to\infty .
\]
Thus doubling is a genuinely \emph{linear} phenomenon
(Theorem~\ref{thm:lwcalib}): non-linear parse trees pay for exposed support
vertices only once.
\end{corollary}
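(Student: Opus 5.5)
The corollary is assembled from three earlier results, so the plan is to chain them in order and then check the arithmetic.

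First, the lower bound on support width. The hypothesis of Lemma~\ref{lem:swidlower} requires the underlying simple graph to have at least two edges. For $n\ge3$ the clique $K_n$ has $\binom n2\ge3$ edges, so the lemma gives $\patwstar(K_n)\ge\max\{\bw(K_n),2\}\ge\bw(K_n)$. Lemma~\ref{supp:lem:bwkn} then identifies $\bw(K_n)=\lceil 2n/3\rceil$ for $n\ge3$.

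Second, the upper bound. Theorem~\ref{thm:cliqueupper} gives $\patw(K_n)\le n+1$ for $n\ge3$. Combining the two bounds yields
\[
2\,\patwstar(K_n)-\patw(K_n)\ge 2\lceil 2n/3\rceil-(n+1)\ge \tfrac{4n}{3}-n-1=\tfrac n3-1,
\]
which tends to infinity.

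Third, I would make the interpretive claim precise. It contrasts with Theorem~\ref{thm:lwcalib}, Lemma~\ref{lem:el} and Remark~\ref{rem:swidlin}. In the elementary linear fragment, $\patw^{\mathrm{el}}=2\lw+O(1)$, and every edge-order boundary vertex is charged twice: once on each side of a block. A putative unrestricted analogue would be $\patw(F)\ge2\patwstar(F)-O(1)$, uniformly in $F$. The display shows that no additive constant works, because the difference is at least $n/3-1$.

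There is no real obstacle here. The only points needing care are to restrict to $n\ge3$, so that both the two-edge hypothesis and the clique formulas apply, and to state explicitly which doubling inequality is being refuted.
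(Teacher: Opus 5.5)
Your proposal is correct and follows the paper's argument exactly. You combine Lemma~\ref{lem:swidlower} (applicable because $K_n$ has $\binom n2\ge3$ edges for $n\ge3$), Lemma~\ref{supp:lem:bwkn}, and Theorem~\ref{thm:cliqueupper} to get $2\patwstar(K_n)-\patw(K_n)\ge2\lceil2n/3\rceil-(n+1)\ge n/3-1$, and your reading of the refuted estimate as $\patw(F)\ge2\patwstar(F)-O(1)$ is a faithful precise version of the paper's informal phrasing.
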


\begin{corollary}[infinite linear separation on cliques]
\label{cor:cliquesep}
$\patwlin(K_n)-\patw(K_n)\ge (2n-3)-(n+1)=n-4\to\infty$ by
Theorem~\ref{thm:clique} and Theorem~\ref{thm:cliqueupper}. Together
with Theorem~\ref{thm:sep} (complete binary trees), the linear and
general fragments are infinitely separated already on the two extreme
families: bounded-degree trees and cliques.
\end{corollary}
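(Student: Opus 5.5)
The statement to prove is Corollary~\ref{cor:cliquesep}: $\patwlin(K_n)-\patw(K_n)\ge n-4$, and together with the binary-tree result this separates the linear and general fragments on both extreme families. The plan is to subtract the two clique bounds and then cite the tree separation.

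For the clique inequality, I would fix $n\ge3$.
\begin{itemize}
\item The lower bound $\patwlin(K_n)\ge 2n-3$ is Theorem~\ref{thm:clique}. It holds for every layered linear expression whose value lies in $\operatorname{Ori}(K_n)$, so it bounds the minimum defining $\patwlin(K_n)$.
\item The upper bound $\patw(K_n)\le n+1$ is Theorem~\ref{thm:cliqueupper}. That theorem covers $n\ge5$ by the lane spiral and $n\in\{3,4\}$ by Theorem~\ref{thm:smallcliques}.
\end{itemize}
Subtracting gives $\patwlin(K_n)-\patw(K_n)\ge(2n-3)-(n+1)=n-4$.

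Before subtracting, I must confirm that both quantities minimize over the same set $\operatorname{Ori}(K_n)$, as Convention~\ref{conv:enc} requires. Otherwise the subtraction would compare different minima. Both theorems use that convention, so this holds.

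For $n\le 4$ the right-hand side $n-4$ is nonpositive. The inequality then follows from $\patw\le\patwlin$, because layered linear expressions are closed raw expressions. Nothing more is needed in this range, and the divergence $n-4\to\infty$ is immediate.

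For the final sentence, I would cite Theorem~\ref{thm:sep}. It gives $\patw(T_d)\le 8$ while $\patwlin(T_d)\to\infty$, so the ratio $\patwlin/\patw$ is unbounded on complete binary trees. These trees have bounded degree, and the clique computation above covers the dense extreme. The only step needing care is the convention check already made, that both minima range over $\operatorname{Ori}(K_n)$; I expect no real obstacle.
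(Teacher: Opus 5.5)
Your proposal is correct and follows the paper's argument: subtract the upper bound $\patw(K_n)\le n+1$ of Theorem~\ref{thm:cliqueupper} from the lower bound $\patwlin(K_n)\ge 2n-3$ of Theorem~\ref{thm:clique}, then cite Theorem~\ref{thm:sep} for complete binary trees. Your extra checks are sound details that the paper leaves implicit: both minima range over $\operatorname{Ori}(K_n)$, and $\patw\le\patwlin$ covers the small $n$ where Theorem~\ref{thm:cliqueupper} does not apply.
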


The certified equality \(\patw(K_7)=6\) is the sharpest finite validation in
the present range. It shows that neither the candidate law
\(\patw(K_n)=n+1\) nor the candidate offset
\(\patw(K_n)=\bw(K_n)+2\) survives all small cliques. Determining the
asymptotic clique value remains open.

\section{Explicit local normal-form schemes}
\label{sec:normal-details}\label{supp:sec:local-schemes}

This section develops the complete local constructions underlying the
bounded-normal-form theorem in Section~\ref{sec:normal}. It supplies raw terms
for the six local window kinds and proves their validity, finite coverage,
arity bounds, termination, exact routing endpoints, and quadrant degeneracies.
Definition~\ref{supp:def:renderer}, Proposition~\ref{supp:prop:renderer-validity}, Lemma~\ref{supp:lem:renderer-coverage}, Proposition~\ref{supp:prop:window-arity}, and
Lemmas~\ref{supp:lem:relative-spider} through \ref{supp:lem:carrier-exposure} contain the detailed arguments used in that section.
All holes are distinct formal letters and occur exactly once.
Throughout this section, $\Sigma$ is an arbitrary fixed finite doubly ranked
alphabet.

\subsection{Naturality and the padding audit}
\label{supp:sec:naturality-audit}

For a raw expression $u:p\to q$, write $N(u)$ for the exchange derivation
\begin{equation}\label{supp:eq:naturality}
 \pi_{p,1}\circ(u\bsum e_1)
 \xleftrightarrow{*}_{\mathcal B\cup\mathcal E_\Sigma}
 (e_1\bsum u)\circ\pi_{q,1}.
\end{equation}
We construct $N(u)$ by structural induction. If $u$ is a letter of $\Sigma$,
Equation~(A15) is the required step. The five letters in $D$ use the fixed
derivations from (A1), (A2), (A5) through (A7), and (A10) through (A12).
For $e_0$ the claim is a unit law. For $e_j$ with $j>0$, expand $e_j$ as the
fixed sum of $j$ copies of $e_1$ and move the extra wire through this sum by
(A1), (A2), interchange, and the unit laws. These are the base exchanges.

Suppose first that $u=v\circ w$, with $v:p\to r$ and $w:r\to q$.
Interchange rewrites
$(v\circ w)\bsum e_1$ as
$(v\bsum e_1)\circ(w\bsum e_1)$. Reassociate so that $N(v)$ applies, then
reassociate so that $N(w)$ applies:
\[
\begin{aligned}
 \pi_{p,1}\circ((v\circ w)\bsum e_1)
 &\longleftrightarrow
 (e_1\bsum v)\circ\pi_{r,1}\circ(w\bsum e_1)\\
 &\longleftrightarrow
 (e_1\bsum v)\circ(e_1\bsum w)\circ\pi_{q,1}\\
 &\longleftrightarrow
 (e_1\bsum(v\circ w))\circ\pi_{q,1}.
\end{aligned}
\]
Each arrow is one child derivation in a fixed structural context, together
with reassociation and interchange.

Now suppose that $u=v\bsum w$. Use the canonical adjacent-transposition word
that moves the added wire first across the input block of $w$ and then across
the input block of $v$. Apply $N(w)$ in the context padded by $v$, apply
$N(v)$, and restore the fixed bracketing. The output word moves the added wire
first across the output block of $w$ and then across that of $v$. The braid
and involution equations (A1) and (A2) identify this composite with
$\pi_{q,1}$. This gives $N(v\bsum w)$. Thus every induction clause uses its
child derivations sequentially and introduces only fixed structural
transports.

Here is the width audit. For $n\in\mathbb N$, let $b_\Sigma(n)$ be the maximum
width of the chosen base exchanges whose leaf rank sum is at most $n$. This is
an effective finite maximum because $\Sigma$ and $D$ are finite and only the
units $e_j$ with $2j\le n$ occur. Let $d(n)$ be the maximum width of the fixed
reassociation, interchange, and adjacent-transposition transports in the two
induction clauses when every displayed child interface has rank sum at most
$n$. This is also an effective finite maximum. A padded induction context
exposes at most two child interfaces and two one-wire routing blocks. Its rank
sum is therefore at most $4n+4$. Put
\[
 \widehat x_\Sigma(n)=
 \max\{4n+4,b_\Sigma(n),d(n)\},
 \qquad
 x_\Sigma(n)=\max_{0\le j\le n}\widehat x_\Sigma(j).
\]
The second formula is the monotone closure. Induction on $u$ now shows that
$\patw(u)\le n$ implies that $N(u)$ has width at most $x_\Sigma(n)$. Depth
does not enter the bound because the two child derivations are used one after
the other.

Finally, let $H$ be a finite ranked alphabet of formal boxes and let $\Delta$
be a finite $\mathcal B\cup\mathcal E_{\Sigma\cup H}$ derivation between
linear-use contexts. Under a rank-preserving substitution $h\mapsto u_h$,
replace each (A15) step for $h$ by $N(u_h)$. All other steps remain contextual
instances of the same law. If $c_\Delta$ bounds the fixed context ranks and
the formal endpoint widths in $\Delta$, then
\[
 s_\Delta(n)=\max\{n,c_\Delta,x_\Sigma(n)+c_\Delta\}
\]
is an effective bound after monotone closure. A substituted term has the same
outer type as its box, so no ancestor rank changes. This proves the
naturality-under-substitution estimate used in the main text.

\subsection{Names, frames, and edge-free terms}

Put $\mu_0=\io01$, $\mu_1=e_1$, $\delta_0=\io10$, and
$\delta_1=e_1$. For $j\ge1$, define
\[
 \mu_{j+1}=(\mu_j\bsum e_1)\circ\io21,
 \qquad
 \delta_{j+1}=\io12\circ(\delta_j\bsum e_1).
\]
Thus $\mu_j:j\to1$ and $\delta_j:1\to j$, including $j=0$.
Put $z=\io01\circ\io10$ and define the fixed scalar suffix by
\[
 Z_0=e_0,
 \qquad
 Z_{c+1}=z\bsum Z_c.
\]
Let $\mathbf x=(x_1,\ldots,x_p)$ and $\mathbf y=(y_1,\ldots,y_q)$ be
disjoint ordered input and output position lists, and let $\eta$ be a
partition of their union. Order its blocks by first occurrence in
$\mathbf x\mathbf y$. For a block $B$, put
$p_B=|B\cap\mathbf x|$ and $q_B=|B\cap\mathbf y|$. Let $P_{\rm in}$ and
$P_{\rm out}$ be the fixed adjacent-transposition raw terms that put the
inputs into block order and restore the prescribed output order. Define
\[
 \operatorname{Can}(\eta,c)=
 \left(P_{\rm in}\circ
 \Bigl(\mathop{\bsum}_{B\in\eta}\mu_{p_B}\Bigr)\circ
 \Bigl(\mathop{\bsum}_{B\in\eta}\delta_{q_B}\Bigr)\circ
 P_{\rm out}\right)\bsum Z_c:p\longrightarrow q.
\]
The block sums use left association and $Z_c$ uses right association. For the
empty partition, both block sums and both permutations are $e_0$. A block with
$p_B=0$ is created by $\mu_0$, and a block with $q_B=0$ is closed by
$\delta_0$. Thus output-only and input-only blocks, as well as the empty
interface, are rendered as typed raw terms.

Put
\[
 \operatorname{cup}=\io01\circ\io12:0\to2,
 \qquad
 \operatorname{cap}=\io21\circ\io10:2\to0.
\]
Here $x^{\bsum p}$ denotes the fixed left-associated parallel power, with
the zeroth power equal to $e_0$.
Let $\chi_p$ be the fixed permutation that sends the outputs
$(x_1,y_1,\ldots,x_p,y_p)$ of $p$ cups to
$(x_1,\ldots,x_p,y_1,\ldots,y_p)$. For a formal letter $h:p\to q$, define
the raw name
\begin{equation}\label{supp:eq:name}
 \operatorname{Name}_{p,q}[h]
 =\operatorname{cup}^{\bsum p}\circ\chi_p
   \circ(e_p\bsum h):0\longrightarrow p+q.
\end{equation}
For a raw name $C:0\to p+q$, define
\begin{equation}\label{supp:eq:unname}
 \operatorname{Unname}_{p,q}[C]
 =(e_p\bsum C)\circ\chi'_p
   \circ(\operatorname{cap}^{\bsum p}\bsum e_q):p\longrightarrow q,
\end{equation}
where $\chi'_p$ pairs the $i$th outer input with the $i$th bent input of
$C$ and leaves the last $q$ outputs in order. The yanking identities give
\begin{equation}\label{supp:eq:leaf-frame}
 h=\operatorname{Unname}_{p,q}
       [\operatorname{Name}_{p,q}[h]].
\end{equation}
Both sides are raw linear contexts containing $h$ once.

The cup-cap operation in Equation~\eqref{supp:eq:unname} is used only to name
and unname a leaf. The canonical raw frame is instead the direct router frame
used in the main text. Let $h:0\to r$ have distinct ordered names
$\boldsymbol\nu=(\nu_1,\ldots,\nu_r)$. Let $\mathbf b$ and $\mathbf e$ be
ordered tuples of these names for the desired outer inputs and outputs.
The partition
$\eta(\mathbf b,\boldsymbol\nu\mid\mathbf e)$ on the external input positions,
the outputs of $h$, and the external output positions identifies positions
that carry the same name. Define
\[
 \operatorname{Frame}_{\mathbf b,\mathbf e,\boldsymbol\nu,c}[h]
 =(e_{|\mathbf b|}\bsum h)\circ
 \operatorname{Can}(\eta(\mathbf b,\boldsymbol\nu\mid\mathbf e),c).
\]
This is the unique canonical frame notation used below.

Fix a closed ambient expression with value $G$. For a subterm $u:p\to q$,
let $\theta_u:\val(u)\to G$ be the map induced by its surrounding context,
and let $A_u\subseteq E(G)$ be the edge occurrences generated in $u$. If the
$i$th boundary position represents the node $x_i$ of $\val(u)$, define
\[
 i\mathrel{\beta_u}j\quad\Longleftrightarrow\quad x_i=x_j,
 \qquad
 i\mathrel{\gamma_u}j\quad\Longleftrightarrow\quad
 \theta_u(x_i)=\theta_u(x_j).
\]
Then $\beta_u$ refines $\gamma_u$. Let $\partial_G A_u$ be the set of vertices
incident with an edge in $A_u$ and with an edge in $E(G)\setminus A_u$.
Let $\lambda_u$ be the set of $\gamma_u$-blocks whose final vertex lies in
$\partial_G A_u$. Parse-cut extraction guarantees a boundary representative
for every class in $\lambda_u$.

Let $\beta$ be a partition of the ordered list of $p+q$ boundary positions.
Write $r=|\beta|$ and order its blocks by first occurrence. Define
$\operatorname{Rep}_\beta:r\to p+q$ by taking the parallel sum of
$\delta_{|B|}$ over the blocks $B\in\beta$ and then applying the fixed
permutation that restores the original position order. Let
$\boldsymbol\nu_\beta$ be the resulting ordered block-name list, and let
$\mathbf b_\beta$ and $\mathbf e_\beta$ be the block names of the first $p$
and last $q$ boundary positions. Let $Q_\beta$ be the fixed permutation that
puts the $p+q$ positions into this block order, and define
\begin{equation}\label{supp:eq:reduced-leaf-core}
 \operatorname{Red}_\beta
 =Q_\beta\circ\Bigl(\mathop{\bsum}_{B\in\beta}\mu_{|B|}\Bigr)
 :p+q\longrightarrow r,
 \qquad
 C_\beta[h]=\operatorname{Name}_{p,q}[h]\circ
 \operatorname{Red}_\beta:0\longrightarrow r.
\end{equation}
The empty case uses $e_0$. The term $C_\beta[h]$ retains one ordered output
for every $\beta$-block and is therefore a reduced named core. Moreover,
$\operatorname{Name}_{p,q}[h]$ and
$C_\beta[h]\circ\operatorname{Rep}_\beta$ have boundary-preserving
isomorphic values because the merge and repetition occur exactly within the
actual equality blocks.

For a reduced named core $C:0\to r$, put
\begin{equation}\label{supp:eq:canonical-frame}
 F_{\beta,\gamma,\lambda}[C]
 =\operatorname{Frame}_{\mathbf b_\beta,\mathbf e_\beta,
    \boldsymbol\nu_\beta,0}[C].
\end{equation}
Thus the direct frame, rather than
$\operatorname{Unname}_{p,q}[C\circ\operatorname{Rep}_\beta]$, is the fixed
raw representative. The two expressions are connected by the bounded
straightening result proved in Lemma~\ref{supp:lem:frame-straightening}.
The raw term depends on $\beta$. The coarser partition $\gamma$ and the set
$\lambda$ are annotations recording equality in the final graph and the live
final classes. They select later routers but add no generator and authorize no
merge.

\begin{lemma}[canonical edge-free term]\label{supp:lem:edge-free}
Let $d$ be a term over the discrete generators and units only. Let $\beta_d$
be the partition of its ordered boundary positions induced by equality of
vertices in $\val(d)$, and let $c_d$ be the number of vertices of
$\val(d)$ with no boundary occurrence. Then
\[
 d\xleftrightarrow{*}_{\mathcal B\cup\E}
 \operatorname{Can}(\beta_d,c_d).
\]
Consequently, every maximal edge-free parse subtree can be normalized before
it is pruned. Each of its internal isolated vertices becomes one copy of
$z=\io01\circ\io10$.
\end{lemma}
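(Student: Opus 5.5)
The plan is to combine a value computation with the completeness theorem, Theorem~\ref{thm:bk}. First I will check that $\operatorname{Can}(\beta_d,c_d)$ has the same pointed value as $d$. The value of $d$ is a discrete graph: it has no edges, its ordered begin and end sequences are determined by which boundary positions name the same vertex, and it also contains the $c_d$ internal vertices. On the other side, the router part $P_{\rm in}\circ(\bsum_B\mu_{p_B})\circ(\bsum_B\delta_{q_B})\circ P_{\rm out}$ has exactly one vertex per block $B\in\beta_d$. Its input occurrences are the positions of $B\cap\mathbf x$ and its output occurrences are those of $B\cap\mathbf y$. This holds because $\mu_{p_B}$ merges all $p_B$ input wires into one node, or creates a fresh node when $p_B=0$. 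Likewise, $\delta_{q_B}$ copies that node to $q_B$ outputs, or closes it when $q_B=0$. The two fixed permutations only reorder positions. Meanwhile $Z_{c_d}$ contributes $c_d$ isolated vertices, since each $z=\io01\circ\io10$ evaluates to one portless vertex. Each of these three facts follows by a short induction on $j$ or $c$ from the defining recursions. The result is a boundary-preserving isomorphism $\val(d)\cong\val(\operatorname{Can}(\beta_d,c_d))$, after matching the first-occurrence block order with the vertices of $\val(d)$ and checking that both terms have the same rank $(p,q)$.

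With equal values established, Theorem~\ref{thm:bk} gives $d\xleftrightarrow{*}_{\mathcal B\cup\E}\operatorname{Can}(\beta_d,c_d)$ at once. No width bound is claimed here, so this suffices for the displayed statement. If a bounded version is wanted later, I would instead induct on the syntax of $d$ using the local renderer data of Lemma~\ref{lem:typed-coverage}. That refinement is not needed for this lemma.

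For the consequence, let $d$ be a maximal edge-free parse subtree of a closed expression. I will replace $d$ in context by $\operatorname{Can}(\beta_d,c_d)$; the replacement is legitimate because the congruence is contextual and the types agree. I will then use the structural laws to separate the suffix $Z_{c_d}$ from the positive-rank router. That suffix has type $(0,0)$, and by construction it consists of exactly one copy of $z$ for each internal vertex of $\val(d)$, which is the final claim. The rest of the expression keeps its boundary routing intact, so pruning the edge-free leaves afterwards (as in Lemma~\ref{lem:compile}) loses no vertex.

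The main obstacle is the careful matching of conventions in the first step. Block order by first occurrence, left versus right association, the degenerate cases $p_B=0$, $q_B=0$ and the empty partition, and the fact that $\beta_d$ must be read on the concatenated list $\mathbf x\mathbf y$ all need to agree between the two sides. Once that bookkeeping is in place, the equivalence itself is immediate from completeness.
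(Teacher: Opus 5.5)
Your proposal is correct and follows essentially the same route as the paper. Both arguments observe that an edge-free value is determined, up to boundary-preserving isomorphism, by its boundary partition and its number of portless vertices; both check that $\operatorname{Can}(\beta_d,c_d)$ realizes exactly that value; and both then invoke completeness (Theorem~\ref{thm:bk}). The paper's proof also carries out the blockwise syntax induction that you set aside as an optional refinement: it exposes one discrete block or one copy of $z$ at a time and never uses $\operatorname{Can}(\varnothing,n)$ as a single datum. That refinement is what keeps later normalization width independent of $c_d$, but it is not needed for the displayed equivalence.
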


\begin{proof}
A discrete value has no edge occurrences. It is therefore determined, up to
boundary-preserving isomorphism, by the partition of its ordered boundary and
the number of vertices outside the boundary. The canonical term has exactly
this value. The discrete completeness theorem, or equivalently the special
commutative Frobenius equations applied block by block, gives the displayed
equality. Applying it first to maximal discrete subtrees prevents a nullary
subtree from disappearing before its isolated vertices have been recorded.

For the bounded local realization, do not use the displayed equality as one
datum when $c_d$ is large. Induct on the syntax and expose one connected
discrete block at a time. At a binary node, structural interchange is used in
a four-hole linear context to separate one matching block,
\[
 (u\bsum v)\circ(w\bsum x)
 \quad\longleftrightarrow\quad
 (u\circ w)\bsum(v\circ x),
\]
with fixed reassociations and permutations when the active block is not first.
Straighten that block to its canonical spider. If it still meets the parent
boundary, retain it in $\operatorname{Can}(\eta,0)$. If it has just lost its
last boundary position, straighten it to one copy of $z$ and immediately move
that copy across the neighboring carrier by the scalar datum below. A longer
completed scalar list is represented by one opaque rank-zero suffix hole.
Thus a step contains at most two active scalar holes and a fixed scalar count
in $\{0,1,2\}$.

For example, in
$d_n=(\io01^{\bsum n})\circ(\io10^{\bsum n})$, repeated interchange exposes
one factor $\io01\circ\io10=z$ at a time. The construction never uses
$\operatorname{Can}(\varnothing,n)$ as one local datum. The same blockwise
procedure applies to every binary discrete subtree and proves the asserted
local sequence independently of $c_d$.
\end{proof}

A rank-zero hole containing graph letters is an opaque scalar core. It is
never replaced by $z$. Only an edge-free isolated vertex is represented by
$z$. Scalar cores and $z$ factors are moved to the fixed suffix by local moves
involving at most two scalar factors at a time.

\subsection{The six rendered window kinds}

A typed routing datum $\theta$ consists of $r$ ordered input positions, $s$
ordered output representatives, and a partition of their disjoint union. Each
block contains either one output representative, when that class survives, or
none, when it is closed. Output-only blocks are allowed. Define the raw router
$J_\theta:r\to s$ by permuting the input positions into block order, applying
$\mu_a$ to a block with $a$ inputs, retaining its wire when it has an output
representative and otherwise applying $\delta_0=\io10$, and finally restoring
the prescribed output order. An output-only block uses $\mu_0=\io01$. Thus
the definition includes empty inputs, empty outputs, and every positive-rank
edge-free child.

For an ordered hole list $I=(i_1,\ldots,i_m)$, let
$H_\varnothing=e_0$, $H_{(i_1)}=h_{i_1}$, and
$H_I=H_{(i_1,\ldots,i_{m-1})}\bsum h_{i_m}$ for $m\ge2$. If $\xi$ is a typed
partition whose positions are $a$ ordered external inputs, the protected
outputs of $H_I$, and $b$ ordered external outputs, put
\[
 \mathsf W(\xi,c;I)=
 (e_a\bsum H_I)\circ\operatorname{Can}(\xi,c):a\longrightarrow b.
\]
For two adjacent endpoint routers, root the active skeleton edge from $y$
toward $x$. Let $I_x$ and $I_y$ list the other incident regions and let $a$ be
the rank of the rootward external input bank. Define
\[
 \mathsf W_2(\xi_x,\xi_y,\xi_o,c;I_x,I_y)=
 \Bigl((e_a\bsum H_{I_x}\bsum\mathsf W(\xi_y,0;I_y))
 \circ\operatorname{Can}(\xi_x,0)\Bigr)
 \circ\operatorname{Can}(\xi_o,c),
\]
whenever the recorded types match. More explicitly, if
$H_{I_x}:0\to r_x$, $\mathsf W(\xi_y,0;I_y):a_y\to b_y$,
$\operatorname{Can}(\xi_x,0):a+r_x+b_y\to b_x$, and
$\operatorname{Can}(\xi_o,c):b_x\to b$, then
$\mathsf W_2:a+a_y\to b$. The minus and plus data of a fold or prune record
the same outer type. These fixed parentheses expose every internal bank, all
complementary holes, both endpoint routers, and the ordered outer interface.

\begin{definition}[explicit renderer]\label{supp:def:renderer}
The source and target of each local datum are rendered by the following raw
templates. Fixed permutations restore the recorded hole order and the ordered
outer interface.
\begin{enumerate}
\item A $\Sigma$-\emph{leaf frame} starts with the yanking pair in
Equation~\eqref{supp:eq:leaf-frame}. Its reduced core is the explicit raw term
$C=C_\beta[h]$ in Equation~\eqref{supp:eq:reduced-leaf-core}. Replace the name
by $C\circ\operatorname{Rep}_\beta$. Its completed target is exactly
$F_{\beta,\gamma,\lambda}[C]$ in
Equation~\eqref{supp:eq:canonical-frame}. Direct boundary evaluation verifies
this local datum below, and Lemma~\ref{supp:lem:frame-straightening} later gives
the bounded explicit comparison. A leaf in $D$, or an expanded
structural unit, instead uses its exact hole-free canonical router.
Thus $\io21$, $\io01$, $\io12$, and $\io10$ use respectively
$\mu_2$, $\mu_0$, $\delta_2$, and $\delta_0$. A permutation or identity uses
the corresponding boundary partition in $\operatorname{Can}(\eta,0)$. Such a
leaf carries no formal hole.
\item A \emph{binary router} with two graph-letter-bearing children has two
distinct reduced named holes $h_i:0\to r_i$. For the recorded parent operation
$\star$ and typed positional routing datum $\theta_u$, it is
\begin{equation}\label{supp:eq:binary-router}
 F_{\beta_v,\gamma_v,\lambda_v}[h_v]\mathbin\star
 F_{\beta_w,\gamma_w,\lambda_w}[h_w]
 \quad\longleftrightarrow\quad
 F_{\beta_u,\gamma_u,\lambda_u}
 [(h_v\bsum h_w)\circ J_{\theta_u}].
\end{equation}
The frames on the two sides restore the same parent type.
Suppose exactly one child is edge-free. Let
$S_J=H_J:0\to0$, where $J$ is an ordered list of zero, one, or two
rank-zero suffix holes. If $w$ is the edge-free child, the pair is
\[
 F_{\beta_v,\gamma_v,\lambda_v}[h_v]\mathbin\star
 (\operatorname{Can}(\eta_w,0)\bsum S_J)
 \quad\longleftrightarrow\quad
 F_{\beta_u,\gamma_u,\lambda_u}
 [\mathsf W(\theta_u,0;(v))]\bsum S_J,
\]
and there is a symmetric pair when $v$ is edge-free. The equality for
composition is the interchange and rank-zero unit instance that carries
$S_J$ through the parent. Here $\theta_u$ has no external inputs, has the
outputs of $h_v$ as protected input positions, and has one output for every
$\beta_u$-block. Output-only blocks record parent-boundary vertices supplied
solely by the discrete child. Consequently
$\mathsf W(\theta_u,0;(v)):0\to|\beta_u|$, so the target has the type required
by its frame. An arbitrary actual suffix is substituted for one rank-zero
suffix hole. The second hole is needed only while two child suffixes are being
combined.

If both children are edge-free, put
$C_v=\operatorname{Can}(\eta_v,0)$ and
$C_w=\operatorname{Can}(\eta_w,0)$. Their local sources are
$C_v\bsum s_v$ and $C_w\bsum s_w$, where either suffix hole may be $e_0$.
First use the scalar-padded binary pair
\[
 (C_v\bsum s_v)\mathbin\star(C_w\bsum s_w)
 \quad\longleftrightarrow\quad
 K_0\bsum(s_v\bsum s_w),
 \qquad K_0=C_v\mathbin\star C_w.
\]
For $\star=\circ$ this is an interchange and unit instance. Normalize $K_0$
in fixed block order by the finite composition-state machine below. Its raw
microsteps are Equations~\eqref{supp:eq:state-left} and
\eqref{supp:eq:state-right}, with the external nullary target
\eqref{supp:eq:state-nullary-emit-target}. For $\star=\bsum$ there is no middle incidence,
so the transported disjoint union of the two partitions is already the target
router. Whenever a processed component has no outer leg, the machine extracts
one copy of $z$ and moves it by a separate scalar step. After the first pair,
the actual term $s_v\bsum s_w$ is treated as one opaque completed suffix.
Thus one datum exposes at most two old suffix holes or one neighboring factor
and one new $z$, never the complete scalar list. A rank-zero child containing
graph letters remains an opaque labeled core hole and is not treated by this
clause.

For converse coverage, a completed child or any maximal unchanged sector,
including a $D$-only complementary sector, may be hidden in an
\emph{aggregate carrier} $g_\nu:0\to r_\nu$. This is a
metasyntactic formal box, not a new generator. Its finite record consists only
of its rank and the ordered injection from its distinct ports into the
applicable current window-interface list. The surrounding induction record
associates the carrier with the ordered sublist of original holes in its
substitution. That sublist is empty for a $D$-only carrier. Distinct carriers
have pairwise disjoint sublists, and this possibly long sublist is not
part of a renderer datum. Every rendered pair uses the same carrier once on
each side, with the same type and incidence injection. Repetitions in an outer
interface remain in the explicit frame. A carrier mark is introduced only at
a literal syntax-tree address. If the required carrier is an aggregate that is
not already a subterm, first use the stable order-straightening datum when its
factors are not consecutive, and then use the ordered carrier-exposure
derivation of Lemma~\ref{supp:lem:carrier-exposure}. The finite composition-state machine
completing Definition~\ref{supp:def:renderer} below gives the full state and
transition rules that replace the informal notation $K_j[\mathbf g]$.
\item A \emph{fold} is a full two-router pair
\[
 \mathsf W_2(\xi_x^-,\xi_y^-,\xi_o^-,c;I_x,I_y)
 \quad\longleftrightarrow\quad
 \mathsf W_2(\xi_x^+,\xi_y^+,\xi_o^+,c;I_x,I_y).
\]
The minus partitions contain the two active parallel strands. The plus
partitions retain the recorded strand, fuse the two opposite endpoint blocks,
and preserve every protected and outer block and its order. All complementary
regions occur as the same holes on both sides. After fixed permutations put
the active legs together, the primitive block inside this pair is
\begin{equation}\label{supp:eq:padded-fold}
 P_{\rm in}\circ
 \left(\bigl((\mu_p\bsum\mu_q)\circ\mu_2\bigr)\bsum e_s\right)
 \circ P_{\rm out}
 \quad\longleftrightarrow\quad
 P_{\rm in}\circ(\mu_{p+q}\bsum e_s)\circ P_{\rm out}.
\end{equation}
For positive $p,q$, the reduction uses associativity and commutativity of
merge. Its nullary cases also use the unit equation and permutation transport,
so $p=0$, $q=0$, and $p=q=0$ are included.
\item A \emph{prune} is likewise a full pair
\[
 \mathsf W_2(\xi_x^-,\xi_y^-,\xi_o^-,c^-;I_x,I_y)
 \quad\longleftrightarrow\quad
 \mathsf W_2(\xi_x^+,\xi_y^+,\xi_o^+,c^+;I_x,I_y).
\]
The plus partitions remove one terminal strand whose terminal block has no
atom flag, protected port, or outer flag. Put
$c^+=c^-+\epsilon$, where $\epsilon=1$ exactly when removing the terminal
block would otherwise delete an entire component with no protected or outer
port, and $\epsilon=0$ otherwise. The local counts
$c^-,c^+$ lie in $\{0,1,2\}$. After the active legs are exposed, the primitive
block is one of
\begin{align}
 P_{\rm in}\circ
 (((e_1\bsum\mu_0)\circ\mu_2)\bsum e_s)\circ P_{\rm out}
 &\quad\longleftrightarrow\quad
 P_{\rm in}\circ e_{s+1}\circ P_{\rm out},
 \label{supp:eq:padded-prune-in}\\
 P_{\rm in}\circ
 ((\delta_2\circ(e_1\bsum\delta_0))\bsum e_s)\circ P_{\rm out}
 &\quad\longleftrightarrow\quad
 P_{\rm in}\circ e_{s+1}\circ P_{\rm out}.
 \label{supp:eq:padded-prune-out}
\end{align}
The first removes an input-side terminal leg and the second is its
input-output dual. The full $\mathsf W_2$ pair records the complementary holes,
both endpoint routers, the placement maps, the ordered outer frame, and the
possible scalar increment. Any extracted $z$ is moved to the fixed suffix by
the scalar renderer.
\item A \emph{four-cell switch} records an occupied-cell set
$I\subseteq\{0,1\}^2$. For $(i,j)\in I$, it uses a distinct linear-use hole
\[
 h_{ij}:0\longrightarrow r_{ij},
 \qquad
 \mathbf P_{ij}=(p_{ij,1},\ldots,p_{ij,r_{ij}}),
\]
where $r_{ij}=0$ is allowed. In that case $h_{ij}$ is an opaque graph-bearing
scalar core. For $(i,j)\notin I$, put $\mathbf P_{ij}=\varnothing$ and write
$r_{ij}=0$ and $h_{ij}=e_0$ only as a unit placeholder, with no formal hole. Thus
$\mathbf P_{ij}=\varnothing$ does not determine whether the cell is absent.
The interface lists of occupied cells are disjoint and ordered. Put
$\mathbf P_{i*}=\mathbf P_{i0}\mathbf P_{i1}$ and
$\mathbf P_{*j}=\mathbf P_{0j}\mathbf P_{1j}$, with juxtaposition denoting
ordered concatenation. Let $\eta$ be the partition of
$\mathbf P_{00}\mathbf P_{01}\mathbf P_{10}\mathbf P_{11}$ by final vertex.
The datum also records a temporary total order $\preceq$ on the $\eta$-blocks,
the ordered surviving list $\mathcal O=(B_1,\ldots,B_t)$, and a scalar tag
$c\in\{0,1\}$. Every staging bank uses $\preceq$, while the final router
restores the possibly different outer order $\mathcal O$.

For $i\in\{0,1\}$, let
\[
 \mathcal S_{i*}=\{B\in\eta:
 B\cap\mathbf P_{i*}\ne\varnothing,\
 B\in\mathcal O\text{ or }B\cap\mathbf P_{(1-i)*}\ne\varnothing\}.
\]
For each $B\in\mathcal S_{i*}$ introduce one staging output $\rho_{i,B}$, and
let $\mathbf R_i=(\rho_{i,B}:B\in\mathcal S_{i*})$, ordered by the restriction
of $\preceq$ to $\mathcal S_{i*}$. Define the typed input-output partition
$\eta_{i*}$ on $\mathbf P_{i*}\mathbf R_i$ by assigning to every
$B$ meeting $\mathbf P_{i*}$ the block
\[
 (B\cap\mathbf P_{i*})\mathbin{\dot\cup}
 \begin{cases}
  \{\rho_{i,B}\},&B\in\mathcal S_{i*},\\
  \varnothing,&B\notin\mathcal S_{i*}.
 \end{cases}
\]
Thus
$\operatorname{Can}(\eta_{i*},0):r_{i0}+r_{i1}\to|\mathcal S_{i*}|$.
Define $\mathcal S_{*j}$, staging outputs $\kappa_{j,B}$, ordered lists
$\mathbf C_j$, and typed partitions $\eta_{*j}$ dually from the column list
$\mathbf P_{*j}$, again ordering each $\mathbf C_j$ by the restriction of the
same $\preceq$.

Let $\mathbf O=(o_{B_1},\ldots,o_{B_t})$ be the prescribed ordered outer
positions. Define $\eta_R$ on the input-output list
$\mathbf R_0\mathbf R_1\mathbf O$ by assigning to each $B\in\eta$ the block
\[
 \{\rho_{i,B}:B\in\mathcal S_{i*},\ i\in\{0,1\}\}
 \mathbin{\dot\cup}
 \begin{cases}
  \{o_B\},&B\in\mathcal O,\\
  \varnothing,&B\notin\mathcal O,
 \end{cases}
\]
omitting an empty block. Define $\eta_C$ on
$\mathbf C_0\mathbf C_1\mathbf O$ by the same rule with $\kappa_{j,B}$ in
place of $\rho_{i,B}$. Consequently,
\[
 \operatorname{Can}(\eta_R,c):
 |\mathcal S_{0*}|+|\mathcal S_{1*}|\longrightarrow t,
 \qquad
 \operatorname{Can}(\eta_C,c):
 |\mathcal S_{*0}|+|\mathcal S_{*1}|\longrightarrow t.
\]
For $i,j\in\{0,1\}$, put
\[
\begin{aligned}
 R_i&=(h_{i0}\bsum h_{i1})\circ\operatorname{Can}(\eta_{i*},0),\\
 C_j&=(h_{0j}\bsum h_{1j})\circ\operatorname{Can}(\eta_{*j},0).
\end{aligned}
\]
The two raw terms are well typed:
\begin{align}
 \mathsf{Row}_\eta[h]
 &=(R_0\bsum R_1)\circ\operatorname{Can}(\eta_R,c)
 :0\longrightarrow t,
 \label{supp:eq:row}\\
 \mathsf{Col}_\eta[h]
 &=(C_0\bsum C_1)\circ\operatorname{Can}(\eta_C,c)
 :0\longrightarrow t.
 \label{supp:eq:column}
\end{align}
The rendered pair is
$\mathsf{Row}_\eta[h]\leftrightarrow\mathsf{Col}_\eta[h]$. Both endpoints
identify precisely the ports in each $\eta$-block, retain the blocks in
$\mathcal O$ in the prescribed order, close every other block, and append the
same scalar tag. Every occupied formal hole occurs once in each endpoint,
including every hole with $r_{ij}=0$.

In a degenerate datum, delete $h_{ij}$ only when $(i,j)\notin I$, and suppress
the structural-unit and unary-router legs created by absent cells. An occupied
hole is never deleted merely because $\mathbf P_{ij}=\varnothing$. For the
off-diagonal occupied pair $I=\{(0,1),(1,0)\}$ with $r_{01}=r_{10}=0$, the
unit-suppressed endpoint pair is
\begin{equation}\label{supp:eq:scalar-two-cell}
 ((h_{01}\bsum h_{10})\bsum Z_c)
 \quad\longleftrightarrow\quad
 ((h_{10}\bsum h_{01})\bsum Z_c).
\end{equation}
This is the one-scalar specialization of
Equation~\eqref{supp:eq:scalar-move}, used in the fixed outer context
$(\,\cdot\,)\bsum Z_c$. If exactly one off-diagonal cell has rank zero, the
same specialization moves that scalar core past the other carrier, followed by
the recorded router order correction. Together with the occupancy-specialized
four-cell pair, these scalar specializations cover all two-cell and three-cell
cases without deleting a graph-bearing core.

\item A \emph{two-scalar move} has one or two distinct rank-zero holes
$s_1,s_2:0\to0$ and one neighboring carrier hole $h:p\to q$. Its two-scalar
pair is
\begin{equation}\label{supp:eq:scalar-move}
 ((s_1\bsum s_2)\bsum h)
 \quad\longleftrightarrow\quad
 h\bsum(s_1\bsum s_2).
\end{equation}
The one-scalar pair is obtained by suppressing $s_2$. The carrier may have
positive rank, may be a rank-zero graph-bearing core, or may be the fixed term
$e_0$. Fixed reassociations are part of the rendered source and target.
The same window kind includes the rank-zero interchange pair
\begin{equation}\label{supp:eq:scalar-compose}
 s_1\circ s_2\quad\longleftrightarrow\quad s_1\bsum s_2.
\end{equation}
Indeed,
$(s_1\bsum e_0)\circ(e_0\bsum s_2)$ interchanges to
$(s_1\circ e_0)\bsum(e_0\circ s_2)$, and the rank-zero unit laws give the
displayed endpoints. Thus this pair is a direct
$\mathcal B$-derivation inside the enumerated two-scalar family.
Repetition moves an arbitrary scalar list to the fixed right-associated suffix
while every such scalar-move datum exposes at most two scalar holes. In particular, after
suppressing $s_2$, taking $s_1=h_{01}$ and $h=h_{10}$ gives the scalar
transposition in Equation~\eqref{supp:eq:scalar-two-cell} inside its fixed
$Z_c$ context.
\end{enumerate}
\end{definition}

\Needspace{8\baselineskip}
\noindent\emph{Finite composition-state machine.}
The deterministic binary-router normalization is the following typed state
machine. Suppose that the nonscalar parts of two completed child records are
\[
 \widehat L=
 (e_a\bsum H_L^+)\circ\operatorname{Can}(\eta_L,0):a\longrightarrow b,
 \qquad
 \widehat R=
 (e_b\bsum H_R^+)\circ\operatorname{Can}(\eta_R,0):b\longrightarrow c.
\]
Let $\mathbf x_L,\mathbf y_L$ be the ordered input and output lists of
$\widehat L$, and let $\mathbf x_R,\mathbf y_R$ be those of $\widehat R$.
Let $\mathbf h_L^+,\mathbf h_R^+$ be the child lists of positive-rank
holes whose fixed parallel aggregates are $H_L^+,H_R^+$, and let
$\mathbf P_L,\mathbf P_R$ be their ordered port-occurrence lists. Let
$\mathbf h_L^0,\mathbf h_R^0$ be the corresponding child lists of original
rank-zero holes. Let
$\mathbf h_K^+$ and $\mathbf h_K^0$ be, respectively, the positive-rank and
rank-zero original holes in the declared parent order, and write
$H_K^+=H_{\mathbf h_K^+}$. The completed child records also carry full ordered
scalar-factor words $\mathbf s_L$ and $\mathbf s_R$. An entry is either an
original rank-zero hole or, in the parse-alignment specialization, a
graph-bearing aggregate carrier of exterior type $0\to0$; such a carrier
retains its ordered sublist of hidden original holes. Let $\mathbf u_K$ be the
fixed target order of the entries of $\mathbf s_L\mathbf s_R$. In the generic
normalization, $\mathbf s_L=\mathbf h_L^0$,
$\mathbf s_R=\mathbf h_R^0$, and $\mathbf u_K=\mathbf h_K^0$. In recursive
parse alignment, $\mathbf u_K=\mathbf s_L\mathbf s_R$ in plane order. These
scalar words, together with the positive-rank lists and their order-preserving
injections into the child router inputs, are part of the completed child
records. In
particular, $\eta_L$ is a partition on
$\mathbf x_L\mathbf P_L\mathbf y_L$ and $\eta_R$ is a partition on
$\mathbf x_R\mathbf P_R\mathbf y_R$. Empty lists use the conventions
$H_\varnothing=e_0$ and $\mathbf P_\varnothing=\varnothing$.
Write the middle lists as
$\mathbf y_L=(y_1,\ldots,y_b)$ and
$\mathbf x_R=(x'_1,\ldots,x'_b)$. Let $V_L$ and $V_R$ be disjoint copies of
the blocks of $\eta_L$ and $\eta_R$. The ordered incidence multigraph is
\begin{equation}\label{supp:eq:state-incidence-graph}
 \Gamma_K=(V_L\mathbin{\dot\cup}V_R,(e_j)_{j=1}^b),
 \qquad
 e_j=\bigl([y_j]_{\eta_L},[x'_j]_{\eta_R}\bigr).
\end{equation}
Every vertex of $\Gamma_K$ carries, in the ambient fixed order, the word of its
positions in
\[
 \mathbf E_K=\mathbf x_L\,\mathbf P_L\,\mathbf P_R\,\mathbf y_R.
\]
These are its \emph{retained legs}. Isolated vertices are included. Order the
components of $\Gamma_K$ by the first position of
$\mathbf x_L\mathbf P_L\mathbf y_L\mathbf x_R\mathbf P_R\mathbf y_R$ that
they meet, and order the edges of a component by their middle indices. Write
$\mathbf E(C)$ for the ordered subword of $\mathbf E_K$ carried by the vertices
of a component $C$, and write
$\mathcal C_K=(C_1,\ldots,C_t)$ for this component order.

The scalar suffix is initialized before an incidence is processed. Write the
right-associated scalar word with tail $T:0\to0$ as
\[
 \operatorname{RA}(\varnothing;T)=T,\qquad
 \operatorname{RA}((s_1,\ldots,s_t);T)
   =s_1\bsum\operatorname{RA}((s_2,\ldots,s_t);T),
\]
and put
$\operatorname{Suf}(\mathbf s,c)=\operatorname{RA}(\mathbf s;Z_c)$.
Write the completed child suffixes as
$S_L=\operatorname{Suf}(\mathbf s_L,c_L)$ and
$S_R=\operatorname{Suf}(\mathbf s_R,c_R)$.
Interchange and the rank-zero unit, equivalently
Equation~\eqref{supp:eq:scalar-compose}, give the explicit pair
\begin{align}
 ((\widehat L\bsum S_L)\circ(\widehat R\bsum S_R))
 &\longleftrightarrow
 ((\widehat L\circ\widehat R)\bsum(S_L\circ S_R))
 \longleftrightarrow
 ((\widehat L\circ\widehat R)\bsum(S_L\bsum S_R)).
 \label{supp:eq:state-suffix-initialization}
\end{align}
Successive one- and two-scalar moves put the factors into the fixed target order
and give
\begin{equation}\label{supp:eq:state-initial-suffix}
 S_0=\operatorname{Suf}(\mathbf u_K,c_L+c_R).
\end{equation}
Only $S_L,S_R$, or one adjacent factor and the opaque remainder, is displayed
in these moves. For a parallel parent the same initialization omits the middle
composition and uses reassociation and symmetry.

Let $\mathcal D$ be the initial subword of $\mathcal C_K$ consisting of the
components whose restore and, when needed, suffix phases are complete. Let
$F$ be the set of edges occurring in the word
formed by all edges of $\mathcal D$, in their fixed orders, followed by an
initial edge subword of the active component. The global state is
\begin{equation}\label{supp:eq:global-renderer-state}
\mathfrak s=(\mathcal D,F,{\equiv}_F,C,e,\phi,\tau,\mathfrak a,\mathfrak a_z,
              \omega,\mathbf g,\mathbf G,\boldsymbol{\mathfrak b},
              \boldsymbol\iota,S,\epsilon,\kappa).
\end{equation}
Here ${\equiv}_F$ is the equivalence relation on $V_L\dot\cup V_R$ generated
by the edges in $F$, $C$ is the active component, and $e$ is its next edge.
Put $e=\bot$ in an emit state with no remaining edge and put
$C=e=\bot$ in the terminal state. Moreover,
\[
 \phi\in\{\mathsf{split},\mathsf{expose},\mathsf{fuse},
             \mathsf{emit},\mathsf{restore},\mathsf{suffix},
             \mathsf{done}\},
\]
$\tau:a\to c$ is the current fixed-parenthesized raw syntax tree, and
\[
 \operatorname{Term}(\mathfrak s):=\tau.
\]
For a valid syntax-tree address $\mathfrak a$, let
\[
 \operatorname{cut}(\tau,\mathfrak a)
   =(\operatorname{Out}_{\mathfrak s}[-],V_{\mathfrak s})
 \quad\text{mean}\quad
 \tau=\operatorname{Out}_{\mathfrak s}[V_{\mathfrak s}],
\]
where $\operatorname{Out}_{\mathfrak s}[-]$ is obtained by replacing the
unique addressed occurrence by one typed hole. Fixed association makes this
cut unique, and the context is linear. The address $\mathfrak a$ is the
address of the active local window. In a
nonscalar phase it addresses the two-router occurrence. In a suffix-entry
phase it addresses the smallest subterm $(P\bsum z)\bsum R$ at the pending
scalar boundary. In either subsequent suffix phase it addresses the smallest
right-associated suffix subterm $z\bsum(g\bsum R)$ or
$z\bsum(s\bsum R)$, where $R$ is the opaque remainder displayed in the
applicable suffix rule. A reachable nonterminal
state is required to have a valid
cut of the applicable raw left endpoint. This realization condition is proved
below rather than assumed separately at every step.
$\mathfrak a_z$ is either $\bot$ or the separate syntax address of the pending
external $z$. During the restore following a port-free emit, $\mathfrak a$
continues to address the nonscalar two-router window while $\mathfrak a_z$
tracks that external factor. On entry to $\mathsf{suffix}$, $\mathfrak a$ is
reset either to the suffix-entry subterm containing $\mathfrak a_z$, or, when
that reassociation is unnecessary, to the smallest right-associated suffix
subterm beginning there. In both cases the remainder is retained as one
opaque part of the window. For an address $\mathfrak b$ relative to the
occurrence at $\mathfrak a$ (whose global address is the concatenation
$\mathfrak a\mathfrak b$), write $V_{\mathfrak s}|_{\mathfrak b}$ for the
corresponding literal subterm of $V_{\mathfrak s}$.
$\omega$ is the current port-and-carrier order, and
$\mathbf g=(g_1,\ldots,g_m)$ is the ordered list of carrier marks in the active
window. Let
$\boldsymbol{\mathfrak b}=(\mathfrak b_1,\ldots,\mathfrak b_m)$ be their
pairwise prefix-incomparable syntax addresses inside $V_{\mathfrak s}$, listed
from left to right, and put
\[
 \mathbf G=(G_1,\ldots,G_m),\qquad
 G_i=V_{\mathfrak s}|_{\mathfrak b_i}.
\]
Thus every $G_i$ is a literal raw subterm of $\tau$. A synthetic parallel
aggregate is marked only after the ordered carrier-exposure derivation of
Lemma~\ref{supp:lem:carrier-exposure} has put it at such an address. An
unchanged carrier is marked at its existing literal address. The recorded
rank-preserving substitution sends $g_i$ to $G_i$. On a scalar window it also
sends each displayed scalar, suffix, prefix, and remainder mark to its literal
addressed subterm. Any such displayed mark whose substitution is a reboxed
aggregate carrier, or contains one, is included among the $g_i$ with exterior
type $0\to0$ and its hidden-hole sublist. The at most two nonscalar marks
$P,R$ in a suffix-entry window are included among the $g_i$ as well. After a suffix move the target is rescanned
for the next literal factor and opaque remainder. Carrier marks occur only in
the metasyntactic template and never as generators of $\tau$. The map
$\boldsymbol\iota$ gives the type and ordered port injection of every $g_i$.
The coordinate $S:0\to0$ is not a free record: it is the maximal
right-associated scalar suffix at the distinguished suffix address inherited
from the initialization above, read directly from the fixed parse tree
$\tau$.  After every raw update it is recomputed deterministically from
$\tau^+$ at that address.  It is treated as one opaque carrier except for the
next factor selected by the scalar schedule.
The flag $\epsilon\in\{0,1\}$ records whether the active component has just
created a copy of $z$ that has not yet reached the suffix.
The cursor
$\kappa=(\kappa_{\rm ent},\kappa_{\rm car},\kappa_0)$ records, respectively,
whether one suffix-entry reassociation remains, and the numbers of remaining
nonscalar carriers and recorded scalar factors that precede this copy.
The number of already accumulated $z$ factors is the derived coordinate
\[
 c(\mathcal D)=c_L+c_R+
 |\{C'\in\mathcal D:\mathbf E(C')=\varnothing\}|.
\]
For a carrier $g_i$, let $\mathcal L(g_i)$ be its ordered sublist of original
positive-rank hole occurrences. A $D$-only complementary carrier has
$\mathcal L(g_i)=\varnothing$. Let $\mathcal L_{\rm out}$ be the occurrences
that remain literally in the recorded outer context. In particular, the hidden
positive-rank occurrences of a reboxed scalar carrier belong to
$\mathcal L_{\rm out}$ while that factor is inside the opaque suffix, and move
to $\mathcal L(g_i)$ when its mark is displayed in a scalar window. Let
$\operatorname{scal}(\mathfrak s)$ be the address-ordered word of all scalar
factors in the current term, including the factors temporarily exposed next
to the opaque remainder $S$. The state invariant is
\begin{equation}\label{supp:eq:state-invariant}
 \begin{aligned}
  &m\le4,\\
  &(\mathcal L_{\rm out},\mathcal L(g_1),\ldots,\mathcal L(g_m))\\
  &\qquad\text{is an order-compatible disjoint decomposition of }\mathbf h_K^+,
    \\
  &\qquad\text{with empty members allowed},\\
  &\operatorname{scal}(\mathfrak s)
    \text{ contains once the expanded occurrence}\\
  &\qquad\text{represented by every entry of }\mathbf u_K,
    \text{ the accumulated factors }z,\\
  &\qquad\text{and one pending }z\text{ iff }\epsilon=1.
 \end{aligned}
\end{equation}
Every carrier substitution is the same raw term before and after a transition,
and its type and port injection are unchanged. Thus the assertion is about
occurrences, not merely an equality of underlying sets.
Indeed, in a nonscalar $\mathsf W_2$ phase, splitting two adjacent routers
produces the selected block and its
complement on each side, hence at most four sectors. Each such sector meets the
window through one original completed-subcontext interface. Its carrier ports
are exactly the positions on that interface. Carriers are chosen by syntactic
address before the local replacement. A carrier substitution is either one
unchanged raw subterm, possibly $D$-only, or a fixed parenthesized parallel
aggregate of cores from one such interface. If its hole sublist is not
consecutive, the stable order-straightening datum first makes it consecutive.
Lemma~\ref{supp:lem:carrier-exposure} then exposes the aggregate as a literal
subterm. The construction never joins cores from different interfaces inside
one carrier. A suffix-phase
carrier is instead one of the at most two addressed prefix, neighbor, or
remainder subterms among $P,g,R$. It is
marked directly and is not combined with a different addressed subterm.
Every exterior-rank-zero carrier has interface load zero, including a reboxed
factor or a reassociated scalar remainder. Any other unchanged addressed
carrier inherits the type of a cut in the original instantiated width-$w$
term, so its interface load is at most $w$. A synthetic
aggregate carrier contains an ordered sublist of the protected cores, so its
interface load is at most the total protected load $r$. Consequently every
displayed carrier has interface load at most
$k_0=\max\{r,w\}$. This is only an interface bound: the raw term substituted
for a carrier may widen during normalization and is bounded later by $L(r,w)$.
Same-type replacement preserves the interface bound because a transition
never changes that substituted raw term.
The long hole sublists, their syntactic addresses, and the term substituted for
$S$ belong to the global induction state, not to the finite local record.

Put
\begin{equation}\label{supp:eq:state-position-partition}
 \rho_F=\operatorname{tc}\left(
 \eta_L\cup\eta_R\cup
 \bigl\{\{y_j,x'_j\}:e_j\in F\bigr\}\right)
\end{equation}
on the disjoint child position lists. Whenever a position list is moved into a
local window, its placement map is the unique order-preserving injection from
that list into the corresponding input list of $\mathsf W_2$. A state
partition is the pushforward of the restriction of $\rho_F$ along these
placement maps. This convention also applies to empty lists, whose unique map
has empty image.

Here is the explicit map from a global state to a local record. The coordinate
$\omega$ is the tuple
\[
 (\mathbf A,\mathbf A',\mathbf B_y,\mathbf B_x,\mathbf B,
   I_x,I_y,\lambda).
\]
The first two words are the external input banks of the outer and inner
routers. The next two are staging-wire banks, $\mathbf B$ is the ordered
output bank of the whole window, and $I_x,I_y$ are the carrier addresses on
the two router sides. Write $\mathbf P(I)$ for the concatenation of their
carrier-port words in address order. A staging wire has two
\emph{occurrence-tagged incidences}: $\mathbf B_y^{\rm out}$ at the output of
the inner router and $\mathbf B_y^{\rm in}$ at the input of the outer router,
and likewise $\mathbf B_x^{\rm out}$ and $\mathbf B_x^{\rm in}$. These tagged
lists have the same underlying ordered typed word but are disjoint domains
for $\lambda$. The map $\lambda$ sends an original position to itself and
each tagged staging incidence to the $\rho_F$-class represented at that end
of its wire. The two ends need not have the same label before the incidence
joining them has been fused. After that fuse their labels coincide. The word
$\mathbf B_y$ contains one staging wire for each required inner-to-outer
connection, and $\mathbf B_x$ similarly contains one for each required
outer-to-output connection. Their tagged incidence labels, rather than one
label shared by both ends, determine the two incident spiders. The wires are
ordered by the least ambient position of either endpoint class. The word $\mathbf B$ is the
prescribed outer output word and retains its multiplicities. Thus no staging
bank has two tagged incidences with the same endpoint role and exported
class. These rules,
together with the stable operations
$\operatorname{sep}_e$, $\operatorname{adj}_e$,
$\operatorname{quot}_e$, $\operatorname{emit}_C$, and
$\operatorname{sort}_K$ below, construct $\omega$ from the state and do not
merely record a choice. For typed words $\mathbf U\mid\mathbf V$, define
\[
 \operatorname{Part}_{\rho_F,\lambda}(\mathbf U\mid\mathbf V)
\]
to be the partition in which two displayed positions are equivalent exactly
when their $\lambda$-images lie in the same $\rho_F$-class. For a transition
$\mathfrak s^-\to\mathfrak s^+$, put
\begin{align}
 \xi_y^\pm&=
 \operatorname{Part}_{\rho_{F^\pm},\lambda^\pm}
   (\mathbf A'^\pm\,\mathbf P(I_y)\mid
       (\mathbf B_y^{\rm out})^\pm),
 \label{supp:eq:state-record-y}\\
 \xi_x^\pm&=
 \operatorname{Part}_{\rho_{F^\pm},\lambda^\pm}
   (\mathbf A^\pm\,\mathbf P(I_x)\,
      (\mathbf B_y^{\rm in})^\pm
      \mid(\mathbf B_x^{\rm out})^\pm),
 \label{supp:eq:state-record-x}\\
 \xi_o^\pm&=
 \operatorname{Part}_{\rho_{F^\pm},\lambda^\pm}
   ((\mathbf B_x^{\rm in})^\pm\mid\mathbf B^\pm).
 \label{supp:eq:state-record-o}
\end{align}
The stable placement rules below preserve
$|\mathbf A^\pm|$, $|\mathbf A'^\pm|$, and $|\mathbf B^\pm|$. Set
\[
 \alpha=|\mathbf A^\pm|,\qquad
 \alpha'=|\mathbf A'^\pm|,\qquad
 \beta=|\mathbf B^\pm|,\qquad
 b_y^\pm=|\mathbf B_y^\pm|,\qquad
 b_x^\pm=|\mathbf B_x^\pm|.
\]
The lists $I_x,I_y$, their carrier ranks, and their placement injections are
read at their recorded syntactic addresses and are identical at the two
endpoints. Finally, $c^\pm$ is the number, in $\{0,1\}$, of port-free
components accounted for by the local transition. It is materialized inside
$\operatorname{Can}(\xi_o^\pm,c^\pm)$ except that the exceptional port-free
emit represents its plus component by the external factor in
Equation~\eqref{supp:eq:state-nullary-emit-target}. Equations
\eqref{supp:eq:state-record-y}--\eqref{supp:eq:state-record-o} therefore
determine every field of the local record from the two consecutive states.

For completeness, the local record and its raw endpoints are now specified.
A nonscalar record is
\[
 d=(\phi,\alpha,\alpha',\beta,I_x,I_y,
       \xi_x^-,\xi_y^-,\xi_o^-,
       \xi_x^+,\xi_y^+,\xi_o^+,c^-,c^+,
       \mathbf g,\boldsymbol\iota).
\]
The ordered lists $I_x,I_y$ partition the displayed carriers. If
$I=(i_1,\ldots,i_t)$, the notation $H_I$ in this record means the fixed
parallel sum $g_{i_1}\bsum\cdots\bsum g_{i_t}$, with $H_\varnothing=e_0$.
Thus it is exactly the formal-hole notation already used in $\mathsf W$.
If
$H_{I_x}:0\to r_x$,
$\mathsf W(\xi_y^\pm,0;I_y):\alpha'\to b_y^\pm$,
\[
 \operatorname{Can}(\xi_x^\pm,0):
       \alpha+r_x+b_y^\pm\longrightarrow b_x^\pm,
 \qquad
 \operatorname{Can}(\xi_o^\pm,c^\pm):
       b_x^\pm\longrightarrow\beta,
\]
then its two raw terms are
\begin{align}
 \mathsf L(d)&=
 \Bigl((e_\alpha\bsum H_{I_x}\bsum
         \mathsf W(\xi_y^-,0;I_y))
       \circ\operatorname{Can}(\xi_x^-,0)\Bigr)
       \circ\operatorname{Can}(\xi_o^-,c^-),
 \label{supp:eq:state-left}\\
 \mathsf R(d)&=
 \Bigl((e_\alpha\bsum H_{I_x}\bsum
         \mathsf W(\xi_y^+,0;I_y))
       \circ\operatorname{Can}(\xi_x^+,0)\Bigr)
       \circ\operatorname{Can}(\xi_o^+,c^+).
 \label{supp:eq:state-right}
\end{align}
Both have type $\alpha+\alpha'\to\beta$. Equations
\eqref{supp:eq:state-left} and \eqref{supp:eq:state-right} are the expanded
$\mathsf W_2$ renderer. All identity padding and all permutations are fixed by
the placement maps in $\boldsymbol\iota$ and the order $\omega$. The three
minus partitions are the state partitions obtained from
Equation~\eqref{supp:eq:state-position-partition}. The plus partitions are the
same pushforwards after the following indicated update. This makes all six
partitions functions of the state, rather than additional choices.

There is one explicit target override. In a port-free emit, put
\[
 X^+(d)=
 \Bigl((e_\alpha\bsum H_{I_x}\bsum
        \mathsf W(\xi_y^+,0;I_y))
       \circ\operatorname{Can}(\xi_x^+,0)\Bigr)
\]
and render the plus endpoint as
\begin{equation}\label{supp:eq:state-nullary-emit-target}
 \mathsf R_\varnothing(d)=
 \bigl(X^+(d)\circ\operatorname{Can}(\xi_o^+,0)\bigr)\bsum z,
\end{equation}
not as a scalar hidden inside the last composition. This is the same local
value as $\mathsf R(d)$ with $c^+=1$. Indeed, fixed $e_0$ suppression
identifies $\operatorname{Can}(\xi_o^+,0)$ with its scalar-free router $Y$ and
$\operatorname{Can}(\xi_o^+,1)$ with $Y\bsum z$. For arbitrary matching types
$X:u\to v$ and $Y:v\to w$, use the fully raw extraction
\begin{equation}\label{supp:eq:state-nullary-extraction}
 X\circ(Y\bsum z)
 \ \longleftrightarrow\
 (X\bsum e_0)\circ(Y\bsum z)
 \ \longleftrightarrow\
 (X\circ Y)\bsum(e_0\circ z)
 \ \longleftrightarrow\
 (X\circ Y)\bsum z.
\end{equation}
The middle step is interchange and the others are structural-unit equations.
Consequently the emitted $z$ is already an external scalar factor before
$\mathsf{restore}$ or $\mathsf{suffix}$. Define
$\mathsf R_\phi(d)=\mathsf R_\varnothing(d)$ for this subcase and
$\mathsf R_\phi(d)=\mathsf R(d)$ for every other nonscalar transition.

The transition function takes the first applicable clause in this fixed list.
For a stable word bijection $\pi$, to \emph{transport $\lambda$ along $\pi$}
means
\[
 \lambda^+(\pi p)=p
 \quad\text{for an original position }p,
 \qquad
 \lambda^+(\pi p,\nu)=
   [\lambda^-(p,\nu)]_{\mathord{\equiv}_{F^+}}
 \quad(\nu\in\{\mathsf{in},\mathsf{out}\})
\]
for a tagged staging incidence $(p,\nu)$.  The tag is its incidence role at
the relevant router, not a claim that the two ends of the staging wire have
the same class. For a deletion or quotient this
formula is restricted to the surviving incidences. Every end of a new staging
wire is sent to the $\mathord{\equiv}_{F^+}$-class represented at that router.
Thus the following clauses
specify both the word coordinates of $\omega$ and its label coordinate
$\lambda$.
\begin{enumerate}
\item In phase $\mathsf{split}$, the recursive words for $\mu$ and $\delta$
separate the active block from its complement. The plus partitions are the
transport of the minus partitions by the recorded block permutation.
More precisely, $\omega^+=\operatorname{sep}_e(\omega^-)$ is the unique stable
permutation that lists, on each router side, the positions of the block
in the current $\equiv_F$-class of an endpoint of $e$ before its complement
and preserves the relative order
inside both subwords.
The set $F$ and $\equiv_F$ do not change, $c^+=c^-=0$, and $\lambda^+$ is
the transport of $\lambda^-$ along $\operatorname{sep}_e$.
\item In phase $\mathsf{expose}$, adjacent transpositions bring the two
occurrences belonging to
$e=e_j=([y_j]_{\eta_L},[x'_j]_{\eta_R})$ to the active
composition position. Let $p$ and $q$ be the current positions of the selected
$y_j$- and $x'_j$-occurrences in a middle bank of length $N$. The fixed word
for $\operatorname{adj}_e$ moves the selected $y_j$-occurrence until it is
immediately left of the selected $x'_j$-occurrence. In application order its
blocks are $\tau_{N,p},\ldots,\tau_{N,q-2}$ if $p<q$, with the empty word when
$q=p+1$, and $\tau_{N,p-1},\ldots,\tau_{N,q}$ if $p>q$. Every other occurrence
retains its relative order. The plus partitions are the transported minus
partitions, and the middle identity has type $1\to1$. Set
$\omega^+=\operatorname{adj}_e(\omega^-)$, $F^+=F$, $c^+=c^-=0$, and obtain
$\lambda^+$ by transporting $\lambda^-$ along $\operatorname{adj}_e$.
\item In phase $\mathsf{fuse}$, if the two ends of $e$ are in distinct
${\equiv}_F$-classes, the plus router replaces their two exposed spiders by
their single union spider and sets
\[
 F^+=F\cup\{e\},
 \qquad
 {\equiv}_{F^+}=\operatorname{tc}({\equiv}_F\cup\{e\}).
\]
If the ends are already equivalent, the plus router removes the redundant
incidence by the special Frobenius equation and sets
$F^+=F\cup\{e\}$ and ${\equiv}_{F^+}={\equiv}_F$.
In both subcases $c^+=c^-=0$, and
$\omega^+=\operatorname{quot}_e(\omega^-)$ deletes the two consumed middle
occurrences, replaces their two class names by the name of their new
equivalence class, and lists the remaining legs of that class in ambient
first-occurrence order. Explicitly, $\lambda^+$ fixes every surviving
original position and sends each surviving or new tagged staging incidence to
its $\equiv_{F^+}$-class at that router. In particular, the two labels on the
fused staging wire now agree. This is the quotient transport just defined.
\item In phase $\mathsf{emit}$, assume that every edge of $C$ has been
processed. If $C$ has retained-leg word
$\mathbf E(C)\ne\varnothing$, replace its connected spider network by the
one canonical block on that word, ordered by first occurrence, and take
$c^+=c^-=0$. The update $\omega^+=\operatorname{emit}_C(\omega^-)$ replaces
all active class names by this one ordered block. Its new staging
representative is labelled by the unique $\equiv_F$-class of $C$, and
each of its tagged incidences is labelled at its incident router.
$\lambda^+$ agrees with $\lambda^-$ on every unaffected tagged incidence and
original position. If
$\mathbf E(C)=\varnothing$, delete the active class names and replace the
component by
$z=\io01\circ\io10$, take $(c^-,c^+)=(0,1)$ in the displayed window, and set
$\epsilon^+=1$. In this subcase
$\mathsf R_\phi(d)=\mathsf R_\varnothing(d)$ is the actual raw target, so the
$c^+=1$ tag records the emitted component but the displayed $z$ is external.
$\omega^+=\operatorname{emit}^{\varnothing}_C(\omega^-)$ restricts every
nonscalar word coordinate to the positions outside $C$, in stable order.
$\lambda^+$ is the corresponding restriction of $\lambda^-$, and the new
$z$ is placed at the separate address $\mathfrak a_z^+$. The address $\mathfrak a^+$
addresses the ensuing restore window when it is nonvacuous. If restore is
vacuous, it is set to the first required suffix window when
$\kappa\ne(0,0,0)$. For a zero cursor, the component completes without using
$\mathfrak a^+$. In the retained-leg
subcase set
$\epsilon^+=0$, $\mathfrak a_z^+=\bot$, and $\kappa^+=(0,0,0)$. Every
graph-bearing positive-rank carrier contributes a position to
$\mathbf P_L\mathbf P_R$, while every declared graph-bearing rank-zero hole
or already reboxed scalar carrier is in $S$. Thus this active subcase contains
no graph-bearing carrier,
although inactive $D$-only complementary carriers may remain. It is the only
clause that creates $z$.
\item In phase $\mathsf{restore}$, use the inverse recorded permutation to put
the emitted block and all carriers into their prescribed order. The partition,
$F$, and the scalar count do not change. Thus
$\omega^+=\operatorname{sort}_K(\omega^-)$ is the unique stable sort by the
ambient parent order, and $\lambda^+$ is the transport of $\lambda^-$ along
that sort. Here $c^+=c^-=0$ in both subcases: after a port-free emit the
external $z$ belongs to $\operatorname{Out}_{\mathfrak s}[-]$, rather than to
the nonscalar restore window. The restore phase preserves $\epsilon$ and
transports $\mathfrak a_z$ with the unchanged outer context.
\item In phase $\mathsf{suffix}$, if a copy of $z$ was created, keep every
nonscalar carrier other than the next neighbor in the outer context and use
the one-scalar specialization of Equation~\eqref{supp:eq:scalar-move}.  The
external emit target initially has the fixed parse
$(P\bsum z)\bsum R$ whenever a nonunit prefix $P$ and trailing remainder
$R$ meet at the insertion boundary.  In that case first use the explicit
suffix-entry pair
\begin{equation}\label{supp:eq:state-suffix-entry}
 (P\bsum z)\bsum R\quad\longleftrightarrow\quad
 P\bsum(z\bsum R),
 \qquad P:p\longrightarrow q,\quad R:r\longrightarrow s.
\end{equation}
This is a fixed reassociation. It decreases $\kappa_{\rm ent}$ from one to zero
and does not decrease either other cursor coordinate.  If the source already
has the right-associated form, $\kappa_{\rm ent}=0$ and no administrative
state is created.  Once $\kappa_{\rm ent}=0<\kappa_{\rm car}$, move $z$
across exactly one neighboring nonscalar carrier
by
\begin{equation}\label{supp:eq:state-scalar-step}
 z\bsum(g\bsum R)\quad\longleftrightarrow\quad
 g\bsum(z\bsum R),
 \qquad g:p\longrightarrow q,\quad R:r\longrightarrow s.
\end{equation}
Then move it through the remaining recorded scalar factors one at a time by
\begin{equation}\label{supp:eq:state-scalar-order}
 z\bsum(s\bsum R)
 \quad\longleftrightarrow\quad
 s\bsum(z\bsum R),
 \qquad s,R:0\longrightarrow0.
\end{equation}
The already traversed prefix belongs to the linear outer context. Here $s$ is the
next recorded scalar factor and $R$ is the opaque remainder. If either mark
hides positive-rank original holes, it is recorded as a rank-zero carrier in
$\mathbf g$ as specified above.
This step is used only when
$\kappa_{\rm ent}=\kappa_{\rm car}=0$ and $\kappa_0>0$. Write
$\mathbf u_K=(s_1,\ldots,s_t)$ and put $c=c(\mathcal D)$. After
$j=t-\kappa_0$ scalar-order steps, the scalar word
is, with fixed right association,
\begin{equation}\label{supp:eq:state-pending-suffix}
 \operatorname{RA}\bigl((s_1,\ldots,s_j);
   z\bsum\operatorname{RA}((s_{j+1},\ldots,s_t);Z_c)\bigr).
\end{equation}
One carrier move decreases $\kappa_{\rm car}$ by one. One scalar-order move
decreases $\kappa_0$ by one.  After each of the three suffix pairs, including
entry, set $S^+$ to the maximal right-associated scalar suffix obtained by
scanning $\tau^+$ at the distinguished suffix address.  Thus the $S$
coordinate is defined on every successor and carries no independent choice.
The right endpoint of the last required move is,
with the fixed association displayed above, already the exact suffix
\begin{equation}\label{supp:eq:state-finished-suffix}
 S^+=\operatorname{Suf}(\mathbf u_K,c+1),\qquad \epsilon^+=0.
\end{equation}
If all three cursor coordinates are already zero on entry, this is an
administrative completion with no raw rewrite: the current external factor is
literally $z\bsum Z_c=Z_{c+1}$ after the fixed unit suppression. No
$\mathsf{suffix}$ state with $\kappa=(0,0,0)$ is created.
Thus $z$ finishes after all recorded scalar factors and before the earlier
$z$ factors. Only the new $z$, one neighboring factor, and one opaque
remainder are displayed in any step. Throughout a suffix step $F$,
$\equiv_F$, $C$, $e$, $\omega$, and $\lambda$ are unchanged. No nonscalar
record is formed, so $c^\pm$ is inapplicable. The address $\mathfrak a$ is
advanced to the next neighboring carrier or scalar factor after each move,
and $\mathfrak a_z$ is the address of the moved copy. When the last required
move finishes, set $\mathfrak a_z^+=\bot$.
\end{enumerate}
For every nonterminal state, define its displayed raw pair
$(U^-_{\mathfrak s},U^+_{\mathfrak s})$ as follows. In a nonscalar phase it is
\[
 (U^-_{\mathfrak s},U^+_{\mathfrak s})
   =(\mathsf L(d_{\mathfrak s}),\mathsf R_\phi(d_{\mathfrak s})).
\]
In a suffix phase it is the left and right side of
Equation~\eqref{supp:eq:state-suffix-entry} while
$\kappa_{\rm ent}=1$, the two sides of
Equation~\eqref{supp:eq:state-scalar-step} while
$\kappa_{\rm ent}=0<\kappa_{\rm car}$, and the two sides of
Equation~\eqref{supp:eq:state-scalar-order} when
$\kappa_{\rm ent}=\kappa_{\rm car}=0<\kappa_0$. For every reachable
nonterminal state, the literal carrier-address invariant is
\begin{equation}\label{supp:eq:carrier-address-invariant}
 \operatorname{cut}(\tau,\mathfrak a)
   =(\operatorname{Out}_{\mathfrak s}[-],V_{\mathfrak s}),
 \qquad
 V_{\mathfrak s}=\sigma_{\mathfrak s}(U^-_{\mathfrak s}),
 \qquad
 G_i=V_{\mathfrak s}|_{\mathfrak b_i}\quad(1\le i\le m).
\end{equation}
In particular, the substituted renderer source is a literal subterm
occurrence, not merely a raw expression with the same graph value. The
renderer target before successor exposure is
\begin{equation}\label{supp:eq:state-term-update}
 \widetilde\tau^+:=\operatorname{Out}_{\mathfrak s}
      [\sigma_{\mathfrak s}(U^+_{\mathfrak s})].
\end{equation}
Let $\mathcal I^+$ be the ordered interval family determined by the successor
carrier grouping. Lemma~\ref{supp:lem:carrier-exposure} gives the explicit
structural derivation
\begin{equation}\label{supp:eq:state-successor-exposure}
 \widetilde\tau^+
 \xleftrightarrow{*}_{\mathcal B}
 \tau^+:=\operatorname{Exp}_{\mathcal I^+}(\widetilde\tau^+).
\end{equation}
Every elementary reassociation in this derivation is inserted into the
normalization sequence. If the successor is a scalar phase,
$\mathcal I^+=\varnothing$ and the derivation is reflexive. For the terminal
successor, its target is the prescribed fixed parenthesization of
$\mathsf N(K)$. After successor exposure, set $S^+$ to the maximal
right-associated scalar suffix read from $\tau^+$ at the distinguished suffix
address. Thus every transition acts on an actual syntax tree, and the scalar
coordinate is total and deterministic.

The successor state is fixed as follows. For a nonisolated component, the
phase word for its first edge is
$\mathsf{split},\mathsf{expose},\mathsf{fuse}$. After a fuse, the next edge of
the component starts again in phase $\mathsf{split}$. After its last edge, the
next phase is $\mathsf{emit}$. An isolated component also starts in
$\mathsf{emit}$. After emit, use $\mathsf{restore}$ exactly when
$\operatorname{sort}_K$ has a nonempty nonscalar window. If all surviving
banks are empty, or the fixed sort is vacuous after unit suppression, omit
that phase. In particular, when all outer factors also unit-suppress, the
composite $\io01\circ\io10:0\to0$ goes directly from its port-free emit
target $z$ to component completion. Otherwise the first successor is the
suffix-entry or scalar window below, never a vacuous restore.

At the last nonvacuous one of emit and restore, if $\epsilon=1$, set
\[
 \kappa=\bigl(
 \mathbf 1_{\rm ent},
 |\{\text{nonscalar carriers strictly between the new }z\text{ and }S\}|,\,
 |\mathbf u_K|\bigr).
\]
$\mathbf 1_{\rm ent}=1$ precisely when the fixed current parse at the pending
scalar boundary is $(P\bsum z)\bsum R$ with nonunit $R$. Otherwise it is
zero. If $\kappa\ne(0,0,0)$, reset $\mathfrak a$ to that suffix-entry subterm
when $\kappa_{\rm ent}=1$, and otherwise to the smallest right-associated
subterm $z\bsum(g\bsum R)$ or $z\bsum(s\bsum R)$ beginning at the $z$ at
$\mathfrak a_z$. Here $R$ is the opaque remainder. Enter $\mathsf{suffix}$.
A suffix phase first applies Equation~\eqref{supp:eq:state-suffix-entry} when
$\kappa_{\rm ent}=1$, then repeats
Equation~\eqref{supp:eq:state-scalar-step} until $\kappa_{\rm car}=0$, and
then repeats
Equation~\eqref{supp:eq:state-scalar-order} until $\kappa_0=0$.
The component is complete after the last nonvacuous emit or restore when
$\epsilon=0$, immediately when $\epsilon=1$ and $\kappa=(0,0,0)$, or after the
final suffix move. The zero-cursor case uses the administrative
identification in Equation~\eqref{supp:eq:state-finished-suffix} and introduces
no state without an active raw window. At component completion set
\[
\mathcal D^+=\mathcal D\,C
\]
and activate the first component of $\mathcal C_K$ outside $\mathcal D^+$.
This also marks an isolated component as processed. If the next component has
an edge, set $e$ to its first edge and $\phi=\mathsf{split}$. If it is
isolated, set $e=\bot$ and $\phi=\mathsf{emit}$. If there is no next
component, set $C=e=\bot$ and $\phi=\mathsf{done}$. In all three cases set
$\epsilon=0$, $\mathfrak a_z=\bot$, and $\kappa=(0,0,0)$. This specifies one successor at every
nonterminal state.
Put
\[
 \widetilde\tau_0=((\widehat L\circ\widehat R)\bsum S_0).
\]
Let $\mathcal I_0$ be the ordered interval family of the first active window.
When there is an active window, the initial state has
$\mathcal D=\varnothing$, $F=\varnothing$, $S=S_0$, and
\[
 \tau=\operatorname{Exp}_{\mathcal I_0}(\widetilde\tau_0).
\]
It also has $\epsilon=0$, $\mathfrak a_z=\bot$, $\kappa=(0,0,0)$, the child
block order $\omega$, the first ordered component active, and phase
$\mathsf{split}$, or
$\mathsf{emit}$ when that component is isolated. When $b=0$ and blocks are
present, the first state is the emit state of the first isolated block. When
there is no block, the initial state is already terminal. The initial exposure
derivation from $\widetilde\tau_0$ is included in the normalization sequence
whenever it is present. The terminal state
has $\mathcal D=\mathcal C_K$, $F=E(\Gamma_K)$, no active component,
$\epsilon=0$, $\mathfrak a_z=\bot$, $\kappa=(0,0,0)$, and phase $\mathsf{done}$. Put
\[
 \rho_*=\operatorname{tc}\bigl(\eta_L\cup\eta_R\cup
 \{\{y_j,x'_j\}:1\le j\le b\}\bigr).
\]
Its position partition is
$\rho_*|_{\mathbf E_K}$. Its suffix is
\[
 \operatorname{Suf}\left(
 \mathbf u_K,\,
 c_L+c_R+
 |\{A\in\rho_*:A\cap\mathbf E_K=\varnothing\}|\right).
\]
Its raw-term coordinate is
\begin{equation}\label{supp:eq:state-terminal-term}
 \tau=\mathsf N(K):=
 \Bigl((e_a\bsum H_K^+)\circ
       \operatorname{Can}(\rho_*|_{\mathbf E_K},0)\Bigr)
 \bsum
 \operatorname{Suf}\left(
 \mathbf u_K,\,
 c_L+c_R+
 |\{A\in\rho_*:A\cap\mathbf E_K=\varnothing\}|\right).
\end{equation}
In the no-block case $\mathcal C_K=\varnothing$ and the displayed set of lost
classes is empty, so the initial terminal state keeps $S=S_0$ literally and
performs no transition.
Thus $\mathsf N(K)$ is exactly the parent normal form for the selected
scalar-factor record, with every internal middle position removed and every
lost class retained as one $z$. In the generic specialization
$\mathbf u_K=\mathbf h_K^0$, so this is the normal form used in the typed
coverage statement below.

\medskip
\noindent\emph{State-realization and successor lemma.}
Let
$\mathfrak s_0,\ldots,\mathfrak s_N$ be the deterministic run specified
above. For every $j$, $\operatorname{Term}(\mathfrak s_j)$ is a fully expanded
raw term over $D$ and the original holes, of type $a\to c$. Its marked carrier
occurrences have precisely the substitutions $\mathbf G_j$. On positions
still present in the active and unprocessed router banks, its recorded
partition is the corresponding restriction of $\rho_{F_j}$. For every
$C'\in\mathcal D_j$, the emitted block realizes
$\rho_{F_j}|_{\mathbf E(C')}$, or is its external $z$ when
$\mathbf E(C')=\varnothing$. The active $C_j$ may already have its canonical
block or pending external $z$ after $\mathsf{emit}$, but it enters
$\mathcal D_j$ only after restore and suffix completion. Its port-and-carrier
order is $\omega_j$, and its scalar word is the one in
Equation~\eqref{supp:eq:state-invariant}. Every nonterminal state satisfies
the literal carrier-address invariant
\eqref{supp:eq:carrier-address-invariant}. When $N>0$, the initial exposure satisfies
\begin{equation}\label{supp:eq:state-realization-initial}
 \widetilde\tau_0
 \xleftrightarrow{*}_{\mathcal B}
 \operatorname{Term}(\mathfrak s_0).
\end{equation}
For $0\le j<N$, define the unexposed renderer target
\[
 \widetilde K_{j+1}:=
 \operatorname{Out}_{\mathfrak s_j}
       [\sigma_{\mathfrak s_j}(U^+_{\mathfrak s_j})].
\]
Then
\begin{align}
 \operatorname{Term}(\mathfrak s_j)
   &=\operatorname{Out}_{\mathfrak s_j}
       [\sigma_{\mathfrak s_j}(U^-_{\mathfrak s_j})],
   \label{supp:eq:state-realization-left}\\
 \widetilde K_{j+1}
   &=\operatorname{Out}_{\mathfrak s_j}
       [\sigma_{\mathfrak s_j}(U^+_{\mathfrak s_j})]
   \label{supp:eq:state-realization-right}
\end{align}
and the deterministic successor exposure gives
\begin{equation}\label{supp:eq:state-successor-glue}
 \widetilde K_{j+1}
 \xleftrightarrow{*}_{\mathcal B}
 \operatorname{Term}(\mathfrak s_{j+1}).
\end{equation}
If $j+1<N$, the right endpoint has the next literal carrier addresses and is
\[
 \operatorname{Term}(\mathfrak s_{j+1})
 =\operatorname{Out}_{\mathfrak s_{j+1}}
   [\sigma_{\mathfrak s_{j+1}}(U^-_{\mathfrak s_{j+1}})].
\]
Finally,
$\operatorname{Term}(\mathfrak s_N)=\mathsf N(K)$ from
Equation~\eqref{supp:eq:state-terminal-term}.

\emph{Proof.}
If $N=0$, unit suppression identifies the initial term with
\eqref{supp:eq:state-terminal-term}, and there is no active cut or successor
to prove. Assume $N>0$.
The initialization
\eqref{supp:eq:state-suffix-initialization}--\eqref{supp:eq:state-initial-suffix}
gives $\widetilde\tau_0$. Apply
Lemma~\ref{supp:lem:carrier-exposure} to the ordered interval family of the
first active window. This proves
\eqref{supp:eq:state-realization-initial}. The resulting fixed parse tree has
the first active two-router or isolated-component occurrence at the declared
address. Its carrier occurrences are literal subterms at the pairwise
incomparable addresses $\boldsymbol{\mathfrak b}_0$. Reading the six banks and
replacing precisely those occurrences by the formal marks $g_i$ gives
$d_{\mathfrak s_0}$. Substituting $G_i$ back reconstructs the addressed
occurrence literally. This proves
\eqref{supp:eq:carrier-address-invariant} and
\eqref{supp:eq:state-realization-left} at $j=0$.

Assume the invariants hold at $j$. Equation~\eqref{supp:eq:state-term-update}
replaces that one literal occurrence and leaves
$\operatorname{Out}_{\mathfrak s_j}[-]$ unchanged. Its result is
$\widetilde K_{j+1}$, which proves
\eqref{supp:eq:state-realization-right}. In the split and expose cases, the
recorded stable permutation gives exactly the bank order read by the next
phase. In a fuse, the quotient deletes precisely the two exposed middle
occurrences and the new labels are exactly the classes of $\rho_{F^+}$. A
retained emit replaces the completed component by its ordered canonical block.
A port-free emit instead uses
Equation~\eqref{supp:eq:state-nullary-emit-target}. By
\eqref{supp:eq:state-nullary-extraction} its new $z$ is a literal external
factor. Hence any following restore window excludes that $z$. If the restore
window is vacuous it is skipped, and the emit target itself is scanned for the
first suffix or next-component address. In either case, any later suffix address
first selects the literal $(P\bsum z)\bsum R$ of
\eqref{supp:eq:state-suffix-entry} when reassociation is required. Its target
contains the right-associated $z\bsum R$ at the next address.  It then selects
the exact three-factor window assumed in
\eqref{supp:eq:state-scalar-step} or
\eqref{supp:eq:state-scalar-order}. Restore performs the recorded stable
inverse permutation. Each suffix clause is itself a literal raw pair. Its
successor address is the next window in the displayed target, and its last required
move gives the suffix in
\eqref{supp:eq:state-finished-suffix}. A zero-cursor administrative completion
does not introduce an additional state. In these scalar cases the successor
exposure is reflexive.

None of these replacements changes an expanded carrier subterm $G_i$.
The stable bank update gives $\omega_{j+1}$ and transports the two tagged
incidences of every staging wire separately under $\lambda$. If the next phase
changes the maximal-sector grouping, this update first puts each required
carrier sublist into consecutive order. The order-compatible occurrence
partition in \eqref{supp:eq:state-invariant} then determines pairwise disjoint
intervals in the displayed target. Apply
Lemma~\ref{supp:lem:carrier-exposure} to those intervals. This gives the
derivation in \eqref{supp:eq:state-successor-exposure}, whose target is
$\operatorname{Term}(\mathfrak s_{j+1})$. Its exposed literal occurrences are
the new addresses $\boldsymbol{\mathfrak b}_{j+1}$ and their addressed
subterms are $\mathbf G_{j+1}$. Replacing them by the new marks and substituting
them back reconstructs the target literally. Therefore the deterministic next
address cuts out exactly
$\sigma_{\mathfrak s_{j+1}}(U^-_{\mathfrak s_{j+1}})$. This proves both
\eqref{supp:eq:carrier-address-invariant} at the successor and
\eqref{supp:eq:state-successor-glue} and closes the induction, including the
case in which component completion merely selects the first address of the
next component.

At termination $F=E(\Gamma_K)$. Hence the surviving partition is
$\rho_*|_{\mathbf E_K}$, while the emit/suffix clauses have produced exactly
one external $z$ for every $\rho_*$-class missing $\mathbf E_K$. Stable restore
has put $H_K^+$ and the outer banks into declared order. The terminal exposure
restores the prescribed fixed parenthesization, so the fully expanded term is
exactly $\mathsf N(K)$. \hfill$\square$

For the run, write
$\sigma_j=\sigma_{\mathfrak s_j}$ and
$U_j^\pm=U_{\mathfrak s_j}^\pm$, and define only now
\[
 K_j:=\operatorname{Term}(\mathfrak s_j),
 \qquad
 \mathcal C_j[-]:=\operatorname{Out}_{\mathfrak s_j}[-].
\]
Equations~\eqref{supp:eq:state-realization-left}--
\eqref{supp:eq:state-successor-glue} give the noncircular concatenation
\begin{align}
 K_j&=\mathcal C_j[\sigma_j(U_j^-)],
 \label{supp:eq:defined-Kj-left}\\
 \widetilde K_{j+1}&=\mathcal C_j[\sigma_j(U_j^+)],
 \label{supp:eq:defined-Kj-right}\\
 \widetilde K_{j+1}&
 \xleftrightarrow{*}_{\mathcal B}K_{j+1}.
 \label{supp:eq:defined-Kj}
\end{align}
These relations hold for every $0\le j<N$. If $j+1<N$, then additionally
\[
 K_{j+1}
 =\mathcal C_{j+1}[\sigma_{j+1}(U_{j+1}^-)]
\]
by Equation~\eqref{supp:eq:state-successor-glue}. The final transition has
$K_N=\mathsf N(K)$. For a nonscalar phase this says
$U_j^-=\mathsf L(d_j)$ and
$U_j^+=\mathsf R_{\phi_j}(d_j)$. For a suffix phase the $U_j^\pm$ are the
explicit scalar endpoints. Thus every renderer source is a literal subterm.
Its target reaches the next literal renderer source through the explicit
$\mathcal B$-reassociation in Equation~\eqref{supp:eq:defined-Kj}. Concatenating
the retained renderer derivations with these carrier-exposure derivations
gives one raw derivation. No equality of graph values is used to glue
successive steps.

The syntax recursion is exhaustive. At $L\bsum R$, first apply stable
order-straightening when a required carrier sublist is not consecutive, then
use Lemma~\ref{supp:lem:carrier-exposure} to expose each nonempty child
aggregate as a literal carrier. Concatenate their recorded internal hole
sublists to form the parent aggregate, and take the transported disjoint union
$\eta_L\dot\cup\eta_R$. Then put the two suffixes into declared order by
scalar moves. At $L\circ R$, run the state machine above. If both children
have nonempty positive-rank nonscalar aggregates, both are initially present.
If exactly one has such an aggregate, the other is empty and the output-only
or input-only blocks of its discrete router remain explicit. If neither does,
$\mathbf g$ is empty and the same machine normalizes one partition block at a
time. Rank-zero graph-bearing holes, reboxed graph-bearing scalar carriers,
and children whose graph-bearing content consists only of such factors occur
in $S_L$ or $S_R$ and are handled by the suffix initialization, not by the
incidence machine. An actual
unit $e_0$ is suppressed. A positive-rank discrete child is not suppressed,
and an occupied hole of type $0\to0$ is a labeled scalar carrier, never a
unit. The base states are a positive-rank hole, a rank-zero hole, a letter of $D$, and a
structural unit. Hence the two graph-bearing, both mixed, two discrete,
unit-child, empty-bank, input-only, output-only, and scalar-nullary cases are
all determined without an implicit choice.

Finally, the local record is bounded. First consider a nonscalar datum. If at
most $r$ protected and outer positions are present, every displayed carrier
has interface load at most $k_0=\max\{r,w\}$, and
\begin{equation}\label{supp:eq:state-load-audit}
 r_i\le k_0,
 \qquad
 \sum_{i=1}^m r_i\le4k_0,
 \qquad
 T:=r+\sum_{i=1}^m r_i\le r+4k_0,
 \qquad
 2T+4\le2(r+4k_0)+4.
\end{equation}
Name the six source slots
\[
 \mathcal S_1=\mathbf A,\quad
 \mathcal S_2=\mathbf P(I_x),\quad
 \mathcal S_3=\mathbf A',\quad
 \mathcal S_4=\mathbf P(I_y),\quad
 \mathcal S_5=\mathbf B_y,\quad
 \mathcal S_6=\mathbf B_x\mathbf B.
\]
The base word
$\mathbf A\mathbf A'\mathbf P(I_x)\mathbf P(I_y)\mathbf B$ has length at
most $T$. Here the carrier words include every maximal unchanged
complementary sector meeting the window, whether or not that sector contains
an original graph-bearing hole. A $D$-only carrier has an empty original-hole
sublist. By construction of $\operatorname{sep}_e$, every nonactive class
occurring at an unprocessed middle incidence belongs to one such
complementary sector and therefore has a representative in
$\mathbf P(I_x)\mathbf P(I_y)$. More explicitly, for a nonactive class that
was middle-only, choose its least unprocessed occurrence-tagged middle
endpoint $(j,L)$ or $(j,R)$. The stable separation puts the sector containing
that endpoint behind a unique carrier-port representative $p_A$. Charge the
class to $p_A$. Distinct classes have disjoint tagged endpoints and distinct
carrier representatives. Charge every other exported staging class that
meets the base word to its least base position. The only exported class with
no such position is the single active output-only or port-free class, which is
charged to one dummy position. Distinct exported classes receive distinct
charges. Hence
\[
 |\mathbf B_y|\le T+1,\qquad
 |\mathbf B_x|\le T+1,\qquad
 |\mathbf B_x\mathbf B|\le2T+1.
\]
Thus, if $b_i:=|\mathcal S_i|$ (and $b_i=0$ for an absent slot), then
\[
 b_i\le2T+1\le2T+4
 \le2\bigl(r+4\max\{r,w\}\bigr)+4=B(r,w).
\]

Here is the charging map $\chi$ for the rendered count. Fix either endpoint.
An occurrence whose source is $p\in\mathcal S_i$ is sent to
$\chi(o)=(i,p,0)$ when it is native
and to $\chi(o)=(i,p,1)$ when it is repeated in a naming, staging, or
restoration list. The renderer uses each source position at most once at each
level, so $\chi$ is injective on these occurrences. Send the at most four
remaining auxiliary occurrences (the active identity, one nullary leg, and
two scalar-padding positions) injectively to distinct dummy targets
$(i,\ast,\nu)$ in
$\bigcup_{i=1}^6\{i\}\times(\mathcal S_i\cup\{\ast\})\times\{0,1\}$.
Since the map is injective,
\begin{equation}\label{supp:eq:state-six-slot-audit}
 \ell(d)\le2\sum_{i=1}^{6}b_i+4
 \le2\sum_{i=1}^{6}(b_i+1)
 \le12\bigl(B(r,w)+1\bigr)
 =12\left(2\bigl(r+4\max\{r,w\}\bigr)+5\right)=Q(r,w).
\end{equation}
The externalized nullary target and its extraction
\eqref{supp:eq:state-nullary-extraction} use the same six source slots. Their
$e_0$ and $z$ positions are among the four auxiliary occurrences already
charged above.
For a suffix-entry or carrier-crossing scalar datum, the window exposes at
most two addressed carrier marks. Each has total input-output interface load
at most $\max\{r,w\}$. There is no naming or staging copy in these
associativity and scalar-commutation pairs, so
\begin{equation}\label{supp:eq:state-suffix-load}
 \ell(d)\le2\max\{r,w\}\le Q(r,w).
\end{equation}
The scalar-order pair has port load zero.  Thus the six-slot argument is the
nonscalar audit, while Equation~\eqref{supp:eq:state-suffix-load} covers all
three suffix pairs.
Parse-specific reboxing does not alter these inequalities: every added scalar
factor has exterior type $0\to0$, and the width of its expanded substitution
is charged to the substituted-hole bound rather than to the rendered port
load.
Consequently the phase, types, placement injections, partitions, orders,
scalar roles, and raw endpoints of every local transition range over a finite
set for fixed $(r,w)$. The number of global tasks may depend on the syntax,
but no local record does. This completes the explicit renderer of
Definition~\ref{supp:def:renderer}.

The order-straightening family is a decidable $\mathsf W_2$ subfamily. Its
datum records a shared bank $\mathbf P$, a permutation $\sigma$, both incident
router partitions, the complementary hole placements, and the outer order.
The target postcomposes the child router with $\sigma$, precomposes the parent
with $\sigma^{-1}$, and uses bank $\sigma\mathbf P$. The raw words cancel, so
the protected and outer partition is unchanged. This is included in the
validity test and two-router arity audit below. It is separately enumerated
and is not a seventh window kind.

\begin{proposition}[validity of the six templates]
\label{supp:prop:renderer-validity}
Each pair in Definition~\ref{supp:def:renderer}, including every state-machine
transition specified there, has boundary-preserving
isomorphic graph values over the distinct-hole alphabet. Every displayed
formal hole occurs once on each side with the same ordered input-output
incidence list.
\end{proposition}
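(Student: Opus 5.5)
The plan is to evaluate both sides of each template pair in $\GR(\Sigma\cup H)$, where every formal hole (and every carrier mark $g_i$) is read as a distinct letter of its recorded type. Since $\val$ is a magmoid morphism, each side is then computed compositionally. Two kinds of check are needed. First, a bookkeeping check of linear use: every hole occurs once on each side, with the same ordered incidence list. Second, a semantic check: both sides have isomorphic values with the same begin and end sequences.

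\emph{Linear use.} This is purely syntactic. In each of the six templates, and in the state-machine pairs $\mathsf L(d)\leftrightarrow\mathsf R_\phi(d)$, the hole aggregates $H_I$, $H_{I_x}$, $\mathsf W(\xi_y^\pm,0;I_y)$, the cell holes $h_{ij}$, and the scalar holes appear on both sides in the same slots. Their output ports feed routers through the recorded placement injections $\boldsymbol\iota$, so the incidence lists agree by construction. Occupied rank-zero holes are explicitly retained, so none disappears.

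\emph{Graph values.} I would first prove one evaluation lemma: $\val\operatorname{Can}(\eta,c)$ is the discrete graph whose vertices are the blocks of $\eta$ plus $c$ isolated vertices, with each position attached to its block. This follows from the definitions of $\mu_j,\delta_j,Z_c$ and the permutation words by a direct induction. Consequently any term of the form $(e_a\bsum H)\circ\operatorname{Can}(\xi,c)$, and any composite of such routers as in $\mathsf W_2$, $\mathsf{Row}$, $\mathsf{Col}$, evaluates to the disjoint union of the hole graphs. In that union the hole ports and the outer positions are glued according to the transitive closure of the displayed partitions, and one vertex is added for each block with no surviving position, plus the $c$ tagged scalars.

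With this lemma, each pair reduces to a check that the two sides induce the same port partition, the same outer order, and the same count of port-free vertices.
\begin{itemize}
\item \emph{Fold, prune, split, expose, restore, order-straightening.} These hold because the plus partitions are, by definition, the stable transports or quotients of the minus ones.
\item \emph{Fuse.} Adding $e$ to $F$ exactly performs the positional gluing.
\item \emph{Emit.} A component whose leg word is empty contributes one vertex, matching the $z$, or the $c^+=1$ tag. The nullary target \eqref{supp:eq:state-nullary-emit-target} agrees with it by \eqref{supp:eq:state-nullary-extraction}.
\item \emph{Prune increment.} The increment $\epsilon$ accounts for exactly the deleted port-free component.
\item \emph{Four-cell switch.} Both $\mathsf{Row}_\eta$ and $\mathsf{Col}_\eta$ realize $\eta$ with outer list $\mathcal O$. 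The staging sets $\mathcal S_{i*}$ and $\mathcal S_{*j}$ are defined so that every block meeting two rows (respectively two columns) or $\mathcal O$ is carried through, and every other block closes locally. It remains to check that the transitive closure of the two-stage partitions equals $\eta$ and that no port-free vertex is created. This holds because each block contains some hole port.
\item \emph{Binary router and leaf frame.} Here I would use the $\beta,\gamma$ definitions, the reduced-core identity $\val(\operatorname{Name}[h])\cong\val(C_\beta[h]\circ\operatorname{Rep}_\beta)$, and the yanking identity of Lemma~\ref{lem:bending}.
\end{itemize}
Scalar moves and suffix pairs are immediate, since rank-zero graphs commute under $\bsum$ and \eqref{supp:eq:scalar-compose} is a structural $\mathcal B$-instance.

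The main obstacle is the evaluation lemma in the presence of staged banks with repeated positions and the tagged staging incidences of the state machine. One must make sure that transitive closure across two stacked routers reproduces $\rho_{F^\pm}$ pushed forward along $\lambda^\pm$, rather than merging classes whose two staging ends carry different labels before a fuse. I would handle this by proving the composite-router evaluation for a general chain $\operatorname{Can}(\xi_1)\circ\operatorname{Can}(\xi_2)$. Its value partition is the closure of $\xi_1\cup\xi_2$ glued along the middle positions, and its lost classes are the closure classes avoiding the outer positions. Then every case is a direct instance of this general statement.
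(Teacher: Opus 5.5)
Your proposal is correct and follows essentially the paper's own proof. Both evaluate each pair directly over the distinct-hole alphabet, case by case, and reduce it to equality of the protected-and-outer partition, the ordered outer interface, and the port-free count; the paper phrases the fuse, fold, and prune checks through the fusion, special Frobenius, merge-associativity, and unit/counit instances, while you route everything through one closure lemma for stacked canonical routers.

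One correction to your stated obstacle: the closure of $\mathsf L(d)$ does merge across staging wires whose two ends still carry different labels, and your fuse bullet relies on exactly that merge. So you need only that the two endpoints have the same closure, not that either closure equals the $\rho_{F^\pm}$ pushforward.
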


\begin{proof}
The $\Sigma$-leaf frame starts with yanking. The merge and repetition in
Equation~\eqref{supp:eq:reduced-leaf-core} act exactly within the
$\beta$-blocks. Hence $C_\beta[h]\circ\operatorname{Rep}_\beta$, the raw
name, and the direct frame have the same boundary-preserving value: each has
the same single occurrence of $h$, the same ordered incidence list, and the
same $\beta$-partition of the outer positions. This is a direct evaluation and
does not use bounded frame straightening. A hole-free discrete leaf has exactly
the partition and component count rendered by its canonical router. For a
binary pair, the typed partition $\theta_u$ performs precisely
the disjoint sum or positional quotient at the parent, closes exactly the
classes absent from the parent boundary, and restores the ordered parent
representatives. In the mixed case its output-only blocks create exactly the
boundary classes supplied solely by the discrete child.

In the state machine, split, expose, and restore only transport a typed
partition by fixed permutations and interchange. A fuse transition either
joins two distinct spider blocks along the displayed identity wire or removes
a redundant incidence within one block. These are respectively the fusion and
special Frobenius cases, and both preserve the partition after transitive
closure. At emit, a component with retained legs becomes its canonical spider.
A component without a retained leg is connected and port free, so its value is
one isolated vertex and the target $z$ is exact.
Equation~\eqref{supp:eq:state-nullary-extraction} is interchange plus unit, so
the actual external endpoint $\mathsf R_\varnothing(d)$ has the same holes,
ordered incidences, and component count as the internally tagged
$\mathsf R(d)$. The suffix-entry transition changes only association, and the
remaining suffix transitions change only the order and association of the
pending scalar component. Equations
\eqref{supp:eq:state-left} and \eqref{supp:eq:state-right} use the same carrier
list. Hence every aggregate carrier is untouched, occurs once on each side,
and retains its ordered incidence map.

A full fold pair has the same protected and outer partition on both sides and
the same scalar count. Its active primitive is associativity of merge when
both active blocks are nonnullary. The cases with a nullary block also use the
unit equation, commutativity, and permutation transport. A full prune pair
removes only an unflagged terminal routing piece. If the removed terminal
piece constitutes an entire component with no protected or outer port, the
increase from $c^-$ to $c^+$ retains its unique vertex as one copy of $z$.
Otherwise the scalar count is unchanged. The two
primitive prune blocks are the unit and counit equations.

The row and column terms use the same occupied-cell set, the same partition
$\eta$, the same ordered survivor list $\mathcal O$, and the same scalar tag
$c$. Every occupied positive-rank or rank-zero hole occurs once on each side.
An absent cell contributes only the structural unit. A rank-zero occupied hole
contributes the same labeled rank-zero formal box, hence the same closed
substituted value, on both sides.
Equation~\eqref{supp:eq:scalar-two-cell} is a contextual use of the one-scalar
instance of Equation~\eqref{supp:eq:scalar-move}. The two-scalar move changes only the order
and association of disjoint rank-zero components. The rank-zero composition
pair in Equation~\eqref{supp:eq:scalar-compose} is interchange padded by
$e_0$, followed by the unit laws. In an auxiliary
order-straightening pair, the raw words for $\sigma$ and
$\sigma^{-1}$ are adjacent after the shared bank is exposed, so they cancel
and leave both incident router partitions unchanged. All equalities hold over
the auxiliary alphabet before any core is substituted, and every formula
displays each formal hole once on each side.
\end{proof}

\subsection{Finite enumeration, coverage, and the arity audit}
\label{supp:sec:coverage-audit}

For a rendered endpoint $E$, let $\ell(E)$ be its complete
\emph{rendered port load}. It counts every named position in a formal hole
interface, in the ordered outer interface, and in every typed internal bank
recorded by $\operatorname{Name}$, $\operatorname{Unname}$,
$\operatorname{Frame}$,
$\operatorname{Can}$, $\mathsf W$, or $\mathsf W_2$. A position used again in
a naming, staging, or restoration list is counted again. For a datum $d$, put
\[
 \ell(d)=\max\{\ell(L(d)),\ell(R(d))\}.
\]
For a composition-state datum, $L(d)$ means $\mathsf L(d)$ and $R(d)$ means
the actual endpoint $\mathsf R_\phi(d)$, including the external target
\eqref{supp:eq:state-nullary-emit-target}. For a scalar-state datum they mean
the corresponding two sides of
\eqref{supp:eq:state-suffix-entry},
\eqref{supp:eq:state-scalar-step}, or
\eqref{supp:eq:state-scalar-order}.
Only occurrences in the recorded position lists are counted. The unnamed
intermediate wires inside the fixed recursive words for $\mu$, $\delta$, and
a permutation are controlled later by the endpoint-size bound.

Fix $q$. Enumerate the six window kinds, at most four ordered hole types, the
four-cell occupancy sets independently of the interface ranks, the ordered
outer input and output type, the parent operation when one is present, every
port-placement map, every typed input-output partition at each $\mathsf W$ or
$\mathsf W_2$ stage, every survivor and temporary block order, every
$\mathsf W_2$ bank permutation, every active-strand choice, scalar tags in
$\{0,1,2\}$, and the role of every rank-zero formal hole as a cell core,
active scalar factor, suffix factor, or carrier.
For the two-scalar kind also enumerate whether the displayed operation is
composition or parallel sum, as in
Equation~\eqref{supp:eq:scalar-compose}.
For each bounded well-typed pair of child routers, including a pair carrying
ordered aggregate boxes with their types and port incidence injections,
enumerate the local records $d$ of the finite composition-state machine in
Definition~\ref{supp:def:renderer}. The phase, minus and plus partitions,
active middle incidence, survivor order, scalar increment, and fixed placement
maps determine the raw endpoints by
Equations~\eqref{supp:eq:state-left}--\eqref{supp:eq:state-right}, with
\eqref{supp:eq:state-nullary-emit-target} as the port-free emit target.
Also enumerate the scalar pairs
\eqref{supp:eq:state-suffix-entry},
\eqref{supp:eq:state-scalar-step}, and
\eqref{supp:eq:state-scalar-order}. The extraction
\eqref{supp:eq:state-nullary-extraction} is part of the binary-router
state-machine family and is not a seventh window kind. The four
aggregate boxes of a nonscalar datum, or the at most two carriers and one or
two scalar holes of a scalar datum, are enumerated in their recorded
left-to-right order. In particular, the suffix-entry pair is a structural
associativity datum with two carrier slots of types $P:p\to q$ and
$R:r\to s$. It is one of the six existing window kinds.
A completed suffix is otherwise kept in the outer context. For a leaf, enumerate the
letters of the fixed finite alphabet $\Sigma$ whose
rank sum is at most $q$. Rename formal holes $h_1,\ldots,h_m$ in recorded
left-to-right order and take data modulo this type- and order-preserving
renaming. Retain only data with $\ell(d)\le q$.

Render both endpoints by Definition~\ref{supp:def:renderer}. Evaluate them over
the finite auxiliary alphabet $H$ consisting of the distinct formal holes with
their recorded ranks. Retain the pair exactly when a finite vertex bijection
preserves edge labels, directions, every hole's ordered input-output incidence
list, the ordered outer begin and end lists, and the port-free component count.
This test is finite. It gives $\mathcal M_q$. Since every recorded position
list has length at most $q$, all enumerated choices are finite.

For each retained pair, the completeness theorem
\cite[Cor.~3]{BK04}, applied to the edge alphabet $\Sigma\cup H$, supplies a
$\mathcal B\cup\E_{\Sigma\cup H}$-derivation before substitution. Dovetailed
proof search finds one effectively. After actual cores are substituted,
structural induction expands naturality for every composite box and converts
the derivation to one over $\mathcal B\cup\E_\Sigma$, as in the main text's
naturality-under-substitution lemma.

Let $P(q)$ be the largest raw size of a fixed permutation word on at most $q$
recorded positions, let $C(q)$ be the largest raw size of a canonical router
on at most $q$ recorded positions with scalar tag at most two, and let $K(q)$
be the largest raw size of $\mathsf L(d)$ or the actual
$\mathsf R_\phi(d)$ for a local state
record of load at most $q$. These functions are computable by finite
enumeration. The renderer recursions and the bounded local records therefore
give the computable endpoint bound
\[
 N(q)=\max\bigl(\{0\}\cup
 \{|L|,|R|:(L,R)\in\mathcal M_q\}\bigr).
\]

\Needspace{6\baselineskip}
\begin{lemma}[typed coverage of the renderer]
\label{supp:lem:renderer-coverage}
Let $C:p\to q$ be a raw linear-use context over $D$ and the structural units,
with a declared ordered list
$\mathcal H=(h_1,\ldots,h_m)$ of formal holes $h_i:0\to r_i$. Every hole with
$r_i>0$ has distinct protected outputs, and every hole with $r_i=0$ is retained
as a labeled graph-bearing scalar hole. Suppose that at most $r$ protected and
ordered outer positions occur and that, after the prescribed cores are
inserted, the resulting expression and every inserted core have pattern width
at most $w$. Put
\begin{equation}\label{supp:eq:coverage-load-bounds}
 B(r,w)=2\bigl(r+4\max\{r,w\}\bigr)+4,
 \qquad
 Q(r,w)=12\bigl(B(r,w)+1\bigr).
\end{equation}
Then $C$ has an effective bottom-up normalization in which every rendered
transition is a rank-preserving substitution instance, in a fixed outer
context, of a retained datum in $\mathcal M_{Q(r,w)}$. Consecutive rendered
transitions are interleaved only with the explicit ordered carrier-exposure
reassociations of Lemma~\ref{supp:lem:carrier-exposure}. A nonscalar transition
exposes at most four aggregate carriers. A scalar transition exposes at most
two carrier marks and at most two scalar holes; when a displayed scalar factor
is itself reboxed, the same linear-use formal hole is counted in both
descriptions. Suffix entry and a nonscalar-carrier crossing expose at most two
carrier slots and the single scalar $z$. Every rendered source occurs at a literal syntax-tree address, and every
carrier occurs once on each side
with the same type and ordered incidence map, and the normalization terminates
under a strictly decreasing lexicographic measure.

The same retained families give the three coverage properties used in the
main text: parse alignment, routing normalization, and four-cell switching.
Positive-rank edge-free children, nullary children, empty banks, isolated
vertices, and rank-zero graph-bearing cores are all preserved.
\end{lemma}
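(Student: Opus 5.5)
The plan is a structural induction on the raw syntax of $C$ that maintains a normalization record at every processed node. The strategy is to show that each elementary step of that induction is an instance of one of the six rendered window kinds of Definition~\ref{supp:def:renderer}, with load at most $Q(r,w)$.

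\emph{Base cases and invariant.} At a leaf, $C$ is a positive-rank hole, a rank-zero hole, a letter of $D$, or a structural unit. Each case is already in the record form $\mathsf N(\cdot)$ via the hole-free canonical routers $\mu_2,\mu_0,\delta_2,\delta_0$, the permutation partitions, or the one-hole frame. The inductive invariant is that the current term equals, literally up to the fixed outer context, $\mathsf N(K)$ for each completed subcontext $K$. Here the positive-rank holes sit in the aggregate $A_K$, the rank-zero graph-bearing holes sit as opaque factors in the suffix $S_K$, and every lost discrete class appears as one $z$. All carrier substitutions are literal unchanged subterms.

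\emph{Sum nodes.} At a node $L\bsum R$, I first apply order-straightening and Lemma~\ref{supp:lem:carrier-exposure} to expose the two child aggregates. I then take the disjoint union partition and move the two suffixes into declared order by one- and two-scalar moves of Equation~\eqref{supp:eq:scalar-move}.

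\emph{Composition nodes.} At a node $L\circ R$, I first use the suffix initialization \eqref{supp:eq:state-suffix-initialization}. I then run the finite composition-state machine, whose state-realization and successor lemma has already been proved. That lemma gives the literal-address invariant \eqref{supp:eq:carrier-address-invariant}, the glue \eqref{supp:eq:defined-Kj}, and the terminal term $\mathsf N(K)$. Consequently every transition is a substitution instance, in the fixed context $\operatorname{Out}_{\mathfrak s}[-]$, of the record pair $(U^-,U^+)$, separated only by exposure reassociations. Validity of each record, which is needed for membership in $\mathcal M_{Q(r,w)}$, follows from Proposition~\ref{supp:prop:renderer-validity}.

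\emph{Load bound.} I will use the six-slot audit \eqref{supp:eq:state-six-slot-audit} for nonscalar data and \eqref{supp:eq:state-suffix-load} for scalar data. The carrier count is read off from the invariant \eqref{supp:eq:state-invariant}: at most four carriers in a nonscalar window, and at most two carriers plus two scalar holes in a scalar window. Preservation of isolated vertices, nullary children, empty banks, and graph-bearing rank-zero cores follows from the emit clause, which is the only clause creating $z$, together with the rule that graph-bearing scalars live only in $S$.

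\emph{Termination.} I will use a lexicographic measure, outermost first:
\begin{enumerate}
\item the number of unprocessed syntax nodes;
\item the number of unprocessed components of $\mathcal C_K$, then unprocessed incidences $b-|F|$;
\item the phase rank in the fixed order split$<$expose$<$fuse$<$emit$<$restore$<$suffix;
\item the cursor $\kappa$ lexicographically.
\end{enumerate}
Each clause either advances the phase within a fixed edge or decreases a later coordinate. Exposure reassociations are not counted as transitions. They are finite $\mathcal B$-derivations inserted between transitions, so they cannot break termination.

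\emph{Main obstacle.} The hardest step is confirming that every state reached really has load at most $Q(r,w)$, in particular the charging argument giving $|\mathbf B_y|,|\mathbf B_x|\le T+1$. Here one must check that middle-only nonactive classes always have a carrier representative after $\operatorname{sep}_e$. I would handle this by induction on the run, using that stable separation places each such class behind a complementary-sector port. The final sentence of the statement, on parse alignment, routing, and four-cell switching, will then follow because those moves are the specializations of the same families: binary routers, fold/prune, and four-cell switches with scalar specializations.
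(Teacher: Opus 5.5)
Your overall route is the paper's: structural induction on $C$ with the record $\mathsf N(K)$. At $\bsum$ nodes you use exposure plus scalar moves. At $\circ$ nodes you use the suffix initialization followed by the composition-state machine and its state-realization lemma. Validity comes from direct evaluation via Proposition~\ref{supp:prop:renderer-validity}, and loads from \eqref{supp:eq:state-six-slot-audit} and \eqref{supp:eq:state-suffix-load}.

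\textbf{Termination measure.} The measure you propose is not strictly decreasing, which is what the statement asserts. Your tuple (syntax nodes, components, $b-|F|$, phase, $\kappa$) is constant on several rendered transitions:
\begin{itemize}
\item the one- and two-scalar moves that put the child suffixes into declared order at a $\bsum$ node, where your lower coordinates refer only to the composition machine and do not move;
\item the successive scalar moves producing $S_0$ in \eqref{supp:eq:state-initial-suffix} before the machine starts;
\item the order-straightening data.
\end{itemize}
You need additional coordinates, as in the paper's $\Theta=(U,N,\phi,I,S_{\rm out},S_{\rm inv},S_{\rm ass})$. These are an inversion count $I$ for the port-and-carrier order, and scalar coordinates counting factors outside the terminal suffix, inversions relative to $\mathbf u_K$ followed by the $z$ factors, and right-association defects. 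Appending these after $\kappa$ suffices. The only moves that can increase them are emit moves, which already decrease an earlier coordinate, and suffix moves, which decrease $\kappa$.

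The phase coordinate must also count \emph{remaining} phases. As you wrote it, a rank in the order split$<\cdots<$suffix increases when a phase advances.

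\textbf{The three coverage properties.} Your treatment of these is too thin, and one case is missing.
\begin{itemize}
\item Parse alignment does not consist only of binary routers. Every edge leaf needs the $\Sigma$-leaf frame datum: the yanking pair, and $C_\beta[h]\circ\operatorname{Rep}_\beta$ versus the direct frame.
\item The validity of that leaf datum must come from direct boundary evaluation in Proposition~\ref{supp:prop:renderer-validity}. It must not come from Lemma~\ref{supp:lem:frame-straightening} or Lemma~\ref{supp:lem:relative-spider}, because both are proved from the present lemma and the argument would otherwise be circular.
\item You should record the completed-child reboxing. A graph-bearing node with $b_u=0$ enters the scalar word as a new leading factor.
\item Membership in $\mathcal M_{Q(r,w)}$ should be checked explicitly for each family. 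Fold and prune records go through the six-slot count. Four-cell records go through $\ell(d)\le12(r+1)\le Q(r,w)$, using $B(r,w)\ge r$.
\item $D$-letter leaves are not literally in record form. Each needs one hole-free leaf datum taking it to its canonical router.
\end{itemize}

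Finally, the step you call your main obstacle, the charging bound $|\mathbf B_y|,|\mathbf B_x|\le T+1$, is already established in the load audit following Definition~\ref{supp:def:renderer}. You only need to cite it.
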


\begin{proof}
Fix the declared order of $\mathcal H$. For a rooted subcontext
$K:p_K\to q_K$, let $\mathbf x_K$ and $\mathbf y_K$ be its ordered input and
output position lists. Let $\mathbf h_K^+$ be the sublist of holes in $K$ with
positive rank and let $H_K^+$ be their fixed parenthesized parallel sum in the
declared order. Its ordered output list $\mathbf P_K$ consists of the pairs
$(h_i,j)$ with $h_i\in\mathbf h_K^+$ and $1\le j\le r_i$. Put
$H_K^+=e_0$ and $\mathbf P_K=\varnothing$ when the list is empty.

Regard every position in
\[
 \Omega_K=\mathbf x_K\,\mathbf P_K\,\mathbf y_K
\]
as labeled. Let $\eta_K$ be the partition of $\Omega_K$ induced by the ordered
incidence of the hole ports together with the $D$-wiring of $K$. The ordered
surviving interface $\mathbf o_K$ is the word
of $\eta_K$-blocks met by $\mathbf x_K$ and $\mathbf y_K$, with repetitions
and the input-output division retained. Let $c_K$ be the number of vertex
classes created by the $D$-wiring that meet no position of $\Omega_K$. Let
$\mathbf h_K^0$ be the declared-order sublist of rank-zero graph-bearing holes
in $K$, and let $S_K:0\to0$ be the fixed right-associated sum of
$\mathbf h_K^0$ followed by $c_K$ copies of $z$. The completed target at $K$
is
\begin{equation}\label{supp:eq:typed-discrete-normal-form}
 \mathsf N(K)=
 \bigl((e_{p_K}\bsum H_K^+)\circ
       \operatorname{Can}(\eta_K,0)\bigr)\bsum S_K
 :p_K\longrightarrow q_K,
\end{equation}
with the fixed structural-unit suppression when a list is empty. Thus the
complete induction record is
\[
 \mathfrak R_K=
 \bigl((p_K,q_K),\mathbf x_K,\mathbf y_K,\eta_K,
       \mathbf o_K,H_K^+,\iota_K,S_K\bigr),
\]
where $\iota_K$ sends each output of $H_K^+$ to its labeled position in
$\mathbf P_K$. This records the surviving ordered interface and its partition,
the aggregate-hole incidence, and the complete scalar suffix.

We prove simultaneously, by structural induction on $K$, that $K$ reaches
$\mathsf N(K)$ through retained windows and that $\mathfrak R_K$ is exact. A
positive-rank hole is already the required aggregate, up to the fixed unit and
identity suppression. A rank-zero hole is the one-element scalar suffix. A
generator of $D$ or a structural unit is rendered by its exact hole-free
canonical router. The cases $e_0$, $\mu_0$, and $\delta_0$ give the empty,
output-only, and input-only bases.

Suppose first that $K=L\bsum R$. After using the two induction sequences in
their fixed outer contexts, concatenate the protected-position lists and
transport them to the declared hole order. The parent partition is the
disjoint union of $\eta_L$ and $\eta_R$, transported by the same fixed
permutation, and $c_K=c_L+c_R$. The binary-router datum combines the two
canonical routers. One-scalar and two-scalar data move the child suffixes to
the right and put their factors in the declared order, exactly as in the
parallel-parent clause following
Equation~\eqref{supp:eq:state-suffix-initialization}. An already completed
part of a suffix is one opaque rank-zero carrier. Empty children contribute
only $e_0$. These operations give exactly $\mathsf N(K)$.

Now let $K=L\circ R$, where $L:a\to b$ and $R:b\to c$. Write
$\mathbf y_L=(y_1,\ldots,y_b)$ and
$\mathbf x_R=(x'_1,\ldots,x'_b)$. On the disjoint union of the child position
sets, form
\[
 \zeta_K=\operatorname{tc}\left(
   \eta_L\cup\eta_R\cup
   \bigl\{\{y_j,x'_j\}:1\le j\le b\bigr\}
 \right).
\]
Then
\begin{equation}\label{supp:eq:composition-record}
 \eta_K=
 \left.\zeta_K\right|_{\mathbf x_L\,\mathbf P_L\,
                              \mathbf P_R\,\mathbf y_R},
 \qquad
 c_K=c_L+c_R+
 \bigl|\{A\in\zeta_K:A\cap
 (\mathbf x_L\mathbf P_L\mathbf P_R\mathbf y_R)=\varnothing\}\bigr|.
\end{equation}
A class counted in the last term contains no protected or outer position, so
it is a $D$-only port-free vertex. Conversely, every new port-free vertex is
one such class. Equation~\eqref{supp:eq:composition-record} therefore records
both the exact parent partition and the exact scalar increment.

First apply Equation~\eqref{supp:eq:state-suffix-initialization} and the
scalar-order moves to obtain $S_0$ in
Equation~\eqref{supp:eq:state-initial-suffix}. Then run the finite composition-state machine of
Definition~\ref{supp:def:renderer} on the incidence graph $\Gamma_K$.
After its first $j$ processed middle edges, its equivalence
relation is exactly the transitive closure of the first $j$ gluing pairs.
Therefore its terminal relation, restricted to
$\mathbf x_L\mathbf P_L\mathbf P_R\mathbf y_R$, is $\eta_K$ in
Equation~\eqref{supp:eq:composition-record}. The emit rule produces one $z$
for exactly those terminal components lost under this restriction. Hence the
terminal scalar suffix contains precisely the additional term in the formula
for $c_K$.

For every nonscalar microstep, Equation~\eqref{supp:eq:state-left} and the
actual target $\mathsf R_\phi(d)$ give the raw source and target, including
their types and identity padding. The state-realization identities
\eqref{supp:eq:state-realization-left}--\eqref{supp:eq:defined-Kj} insert that
pair into its uniquely cut outer context. Its unexposed target reaches the
next literal source through
\eqref{supp:eq:state-successor-exposure}. The state invariant
\eqref{supp:eq:state-invariant} preserves the disjoint ordered sublists of
original holes behind at most four carriers. If all four carriers are needed,
the suffix remains in the outer context and the separate step
\eqref{supp:eq:state-suffix-entry}, followed when necessary by
\eqref{supp:eq:state-scalar-step} and then
\eqref{supp:eq:state-scalar-order}, moves a newly produced $z$. A rank-zero
graph-bearing carrier is retained as a labeled hole and is never identified
with $e_0$ or $z$.

We now verify membership in the retained family without using completeness or
the relative-spider lemma. For each generated transition, the local datum
records its window kind, outer type, carrier types, placement and incidence
maps, source and target partitions, survivor order, active phase, and scalar
roles. The induction record separately tracks the ordered original-hole
sublist hidden in each carrier. These are among the candidates in the
preceding finite enumeration because the hidden sublists are properties of
the later substitutions, not choices in a renderer datum. The deterministic
renderer gives the two raw endpoints. They are
well typed because each exposed composition bank is split into the same
ordered list on both sides and padded by identities of exactly the
complementary ranks. Every carrier occurs once in each endpoint with the same
type and ordered incidence map. Direct evaluation gives the same partition of
protected and outer positions, the same ordered outer interface, and the same
port-free component count. The increment in
Equation~\eqref{supp:eq:composition-record} is represented by the exposed
$z$. In the port-free case
\eqref{supp:eq:state-nullary-extraction} makes it external without changing
any carrier occurrence. Thus the endpoints have boundary-preserving isomorphic values over the
distinct-carrier alphabet and pass the validity test defining
$\mathcal M_q$.

It remains to check the complete rendered load. The explicit inequalities
\eqref{supp:eq:state-load-audit} and
\eqref{supp:eq:state-six-slot-audit} give
$\ell(d)\le Q(r,w)$ for every nonscalar state-machine transition. The same
estimate for a sum is smaller because it has no middle incidence, and
Equation~\eqref{supp:eq:state-suffix-load} covers suffix entry and the two
scalar moves, including their general-type carrier interfaces.
The datum consequently belongs to $\mathcal M_{Q(r,w)}$. The following
proposition gives the renderer-by-renderer version of this count.

For termination, use a fixed postorder traversal with one active node. For a
normalization state set
\begin{equation}\label{supp:eq:coverage-termination-measure}
 \Theta=(U,N,\phi,I,S_{\rm out},S_{\rm inv},S_{\rm ass})\in\mathbb N^7,
\end{equation}
ordered lexicographically. Here $U$ is the number of incomplete syntax nodes,
$N$ is the number of unprocessed incidence, fusion, redundancy, survival, and
closure tasks at the active node, and $\phi$ is the number of remaining
phases of the current task in the fixed order split, expose, fuse,
close-or-retain, and restore. The coordinate $I$ is the inversion number of
the current port and carrier order relative to the order required in that
phase. The last three coordinates count scalar factors outside the terminal
suffix, scalar-order inversions relative to the selected order $\mathbf u_K$
followed by the $z$ factors, and scalar right-association defects. An adjacent
order move decreases $I$. A completed microphase decreases $\phi$. Completion
of a block task decreases $N$, although lower coordinates may then reset.
Extracting a new $z$ occurs only in a move that decreases $N$ or $\phi$. The
subsequent scalar schedule decreases, in order, $S_{\rm out}$,
$S_{\rm inv}$, and $S_{\rm ass}$. Completion of a syntax node decreases $U$.
Thus every transition of the outer normalization decreases $\Theta$. The
formal derivation later selected for one retained datum is the finite
implementation of one such transition and need not decrease $\Theta$ at each
of its internal equation steps. A carrier-exposure derivation is also finite,
is executed only on entry to its uniquely determined successor, and need not
decrease $\Theta$ at each associativity rotation.

The exhaustive syntax clause of the same state machine handles all binary
cases. When both children have nonempty positive-rank nonscalar parts, both
child aggregates are present. When exactly one does, the other child's
output-only or input-only blocks remain in the state partitions. When neither
does, the carrier list is empty and the machine processes the same ordered
component list. A declared graph-bearing scalar hole is already in the
initialized suffix and remains an opaque labeled carrier. A positive-rank discrete router is
retained, while only an actual $e_0$ is suppressed. Thus empty banks and both
nullary router orientations are included, and only a class without a protected
or outer position becomes $z$.

For parse alignment, apply the preceding discrete normalization to every
maximal $D$-only subtree before pruning it. At an edge leaf use the yanking and
repetition pair to obtain its reduced named core. At a graph-bearing binary
node use the ordinary or mixed binary-router case just proved. Direct
boundary evaluation above establishes the $\Sigma$-leaf datum without using
relative-spider normalization or bounded frame straightening. Hence the
coverage proof is acyclic.

There is one additional bookkeeping operation for recursive parse alignment.
After a graph-bearing parse node $u$ reaches completed-step form, the next
parent window regards its completed nonsuffix part as one aggregate carrier
\[
 g_u:0\longrightarrow b_u,
 \qquad b_u=|\partial_G A_u|.
\]
The induction record retains the ordered sublist of original holes hidden in
$g_u$. In the scalar-word notation above, the state machine at $u$ enters with
the completed child words $\mathbf s_y,\mathbf s_z$ and has
$\mathbf u_u=\mathbf s_y\mathbf s_z$ in plane order. The completed record
passed to the parent is
\[
 (\widehat U_u,\mathbf s_u)=
 \begin{cases}
  (e_0,(g_u)\mathbf u_u),
    & \begin{aligned}[t]
      &b_u=0\text{ and the substitution for }g_u\\
      &\text{contains a graph-letter occurrence},
      \end{aligned}\\
  (g_u,\mathbf u_u),&\text{otherwise}.
 \end{cases}
\]
Thus a closed graph-bearing $g_u$ enters the child suffix even when its hidden
sublist contains positive-rank holes, while a pure structural $0\to0$
completion remains in the nonsuffix record. The new factor precedes the list
already inherited from that child. Existing suffix factors are not collapsed
into the new carrier.
This reboxing is metasyntactic and performs no raw rewrite. It is specific to
recursive parse alignment and does not change the original-hole
classification in Equation~\eqref{supp:eq:typed-discrete-normal-form}.

For routing normalization, expose the two endpoint routers and the at most
four complementary regions. A full $\mathsf W_2$ fold reduces one duplicate
strand, and a full $\mathsf W_2$ prune removes one unflagged terminal strand.
If a prune removes a whole port-free component, its scalar tag increases and
a separate scalar datum moves the resulting $z$ to the suffix. The sum of the
duplicate-strand multiplicities and terminal excess strands strictly
decreases. The process ends exactly when every skeleton cut carries its live
atom boundary. These fold, prune, and scalar records occur in the finite
enumeration, Proposition~\ref{supp:prop:renderer-validity} proves their
validity, and the six-slot count above gives $\ell(d)\le Q(r,w)$. Hence each
is a member of $\mathcal M_{Q(r,w)}$.

For four-cell switching, the simultaneous intersection recursion exposes
every occupied cell core before changing row grouping to column grouping.
The partitions $\eta_{i*}$, $\eta_{*j}$, $\eta_R$, and $\eta_C$ have the same
final partition, survivor order, and scalar tag. Thus
the corresponding record occurs in the finite enumeration and
Proposition~\ref{supp:prop:renderer-validity} proves its validity. When every
cell and staging load is at most $r$, the six-slot count gives
$\ell(d)\le12(r+1)\le Q(r,w)$ because $B(r,w)\ge r$. Therefore
Equations~\eqref{supp:eq:row} and \eqref{supp:eq:column} form a datum in
$\mathcal M_{Q(r,w)}$.
An absent cell suppresses its structural-unit leg. An occupied cell with an
empty position list remains a rank-zero graph-bearing hole. Equation
\eqref{supp:eq:scalar-two-cell} and the one-scalar renderer cover both
off-diagonal rank-zero degeneracies. This proves all three coverage claims.
\end{proof}

\Needspace{6\baselineskip}
\begin{proposition}[per-window arity audit]
\label{supp:prop:window-arity}
Suppose every incident skeleton bank and atom, and each carrier of a scalar
move when present, has rank sum at most $k$. For a four-cell datum, suppose the
four cell interfaces and every row, column, or outer staging bank have load at
most $r$. If every expression substituted for a formal hole has width at most
$w$, put $q=\ell(d)$ for the actual rendered port load. Then
\[
\begin{array}{@{}c|c|c@{}}
\text{window}&\text{rendered port-load upper bound}&\text{endpoint width}\\ \hline
\text{leaf frame}&4(k+1)&\max\{w,4q+4\}\\
\text{binary router}&6(k+1)&\max\{w,4q+4\}\\
\text{fold}&12(k+1)&\max\{w,4q+4\}\\
\text{prune}&12(k+1)&\max\{w,4q+4\}\\
\text{four-cell switch}&12(r+1)&\max\{w,4q+4\}\\
\text{two-scalar move}&k&\max\{w,2\}\\
\text{suffix entry/carrier crossing}&2k&\max\{w,2k\}
\end{array}
\]
In particular, every required parse, fold, prune, or scalar datum is
$12(k+1)$-local, and
every required four-cell datum is $12(r+1)$-local. The auxiliary
order-straightening family is a full two-router window and has the same
$12(k+1)$ bound. In the application to
Lemma~\ref{supp:lem:renderer-coverage}, every carrier slot has load at most
$k_0=\max\{r,w\}$.
\end{proposition}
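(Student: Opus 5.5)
The audit splits into two independent parts, and I would do them in this order. First, the port-load column: count the rendered positions window kind by window kind with the charging map $\chi$ of Equation~\eqref{supp:eq:state-six-slot-audit}. Second, the endpoint-width column: a single uniform argument shows that any rendered endpoint of port load $q$ with holes of width at most $w$ has width at most $\max\{w,4q+4\}$. The two scalar rows need a separate, trivial treatment.

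\emph{Load.} For a leaf frame, the hole is a letter of rank sum at most $k$. Its named incidence list, the reduced-core bank, the outer interface, and one repeated staging copy each have length at most $k$ or $k+1$. So four slots of size at most $k+1$ give $4(k+1)$.

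A binary router displays two child banks, a parent bank, and at most one repeated copy of each. These are six slots, each of size at most $k+1$ once the dummy auxiliary occurrence is included, which gives $6(k+1)$.

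For fold, prune, and order-straightening, I use the six source slots $\mathcal S_1,\ldots,\mathcal S_6$ of a $\mathsf W_2$ window. Each is an incident skeleton bank or atom interface, so $b_i\le k$. Every source position is used natively and at most once repeated, and the four auxiliary occurrences are absorbed as in \eqref{supp:eq:state-six-slot-audit}. This gives $2\sum_i(b_i+1)\le12(k+1)$.

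The four-cell switch is the same six-slot count. The slots are the four cell interfaces together with the row or column staging banks and the outer bank, each of load at most $r$, which gives $12(r+1)$.

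The scalar moves display no staging lists. A two-scalar move has only rank-zero scalars and one carrier of rank sum at most $k$, so its load is at most $k$. Suffix entry and carrier crossing display two carrier slots $P,R$ or $g,R$, each of rank sum at most $k$, as in \eqref{supp:eq:state-suffix-load}, so their load is at most $2k$.

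\emph{Width.} I would show that every subterm of a rendered endpoint is one of three things: a subterm of a substituted hole, a subterm of a fixed router, or an assembly node. Subterms of substituted holes have width at most $w$, and same-type substitution changes no ancestor rank.

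Fixed routers here are $\operatorname{Can}$, the $\mu_j,\delta_j$, and adjacent-transposition words. They act on recorded lists of length at most $q$, so a padded block $\tau_{N,j}$ or a padded $\mu,\delta$ step has both input and output rank at most $q+1$. This gives rank sum at most $2q+2$, and the left-associated compositions stay within $2(q+1)$.

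Assembly nodes are the sums $e_a\bsum H_I$ and $(h_v\bsum h_w)$, and the compositions with $\operatorname{Can}$. Each has input plus output drawn from at most two recorded banks, and each bank is charged in $\ell(d)$. Hence their rank sum is at most $4q+4$, using the same estimate as Lemma~\ref{lem:realizationwidth}.

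The cup/cap wrappers of $\operatorname{Name}$ and $\operatorname{Unname}$ are covered by Lemma~\ref{lem:bending}. For the scalar rows, the only positive-rank nodes are carriers and their sums with rank-zero factors. The node $z=\io01\circ\io10$ has subterms of rank sum at most $2$. This gives $\max\{w,2\}$ and $\max\{w,2k\}$.

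\emph{Last claim.} In Lemma~\ref{supp:lem:renderer-coverage}, each carrier is either an unchanged width-$w$ cut, a rank-zero factor, or an aggregate of protected cores. These have interface load at most $w$, $0$, and $r$ respectively, so every carrier slot has load at most $k_0=\max\{r,w\}$.

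\emph{Main obstacle.} I expect the hardest step to be the injectivity of $\chi$ in the nonscalar count. I must check that staging banks $\mathbf B_y,\mathbf B_x$ never need more than one extra occurrence per exported class. The class-charging argument preceding \eqref{supp:eq:state-six-slot-audit} handles this: a middle-only class is charged to its carrier representative, and the unique port-free active class is charged to the dummy. I would reuse that argument rather than redo it.
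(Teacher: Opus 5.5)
Your proposal follows essentially the same route as the paper. The load column is a slot count of the form $2\sum_i(b_i+1)$, with six slots for fold, prune and four-cell, three for the binary router and two for the leaf frame. The width column rests on the observation that every fixed permutation or spider stage has ranks bounded in terms of $q$, while substituted holes contribute $w$.

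There is one bookkeeping slip in the four-cell row. You list the four cell interfaces, the two row (or column) staging banks, and the outer bank. That is seven slots, and under your own count it gives $14(r+1)$, not $12(r+1)$.

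To stay at six slots, you need the paper's observation that the outer bank is not a separate slot. Each outer survivor $o_B$ is a restoration copy of a staging position $\rho_{i,B}$. Such a position exists because every $B\in\mathcal O$ meets some row $i$, and then $B\in\mathcal S_{i*}$. So $o_B$ is charged to the second occurrence of a staging slot, not to a seventh slot.

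A smaller point concerns the suffix-entry row. The bound $\max\{w,2k\}$ comes from the node $(P\bsum z)\bsum R$, or $P\bsum(z\bsum R)$, which joins the two carriers. Your classification describes the positive-rank nodes only as carriers and their sums with rank-zero factors, and that would yield only $\max\{w,k\}$.
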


\begin{proof}
For a fold or prune, use its full $\mathsf W_2$ renderer. Removing the two
endpoint routers of the active skeleton edge exposes at most six incident bank
slots. Let $b_i$ be the full protected-port load of slot $i$, including
multiplicity in its recorded ordered tuple, and pad absent slots by $b_i=0$.
Then $b_i\le k$. In either rendered endpoint, a slot occurs at most once in the
protected core or outer list and at most once more in a naming, staging, or
restoration list. The harmless $+1$ records a nullary spider, empty bank,
scalar flag, or atom endpoint. Hence
\begin{equation}\label{supp:eq:arity-audit}
 \ell(d)\le2\sum_{i=1}^{m}(b_i+1)
 \le2\cdot6(k+1)=12(k+1),
 \qquad m\le6.
\end{equation}
This is the complete rendered load of either endpoint. It is not a one-sided
rank bound, so no further factor two is introduced. Fixed input and output
permutations preserve the count. A fold or prune removes a protected position,
and a restoration copy has already been charged to the second occurrence of
its slot. The estimate therefore holds at every completed step, including the
nullary cases rendered by $\mu_0$ and $\delta_0$.

A leaf frame uses at most its atom slot and its outer slot, so the same
calculation gives $4(k+1)$. A binary router uses at most the two child slots
and the parent outer slot, giving $6(k+1)$. Scalar holes have rank zero. The
carrier of a two-scalar move is not copied, so its rendered load is at most its
rank sum $p+q\le k$. The suffix-entry associativity pair and the generalized
carrier-crossing pair have two linear-use carrier slots, so their rendered
load is at most $2k\le12(k+1)$.

For a four-cell endpoint, the four cell slots and the two row or column staging
banks give at most six slots. A present rank-zero cell hole has slot load zero
and remains a formal hole whose substituted width is already included in $w$.
Occupancy flags add no port positions. Every outer survivor is a restoration
occurrence of a staging position and is charged to the second copy. With
$b_i\le r$, Equation~\eqref{supp:eq:arity-audit} gives
$\ell(d)\le12(r+1)$. The typed augmented partitions in
Equations~\eqref{supp:eq:row} and \eqref{supp:eq:column} show that this count
includes all occupied hole interfaces, both first-stage banks, and the common
ordered outer interface. Absent cells and occupied rank-zero cells contribute
zero interface load.

Finally, direct inspection of $\operatorname{Name}$,
$\operatorname{Unname}$, $\operatorname{Frame}$, $\operatorname{Can}$,
$\mathsf W$, and
$\mathsf W_2$ shows that every fixed permutation or spider stage has input and
output rank at most $2q+2$. Its rank sum is at most $4q+4$. Substituted holes
contribute width at most $w$, giving $\max\{w,4q+4\}$. The scalar renderer has
no positive-arity staging and gives $\max\{w,2\}$.
\end{proof}

\subsection{Recursive normalization and exact routing endpoints}

\begin{lemma}[effective relative-spider normalization]
\label{supp:lem:relative-spider}
Let $C$ be a raw context over the discrete signature $D$ with opaque formal
holes. Suppose that every hole occurs once, every positive-rank hole is a
reduced named core, and at most $r$ protected core ports and ordered outer
ports occur. Assume that the expressions substituted for the holes and the
resulting instantiated context all have width at most $w$. When two contexts
are compared, assume that both instantiated endpoints satisfy this bound.
There is an effective bottom-up normalization of $C$ with the following
properties.
\begin{enumerate}
\item Every completed subtree is represented by its exact ordered port
partition, one canonical router, and a fixed right-associated scalar suffix.
\item Every local step is an instance of a datum in
$\mathcal M_{Q(r,w)}$ and has at most four linear-use holes. An already
accumulated scalar suffix is substituted for one opaque rank-zero carrier
hole. A suffix-entry or nonscalar-carrier-crossing step may instead expose two
carrier holes. A reboxed scalar factor is simultaneously a scalar hole and a
carrier for hidden-hole bookkeeping, but remains one linear-use hole. Every step
exposes at most one new $z$ and at most two scalar holes.
\item If two contexts use the same ordered formal holes, with matching holes
instantiated by the same labeled cores (including the same labeled scalar cores
at rank zero), and have the same protected-and-outer port partition, ordered
outer interface, and number of port-free discrete components, then their
substitutions are connected within width $a_\Sigma(r,w)$ for computable
$a_\Sigma$.
\end{enumerate}
No step duplicates, deletes, or identifies two protected ports inside a hole.
\end{lemma}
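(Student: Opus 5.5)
Items (1) and (2) come directly from the typed coverage lemma, Lemma~\ref{supp:lem:renderer-coverage}, applied to $C$ with its declared hole list. That lemma runs a bottom-up normalization by structural induction. At a sum node it takes the transported disjoint union of the child partitions. At a composition node it runs the finite composition-state machine of Definition~\ref{supp:def:renderer} on $\Gamma_K$. Every completed subtree $K$ then reaches the record $\mathsf N(K)$ of Equation~\eqref{supp:eq:typed-discrete-normal-form}, which consists of the exact partition $\eta_K$, one router $\operatorname{Can}(\eta_K,0)$, and the right-associated suffix $S_K$. This is item (1). For item (2) I would read off from the state invariant \eqref{supp:eq:state-invariant} that a nonscalar window displays at most four carriers ($m\le4$). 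Suffix-entry and carrier-crossing windows \eqref{supp:eq:state-suffix-entry}--\eqref{supp:eq:state-scalar-step} display two carrier slots and one $z$. Scalar-order windows display at most two scalar holes, and $z$ is created only in the emit clause, one copy at a time. The load audits \eqref{supp:eq:state-six-slot-audit} and \eqref{supp:eq:state-suffix-load} place every datum in $\mathcal M_{Q(r,w)}$. Linearity of carriers gives the final sentence: Proposition~\ref{supp:prop:renderer-validity} shows each hole occurs once on each side with the same ordered incidence list, and the fuse, emit and prune clauses act only on $D$-spiders. Protected ports are therefore never duplicated, deleted or identified inside a hole.

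For item (3) I would first show that the two normal forms coincide literally. The record $\mathsf N(C)$ depends only on the ordered hole list, on $\eta_C$ restricted to the protected-and-outer positions, on the ordered outer interface, on the scalar-hole order, and on the port-free count $c_C$. Compatibility fixes all of these, so both contexts reach the same raw endpoint after substitution. I would then reverse the second normalization and concatenate.

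The width bound is where the work lies. For each retained datum $M\in\mathcal M_{Q(r,w)}$, Theorem~\ref{thm:bk} over the alphabet $\Sigma\cup H$ together with dovetailed proof search fixes a formal derivation $\Delta_M$. Lemma~\ref{lem:template-invariant} bounds the instantiated endpoints by $L(r,w)=\max\{r,w,4Q(r,w)+4\}$. Here I need that each carrier substitution is an unchanged raw subterm of a width-$\le w$ instantiation or an exposed aggregate of cores, whose width is bounded by $\max\{r,w\}$ plus fixed structure. Lemma~\ref{lem:substitution} then turns $\Delta_M$ into a $\mathcal B\cup\E_\Sigma$-derivation of width at most $s_{\Delta_M}(L(r,w))$. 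The carrier exposures of Lemma~\ref{supp:lem:carrier-exposure} are pure reassociations and do not widen. Setting $a_\Sigma(r,w)$ to the monotone closure of the maximum of $L(r,w)$ and the finitely many $s_{\Delta_M}(L(r,w))$ gives the bound. Same-type replacement preserves ancestor ranks, so the number of steps, which may depend on $C$, does not enter.

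I expect the main obstacle to be controlling the width of intermediate carrier substitutions. A carrier's interface is bounded by $k_0=\max\{r,w\}$, but the raw term it hides has been partially normalized. I would close this by an induction on the postorder run. Completed subtrees are literally $\mathsf N(K)$, so their widths are at most $\max\{w,4Q+4\}$ by Lemma~\ref{lem:template-invariant}. Unprocessed subtrees are literal subterms of the original instantiation, so their widths are at most $w$. Every carrier is therefore bounded by $L(r,w)$ whenever it is substituted.
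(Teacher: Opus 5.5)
Your proposal is correct and follows the paper's own proof. Items (1)--(2) are read off from the typed coverage lemma, item (3) uses that $\mathsf N(C)$ is literally determined by the compatibility data and then reverses the second run, and $a_\Sigma(r,w)$ is the monotone closure of $L(r,w)$ and the finitely many $s_{\Delta_M}(L(r,w))$ for completeness-selected $\Delta_M$. Your postorder induction usefully spells out what the paper only asserts, namely that every substituted carrier has width at most $L(r,w)$, but its dichotomy needs widening. Some carriers are partially normalized pieces at the active node, for instance the prefix $P$ in the suffix-entry window, rather than completed or untouched subtrees. The fix is to state the invariant for every intermediate term of the run. It closes at once, because each such term is an instantiated endpoint of width at most $\max\{L(r,w),4Q(r,w)+4\}=L(r,w)$.
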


\begin{proof}
Put $q_*=Q(r,w)$, with $Q$ as in
Equation~\eqref{supp:eq:coverage-load-bounds}. Thus
$B(r,w)=B_D(r,w)$ and $q_*=Q(r,w)=Q_D(r,w)$ in the notation of the main
article. The discrete-normalization part
of Lemma~\ref{supp:lem:renderer-coverage} supplies a terminating sequence in
which every rendered step is an instance of a member of $\mathcal M_{q_*}$
and consecutive rendered steps are joined by the explicit structural
reassociations of Lemma~\ref{supp:lem:carrier-exposure}. Let
$\mathcal P_{r,w}\subseteq\mathcal M_{q_*}$ be the subset consisting of the
$D$-leaf, generalized binary-router, order-straightening, and scalar data that
can be produced by that induction. It is finite and effective. This selection
does not use the $\Sigma$-leaf frame, the four-cell switch, or bounded frame
straightening, so the dependency is acyclic.

For every $M=(\ell_M,r_M)\in\mathcal M_{q_*}$, direct pointed-value equality
is part of membership in $\mathcal M_{q_*}$. The completeness theorem over the
finite alphabet obtained by treating the carriers as letters supplies a
finite $\mathcal B\cup\E$ derivation $\Delta_M$, and dovetailed proof search
finds one. This simultaneous finite selection uses only the validity test and
completeness, not this relative-spider lemma, so it is noncircular. Define
\begin{equation}\label{supp:eq:relative-spider-return-width}
 L(r,w)=\max\{r,w,4Q(r,w)+4\}.
\end{equation}
The exact normal form in
Equation~\eqref{supp:eq:typed-discrete-normal-form} has width at most
$L(r,w)$. Indeed, the flat protected aggregate has width at most
$\max\{r,w\}$. Applying the per-window arity audit with substituted-hole
width $\max\{r,w\}$ bounds every completed rendered endpoint by
$\max\{r,w,4q_*+4\}=L(r,w)$.

For a selected $\Delta_M$, let $s_{\Delta_M}$ be the computable
substitution-width function supplied by the main text's
naturality-under-substitution lemma. Put
\[
 a_\Sigma(r,w)=
 \max\bigl(\{L(r,w)\}\cup
 \{s_{\Delta_M}(L(r,w)):M\in\mathcal M_{q_*}\}\bigr)
\]
and take monotone closure in $(r,w)$. An aggregate carrier and an opaque
scalar suffix substituted at any step have width at most $L(r,w)$. The local
target and its carrier-exposure derivation have width at most $L(r,w)$ by
Lemma~\ref{supp:lem:carrier-exposure}. Their outer type is unchanged, so no
ancestor rank changes. Thus implementation of every local transition has
width at most $a_\Sigma(r,w)$ independently of depth and derivation length.

Finally, Equation~\eqref{supp:eq:typed-discrete-normal-form} is determined by
the declared ordered holes, their protected-and-outer partition, the ordered
outer interface, and the number of $D$-only port-free vertices. The two
contexts in the statement therefore normalize to the same raw endpoint.
Reverse the second normalization to obtain the required derivation.
\end{proof}

\begin{lemma}[bounded frame straightening]
\label{supp:lem:frame-straightening}
Let $\beta$ be a partition of $p+q$ ordered boundary positions, let
$r=|\beta|$, and let $C:0\to r$ be a reduced named core of width at most $w$.
Then
\[
 \operatorname{Unname}_{p,q}
 [C\circ\operatorname{Rep}_\beta]
 \xleftrightarrow{*}_{\mathcal B\cup\E}
 \operatorname{Frame}_{\mathbf b_\beta,\mathbf e_\beta,
   \boldsymbol\nu_\beta,0}[C].
\]
The derivation has width bounded by a computable function of $p+q$ and $w$.
\end{lemma}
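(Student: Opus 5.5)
We treat the core $C$ as one opaque linear-use hole $h:0\to r$ and reduce the claim to the relative-spider lemma. Both sides are discrete contexts around the same single hole. The left side is $\operatorname{Unname}_{p,q}[h\circ\operatorname{Rep}_\beta]$, built from caps, $\chi'_p$, identities, and the copy tree $\operatorname{Rep}_\beta$. The right side is $(e_p\bsum h)\circ\operatorname{Can}(\eta(\mathbf b_\beta,\boldsymbol\nu_\beta\mid\mathbf e_\beta),0)$. We first verify by direct evaluation over the one-letter alphabet $\{h\}$ that the two sides are $(r,w')$-compatible. Here $r$ is replaced by $r+p+q\le 2(p+q)$ protected and outer positions, and $w'$ is the larger endpoint width.

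The compatibility checks go as follows. The hole output $j$ carries the name of the $j$th $\beta$-block. On the left, $\operatorname{Rep}_\beta$ copies it to the positions of that block. The caps then identify each of the first $p$ positions with the corresponding outer input. The last $q$ positions become the outer outputs. Thus the induced partition on (outer inputs, hole ports, outer outputs) is exactly $\eta(\mathbf b_\beta,\boldsymbol\nu_\beta\mid\mathbf e_\beta)$. The ordered outer interface matches $\mathbf b_\beta,\mathbf e_\beta$. Neither side has a port-free discrete component, since every block of $\beta$ meets a boundary position and the caps close no vertex that lacks a hole port. So the scalar counts are both $0$. The same hole occurs once on each side and is never duplicated.

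For the widths, $\operatorname{Rep}_\beta$ has ranks at most $p+q$. The cap and $\chi'_p$ stages act on at most $2p+q+r\le 2(p+q)$ wires. Hence the left endpoint has width at most $\max\{w,4(p+q)\}+O(1)$. The right endpoint is bounded by Lemma~\ref{lem:template-invariant} with $q$ of order $p+q$. Taking $w'=\max\{w,4(p+q)+4\}$, Lemma~\ref{lem:relative-spider}, in its supplementary form Lemma~\ref{supp:lem:relative-spider}(3), gives a derivation of width at most $a_\Sigma(2(p+q),w')$. This bound is computable and monotone in $p+q$ and $w$.

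The main obstacle is circularity. The coverage proof of Lemma~\ref{supp:lem:renderer-coverage} was stated to be acyclic precisely because the $\Sigma$-leaf datum does not use this straightening lemma. We must therefore check that invoking relative-spider normalization here uses only the $D$-leaf, binary-router, order-straightening, and scalar data in $\mathcal P_{r,w}$. It must not use the leaf frame whose comparison this lemma supplies. The context contains only $D$-generators, units, and one hole, so this holds. We will state explicitly that the comparison uses the discrete normalization of both sides to the common record $\mathsf N$, followed by reversal, and never Corollary~\ref{cor:coreframe}.
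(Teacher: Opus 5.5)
Your proposal is correct and is essentially the paper's proof: you regard $C$ as one protected hole, check that both contexts induce the same block-name partition on protected and outer positions, the same ordered outer interface, and no port-free component, and then apply the relative-spider lemma, whose selection of local data is indeed independent of frame straightening. The differences are only cosmetic. The paper takes the coarser first argument $4(p+q+1)$ in $a_\Sigma\bigl(\cdot,\max\{w,4(p+q)+4\}\bigr)$, while your count $r+p+q\le2(p+q)$ already meets the lemma's hypothesis. Your $w'$ is better justified by directly inspecting the two endpoints, whose largest raw permutation acts on at most $2p+q$ wires, than by the template-invariant bound, which would give a larger load. Either choice is fine by monotone closure.
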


\begin{proof}
Regard $C$ as one protected hole. On the left, each external input is paired
with its bent copy and the repeated outputs are supplied by
$\operatorname{Rep}_\beta$. On the right, the direct canonical router joins
the external inputs and the outputs of $C$ by the same block-name partition.
Both contexts therefore induce the same equivalence relation on the protected
and outer positions, have the same ordered output list, and have no additional
component without a port. Lemma~\ref{supp:lem:relative-spider} applies.
A coarse explicit choice is
\[
 a_\Sigma\bigl(4(p+q+1),\max\{w,4(p+q)+4\}\bigr),
\]
which charges every bend, repetition, naming position, and restoration
position. This proves bounded straightening and confirms that
Equation~\eqref{supp:eq:canonical-frame} selects one raw representative.
\end{proof}

We now record the literal scalar-suffix compiler used by recursive parse
alignment. Fix a closed graph $G$, a rooted plane binary compiler tree $T$ on
$E(G)$, and boundary orders $\rho=(\rho_x)_x$ as in the main text. For every
compiler node $x$, define a body $B_x$ and a scalar word
$\mathbf s_x$ bottom-up. If $x$ is a leaf, put
\[
 P_x=R_x,
 \qquad \mathbf u_x=\varnothing.
\]
If $x$ has plane-ordered children $y,z$, put
\[
 P_x=(B_y\bsum B_z)\circ J_x,
 \qquad \mathbf u_x=\mathbf s_y\mathbf s_z.
\]
After the fixed unit suppressions, define
\begin{equation}\label{supp:eq:scalar-compiler-recursion}
 (B_x,\mathbf s_x)=
 \begin{cases}
  (e_0,(P_x)\mathbf u_x),
    & b_x=0\text{ and }P_x\text{ contains a graph-letter occurrence},\\
  (P_x,\mathbf u_x),&\text{otherwise}.
 \end{cases}
\end{equation}
Thus a newly completed graph-bearing scalar body precedes the scalar words
already inherited from its children. At the closed root define
\begin{equation}\label{supp:eq:scalar-compiler}
 R_{\mathrm{sc}}(G,T,\rho)=
 B_{\operatorname{root}(T)}\bsum
 \operatorname{RA}\bigl(
   \mathbf s_{\operatorname{root}(T)};Z_{i(G)}\bigr),
\end{equation}
with the fixed unit suppression. For
$A=((\io01\circ a)\circ\io10):0\to0$, the recursion sends
$((A_1\bsum A_2)\bsum A_3)$ to
$A_1\bsum(A_2\bsum A_3)$.

\begin{lemma}[routing invariant and scalar-suffix compiler endpoint]
\label{supp:lem:routing-endpoint}
Start with the parse-aligned completed-step realization produced in
Lemma~\ref{supp:lem:renderer-coverage}. Let $S$ be its rooted plane
skeleton after unary suppression and the prescribed order correction on each
collapsed chain. For every skeleton node $x$, let $\rho_x$ be its prescribed
surviving-bank order and put $\rho=(\rho_x)_x$. Let $V_{\mathrm{rt}}(G)$ be the set of final vertices represented by
a block of a node router, including vertices with no atom incidence. For every
$v\in V_{\mathrm{rt}}(G)$, form the routing multigraph $\mathcal R_v$ whose
vertices are the $v$-blocks of the node routers and whose edges are the
$v$-pieces carried by skeleton banks. Mark the atom incidences of $v$ as
flags. Let $\varpi_v:\mathcal R_v\to S$ map every router block to its skeleton
node and every routing edge to its skeleton edge. Then every $\mathcal R_v$ is
connected. The fold and prune sequence, with the completed-child reboxing
convention, terminates at exactly $R_{\mathrm{sc}}(G,S,\rho)$, including its
prescribed bank orders, graph-bearing scalar factors, and isolated-vertex
suffix. For every closed compiler datum $(G,T,\rho)$ there is, moreover, an
effective raw derivation
\[
 R_{\mathrm{sc}}(G,T,\rho)
 \xleftrightarrow{*}_{\mathcal B\cup\E}
 R(G,T,\rho).
\]
It uses scalar reassociation, scalar-carrier moves, and padded rank-zero
unit/interchange steps only. These add no pattern-width load: scalar exterior
types are $0\to0$, every positive-rank bank keeps its rank, and same-type
replacement preserves all ancestor ranks.
\end{lemma}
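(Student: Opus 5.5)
The plan is to prove the three assertions of the lemma in order: connectivity of each $\mathcal R_v$, then termination of folding and pruning at the literal endpoint $R_{\mathrm{sc}}(G,S,\rho)$, and finally the bridge $R_{\mathrm{sc}}\leftrightarrow R$.

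\emph{Connectivity.} By the record of Lemma~\ref{supp:lem:renderer-coverage}, the parse-aligned realization is built from exact port partitions. At every node the router's blocks are the $\beta_u$-classes, and the positional gluings at a composition are exactly the middle edges of $\Gamma_K$. The final vertex $v$ is therefore the transitive closure of these gluings, restricted to blocks whose image is $v$. Each gluing is carried by a skeleton bank, so it is an edge of $\mathcal R_v$. Hence the block--bank incidence components are exactly the final vertex classes, and $\mathcal R_v$ is connected. The same induction shows that $\varpi_v$ is a graph map. A vertex with no atom incidence but a router block (an isolated vertex created in a discrete piece) has already been extracted as a $z$ by Lemma~\ref{supp:lem:edge-free}, unless it survives on a bank; in that case its $\mathcal R_v$ has no flags.

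\emph{Termination at $R_{\mathrm{sc}}$.} Apply folds while two edges of $\mathcal R_v$ over the same skeleton edge meet a common block. Each fold is a retained $\mathsf W_2$ datum of Definition~\ref{supp:def:renderer}, and it strictly decreases $\sum_e m_v(e)$. Once no fold applies, $\varpi_v$ is locally injective on edges. A connected graph with a locally injective map to a tree is mapped injectively onto a subtree. Pruning unflagged terminal strands then shrinks that subtree to the minimal subtree $H_v$ spanning the flags. When there are no flags, pruning leaves a single block; since this block then has no protected or outer port, it is converted to one $z$ with the prune's scalar increment $\epsilon=1$.

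Next I compare this with the compiler. The cut argument of Lemma~\ref{lem:extract}, applied at skeleton edge $e$, gives $e\in H_v$ iff $v\in\partial_G A_e$. So after normalization each bank carries exactly one strand per vertex of $\partial_G A_e$, which is the bank of $R_x$. Each node router is then the unique $\operatorname{Can}(\eta_x,0)$ with those banks, and the prescribed order correction makes the bank orders agree with $\rho_x$. Scalars need separate bookkeeping. Reboxing a completed graph-bearing $0\to0$ child, together with the suffix clauses of the state machine, reproduces recursion \eqref{supp:eq:scalar-compiler-recursion} factor by factor, and the emitted $z$'s form $Z_{i(G)}$. I would establish this literal equality by induction on the skeleton, checking at each node both the body $B_x$ and the word $\mathbf s_x$.

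\emph{Bridge.} To pass from $R_{\mathrm{sc}}$ to $R$, run \eqref{supp:eq:scalar-compiler-recursion} in reverse, from the root down. Each factor $P_x$ with $b_x=0$ sits in the suffix. Use \eqref{supp:eq:scalar-move} to move it leftward past the scalar factors that follow its position, then \eqref{supp:eq:scalar-compose} together with the padded unit law $e_0\bsum P_x$ to place it back as the child body $B_x$ inside $(B_y\bsum B_z)\circ J_x$. All such steps replace one rank-zero subterm by another of the same type, so no ancestor rank changes and no positive-rank bank is introduced.

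The main obstacle is the literal endpoint identification: showing that the terminal term of the fold--prune run is syntactically $R_{\mathrm{sc}}$, with the fixed parenthesization, permutation words and suffix order, rather than merely value-equal. My first approach would be a node-by-node induction that uses the determinism of the renderer, namely that $\operatorname{Can}$ depends only on its partition and its orders, together with the canonical exposure of Lemma~\ref{supp:lem:carrier-exposure}.
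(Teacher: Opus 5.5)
Your route is the paper's: the components of the block--bank incidence graph are the final vertex classes; folds make $\varpi_v$ locally injective and hence an embedding onto a subtree; prunes reduce it to $H_v$, or to one zero-leg block rendered as $z$; a leaf-to-root induction on $(B_x,\mathbf s_x)$ gives the literal endpoint; and the reboxing is undone in reverse to get the bridge. For termination, sum $m_v(e)$ over all $v$ as well, and note that a move for one vertex leaves every other vertex's multiplicities unchanged. This is what the paper's potential $\Phi$ records.

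The one step that does not go through as written is the bridge. Equation~\eqref{supp:eq:scalar-compose} is only the rank-zero instance $s_1\circ s_2\leftrightarrow s_1\bsum s_2$, with $s_1,s_2:0\to0$. It cannot insert a closed factor $P_x$ into its parent body $(e_0\bsum B_{x'})\circ J_p$. Here $J_p:b_{x'}\to b_p$ has positive rank whenever the sibling boundary is nonempty, for instance a triangle component placed beside a proper piece of a long path. Nor can it carry the factor from the root suffix down through the ancestor bodies, each of which is a composition with a positive-rank router. What you need, applied level by level in reverse postorder, is the padded interchange with a positive-rank partner:
\[
 X\circ(Y\bsum s)\leftrightarrow(X\bsum e_0)\circ(Y\bsum s)\leftrightarrow(X\circ Y)\bsum(e_0\circ s)\leftrightarrow(X\circ Y)\bsum s .
\]
You also need its mirror $((X\bsum s)\circ Y)\leftrightarrow(X\circ Y)\bsum s$, together with scalar-carrier moves and reassociation inside each sum $B_y\bsum B_z$.

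Your width justification must change accordingly. These steps rewrite positive-rank subterms, so it is not true that every step replaces a rank-zero subterm. The correct reason is that each replaced subterm keeps its type, and every new intermediate subterm ($X\bsum e_0$, $Y\bsum s$, $e_0\circ s$) has the type of $X$, of $Y$, or $0\to0$. Hence no bank rank and no ancestor rank changes.
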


\begin{proof}
Consider the incidence graph obtained from all canonical router blocks and
all bank pieces before any fold. Its connected components are precisely the
vertex classes created by structural evaluation. Indeed, passage through a
bank joins the two copies of one piece, and membership in one router block is
exactly one local positional identification. These are all vertex
identifications performed by a sum or a composition node. Conversely, every
identification used by structural evaluation is recorded by one such bank
edge or router block. Hence two local pieces have the same image in $G$ if
and only if they lie in the same component of this incidence graph. Its fibre
over $v$ is $\mathcal R_v$, which proves connectedness.

Let $H_v$ be the minimal subtree of $S$ spanning the atom flags of $v$, with
$H_v=\varnothing$ when there is no flag. For a skeleton edge $e$, put
\[
 m_v(e)=|\varpi_v^{-1}(e)|.
\]
Connectedness gives $m_v(e)\ge1$ whenever $e\in H_v$. If two routing edges
over $e$ meet the same router block, a fold retains one edge, fuses the two
opposite blocks, and leaves every other incidence unchanged. The padded
identity in Equation~\eqref{supp:eq:padded-fold} realizes this operation and
reduces $m_v(e)$ by one.

After no fold remains, the map from $\mathcal R_v$ to the tree $S$ is locally
injective. A cycle or two distinct router vertices with the same image would
map a shortest path to a nontrivial reduced closed walk in $S$. Neither is
possible. Thus the image of $\mathcal R_v$ is a subtree. If $H_v$ is nonempty
and the image properly contains it, choose a leaf edge outside $H_v$. Its
terminal block has no atom flag, protected port, or outer port, so one of the
two padded prune identities removes it. Continue until the image is $H_v$.
If $H_v=\varnothing$, use the same prune identities until one zero-leg block
remains. Render that block as
$\operatorname{Can}(\varnothing,1)\leftrightarrow z$ and move the resulting
$z$ to the fixed scalar suffix.

The nonnegative integer
\[
 \Phi=\sum_{v\in V_{\mathrm{rt}}(G)}\sum_{e\in E(S)}
 \bigl(m_v(e)-\mathbf{1}_{e\in H_v}\bigr)
\]
decreases by one at each fold or prune. A move for one $v$ does not increase
any multiplicity for another vertex, so the process terminates. At its
endpoint a bank over $e$ contains one $v$-piece exactly when $e\in H_v$.
For an atom-incident vertex this says that one representative remains
precisely when its atom edges occur on both sides of the cut defined by $e$,
which is the compiler boundary rule. A flag-free vertex leaves no bank piece
and contributes exactly one scalar factor.

It remains to identify the raw endpoint. At every node, the left child bank
precedes the right child bank in the plane order. Blocks are processed in the
fixed final-vertex order, and the surviving bank is placed in the prescribed
order $\rho$ by the fixed adjacent-transposition word. After every fold or
prune, the renderer completes both endpoint routers as the fixed
$\operatorname{Can}$ terms before another window is opened. These data
determine the body $P_x$ in
Equation~\eqref{supp:eq:scalar-compiler-recursion}. When $b_x=0$ and $P_x$
contains a graph-letter occurrence, parse-specific reboxing records this body
as the first factor of the current scalar word. Otherwise it remains the body
$B_x$. Induction from the leaves
therefore produces exactly the records $(B_x,\mathbf s_x)$. The scalar schedule
concatenates the child words in their declared order, right associates them,
and puts the isolated-vertex factors after them. Hence the literal endpoint is
$R_{\mathrm{sc}}(G,S,\rho)$, and every isolated vertex occurs exactly once.

For completeness, reverse this schedule in reverse postorder. Suffix
reassociation and the one-scalar carrier move return each factor to its
recorded child position. At a composition node, for matching $X,Y$ and
$s:0\to0$, the required reinsertion is the inverse of one of the two raw
chains
\begin{align*}
 ((X\bsum s)\circ Y)
 &\longleftrightarrow
 ((X\bsum s)\circ(Y\bsum e_0))
 \longleftrightarrow
 (X\circ Y)\bsum(s\circ e_0)
 \longleftrightarrow
 (X\circ Y)\bsum s,\\
 X\circ(Y\bsum s)
 &\longleftrightarrow
 (X\bsum e_0)\circ(Y\bsum s)
 \longleftrightarrow
 (X\circ Y)\bsum(e_0\circ s)
 \longleftrightarrow
 (X\circ Y)\bsum s.
\end{align*}
The middle arrows are interchange, and the other arrows are rank-zero unit
steps. Together with the scalar-carrier crossing pair, these moves restore
every instance of the plane recursion and terminate at $R(G,S,\rho)$. Every
scalar term remains opaque. Its exterior type is $0\to0$, the padded
unit/interchange steps leave every positive-rank bank rank unchanged, and
same-type contextual replacement preserves the ranks of all ancestors. Thus
the bridge adds no pattern-width load beyond the endpoints and is covered by
the existing finite-template and width audits. The same reverse-postorder
construction applies to every closed compiler datum $(G,T,\rho)$, proving the
displayed bridge in the statement.
\end{proof}

\begin{lemma}[plane order and degenerate quadrant switches]
\label{supp:lem:quadrant-degeneracies}
The recursive quadrant interpolation preserves the plane orders inherited
from its two original partition trees. Every local change is represented by
the four-cell renderer, including all two-cell and three-cell cases.
\end{lemma}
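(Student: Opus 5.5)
I would prove the lemma by strong induction on the common leaf set $D$ in the recursion $\mathcal I(P,Q)$ of Lemma~\ref{lem:quadrants}, carrying two invariants. The first is a plane-order invariant: every tree produced by the recursion is of the form $P|D'$, $Q|D'$, or one of the row or column brackets built from restrictions $Q|Q_{ij}$, and each inherits the plane order of $T$ or $U$ through restriction. The second is a coverage invariant: every single change of grouping is a literal instance of the rendered pair $\mathsf{Row}_\eta[h]\leftrightarrow\mathsf{Col}_\eta[h]$ of Definition~\ref{supp:def:renderer}(5), or of its occupancy specialization.

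\emph{Plane orders.} Restriction $P|D'$ deletes empty branches and suppresses unary nodes, so it keeps the left-right order of the surviving children. The identity $(P|D)|D'=P|D'$ then shows that the recursive calls on $P_i\to Q|A_i$ and on the columns $\to Q_j$ see exactly the inherited orders. The row term lists $Q|Q_{i0}$ before $Q|Q_{i1}$, because $B_0$ precedes $B_1$ in $Q$. The column term lists $Q|Q_{0j}$ before $Q|Q_{1j}$, because $A_0$ precedes $A_1$ in $P$. Hence the column assembly is literally $Q_j$ after the recursive calls, with $Q|B_j$ nested correctly. Bank orders $\rho$ at the new staging nodes are arbitrary. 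They are fixed by the temporary order $\preceq$ and corrected by the order-straightening $\mathsf W_2$ datum, which Proposition~\ref{supp:prop:window-arity} treats as a full two-router window.

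\emph{Degeneracies.} Since $A_0,A_1,B_0,B_1$ are all nonempty, each row and each column contains an occupied cell. I would therefore enumerate the occupancy sets $I\subseteq\{0,1\}^2$ with full row and column projections, up to symmetry. Three kinds arise.
\begin{itemize}
\item $|I|=4$: the generic four-cell pair.
\item $|I|=3$: one absent cell. Its unit leg and the unary router of its row and column are suppressed, so the row bracket becomes $\langle Q|Q_{i0},Q|Q_{i1}\rangle$ on one side and a single restriction on the other. I would check that this agrees literally with the compiler expression of the suppressed tree (Corollary~\ref{cor:compiler-compatibility}), since a one-child router composed with a canonical router collapses to one $\operatorname{Can}$ under the fixed unit suppression.
\item $|I|=2$: a diagonal or off-diagonal pair. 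Here row and column grouping differ only by the order of the two blocks. For positive ranks this is the occupancy-specialized pair followed by a router order correction. If a cell has rank zero, the cases are instead covered by the scalar pair~\eqref{supp:eq:scalar-two-cell} or by the one-scalar move~\eqref{supp:eq:scalar-move}.
\end{itemize}
In every case an occupied cell with $r_{ij}=0$ remains a formal hole and is never replaced by $e_0$. This uses the distinction, already made in Definition~\ref{supp:def:renderer}, between occupancy and empty $\mathbf P_{ij}$.

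The main obstacle is the three-cell case. There the suppressed brackets of the recursion must coincide \emph{literally}, and not merely up to value, with the renderer's specialized endpoints, including the outer survivor order and the scalar tag. I would settle it first by direct comparison of partitions: $\eta_{i*}$ restricted to one occupied cell equals that cell's compiler router composed with $\operatorname{Can}(\eta_R,c)$. Only after that would I insert an explicit order-correction window on each side.
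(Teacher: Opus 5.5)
Your proposal is correct and follows essentially the paper's proof: each cell $A_i\cap B_j$ keeps its inherited plane order, every row and column is occupied, and the case split runs over four cells, three cells via the presence specialization, and the two diagonal pairs, with the scalar specializations for rank-zero cores. One small caution: the three-cell endpoints are the unary-suppressed compiler terms because the renderer's degenerate datum suppresses unary-router legs by definition, not because two composed $\operatorname{Can}$ terms collapse under unit laws (that happens only when the unary router is an identity permutation, the remaining order mismatch being handled by your order-correction windows); also, the main-diagonal pair $\{00,11\}$ needs no switch at all.
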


\begin{proof}
At a recursive cell $C=X\cap Y$, index the first restricted split from the
first tree by $A_0,A_1$ and the first restricted split from the second tree by
$B_0,B_1$, always in inherited plane order. The four possible children are
$A_i\cap B_j$. The row endpoint lists them in row order and the column
endpoint lists them in column order. Each nonempty cell keeps its own inherited
subtree and boundary order. The staging routers alone change the grouping.
Their common outer router restores the prescribed boundary order of $C$, so
the switch does not change the order visible to an ancestor.

With four occupied cells, Equations~\eqref{supp:eq:row} and
\eqref{supp:eq:column} give the displayed switch. With three occupied cells,
use the corresponding presence specialization of
Definition~\ref{supp:def:renderer}. Its endpoints are the unary-suppressed
compiler terms. Structural-unit and identity-router legs may be adjoined to
recover the displayed four-position formula. Every occupied cell remains
present even when its interface is empty.

Because both restricted binary splits are nontrivial, each row and each column
contains an occupied cell. Hence the interpolation never invokes a datum with
fewer than two occupied cells or with both occupied cells in one row or one
column. With exactly two occupied cells, they form a diagonal pair. For the
main diagonal cells $00$ and $11$, the two groupings agree after unary
suppression. For the other diagonal pair $01$ and $10$, row order and column
order are opposite. If both interfaces have positive rank, the fixed symmetry
word and outer-router compensation give the switch. If exactly one has rank
zero, the one-scalar renderer moves that opaque scalar core past the other
carrier. If both have rank zero,
Equation~\eqref{supp:eq:scalar-two-cell} gives the switch. The unused occupancy
specializations remain harmless members of the renderer family.

Every suppression is performed only after the child term has been restored to
completed-step form and is triggered only by an absent child cell. Therefore
no occupied hole, suppressed unary node, or local permutation can change an
ancestor type. Recursive calls use strict nonempty
cells and hence terminate. Each such cell remains the intersection of one
cluster from each original tree. Its boundary is therefore bounded by the sum
of the two original boundary widths. The row, column, and common outer banks
are recorded separately in the renderer, as required by the definition of an
$r$-bounded assembly.
\end{proof}

\begin{lemma}[ordered carrier exposure]
\label{supp:lem:carrier-exposure}
Let $U$ contain a literal occurrence $V$ of a parenthesized parallel word
with ordered leaves $v_1,\ldots,v_n$, where $v_i:p_i\to q_i$. Let
$\mathcal I=(I_1,\ldots,I_s)$ be pairwise disjoint nonempty intervals of
$\{1,\ldots,n\}$, listed from left to right, and fix a parenthesization
$V_{I_j}$ of the interval word indexed by $I_j$. There is an effective
$\mathcal B$-derivation
\begin{equation}\label{supp:eq:carrier-exposure}
 U\xleftrightarrow{*}_{\mathcal B}
 \operatorname{Exp}_{\mathcal I}(U)
\end{equation}
such that every $V_{I_j}$ is a literal subterm occurrence of
$\operatorname{Exp}_{\mathcal I}(U)$. Their syntax-tree addresses are
pairwise prefix-incomparable and occur in the prescribed left-to-right order.
Replacing these occurrences simultaneously by distinct formal holes gives a
linear multi-hole context, and substituting the $V_{I_j}$ back recovers
$\operatorname{Exp}_{\mathcal I}(U)$ literally.

The derivation uses only associativity of $\bsum$, its reverse restores the
original parenthesization, and no expression in the derivation has pattern
width greater than $\patw(U)$. For finitely many pairwise disjoint literal
parallel-word occurrences, $\operatorname{Exp}_{\mathcal I}$ denotes the
successive application of this construction in syntax-address order, with the
same conclusions.
\end{lemma}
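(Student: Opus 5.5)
The plan is to reduce the whole construction to a statement about binary trees with ordered leaves: the parallel word $V$ is a binary tree whose internal nodes are $\bsum$, and the only tool allowed is the associativity scheme $((f\bsum g)\bsum h)=f\bsum(g\bsum h)$ applied at a literal address inside $V$. Since $\bsum$ is associative on raw types (ranks add), any two parenthesizations of the same ordered leaf word are connected by finitely many rotations. This is the standard connectivity of the associahedron/Tamari graph. Every rotation is a single contextual $\mathcal B$-step at an address inside the occurrence $V$, so the context of $U$ outside $V$ is unchanged.

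First I would fix a target parenthesization $V^\ast$ of $v_1,\ldots,v_n$. The plan is to bracket each interval word as the prescribed $V_{I_j}$. Treating each $V_{I_j}$ as a single leaf together with the remaining singleton leaves, I would then take, for definiteness, the left-associated word over this coarsened list. In $V^\ast$ every $V_{I_j}$ is literally a subtree. Their addresses are pairwise prefix-incomparable because the intervals are disjoint, and they appear left to right because the coarsened list preserves leaf order. Put $\operatorname{Exp}_{\mathcal I}(U)=U[V\mapsto V^\ast]$.

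To make the derivation effective and deterministic, I would use a rotation-to-right-comb normal form. Rotate $V$ to the right comb by repeatedly rotating at the leftmost left-heavy node, and rotate $V^\ast$ the same way. Concatenating the first sequence with the reverse of the second gives the derivation. Reversal of the whole sequence restores the original parenthesization, since each step is undirected. The linear-context claim is immediate: replacing the disjoint subterm occurrences at incomparable addresses by holes gives a context with each hole once, and resubstitution is literal.

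The main point to verify is the width claim $\patw\le\patw(U)$. A rotation changes only the one intermediate node, from $f\bsum g$ to $g\bsum h$. Every other subterm keeps its rank, and the parent keeps its type. Every intermediate node of any parenthesization of $V$ has rank equal to the sum of a contiguous block of leaf ranks. That sum is at most the rank sum of the whole word $V$, which is a subterm of $U$ and hence at most $\patw(U)$. The leaves themselves are subterms of $U$ and are unchanged. So every intermediate term has width at most $\max\{\patw(U),\text{rank sum of }V\}=\patw(U)$. The contiguous-block observation is the only real obstacle, and it holds because every subtree of a binary bracketing spans a contiguous interval of leaves.

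For finitely many disjoint occurrences, I would apply the construction successively in address order. Each derivation acts inside its own occurrence and does not disturb the others, since disjoint occurrences sit at incomparable addresses. Widths stay bounded by the same argument, using the current term, whose width equals $\patw(U)$ by induction.
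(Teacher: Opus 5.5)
Your proposal is correct and follows essentially the same route as the paper. Both collapse each interval to an atom, left-associate the coarsened word, expand the atoms by the prescribed $V_{I_j}$, connect the two parenthesizations by associativity rotations, and bound every newly created node by the rank sum of the literal occurrence $V$, since each node spans a contiguous block of leaves; your explicit right-comb procedure simply makes the paper's appeal to connectivity of parenthesizations concretely effective.
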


\begin{proof}
Replace every interval $I_j$ temporarily by one atom, use the fixed
left-associated parenthesization of the resulting word, and then expand that
atom using the prescribed parenthesization $V_{I_j}$. This defines
$\operatorname{Exp}_{\mathcal I}(U)$. Every parenthesization of a fixed ordered
word is connected to every other one by associativity rotations. The interval
atoms are disjoint, so the resulting occurrences are pairwise
prefix-incomparable.

Every new subterm created by a rotation is a parallel product of a consecutive
subword of $v_1,\ldots,v_n$. Its rank sum is at most
\[
 \sum_{i=1}^{n}(p_i+q_i),
\]
which is the rank sum of the literal subterm $V$ and is therefore at most
$\patw(U)$. The internal subterms of the $v_i$ are unchanged, and every
ancestor retains its type. Hence no rotation increases pattern width.
\end{proof}

\subsection{A two-edge path}
\label{supp:sec:two-edge-path}
Let $G$ be the closed path $v_0v_1v_2$ with edge occurrences $e_{01}$ and
$e_{12}$, both labeled by a letter of rank $(1,1)$. The binary
edge-partition tree has two leaves. Both leaf boundaries are $\{v_1\}$, and
the root boundary is empty. The compiler closes $v_0$ and $v_2$ inside the
leaf wrappers and produces cores $C_{01},C_{12}:0\to1$, whose protected ports
are named $v_1$. The root term is
\[
 (C_{01}\bsum C_{12})\circ
 \operatorname{Can}\bigl(\{\{v_1^{\rm L},v_1^{\rm R}\}\},0\bigr):0\to0.
\]
The router merges the two representatives and closes the result. There is no
isolated-vertex $z$-tail. The construction therefore never exposes all four edge
incidences at once. The boundary load is one and the edge-arity load is two,
so the main text's reduced-realization audit applies with $r=2$ and gives
the deliberately coarse bound $4r+4=12$. Relative spider straightening sends
every alternative discrete staging with these protected ports to the same
router.

\section{Encoding details for the equational word problem}
\label{supp:sec:gi-encoding}

The graph-isomorphism equivalence in the main text uses expanded arities.
Thus $e_n$ may be replaced by a fixed sum of $n$ copies of $e_1$, and an
interface of length $n$ occupies $\Theta(n)$ input symbols. Structural
evaluation creates the disjoint unions and positional quotient maps recorded
at the syntax nodes, so its output hypergraph has size polynomial in the
encoded expression. Labels, directions, begin positions, and end positions
can be represented by constant-size colors and incidence gadgets, reducing
the resulting pointed-hypergraph isomorphism instance to ordinary graph
isomorphism.

\begin{proposition}[isomorphism-invariant reverse encoding]
\label{supp:prop:gi-encoding}
There is a polynomial-time map from a finite simple undirected graph $G$ to a
closed expression $p_G$ such that
\[
 G\cong H
 \quad\Longleftrightarrow\quad
 \val(p_G)\cong\val(p_H)
 \quad\Longleftrightarrow\quad
 p_G\xleftrightarrow{*}_{\mathcal B\cup\E}p_H.
\]
The construction makes no orientation choice that depends on a vertex
ordering.
\end{proposition}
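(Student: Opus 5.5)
The plan is to construct $p_G$ canonically, so that it depends only on the isomorphism class of $G$ up to raw symmetric choices. The second equivalence then follows from Theorem~\ref{thm:bk}, and the work reduces to proving $G\cong H\iff\val(p_G)\cong\val(p_H)$ together with polynomial size.

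\textbf{Step 1 (orientation-free edge gadget).} Encode every undirected edge $\{u,v\}$ by two opposite $a$-occurrences, $u\to v$ and $v\to u$. The resulting directed multigraph $\vec G$ is then symmetric, and its definition involves no orientation choice. I will prove the semantic equivalence on this object directly. The underlying simple graph of $\vec G$ is $G$, and isolated vertices are retained. Any isomorphism $G\to H$ lifts to a label- and direction-preserving isomorphism $\vec G\to\vec H$. Conversely, any isomorphism of closed directed multigraphs $\vec G\to\vec H$ restricts on vertices to an isomorphism of the underlying simple graphs. So $G\cong H\iff\vec G\cong\vec H$.

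\textbf{Step 2 (raw expression and polynomial bound).} Fix an arbitrary vertex numbering $1,\ldots,n$ and construct the expression as in the proof sketch of Proposition~\ref{prop:gicomplete}.
\begin{itemize}
\item Create $n$ wires by $\io01^{\bsum n}$.
\item For each arc, bring the tail wire to the front using adjacent transpositions $\tau_{n,j}$, and apply the gadget $(\io12\bsum e_{n-1})\circ(e_1\bsum a\bsum e_{n-1})$. Route the fresh head next to the head wire, merge it by $\io21$, and restore the order. This costs $O(n)$ blocks of size $O(n)$ per arc.
\item Close all wires by $\io10^{\bsum n}$.
\end{itemize}
The total size is $O(|E|n^2+n)$ in expanded arity.

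The numbering does not affect the value. The value is determined up to isomorphism by the evaluation invariant: each wire carries one vertex class, and each arc gadget adds exactly one arc between the intended classes. This invariant is checked by induction over the blocks, in the same way as the edge-introduction step of Proposition~\ref{supp:prop:upper}. Hence $\val(p_G)\cong\vec G$ for every numbering, and the raw term depends on the numbering only through equal-valued alternatives.

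The claim ``no orientation choice depending on a vertex ordering'' is exactly Step~1, since both arcs are always present. Chaining the equivalences completes the proof.

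\textbf{Main obstacle.} The step needing most care is the converse in Step~1. It requires that the symmetric doubling is lossless: a directed isomorphism must map each opposite-arc pair to an opposite-arc pair, and must fix no spurious parallel arcs. For simple $G$, loops and multiple arcs in the same direction never occur, so this follows directly. The rest is bookkeeping.
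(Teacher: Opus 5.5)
Your proposal is correct and follows essentially the same route as the paper: symmetric arc doubling for orientation-independence, one persistent wire per vertex with a copy/letter/merge gadget routed by adjacent transpositions, closing wires by $\io10$ so isolated vertices survive, an $O(mn^2+n)$ size bound, and Theorem~\ref{thm:bk} for the final equivalence. The only difference is cosmetic. You bring the tail to the front and route the fresh head to the head wire afterwards, whereas the paper first brings both endpoints to the front and applies a fixed two-wire gadget $E_\to$ before undoing the permutation.
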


\begin{proof}
Replace every undirected edge $\{u,v\}$ by the two arcs $u\to v$ and
$v\to u$. Denote the resulting symmetric directed graph by $D(G)$ and retain
all vertices, including isolated ones. This construction is functorial under
vertex bijections, and
\[
 G\cong H\quad\Longleftrightarrow\quad D(G)\cong D(H).
\]
In particular, it makes no orientation choice depending on a vertex order.

Let $V(G)=\{1,\ldots,n\}$ in the input encoding. For distinct $u,v$, choose
a fixed adjacent-transposition word $P_{u,v}:n\to n$ whose output interface is
\[
 (u,v,r_1,\ldots,r_{n-2}),
\]
where $r_1,\ldots,r_{n-2}$ are the remaining vertex wires in their original
relative order. Let $P_{u,v}^{-1}$ be the reversed adjacent-transposition
word. Define
\[
 E_{\to}=(\io12\bsum e_1)\circ(e_1\bsum a\bsum e_1)
          \circ(e_1\bsum\io21):2\longrightarrow2
\]
and
\[
 T_{u,v}=P_{u,v}\circ(E_{\to}\bsum e_{n-2})\circ P_{u,v}^{-1}
 :n\longrightarrow n.
\]
The first factor of $E_{\to}$ sends the two live vertex wires $(u,v)$ to
$(u,u,v)$. The middle factor places one $a$-edge from the second copy of $u$
to a fresh node $x$, producing $(u,x,v)$. The final factor identifies $x$
with $v$ and returns the ordered pair $(u,v)$. Thus $T_{u,v}$ adds precisely
the arc $u\to v$ while returning every vertex wire to its original position.

List the $2m$ arcs of $D(G)$ in a deterministic order as
$\alpha_j=u_j\to v_j$, where $m=|E(G)|$. For $n\ge1$, put
\[
 p_G=\io01^{\bsum n}\circ
 T_{u_1,v_1}\circ\cdots\circ T_{u_{2m},v_{2m}}
 \circ\io10^{\bsum n}:0\longrightarrow0,
\]
using fixed parenthesizations throughout. For $n=0$, put $p_G=e_0$. The
initial sum creates one persistent wire for every vertex, and the final sum
closes those wires without deleting their vertices. Consequently isolated
vertices are retained and $\val(p_G)\cong D(G)$.

Moving two selected wires to the first two positions takes $O(n)$ adjacent
transpositions. A transposition padded to rank $(n,n)$ has expanded raw size
$O(n)$, so each $P_{u,v}$ and $P_{u,v}^{-1}$ has size $O(n^2)$. The padded
edge gadget has size $O(n)$. Hence
\[
 |p_G|=O(1+n+mn^2),
\]
and $p_G$ is computable in polynomial time. Although the routing words use the
input encoding order, the closed value is the order-independent graph $D(G)$.
It follows that $G\cong H$ if and only if
$\val(p_G)\cong\val(p_H)$. Completeness of the finite-graphoid equations
\cite[Cor.~3]{BK04} gives the final equivalence in the raw syntax.
\end{proof}

\section{Expanded finite-saturation audit and certificate semantics}
\label{sec:saturation-details}\label{supp:sec:saturation}

This section specifies the exact finite transition system used by the clique
certificates. Throughout the section we use the default rank-$(1,1)$ alphabet
$\Sigma=\{a\}$. Thus every target edge occurrence is represented by an
$a$-leaf. We keep two target modes distinct. In directed mode, the target is a
finite directed multigraph $\vec F=(V,E)$, where $E$ consists of directed edge
occurrences and parallel occurrences remain distinct. In underlying-simple
mode, the target is a finite simple undirected graph $H=(V,E)$, where $E$
consists of unordered pairs. Write $\operatorname{Ori}(H)$ for the directed graphs
obtained by choosing exactly one orientation for every edge of $H$. An
expression realizes $H$ in underlying-simple mode when its directed value lies
in $\operatorname{Ori}(H)$. Thus the orientation is not fixed in advance. In either
mode, for $v\in V$ let $I_F(v)\subseteq E$ be the set of target occurrences
incident with $v$, using $F$ generically for the target. The clique
implementation uses underlying-simple mode. Under the convention of the main
article,
\[
 \patw(H)=\min_{\vec H\in\operatorname{Ori}(H)}\patw(\vec H).
\]
One target and one of these two modes remain fixed throughout every saturation
run. In particular, directed occurrences and unordered occurrences are never
mixed in a state set.

\subsection{Raw states and legality}

\begin{definition}[raw state]\label{supp:def:raw-state}
A width-$W$ raw state is a tuple
\[
 s=(P,C,\kappa,b,e),
\]
where $P\subseteq E$, $C\subseteq V$,
$\kappa=(\kappa_0,\ldots,\kappa_{r-1})\in V^r$, and
\[
 b\in\{0,\ldots,r-1\}^{p},
 \qquad
 e\in\{0,\ldots,r-1\}^{q},
 \qquad p+q\le W.
\]
The integers $0,\ldots,r-1$ are the live-piece indices and $\kappa_i$ is
the target vertex represented by the piece with index $i$. Every live-piece
index must occur in $b$ or $e$. Repeated indices in a boundary tuple are
allowed. Distinct pieces may also have the same target label.

The state is \emph{legal} when
\begin{enumerate}
\item $C\cap\operatorname{im}(\kappa)=\varnothing$,
\item $I_F(v)\subseteq P$ for every $v\in C$,
\item for every occurrence $\alpha\in P$, all its endpoint labels belong to
$C\cup\operatorname{im}(\kappa)$. In directed mode, an occurrence
$\alpha:u\to v$ has the ordered endpoint labels $(u,v)$. In underlying-simple
mode, an occurrence $\alpha=\{u,v\}$ has the unordered endpoint pair
$\{u,v\}$.
\end{enumerate}
\end{definition}

A \emph{realization} of a legal state is a partial expression equipped with a
target-label map from its nodes to $V$. Its generated edge occurrences are
matched bijectively with $P$. In directed mode, if the matched target occurrence
is $\alpha:u\to v$, the tail and head of the generated occurrence have labels
exactly $u$ and $v$, respectively. In underlying-simple mode, if the matched
target occurrence is $\alpha=\{u,v\}$, the generated occurrence is one directed
$a$-edge whose tail and head labels are $u$ and $v$ in one of the two orders.
This exact endpoint-incidence condition is part of the realization invariant. The live
boundary nodes are the indexed pieces and have labels $\kappa$. The internal
nodes map bijectively to $C$. Thus a target vertex may have several live
pieces, but it can never have two internal pieces. The ordered begin and end
tuples are exactly $b$ and $e$. Condition~(3) above is the state-level
coverage condition. The stronger invariant used below is that every reached
state has a realization with the exact endpoint incidences just specified.

For a target with at least one vertex, the nullary unit $e_0$ is omitted from
the seeds. Every occurrence of $e_0$ in a term for a value with at least one
vertex can be removed by the unit laws without increasing width. A vertexless
target is accepted
separately by $e_0$.
For $r\ge2$, the structural unit law replaces a leaf $e_r$ by a fixed
parenthesized sum of $r$ copies of $e_1$. Every subterm of this replacement
has rank sum at most $2r$, the rank sum of the original leaf. Thus only $e_1$
is required among the positive-rank unit seeds, without increasing the width
cap.

\begin{definition}[generator seeds]\label{supp:def:state-seeds}
For every $v\in V$, the seed set contains the following structural states
subject to the width cap:
\begin{align*}
s_{01,v}&=(\varnothing,\varnothing,(v),(),(0)),\\
s_{10,v}&=(\varnothing,\varnothing,(v),(0),()),\\
s_{12,v}&=(\varnothing,\varnothing,(v),(0),(0,0)),\\
s_{21,v}&=(\varnothing,\varnothing,(v),(0,0),(0)),\\
e_{1,v}&=(\varnothing,\varnothing,(v),(0),(0)).
\end{align*}
Their rank sums are $1,1,3,3,2$, respectively.
For every ordered pair $(u,v)\in V^2$, it also contains
\[
 \pi_{u,v}=(\varnothing,\varnothing,(u,v),(0,1),(1,0))
\]
when its rank sum $4$ is at most $W$. In directed mode, for every occurrence
$\alpha:u\to v$ it contains exactly
\[
 a_{\alpha,u,v}=(\{\alpha\},\varnothing,(u,v),(0),(1))
\]
when its rank sum $2$ is at most $W$. In underlying-simple mode, for every edge
$\alpha=\{u,v\}$ it contains both $a_{\alpha,u,v}$ and $a_{\alpha,v,u}$ when
the same cap condition holds. The entries of all displayed boundary tuples are
live-piece indices. Since $P$ is a set and the successor
guards require disjoint placed sets, an underlying-simple edge is nevertheless
used at most once in any realization. Loops occur only in directed mode and use
the same formula with two distinct live pieces carrying the common label. An
isolated vertex $v$ is created and finalized by composing the seeds
$s_{01,v}$ and $s_{10,v}$. No special isolated-vertex seed is needed.
\end{definition}

\subsection{Successor operations}

For two states, first rename their piece sets to be disjoint. Put
$L(s)=\operatorname{im}(\kappa)$. Both operations require
\begin{equation}\label{supp:eq:disjointness-guards}
 P_1\cap P_2=\varnothing,
 \quad C_1\cap C_2=\varnothing,
 \quad C_1\cap L(s_2)=\varnothing,
 \quad C_2\cap L(s_1)=\varnothing.
\end{equation}

\begin{definition}[sum successor]\label{supp:def:sum-successor}
If the guards in \eqref{supp:eq:disjointness-guards} hold, define
$s_1\bsum s_2$ by taking $P=P_1\cup P_2$, $C=C_1\cup C_2$, concatenating
$\kappa_1$ and $\kappa_2$, and concatenating $b_1,b_2$ and $e_1,e_2$ after
shifting the piece indices of the second state. The successor is retained
exactly when it is legal and its rank sum is at most $W$.
\end{definition}

\begin{definition}[composition successor]\label{supp:def:composition-successor}
Suppose that the guards in \eqref{supp:eq:disjointness-guards} hold and
$|e_1|=|b_2|$. Form the disjoint union of the two live-piece sets and identify
the piece indexed by the $i$th entry of $e_1$ with the piece indexed by the
$i$th entry of $b_2$. Reject the operation if any identified pair has
different target labels. Let $\sim$ be the equivalence relation generated by
these positional identifications.

The new boundary consists of the $\sim$-classes referenced by $b_1$ or $e_2$.
Let $D$ be the set of $\sim$-classes absent from the new boundary. Each class
in $D$ has a common target label because identifications between differently
labeled pieces were rejected. For every $v\in V$, let $R_v$ be the set of
$\sim$-classes whose pieces have label $v$. For every $B\in D$ with label
$v$, require both conditions
\begin{equation}\label{supp:eq:finalization-test}
 R_v=\{B\}
 \qquad\text{and}\qquad
 I_F(v)\subseteq P_1\cup P_2.
\end{equation}
Reject the operation if either condition fails for any $B\in D$. If all tests
pass, take the finalized set to be
\[
 C_1\cup C_2\cup
 \{v:\text{some }B\in D\text{ has label }v\}.
\]
Delete precisely the classes in $D$, renumber the remaining classes by first
occurrence in the new begin-then-end tuple, and retain the resulting state
exactly when it is legal and has rank sum at most $W$.
\end{definition}

Condition \eqref{supp:eq:finalization-test} is the point at which the
internal-node lemma enters the finite system. Once one piece of target class
$v$ becomes internal, no distinct live piece of class $v$ can be merged with
it later. Every incident target edge must also have been placed already.

\begin{lemma}[endpoint-incidence preservation]
\label{supp:lem:endpoint-incidence}
Every generator seed has a realization satisfying the exact endpoint-incidence
invariant. If two states have such realizations and a sum or composition
successor is retained, then the successor also has such a realization.
\end{lemma}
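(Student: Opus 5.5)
The plan is a direct verification: first the generator seeds, then the two successor operations, each time building the successor realization from the given realizations by the corresponding raw operation.

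\emph{Seeds.} For $s_{01,v},s_{10,v},s_{12,v},s_{21,v},e_{1,v}$ and $\pi_{u,v}$, take the generator itself and label every node by the displayed $\kappa$. There are no edges and no internal nodes, so $P=C=\varnothing$ is matched trivially. The only point to check is that $\io12$ and $\io21$ have a single node, which is why one piece $0$ is repeated in the boundary tuple. For $a_{\alpha,u,v}$, label the tail node $u$ and the head node $v$. In directed mode this is exactly the endpoint condition for $\alpha:u\to v$. In underlying-simple mode the two seeds give the two orders, and either order is admitted. For a loop, the two distinct nodes of the $a$-leaf both carry the label $u$, which is allowed because pieces need not have distinct labels.

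\emph{Sum.} Given realizations $t_1,t_2$, use $t_1\bsum t_2$ with the union of the two label maps. Sum performs no identification, so nodes, edges, and internal nodes are disjoint unions. The guards $P_1\cap P_2=\varnothing$ and $C_1\cap C_2=\varnothing$ keep the edge matching and the internal-node bijection onto $C$ bijective. Each generated edge keeps its original endpoint labels, and the boundary tuples are the shifted concatenations recorded in Definition~\ref{supp:def:sum-successor}.

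\emph{Composition.} This is the main case. Use $t_1\circ t_2$; its node set is the quotient identifying the $i$th end node of $t_1$ with the $i$th begin node of $t_2$. Since the rejection test forces identified pieces to carry equal labels, the label map descends to the quotient. I would check the following four points.
\begin{enumerate}
\item The generated edges are the disjoint union of the child edges, matched with $P_1\cup P_2$. A quotient of endpoints only renames a node while preserving its label, so the exact endpoint incidences survive.
\item The new internal nodes are the old internal nodes of $t_1$ and $t_2$ together with the classes in $D$. Old internal nodes are never identified, since they are not on the composed boundary. The guards $C_1\cap C_2=\varnothing$ and $C_i\cap L(s_j)=\varnothing$ make the old ones carry distinct labels in $C_1\cup C_2$, not labels of any live piece.
\item The finalization test $R_v=\{B\}$ gives a unique class with label $v$, so the map from classes in $D$ to their new labels is injective, and those labels avoid both $C_1\cup C_2$ and the labels of surviving pieces. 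Together with the previous point, the internal nodes map bijectively onto the new finalized set.
\item The surviving classes, renumbered by first occurrence, realize exactly the new $\kappa,b,e$.
\end{enumerate}
The main obstacle is the bookkeeping in points 2 and 3: the finalized set must meet neither the other state's old internal labels nor the new live labels. I expect to settle it by the guards \eqref{supp:eq:disjointness-guards} and the test \eqref{supp:eq:finalization-test}. Legality and the width cap are assumed because the successor is retained. I would keep the second half of \eqref{supp:eq:finalization-test} only for the later completeness direction.
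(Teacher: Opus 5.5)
Your proposal is correct and takes the same route as the paper: you handle the seeds by direct inspection, the sum by disjoint union, and the composition by the positional quotient, where the label map descends because only equally labeled pieces are identified and the test $R_v=\{B\}$ lets each class of $D$ become an internal node. Your extra bookkeeping on injectivity of the internal labels is sound, with one correction: $C_i\cap L(s_i)=\varnothing$ comes from legality of $s_i$, not from the guards. You are also right that $I_F(v)\subseteq P_1\cup P_2$, which the paper invokes to make the finalized class complete, is already enforced by legality of the retained successor.
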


\begin{proof}
The structural seeds have no edge occurrence. In directed mode, the $a$-seed
for $\alpha:u\to v$ has one occurrence whose tail and head carry exactly the
labels $u$ and $v$. In underlying-simple mode, each of the two $a$-seeds for
$\alpha=\{u,v\}$ has one occurrence with the corresponding ordered endpoint
labels and hence with unordered endpoint pair $\{u,v\}$.

For a sum, take the disjoint union of the two realizations. The guard
$P_1\cap P_2=\varnothing$ makes the two edge-occurrence bijections combine to
a bijection with $P_1\cup P_2$, and every endpoint label is unchanged. For a
composition, take the positional quotient. Only equally labeled pieces are
identified, so the target-label map descends to the quotient. Composition
neither creates an edge occurrence nor changes its endpoint nodes. Hence its
edge-occurrence bijection and exact endpoint labels are preserved. Finally,
for each $B\in D$ with label $v$, the condition $R_v=\{B\}$ makes $B$ the
unique remaining class with that label, while
$I_F(v)\subseteq P_1\cup P_2$ makes it complete. Moving this class from the
live boundary to the internal set therefore preserves the full realization
invariant.
\end{proof}

The accepting state is
\begin{equation}\label{supp:eq:accepting-state}
 s_{\rm acc}=(E,V,(),(),()).
\end{equation}

\begin{proposition}[finite saturation in either target mode]\label{supp:prop:finiteclosure}
\label{supp:prop:finite-saturation}
Fix an integer $W\ge0$. For either a finite directed target $\vec F$ or a
finite underlying-simple target $H$ over the default alphabet
$\Sigma=\{a\}$, the least legal state set in the corresponding mode that
contains every generator seed of rank sum at most $W$ and is closed under the
two successor operations is finite. In directed mode, if
$V(\vec F)\ne\varnothing$, this set contains $s_{\rm acc}$ exactly when
$\patw(\vec F)\le W$. In underlying-simple mode, if
$V(H)\ne\varnothing$, this set contains $s_{\rm acc}$ exactly when
\[
 \patw(\vec H)\le W\quad\text{for some }\vec H\in\operatorname{Ori}(H),
\]
equivalently exactly when $\patw(H)\le W$ under the convention above. Every
expression reconstructed from an accepting state in this mode has underlying
simple graph $H$. A vertexless target is accepted by the separate term $e_0$.
\end{proposition}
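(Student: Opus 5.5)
The statement has three parts, and I would prove them separately: finiteness of the saturated state set, soundness (an accepting state yields a width-$W$ expression with the right value), and completeness (a width-$W$ expression forces the accepting state to be reached).

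\emph{Finiteness.} A legal state is determined by the following data:
\begin{itemize}
\item $P\subseteq E$ and $C\subseteq V$;
\item the boundary tuples $b,e$ with $|b|+|e|\le W$;
\item at most $W$ live pieces, since every piece index must occur in $b$ or $e$;
\item a labelling $\kappa\in V^r$ with $r\le W$.
\end{itemize}
Up to renaming of piece indices (fixed by first-occurrence order), there are finitely many such tuples. The least closed set is therefore finite, and naive saturation terminates.

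\emph{Soundness.} By induction on the construction tree of a retained state, every reached state has a realization by a raw expression of width at most $W$.
\begin{itemize}
\item Seeds are realized by the corresponding generators.
\item For sum and composition successors, I would apply Lemma~\ref{supp:lem:endpoint-incidence}, which preserves the exact endpoint-incidence invariant.
\item The width bound comes from two facts: every subterm of the realizing expression realizes some state in the construction tree, and each retained state has rank sum at most $W$.
\end{itemize}
At the accepting state $(E,V,(),(),())$ the realization is closed, and its generated occurrences biject with $E$ while preserving endpoints (exactly in directed mode, as unordered pairs in underlying-simple mode). Its nodes are all internal and biject with $V$ through $C$. So the value is $\vec F$ in directed mode, or lies in $\operatorname{Ori}(H)$ in underlying-simple mode. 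In the latter mode each edge is used at most once because $P$ is a set and the guards demand disjointness.

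\emph{Completeness.} Take an optimal expression $t$ with $\val(t)=\vec F$, or $\val(t)\in\operatorname{Ori}(H)$, and $\patw(t)\le W$. First normalize its leaves as described earlier: replace $e_r$ by $\widehat e_r$ and remove internal $e_0$ by the unit laws. This does not increase width, and it is possible because the target has a vertex. Fix an isomorphism to the target, which labels every node of $\val(t)$ by $\theta_u$ composed with that isomorphism. To each subterm $u$ assign the state consisting of
\begin{itemize}
\item $P=A_u$;
\item $C$ = the images of internal nodes of $\val(u)$;
\item the boundary nodes as pieces, labelled by their images, with $b,e$ the boundary sequences.
\end{itemize}
Legality and the guards then follow from Lemma~\ref{lem:internal}. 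Internal nodes of $u$ are singleton fibres whose incident edges are all inside $u$, which gives conditions (1)–(2), the disjointness guards between siblings, and the finalization test~\eqref{supp:eq:finalization-test}: a class closed at a composition node becomes internal at the parent, so it is the unique class with its label and all its incident edges are placed. Then induct bottom-up. Leaves give seeds (in underlying-simple mode, the seed with the orientation chosen by $t$). At each node, the assigned state of the parent is exactly the sum or composition successor of the children's states, so the root's state $s_{\rm acc}$ is reached.

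The main obstacle is this last identification for composition. I must check that the successor's positional quotient and deletion of absent classes reproduce the parent's actual boundary-piece structure. The point to verify is that distinct boundary nodes of $\val(u)$ with the same final image remain distinct pieces, and are not collapsed merely because they share a label. This is exactly why pieces are indexed separately from labels. With the orientation minimum taken over $\operatorname{Ori}(H)$, both modes follow, and the vertexless case is handled directly by $e_0$.
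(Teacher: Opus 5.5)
Your proposal is correct and follows essentially the same route as the paper. Finiteness comes from counting the tuples $(P,C,\kappa,b,e)$. Soundness comes from realizing the retained derivation tree inductively via Lemma~\ref{supp:lem:endpoint-incidence}, with the cap bounding every reconstructed subterm. Completeness comes from assigning to each subterm of a unit-reduced width-$W$ expression the state read off through Lemma~\ref{supp:lem:internal}, so that leaves give seeds and parse nodes give successors.

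One detail is worth making explicit. The sum successor concatenates and shifts piece indices rather than renumbering by first occurrence, while the composition successor renumbers by first occurrence. So index the pieces of each assigned state inductively by these same conventions; then the parent's state is literally, not merely up to renaming, the successor of its children's states.
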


\begin{proof}
Assume first that the target vertex set is nonempty. In directed mode, start
with a width-$W$ expression $t$ whose exact value is $\vec F$. In
underlying-simple mode, start with a width-$W$ expression $t$ whose value is some
$\vec H\in\operatorname{Ori}(H)$ and match each of its directed occurrences with the
unique underlying edge of $H$. By the preceding unit reductions, $t$ may be
chosen with no $e_0$ leaf and with $e_1$ as its only positive-rank unit leaf.
For a subterm $u$, let $P$ be the matched target occurrences generated inside
$u$. Index the distinct boundary nodes of $\val(u)$, label them by their target
vertices, and record the ordered begin and end occurrences. By
Lemma~\ref{supp:lem:internal}, an internal node
cannot later acquire an incident edge or be identified with an outside node.
It therefore determines one member of $C$, all its incident target edges
belong to $P$, and two internal nodes cannot have the same target image. The
canonical map sends the endpoints of every generated occurrence to the
endpoints of its matched target occurrence. It preserves the prescribed order
in directed mode and the prescribed unordered pair in underlying-simple mode.
Thus every subterm gives a legal state together with the full realization
invariant above. Leaves give the appropriate directed or underlying-simple
seeds, and the two parse operations give exactly the two successors above. The
root gives \eqref{supp:eq:accepting-state}.

Conversely, retain a derivation tree for every state discovered by saturation.
Lemma~\ref{supp:lem:endpoint-incidence}, applied inductively to this tree,
reconstructs a partial expression satisfying the full realization invariant
for every reached state. In a reached accepting state, the vertex labels are
exactly $V$ and the matched edge occurrences are exactly $E$. In directed
mode, the endpoint-incidence invariant gives their exact incidence and
directions, so the closed expression has exact value $\vec F$. In
underlying-simple mode, it gives one directed occurrence over each edge of $H$
and no other occurrence. The reconstructed value is therefore some
$\vec H\in\operatorname{Ori}(H)$, and its underlying simple graph is exactly $H$.
The cap bounds the rank sum of every reconstructed subterm by $W$.

For arbitrary $F$, finiteness is immediate. There are finitely many choices
for $P$ and $C$.
There are at most $W$ live pieces, each labeled in the finite set $V$, and
the two boundary tuples have total length at most $W$. Renaming live pieces by
first boundary occurrence gives a finite list of representatives.
\end{proof}

\subsection{Symmetry quotient and static certificates}
\label{supp:sec:symmetry-certificates}

The proposition above does not require symmetry reduction. In underlying-simple
mode, for the clique target $H=K_n$, a state with $r$ live pieces is acted on by
\[
 \Gamma_r=S_n\times C_2\times S_r.
\]
A vertex permutation acts on $P$, $C$, and every entry of $\kappa$, using its
induced permutation of unordered edge occurrences. The nontrivial element of
$C_2$ swaps $b$ and $e$. It exchanges the discrete seed kinds $01$ with $10$
and $12$ with $21$. Since clique edge occurrences are unordered, it leaves
$P$ unchanged. A piece permutation $\rho\in S_r$ leaves $P$ and $C$
unchanged, maps every live-piece index in $b,e$ by $\rho$, and replaces
$\kappa$ by $\kappa\circ\rho^{-1}$.

For a general directed target $\vec F$, let $A_{\vec F}$ be a chosen group of
automorphisms $\phi=(\phi_V,\phi_E)$ acting by
\[
 T_\phi(P,C,\kappa,b,e)
 =\bigl(\phi_E(P),\phi_V(C),\phi_V\circ\kappa,b,e\bigr).
\]
Begin-end duality may be used
only after choosing involutions $\delta_V$ on $V$ and $\delta_E$ on $E$ such
that every occurrence $\alpha:u\to v$ is sent to an occurrence
$\delta_E(\alpha):\delta_V(v)\to\delta_V(u)$. Its action sends
$(P,C,\kappa,b,e)$ under the map
\[
 R_\delta(P,C,\kappa,b,e)
 =(\delta_E(P),\delta_V(C),\delta_V\circ\kappa,e,b).
\]
Write $Q_\rho$ for the piece-renaming map determined by $\rho\in S_r$ above.
Let $\Gamma_r(\vec F)$ be the finite group generated by the maps
$T_\phi$ for $\phi\in A_{\vec F}$, $Q_\rho$ for $\rho\in S_r$, and $R_\delta$
when a reversal involution has been chosen. If no reversal involution is
specified, duality is omitted.

In underlying-simple mode, choose $A_H\le\operatorname{Aut}(H)$. Every
$\phi\in A_H$ acts on vertices and unordered edge occurrences by the same
formula $T_\phi$. Let $R_{\mathrm{id}}$ swap $b$ and $e$ while fixing $P$,
$C$, and $\kappa$. This map reverses the orientation chosen by a realization
without changing its underlying-simple target. Let $\Gamma_r(H)$ be generated
by these maps, the piece-renaming maps, and $R_{\mathrm{id}}$. For the clique
with $A_H=S_n$, this recovers the displayed $S_n\times C_2\times S_r$. We write
$\Gamma_r(F)$ generically for the resulting group in either mode and suppress
$F$ when the target is fixed.
The action
preserves the seed set up to piece renaming, legality, the width cap, and the
accepting state. It preserves sum
successors. Duality sends a composition successor of $(s,t)$ to the
composition successor of the dual pair in the reverse operand order:
$R_\delta(s\circ t)=R_\delta(t)\circ R_\delta(s)$.

Normalize piece indices by order of first occurrence in $b$ followed by $e$.
Write $\operatorname{norm}$ for this operation. Piece renaming is a congruence
for both successors. We may therefore normalize every retained successor. The
resulting quotient reaches the normalized accepting state exactly when the raw
saturation does. We use normalized states from this point onward. Fix the
lexicographic encoding $(P,C,\kappa,b,e)$ and put
\[
 \operatorname{can}(s)=\min_{\rm lex}
 \{\operatorname{norm}(g\mathbin{\cdot}s):g\in\Gamma_r(F)\}.
\]

Write
\[
 \operatorname{Orb}(s)=
 \{\operatorname{norm}(g\mathbin{\cdot}s):g\in\Gamma_{r(s)}(F)\},
 \qquad
 U_R=\bigcup_{s\in R}\operatorname{Orb}(s),
\]
where $r(s)$ is the number of live pieces in $s$.

\begin{lemma}[reduced orbit-closure test]
\label{supp:lem:reduced-orbit-closure}
Let $R$ be a set of canonical states. Suppose that, for every $s,t\in R$ and
every $x\in\operatorname{Orb}(s)$, each legal successor within the width cap
among
\[
 x\bsum t,\qquad t\bsum x,\qquad x\circ t,\qquad t\circ x
\]
has its canonical representative in $R$. Then $U_R$ is closed under both
normalized successor operations.
\end{lemma}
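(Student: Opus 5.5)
We must show that $U_R$ is closed under the normalized sum and composition successors. Fix two states $x\in\operatorname{Orb}(s)$ and $y\in\operatorname{Orb}(t)$ with $s,t\in R$, and a legal capped successor $w$, which is either $x\bsum y$ or $x\circ y$. The aim is to move the group action entirely onto one operand, so that the other operand becomes the canonical representative $t$ itself. The hypothesis then applies directly.

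First, write $y=\operatorname{norm}(g\cdot t)$ for some $g$ in the relevant group. We use that the action is by a single element acting on all coordinates. For the vertex part ($T_\phi$) this is literally compatible with both successors: $T_\phi(x\bsum y)=T_\phi x\bsum T_\phi y$ and $T_\phi(x\circ y)=T_\phi x\circ T_\phi y$, after identifying live-piece labels. The guards, label checks and finalization test \eqref{supp:eq:finalization-test} are invariant because $\phi$ is an automorphism of the target. Piece renaming $Q_\rho$ is absorbed by $\operatorname{norm}$, since renaming is a congruence for both successors, as stated above. For the duality component $R_\delta$ (or $R_{\mathrm{id}}$), sums are preserved with operand order reversed only for composition: $R_\delta(a\circ b)=R_\delta(b)\circ R_\delta(a)$. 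For sums one checks directly that swapping $b$ and $e$ commutes with concatenation, so $R_\delta(a\bsum b)=R_\delta a\bsum R_\delta b$.

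Given this, apply $g^{-1}$ to $w$. For $g$ without a duality component, $\operatorname{norm}(g^{-1}w)$ is $x'\bsum t$ or $x'\circ t$ with $x'=\operatorname{norm}(g^{-1}x)\in\operatorname{Orb}(s)$, since orbits are $\Gamma$-stable. For $g$ containing a duality, a composition becomes $t\circ x'$; this is exactly why the hypothesis includes the reversed forms $t\bsum x$ and $t\circ x$. The group element preserves legality and the width cap, so this successor is legal and capped. By hypothesis its canonical representative lies in $R$, and hence $g^{-1}w\in\operatorname{Orb}$ of that representative. Therefore $w$ lies in the same orbit (orbit membership is $g$-invariant), so $w\in U_R$.

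The main obstacle is the bookkeeping in the equivariance step. The group $\Gamma_r$ depends on the number of live pieces $r$, which differs between operands and the successor. One must check that a single target-level element (vertex automorphism plus duality bit), together with independent piece renamings on each side, induces the stated element on the successor. The piece renamings are absorbed by normalization because renaming is a congruence.
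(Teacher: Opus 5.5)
Your proposal is correct and follows the paper's own argument. You discard the piece-renaming part of the symmetry carrying $t$ to $y$, since $\operatorname{norm}$ absorbs it. You then apply the inverse of the remaining word in target automorphisms and reversal to the successor, so that one operand becomes $t$ itself. The reversal parity (which is what your ``containing a duality'' has to mean) decides between $x'\circ t$ and $t\circ x'$, and canonical representatives plus orbit invariance finish the proof.

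One small inaccuracy: reversal preserves operand order for $\bsum$, so your argument only ever produces $x'\bsum t$, $x'\circ t$ and $t\circ x'$. The form $t\bsum x$ in the hypothesis is therefore not needed for the reason you give.
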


\begin{proof}
Take $y\in\operatorname{Orb}(s)$ and $z\in\operatorname{Orb}(t)$ for
$s,t\in R$. Local piece renaming commutes with either successor after
normalization, so discard that part of a symmetry taking $t$ to $z$. Let $h$
be the remaining word in target automorphisms and reversal, let
$\epsilon(h)$ be its reversal parity, and put
$x=\operatorname{norm}(h^{-1}\mathbin{\cdot}y)\in\operatorname{Orb}(s)$.
Simultaneous application of $h^{-1}$ gives
\[
 \operatorname{norm}\bigl(h^{-1}\mathbin{\cdot}(y\bsum z)\bigr)
   =\operatorname{norm}(x\bsum t)
\]
and
\[
 \operatorname{norm}\bigl(h^{-1}\mathbin{\cdot}(y\circ z)\bigr)
 =
 \begin{cases}
   \operatorname{norm}(x\circ t),&\epsilon(h)=0,\\
   \operatorname{norm}(t\circ x),&\epsilon(h)=1.
 \end{cases}
\]
The generator identities above and induction on $h$ prove these formulas and
preserve definedness, legality, and the cap. The hypothesis places the
transformed successor in $U_R$, and orbit invariance returns the original one
to $U_R$.
\end{proof}

\begin{definition}[certificate invariant]\label{supp:def:certificate}
Fix either a finite directed target or a finite underlying-simple target $F$
with $V(F)\ne\varnothing$. A width-$W$ lower-bound certificate in that target
mode is a finite set $R$ of canonical states. The following five checks are
required.
\begin{enumerate}
\item Every member of $R$ is legal and within the cap.
\item Every generator seed of rank sum at most $W$ has its canonical
representative in $R$.
\item For every $s,t\in R$ and every $x\in\operatorname{Orb}(s)$, each legal
successor of rank sum at most $W$ among
$x\bsum t$, $t\bsum x$, $x\circ t$, and $t\circ x$ has its canonical
representative in $R$.
\item Every $s\in R$ is canonical, meaning
$s=\operatorname{can}(s)$. Consequently, if $r(s)=r(t)=r$ and $s\ne t$,
then $s$ and $t$ belong to distinct $\Gamma_r(F)$-orbits.
\item The accepting orbit is absent, equivalently
$\operatorname{can}(s_{\rm acc})\notin R$.
\end{enumerate}
\end{definition}

\begin{proposition}[certificate soundness]\label{supp:prop:certificate-soundness}
If $V(\vec F)\ne\varnothing$ and a certificate invariant exists for the
directed target $\vec F$ at width $W$, then $\patw(\vec F)>W$. If
$V(H)\ne\varnothing$ and a certificate invariant exists for the
underlying-simple target $H$ at width $W$, then
\[
 \patw(\vec H)>W\quad\text{for every }\vec H\in\operatorname{Ori}(H),
\]
equivalently $\patw(H)>W$.
\end{proposition}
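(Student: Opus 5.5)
The plan is to argue by contraposition, combining the certificate conditions with the finite-saturation characterization of Proposition~\ref{supp:prop:finiteclosure}. Suppose a certificate $R$ exists in the given mode at width $W$. Let $\mathcal S_W$ be the least legal normalized state set that contains every generator seed of rank sum at most $W$ and is closed under both successor operations. By Proposition~\ref{supp:prop:finiteclosure}, after the normalization remark that piece renaming is a congruence for both successors, $s_{\rm acc}\in\mathcal S_W$ if and only if $\patw(\vec F)\le W$ in directed mode, and if and only if $\patw(\vec H)\le W$ for some $\vec H\in\operatorname{Ori}(H)$ in underlying-simple mode. It therefore suffices to show that $\mathcal S_W\subseteq U_R$ and that $s_{\rm acc}\notin U_R$.

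For the inclusion, I will show that $U_R$ meets the two defining properties of $\mathcal S_W$. First, $U_R$ contains every seed. Check (2) puts $\operatorname{can}(\sigma)$ in $R$ for each seed $\sigma$. Since $\operatorname{can}(\sigma)$ is obtained from $\sigma$ by a group element followed by normalization, and the groups act invertibly, $\sigma$ lies in $\operatorname{Orb}(\operatorname{can}(\sigma))$. Second, $U_R$ is closed under both normalized successors within the cap. This is exactly Lemma~\ref{supp:lem:reduced-orbit-closure}, whose hypothesis is check (3). Legality of the members of $U_R$ follows from check (1) together with the stated fact that the group action preserves legality and the width cap. Minimality of $\mathcal S_W$ then gives $\mathcal S_W\subseteq U_R$.

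For the exclusion, suppose $s_{\rm acc}\in\operatorname{Orb}(s)$ for some $s\in R$. The accepting state has no live pieces, and the action fixes it up to normalization: automorphisms fix $E$ and $V$, and duality swaps two empty tuples. So the orbit of $s$ meets the accepting orbit, and both are orbits under the same group $\Gamma_0(F)$. Since $s$ is canonical by check (4), $s=\operatorname{can}(s)=\operatorname{can}(s_{\rm acc})$. This contradicts check (5). Hence $s_{\rm acc}\notin\mathcal S_W$, and the saturation proposition gives $\patw(\vec F)>W$. In underlying-simple mode it gives $\patw(\vec H)>W$ for every orientation $\vec H$, which is the same as $\patw(H)>W$ by the convention $\patw(H)=\min_{\vec H}\patw(\vec H)$.

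The main obstacle is the bookkeeping between raw and normalized states and between orbits and their canonical representatives. Specifically, canonical forms must be genuine orbit invariants: states in the same orbit need the same $\operatorname{can}$, which holds because $\operatorname{can}$ is a minimum over the entire orbit. In addition, the seeds and the accepting state must be interpreted in normalized form. I would make both points explicit before invoking the saturation proposition, and would not reprove the closure lemma.
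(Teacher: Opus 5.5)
Your proposal is correct and follows the paper's own proof. The orbit union $U_R$ contains every seed by check (2), is closed under both normalized successors by Lemma~\ref{supp:lem:reduced-orbit-closure} and check (3), hence contains the least reachable set, and omits the accepting state by checks (4)--(5), so Proposition~\ref{supp:prop:finiteclosure} gives the conclusion in both modes. Your added care that $\operatorname{can}$ is an orbit invariant and that $s_{\rm acc}$ is fixed by the whole group only makes explicit what the paper leaves implicit.
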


\begin{proof}
The union of the orbits represented by $R$ contains every seed. By
Lemma~\ref{supp:lem:reduced-orbit-closure}, the third certificate check makes
this union closed under both normalized successor operations. It therefore
contains the
least normalized reachable state set. The accepting orbit is absent, so
the appropriate nonempty-target equivalence in
Proposition~\ref{supp:prop:finiteclosure} gives the directed conclusion for
$\vec F$. In underlying-simple mode, every edge has both oriented seeds. The
same equivalence therefore rules out an accepting expression for every
orientation of $H$ and gives $\patw(H)>W$.
\end{proof}

In the JSON files, $P$ and $C$ are bit masks. The field
\texttt{PAIRS} fixes the unordered edge-bit indexing for the clique target.
Both oriented $a$-seeds are checked for every indexed pair, so the resulting
certificate conclusion concerns the underlying simple clique and rules out all
its orientations. The remaining fields are the lists
\texttt{classes}, \texttt{begin}, and \texttt{end}, which encode
$\kappa,b,e$. The standalone checker reconstructs the clique group action,
generates all seed orbits, and verifies that the accepting orbit is absent. For
the closure test, one operand ranges through a full orbit and the other ranges
through the canonical representatives. The checker applies both operations in
both operand orders. Lemma~\ref{supp:lem:reduced-orbit-closure} proves that this
is equivalent to testing the whole orbit union. The checker does not import
the search engine that proposed $R$.

\clearpage
\section*{Declarations}

\noindent
Funding: No external funding was received for this work.

\noindent
Competing interests: The author declares no competing interests.

\noindent
Ethics approval, consent to participate, and consent for publication: Not
applicable.

\noindent
Data and code availability: The code, witnesses, certificates, manifests,
archived logs, replay scripts, and licenses supporting the computational
results are archived in reproducibility package version 1.1.0 on Zenodo at
\href{https://doi.org/10.5281/zenodo.22297410}{\nolinkurl{doi:10.5281/zenodo.22297410}}.

\noindent
Author contributions: A.~Kalampakas is the sole author and was responsible for
the conceptualization, methodology, formal analysis, software, validation,
writing, and approval of the manuscript.

\bibliographystyle{amsplain}
\bibliography{mybib}
\end{document}